\DeclarePairedDelimiter{\ceil}{\lceil}{\rceil}
\let\ACMmaketitle=\maketitle
\renewcommand{\maketitle}{\begingroup\let\footnote=\thanks \ACMmaketitle\endgroup}
\newcommand{\satvik}[1]{{\color{cyan} #1}}
\newcommand{\Hil}{\mathcal{H}}
\newcommand{\cH}{\mathcal{H}}
\newcommand{\cC}{\mathcal{C}}
\newcommand{\cD}{\mathcal{D}}
\newcommand{\cL}{\mathcal{L}}
\newcommand{\cN}{\mathcal{N}}
\newcommand{\cX}{\mathcal{X}}
\newcommand{\cY}{\mathcal{Y}}
\newcommand{\iden}{\mathbb{1}}
\renewcommand{\epsilon}{\varepsilon}
\renewcommand{\phi}{\varphi}
\newcommand{\overbar}[1]{\mkern 1.5mu\overline{\mkern-1.5mu#1\mkern-1.5mu}\mkern 1.5mu}
\newcommand{\M}[1]{\mathbb{M}_{#1}(\mathbb{C})}
\newcommand{\St}[1]{\mathbb{S}_{#1}(\mathbb{C})}
\newcommand{\C}[1]{\mathbb{C}^{#1}}
\newcommand{\B}[1]{\mathcal{L}({#1})}
\newcommand{\State}[1]{\mathcal{D}({#1})}
\newcommand{\ps}{P_\star}
\newcommand{\rhos}{\rho_\star}
\newcommand{\tPhin}{{\tilde{\Phi}}_n}
\newcommand{\id}{{\rm{id}}}
\newcommand{\Ss}{{S}_\star}
\newcommand{\supp}{{\rm{supp }}}
\newtheorem{theorem}{Theorem}[section]
\newtheorem{definition}[theorem]{Definition}
\newtheorem*{definition*}{Definition}
\newtheorem{proposition}[theorem]{Proposition}
\newtheorem{corollary}[theorem]{Corollary}
\newtheorem{lemma}[theorem]{Lemma}
\newtheorem{remark}[theorem]{Remark}
\newtheorem*{conjecture*}{Conjecture}
\newcommand\vertarrowbox[3][6ex]{%
  \begin{array}[t]{@{}c@{}} #2 \\
  \left\uparrow\vcenter{\hrule height #1}\right.\kern-\nulldelimiterspace\\
  \makebox[0pt]{\scriptsize#3}
  \end{array}%
}
\theoremstyle{definition}
\newtheorem{example}[theorem]{Example}
\newtheorem{rem}{Remark}
\begin{document}

\title{Zero-error communication under discrete-time Markovian dynamics}

\author{Satvik Singh}
\affiliation{Department of Mathematics, Technical University of Munich}
\affiliation{DAMTP, Centre for Mathematical Sciences, University of Cambridge}

\author{Mizanur Rahaman}
\affiliation{UNIV LYON, INRIA, ENS LYON, UCBL, LIP, F-69342, LYON CEDEX 07, FRANCE}
\affiliation{Wallenberg Centre for Quantum Technology, Chalmers University of Technology}
\affiliation{Department of Mathematical Sciences, Chalmers University of Technology}

\author{Nilanjana Datta}
\affiliation{DAMTP, Centre for Mathematical Sciences, University of Cambridge}

\begin{abstract}
Consider an open quantum system with (discrete-time) Markovian dynamics. Our task is to store information in the system in such a way that it can be retrieved perfectly, even after the system is left to evolve for an arbitrarily long time. We show that this is impossible for classical (resp. quantum) information precisely when the dynamics is mixing (resp. asymptotically entanglement breaking). Furthermore, we provide tight universal upper bounds on the minimum time after which any such dynamics `scrambles' the encoded information beyond the point of perfect retrieval. On the other hand, for dynamics that are not of this kind, we show that information must be encoded inside the peripheral space associated with the dynamics in order for it to be perfectly recoverable at any time in the future. This allows us to derive explicit formulas for the maximum amount of information that can be protected from noise in terms of the structure of the peripheral space of the dynamics.
\end{abstract}

\maketitle

\section{Introduction}

Systems of relevance in quantum information-processing tasks are typically {\em{open}}, i.e.~they have unavoidable interactions with their surroundings. The external system modelling the surrounding of the original system is usually called its environment or a bath. An interesting scenario, which is amenable to rigorous analysis, is one in which the interaction between the system and the bath is assumed to be weak. In this so-called {\em{weak-coupling limit}}, the decay times of correlation functions of the bath are much
shorter than the typical time scale over which the state of the system changes
significantly. In other words, the bath `forgets' about its interaction with the system and returns to its steady state quickly relative to the speed at which the system evolves. Since in subsequent interactions, the bath does not remember the details of the previous interaction, the dynamics of the system becomes Markovian. Mathematically, the reduced dynamics of the system can be modelled by a (discrete- or continuous-time) quantum Markov semigroup~\cite{Fagnola1999semigroup, Alicki2002QDS, Alicki2007QDS, Breuer2007open, Wolf2012Qtour}.

In this paper, we consider a finite-dimensional open quantum system comprised of a number of qubits, say $l\in \mathbb{N}$, which undergoes a Markovian evolution. Our task is to store information in the system in such a way that it can be recovered {\em{perfectly}} without any error, even after the system is left to evolve for an arbitrarily long time. We can think of the system as a quantum memory in which we wish to store information so that it can be perfectly retrieved in the future. We focus on the discrete-time scenario, where the evolution of the system is given by a discrete-time quantum Markov semigroup (dQMS). If $\Hil \simeq (\C{2})^{\otimes l}$ is the Hilbert space of the system, then any dQMS is of the form $\{\Phi^n \}_{n\in \mathbb N}$, where $\Phi:\cL (\Hil)\to \B{\Hil}$ is a quantum channel (i.e.~a linear, completely positive, and trace-preserving map between linear operators acting on $\Hil$) and 
\begin{equation}
    \Phi^n := \underbrace{\Phi \circ \Phi \circ \ldots \circ \Phi}_{n\,  \text{times}}
\end{equation}
is the $n$-fold composition of the channel with itself. The nomenclature arises from the fact that compositions of the channel clearly satisfy the semigroup property: $\Phi^{n+m}=\Phi^n \circ \Phi^m$, for all $n,m \in {\mathbb{N}}$. The task described above can be thought of as communicating information across time, i.e.~through channels $\Phi^n$, where $n\in \mathbb{N}$ plays the role of the time parameter. In Shannon theory \cite{Shannon1948info, Shannon1956zero, Wilde2013quantum}, such communication problems are traditionally studied in the so-called \emph{asymptotic} and \emph{memoryless} setting, where one analyzes the optimal \emph{rates} of information transmission via many independent and identical (IID) uses, say $l$, of a given channel in the limit of $l\to \infty$. For the data storage problem that we consider, this framework is unsuitable for two reasons. Firstly, the noise in the memory might act in a correlated fashion across some of the qubits, so that the channel $\Phi$ might not act independently and identically on all $l$ qubits inside the memory. Secondly, since current quantum technologies can only coherently manipulate a few hundred qubits at most \cite{Preskill2018nisq}, it is pertinent to analyze storage capacities of memory devices with a small number of qubits ($l\sim 100$), thus making the asymptotic $l\to \infty$ limit rather unrealistic. These concerns are addressed by the framework of \emph{one-shot} information theory \cite{khatri2024principles}, where the focus is on determining how much information can be transmitted via a \emph{single} use of a given channel with some allowed transmission error. We now introduce this framework in the zero-error setting. 


In order to transmit $M$ classical messages perfectly through a channel $\Phi:\B{\Hil}\to \B{\Hil}$,  one must encode the $M$ messages in quantum states $\{\rho_m\}_{m=1}^M\subset \B{\cH}$ such that 
\begin{equation}\label{eq:Cencoding}
     \forall m\neq m': \,\, \Phi(\rho_m) \perp \Phi(\rho_{m'}),
\end{equation}
where two states $\rho,\sigma$ are orthogonal ($\rho\perp \sigma$) if their supports are orthogonal as subspaces. The interpretation here is that for any choice of encoding of the $M$ classical messages on the input side, the set of output states have to be perfectly distinguishable in order for the receiver to decode the intended message via a measurement without error, which is possible if and only if the output states are pairwise orthogonal. The maximum number of bits that can be transmitted in this fashion through $\Phi$ is called the \emph{(one-shot) zero-error classical capacity} of $\Phi$. Similarly, in order to send an $M-$dimensional quantum system perfectly through $\Phi$, one must find an encoding subspace $\mathcal{C}\subseteq \Hil$ with $\dim \mathcal{C}=M$ such that there exists a recovery channel $\mathcal{R}$ satisfying
\begin{equation}\label{eq:Qencoding}
    \mathcal{R}\circ \Phi (\rho) = \rho
\end{equation}
for all quantum states $\rho$ supported inside $\mathcal{C}$. The maximum number of qubits that can be transmitted in this fashion through $\Phi$ is called the \emph{(one-shot) zero-error quantum capacity} of $\Phi$. We denote these capacities in the classical and quantum case, respectively, by
\begin{equation}
    C^{(1)}_0(\Phi) \quad \text{and} \quad Q^{(1)}_0(\Phi).
\end{equation}


\subsection{Main results}
We now summarize the primary contribution of our work. Consider a quantum system $A$ whose time evolution is governed by a dQMS $\{\Phi^n \}_{n\in \mathbb N}$, where $\Phi:\cL (\Hil)\to \B{\Hil}$ is a quantum channel and $d=\dim \Hil$. We address the following questions/problems in this paper:

\begin{itemize}
    \item Does there exist a finite time $n \in {\mathbb{N}}$ at which the one-shot zero-error classical (resp.~quantum) capacity of $\Phi^n$ vanishes? If yes, we say that the dQMS $\{\Phi^n \}_{n\in \mathbb N}$ is \emph{eventually c-scrambling (resp. q-scrambling)}. In this case, the dynamics is so noisy that no matter how cleverly we encode information in $A$, eventually, we will not be able to perfectly recover it. 
    \item For an eventually scrambling dQMS $\{\Phi^n \}_{n\in \mathbb N}$, we denote the  minimum time $n\in \mathbb N$ at which $\Phi^n$ loses its ability to perfectly transmit classical (resp. quantum) information by $c(\Phi)$ (resp. $q(\Phi)$) and refer to it as the \emph{classical (resp.~quantum) scrambling time}
    (or the \emph{scrambling index}) of $\Phi$. This is the minimum time after which any encoded information in the system will get `scrambled' beyond the point of perfect recovery. Finding bounds on the scrambling times $c(\Phi)$ and $ q(\Phi)$ of the dynamics is a natural problem to consider.
    \item Finally, if the dQMS is not eventually scrambling, what is the optimal way to encode information in such a way that it is protected from noise for an arbitrarily long time? 
\end{itemize}

Our main results provide full solutions to all of the above. Firstly, we completely characterize the class of eventually scrambling dQMS. 

\begin{theorem}\label{thm:main1}
    A dQMS $\{\Phi^n \}_{n\in \mathbb N}$ governing the dynamics of an open quantum system is
    \begin{itemize}
         \item eventually q-scrambling if and only if it is \emph{asymptotically entanglement breaking}, i.e., if and only if all the limit points of the semigroup $\{\Phi^n \}_{n\in \mathbb{N}}$ are entanglement breaking. 
        \item eventually c-scrambling if and only if it is \emph{mixing}, i.e., if and only if there exists a state $\rho\in \B{\Hil}$ such that 
    \begin{equation}
        \forall X \in \B{\Hil }: \quad \lim_{n\to \infty} \Phi^n (X) = \Tr (X) \rho.
    \end{equation}
    \end{itemize}  
    
\end{theorem}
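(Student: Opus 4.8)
The plan is to treat the classical and quantum equivalences separately, and in each case to prove both implications by first reducing the time-asymptotic hypotheses to statements holding at \emph{every} finite time. The key preliminary observation is a monotonicity (``up-set'') property of the two vanishing conditions along the semigroup. If $C^{(1)}_0(\Phi^m)=0$, then since any channel is contractive in trace distance and orthogonal states are exactly those at trace distance $2$, the existence of a distinguishable pair for $\Phi^n$ with $n\ge m$ would force a distinguishable pair already for $\Phi^m$ via the factorization $\Phi^n=\Phi^{\,n-m}\circ\Phi^m$; hence $C^{(1)}_0(\Phi^n)=0$ for all $n\ge m$. Likewise, if $Q^{(1)}_0(\Phi^m)=0$ and some code $\cC$ were correctable through $\Phi^n$ with recovery $\mathcal R$ in the sense of \eqref{eq:Qencoding}, then $\mathcal R\circ\Phi^{\,n-m}$ would recover $\cC$ through $\Phi^m$, so $Q^{(1)}_0(\Phi^n)=0$ for all $n\ge m$. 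Consequently, ``\emph{not} eventually c-scrambling (resp.\ q-scrambling)'' is equivalent to the strong statement that $C^{(1)}_0(\Phi^n)>0$ (resp.\ $Q^{(1)}_0(\Phi^n)>0$) holds for \emph{all} $n$.

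For the two ``if'' directions I would argue by contraposition using compactness. Suppose $\Phi$ is not eventually c-scrambling, so by the above there exist, for every $n$, states $\rho_1^{(n)},\rho_2^{(n)}$ with $\Phi^n(\rho_1^{(n)})\perp\Phi^n(\rho_2^{(n)})$. Passing to a subsequence along which $\Phi^{n_k}$ converges to a limit point $\Phi_\star$ and the states converge, and using that orthogonality of outputs (trace distance equal to $2$) is a closed condition, I obtain two states distinguished by $\Phi_\star$. If $\Phi$ were mixing, its only limit point is the collapse map $X\mapsto\Tr(X)\rho$, which sends every input to $\rho$ and hence distinguishes nothing --- a contradiction. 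The quantum case is identical in spirit: from correctable qubit codes $\cC_n$ with recoveries $\mathcal R_n$ at every time, compactness of the Grassmannian of $2$-planes and of the set of channels yields, along a subsequence, a limit point $\Phi_\star$, a code $\cC_\star$, and a recovery $\mathcal R_\star$ with $\mathcal R_\star\circ\Phi_\star=\id$ on $\cC_\star$, so $Q^{(1)}_0(\Phi_\star)>0$. Since entanglement-breaking channels form a two-sided ideal under composition, $\mathcal R_\star\circ\Phi_\star$ composed with the encoding would be entanglement breaking, yet it equals the identity on a qubit, which is not --- so $\Phi_\star$ is not entanglement breaking, contradicting asymptotic entanglement breaking.

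The two ``only if'' directions are where the structure of the peripheral space enters, and I expect this to be the main obstacle. Here one must show that nontrivial asymptotic structure produces distinguishability (resp.\ correctability) that \emph{persists for all} $n$. I would invoke the peripheral structure theorem: on the peripheral space $\Phi$ acts as an automorphism that permutes a family of mutually orthogonal blocks (with phases), and the associated peripheral projection is a channel onto a matrix algebra $\bigoplus_i\M{d_i}$. If $\Phi$ is not mixing, this algebra is strictly larger than $\C{}\rho$, so it contains two orthogonal genuine states; being carried along by the block permutation, their images under $\Phi^n$ stay orthogonal for every $n$, so $C^{(1)}_0(\Phi^n)>0$ throughout. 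If $\Phi$ is not asymptotically entanglement breaking, some block has dimension $d_i\ge 2$; this genuine quantum factor is permuted (up to a known phase) by $\Phi$, hence is correctable at every time by inverting the permutation, giving $Q^{(1)}_0(\Phi^n)>0$ for all $n$.

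The delicate points, where I would spend the most care, are (i) verifying that the abstract orthogonal projections and $\ge2$-dimensional factors furnished by the peripheral algebra correspond to \emph{genuine} PSD states and \emph{genuine} recoverable code subspaces, since that algebra carries a $\rho$-modified product rather than the ordinary operator product; and (ii) checking that the block permutation together with its phases can be undone by a single fixed recovery channel uniformly in $n$. The ``if'' directions, by contrast, are comparatively soft, relying only on the monotonicity property, compactness, closedness of orthogonality, and the ideal property of entanglement-breaking channels.
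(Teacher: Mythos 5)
Your proposal is correct, and parts of it take a genuinely different route from the paper, so a comparison is worthwhile. (Your preliminary monotonicity lemma is true, but not strictly needed: ``not eventually scrambling'' already means positive capacity at every time.) Two of your four implications essentially coincide with the paper's arguments: ``mixing $\Rightarrow$ eventually c-scrambling'' (your compactness/limit-point argument is the same mechanism as the paper's proof of $(3)\Rightarrow(1)$ in Theorem~\ref{theorem:scr-mix}, where uniform convergence of $\Phi^n$ to the collapse map $\Phi_\infty$ kills orthogonality, except that the paper's version also covers entanglement-assisted codes), and ``not asymptotically entanglement breaking $\Rightarrow$ not eventually q-scrambling'' (your persistent correctable block is exactly the paper's lower-bound argument in Theorem~\ref{theorem:Q0phin}, which resolves your delicate point (i) by invoking \cite[Theorem 3.7]{kribs2006error} to convert the unitarily permuted $\Hil_{k,1}$ sector into a genuine correctable subspace, and uses \cite[Theorem 32]{Lami2016eb} exactly as you do; your point (ii) is moot, since Definition~\ref{def:zero-quantum} permits a different recovery channel at each time $n$). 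Where you genuinely differ is in the remaining two implications. First, for ``eventually c-scrambling $\Rightarrow$ mixing'' the paper does not touch the peripheral decomposition at all: it argues directly via Theorem~\ref{thm:scrambling} that c-scrambling means $\eta^{\Tr}(\Phi^k)<1$, and iterating this strict contraction forces exponential convergence to the unique fixed state; your contrapositive via persistent orthogonal peripheral states is equally valid and costs nothing extra (you need the structure theorem anyway), but it yields no convergence rate. Second, and more interestingly, for ``asymptotically entanglement breaking $\Rightarrow$ eventually q-scrambling'' your argument --- extract a limit point $\Phi_\star$ that inherits a correctable two-dimensional code by compactness and joint continuity, then note that entanglement-breaking channels form a two-sided compositional ideal while the qubit identity is not entanglement breaking --- is substantially softer than the paper's, which reaches the same conclusion only through operator-system stabilization (Theorem~\ref{stabilize}), preservation of relative entropy on correctable subspaces, and the structure of $*$-homomorphisms, en route to the exact capacity formula of Theorem~\ref{theorem:Q0phin}. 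The trade-off is that the paper's heavier machinery is precisely what produces the quantitative by-products --- the limiting capacity formulas of Theorem~\ref{thm:main4}, the $d^2$ bounds on scrambling times in Theorems~\ref{thm:main2} and~\ref{thm:main22}, and the stabilization statement of Theorem~\ref{thm:main3} --- none of which your compactness argument can see.
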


Note that Theorem~\ref{thm:main1} provides an information theoretic interpretation to the entanglement-breaking and mixing behaviours of dQMS, which have been extensively studied in the literature \cite{Burgarth2013ergodic, Lami2016eb, eric2020eb}.

Secondly, we provide a universal upper bound on the scrambling times of all eventually scrambling dQMS that scales quadratically with the dimension of the system. Moreover, we show that this quadratic dependence is optimal in the classical case by exhibiting an explicit class of dQMS whose classical scrambling time scales quadratically with the system dimension. 

\begin{theorem}\label{thm:main2}
    For an eventually c-scrambling dQMS $\{ \Phi^n\}_{n\in \mathbb N}$ governing the dynamics of a $d-$dimensional system, the scrambling times satisfy $q(\Phi)\leq c(\Phi)\leq d^2$. Moreover, there exists a dQMS $\{ \Phi^n\}_{n\in \mathbb N}$ acting on a $d-$dimensional system such that
    \begin{equation*}
    c(\Phi) = \ceil[\bigg]{\frac{d^2 -2d +2}{2}}.
    \end{equation*}
\end{theorem}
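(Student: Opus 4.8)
The plan is to prove the two assertions separately: the inequality chain $q(\Phi)\le c(\Phi)\le d^2$, and then the extremal example. A preliminary observation I would use throughout is that orthogonality of outputs is monotone under composition: if $\Phi(\rho)\perp\Phi(\sigma)$ then already $\rho\perp\sigma$, since a channel cannot increase trace distance and orthogonal states are exactly those at maximal trace distance $2$. Writing $\Phi^{n+1}=\Phi\circ\Phi^n$, this gives $C_0^{(1)}(\Phi^n)=0\Rightarrow C_0^{(1)}(\Phi^{n+1})=0$, and the analogous statement for $Q_0^{(1)}$ (absorb the extra $\Phi$ into the recovery map), so both $c(\Phi)$ and $q(\Phi)$ are genuine thresholds beyond which the capacity stays zero. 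For $q(\Phi)\le c(\Phi)$ I would note that a perfectly recoverable two-dimensional code supplies two orthonormal inputs whose outputs are sent by the recovery channel to orthogonal states and hence must already be orthogonal; thus $Q_0^{(1)}(\Phi^n)>0\Rightarrow C_0^{(1)}(\Phi^n)>0$, i.e. $C_0^{(1)}(\Phi^n)=0\Rightarrow Q_0^{(1)}(\Phi^n)=0$, giving $q(\Phi)\le c(\Phi)$ at once.

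For the bound $c(\Phi)\le d^2$, fix Kraus operators $\{K_i\}$ for $\Phi$, so that $\Phi^n$ has Kraus operators $\{A_i\}$ equal to all length-$n$ products, and introduce the operator system $\cS_n:=\operatorname{span}\{A_i^\dagger A_j\}\subseteq\M{d}$. A direct computation gives $\Tr[\Phi^n(\ketbra{\psi}{\psi})\,\Phi^n(\ketbra{\phi}{\phi})]=\sum_{i,j}\abs{\bra{\psi}A_i^\dagger A_j\ket{\phi}}^2$, so the outputs of $\ket{\psi}$ and $\ket{\phi}$ are orthogonal iff $\ketbra{\phi}{\psi}\in\cS_n^{\perp}$; consequently $C_0^{(1)}(\Phi^n)=0$ precisely when $\cS_n^{\perp}$ contains no nonzero rank-one operator. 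The structural point is that $\cS_{n+1}=\operatorname{span}\{K_c^\dagger S K_d: S\in\cS_n,\ c,d\}$ is an inclusion-preserving operation applied to a sequence that starts from $\cS_0=\mathbb{C}\iden\subseteq\cS_1$ (since $\sum_c K_c^\dagger K_c=\iden$). Hence $\cS_0\subseteq\cS_1\subseteq\cS_2\subseteq\cdots$ is a nondecreasing chain in the $d^2$-dimensional space $\M{d}$ and must stabilize within $d^2$ steps, so $\cS_n=\cS_{d^2}$ for all $n\ge d^2$. Because the truth of ``$C_0^{(1)}(\Phi^n)=0$'' depends only on $\cS_n$, and the dQMS is assumed eventually c-scrambling (so this holds for some large $n$), it already holds at $n=d^2$, yielding $c(\Phi)\le d^2$.

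For optimality I would reduce to combinatorial matrix theory. Take the dephasing channel $\Phi(\rho)=\sum_{i,j}P_{ij}\bra{i}\rho\ket{i}\ketbra{j}{j}$ attached to a primitive row-stochastic matrix $P$ whose digraph is the extremal Wielandt-type graph on $d$ vertices; this $\Phi$ is CPTP and mixing. Since $\Phi$ destroys coherence and sends $\ketbra{i}{i}$ to a distribution supported on the out-neighbourhood of $i$, the output support of $\Phi^n$ on $\ketbra{i}{i}$ is exactly the $n$-step reachable set of $i$, while mixed or superposed inputs only enlarge supports and so never help distinguishability. Thus $C_0^{(1)}(\Phi^n)=0$ iff every pair of rows of $P^n$ shares a common positive column, i.e. $c(\Phi)$ equals the classical \emph{scrambling index} of $P$, which for this extremal digraph is exactly $\ceil*{(d^2-2d+2)/2}$; together with the upper bound this pins the worst-case growth at order $d^2$.

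The main obstacle is the tightness half. The upper bound is essentially soft once the rank-one characterization and the monotone chain $\cS_n$ are in place. The example, by contrast, requires (i) a rigorous verification that quantum inputs cannot beat classical basis states for a dephasing channel, so that the quantum scrambling time collapses onto the combinatorial one, and (ii) importing the \emph{exact} scrambling index of the extremal primitive digraph—an identity from combinatorial matrix theory whose proof is delicate, and where the sharp constant $\ceil*{(d^2-2d+2)/2}$, rather than the looser Wielandt exponent $d^2-2d+2$, must be justified.
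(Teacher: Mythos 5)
Your proposal is correct and follows essentially the same route as the paper: the bound $q(\Phi)\le c(\Phi)\le d^2$ comes from the increasing chain of operator systems $S_{\Phi}\subseteq S_{\Phi^2}\subseteq\cdots$ stabilizing within $d^2$ steps together with the rank-one characterization of vanishing zero-error classical capacity (the paper's Theorem~\ref{stabilize} and Corollary~\ref{corollary:index}), and tightness comes from the classical channel built on the extremal Wielandt-type stochastic matrix $A_d$ of Eq.~\eqref{eq:Ad}, importing the exact scrambling index $\lceil (d^2-2d+2)/2\rceil$ from the combinatorial literature \cite{Akelbek2009index}. The only cosmetic difference is that you verify the quantum-to-classical reduction for the dephasing channel by tracking output supports directly, whereas the paper routes it through the diagonal-unitary-covariant machinery (Theorem~\ref{thm:scrambling-DUC}) and the graphical structure of $S_{\Phi_A}$ (Lemma~\ref{lemma:c=cE}).
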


\begin{theorem}\label{thm:main22}
    For an eventually q-scrambling dQMS $\{ \Phi^n\}_{n\in \mathbb N}$ governing the dynamics of a $d-$dimensional system, the scrambling time satisfies $q(\Phi)\leq d^2$. 
\end{theorem}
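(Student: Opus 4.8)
The plan is to reduce the claim to the single statement $Q^{(1)}_0(\Phi^{d^2}) = 0$ and then prove this by contradiction, combining a monotonicity property of the zero-error quantum capacity along the semigroup, the spectral splitting of $\Phi$ on the $d^2$-dimensional operator space $\B{\Hil}$, and the characterization of eventual q-scrambling from Theorem~\ref{thm:main1}. First I would record that $n\mapsto Q^{(1)}_0(\Phi^n)$ is non-increasing, and in fact once it vanishes it stays zero: if $\mathcal{C}\subseteq\Hil$ is a correctable code for $\Phi^{m}=\Phi^{m-n}\circ\Phi^{n}$ with recovery $\mathcal{R}$, then $\mathcal{R}\circ\Phi^{m-n}$ recovers $\Phi^{n}$ on $\mathcal{C}$, so positivity of the capacity at a later time forces positivity at every earlier time. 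Hence $q(\Phi)$ is exactly the first time the capacity hits $0$, and it suffices to show $Q^{(1)}_0(\Phi^{d^2}) = 0$.

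Next, suppose for contradiction that $Q^{(1)}_0(\Phi^{d^2}) > 0$, so there is a two-dimensional subspace $\mathcal{C}\subseteq\Hil$ that is correctable for $\Phi^{d^2}$. The heart of the proof — and the step I expect to be hardest — is to show that such a code must be encoded inside the peripheral space $\mathcal{X}_\Phi$ of $\Phi$ (the span of the generalized eigenvectors of $\Phi$ whose eigenvalues have modulus one) and, consequently, that it remains correctable at infinitely many later times. The intuition is that $\B{\Hil}$ splits as $\mathcal{X}_\Phi\oplus\mathcal{N}$, with $\Phi$ strictly contractive on the transient block $\mathcal{N}$, and that since $\dim\B{\Hil}=d^2$, any contribution of $\mathcal{N}$ to the exact Knill--Laflamme identities encoding correctability cannot persist all the way to the single time $n=d^2$; the surviving correctable structure must therefore be carried by $\mathcal{X}_\Phi$, whose dynamics is (quasi-)periodic, so correctability of $\mathcal{C}$ recurs along a sequence of times $n_j\to\infty$. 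Turning this intuition into an exact statement — in particular extracting precisely the clean threshold $d^2$ from the degree $\le d^2$ of the minimal polynomial of $\Phi$ on $\B{\Hil}$, rather than a larger Wielandt-type quantity — is the crux, and it is here that the structural description of the peripheral space developed earlier in the paper does the work.

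Finally, from correctability of the fixed code $\mathcal{C}$ at the times $n_j\to\infty$ I would derive a contradiction with eventual q-scrambling. Along a further subsequence, $\Phi^{n_j}$ converges to a limit point $\Phi_\infty$ of the semigroup, by compactness of the set of channels. Writing $\omega$ for the maximally entangled state on $\mathcal{C}$ together with a reference copy, each output $(\id\otimes\Phi^{n_j})(\omega)$ carries at least one ebit of entanglement: applying the local recovery map $\id\otimes\mathcal{R}_{n_j}$ returns $\omega$, and entanglement cannot increase under such a local channel, so the pre-recovery state is at least as entangled. By continuity of entanglement (e.g.\ of the negativity), the limit $(\id\otimes\Phi_\infty)(\omega)$ is then entangled. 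But by Theorem~\ref{thm:main1} an eventually q-scrambling dQMS is asymptotically entanglement breaking, so every limit point $\Phi_\infty$ is entanglement breaking and $(\id\otimes\Phi_\infty)(\omega)$ must be separable — a contradiction. Hence $Q^{(1)}_0(\Phi^{d^2})=0$, and together with the monotonicity of the first paragraph this yields $q(\Phi)\le d^2$.
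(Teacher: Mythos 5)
Your opening reduction is correct: $n\mapsto Q^{(1)}_0(\Phi^n)$ is non-increasing (composing the recovery map for $\Phi^m$ with $\Phi^{m-n}$ recovers $\Phi^n$ on the same code), so it suffices to show $Q^{(1)}_0(\Phi^{d^2})=0$. Your closing paragraph (compactness of the set of channels, monotonicity and continuity of an entanglement measure such as negativity, and the asymptotic entanglement-breaking property from Theorem~\ref{thm:main1}) would also go through. But the middle step --- the one you yourself flag as the crux --- is a genuine gap, not a proof. You need the following statement: if a code $\mathcal{C}$ is correctable for $\Phi^{d^2}$, then it remains correctable at infinitely many later times. The mechanism you propose (the splitting $\B{\Hil}=\chi(\Phi)\oplus\mathcal{N}$ with $\Phi$ eventually contractive on the transient block $\mathcal{N}$, plus the degree $\leq d^2$ of the minimal polynomial of $\Phi$) cannot deliver it: the transient part decays only asymptotically, and its eigenvalues may have modulus arbitrarily close to $1$, so nothing exact happens at the specific time $n=d^2$; the Knill--Laflamme identities for $\Phi^{d^2}$ could a priori be satisfied with non-negligible transient contributions, and no minimal-polynomial argument propagates exact identities holding at one time to later times. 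What you would need to prove here is essentially equivalent to the theorem itself, and it is left unproved.

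The paper's mechanism is entirely different, and it is what produces the clean threshold $d^2$. Writing the Kraus operators of $\Phi^{n+1}=\Phi\circ\Phi^n$ as products, one finds $S_{\Phi^{n+1}}=\operatorname{span}\{L_\alpha^* Y L_\beta : Y\in S_\Phi\}$ with $L_\alpha$ the Kraus operators of $\Phi^n$; since $\iden\in S_\Phi$, this contains $S_{\Phi^n}$, so the operator systems form a nested increasing chain $S_\Phi\subseteq S_{\Phi^2}\subseteq\cdots$ of subspaces of the $d^2$-dimensional space $\B{\Hil}$. The chain must therefore stabilize at some $N\leq d^2$, and once $S_{\Phi^N}=S_{\Phi^{N+1}}$ the recursion forces $S_{\Phi^{N+k}}=S_{\Phi^N}$ for all $k$. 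Since the Knill--Laflamme conditions --- and hence $Q^{(1)}_0$, $C^{(1)}_0$, $C^{(1)}_{0E}$ --- depend only on the operator system (Theorem~\ref{op-syst-char}), all one-shot zero-error capacities are constant from time $N$ onward (Theorem~\ref{stabilize}). The hypothesis of eventual q-scrambling provides some $k$ with $Q^{(1)}_0(\Phi^k)=0$, and stabilization immediately gives $q(\Phi)\leq N\leq d^2$ (Corollary~\ref{corollary:index}), with no need for limit points, entanglement monotones, or the peripheral decomposition. If you want to salvage your route, replace the spectral intuition by this operator-system stabilization lemma: it is exactly the missing ingredient that makes ``correctable at time $d^2$'' imply ``correctable at all later times.''
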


The above results are special cases of a more general phenomenon. It turns out that for a $d-$dimensional quantum system, there exists a universal dimension-dependent time-scale after which the zero-error capacities of any dQMS acting on the system stabilize.

\begin{theorem}\label{thm:main3}
    For any dQMS $\{ \Phi^n\}_{n\in \mathbb N}$ governing the dynamics of a $d-$dimensional quantum system, there exists $N\leq d^2$ such that
    \begin{align}
        \forall n\in \mathbb{N}: \quad 
         Q^{(1)}_0(\Phi^N) &= Q^{(1)}_0(\Phi^{N+n}), \\ C^{(1)}_0(\Phi^N) &= C^{(1)}_0(\Phi^{N+n}).
    \end{align}
\end{theorem}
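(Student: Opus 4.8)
The plan is to show that \emph{both} zero-error capacities of $\Phi^n$ are governed by a single algebraic object---the non-commutative confusability graph of $\Phi^n$---and that this object stabilizes after at most $d^2$ iterations. Fix Kraus operators $\{A_\mu\}$ for $\Phi$, so that the products $\{A_{\mu_n}\cdots A_{\mu_1}\}$ are Kraus operators for $\Phi^n$, and define the operator system
\begin{equation}
S_n := \operatorname{span}\{K^\dagger L : K,L \text{ Kraus of } \Phi^n\} \subseteq \B{\Hil},
\end{equation}
which is independent of the chosen Kraus representation. First I would record that each capacity is a function of $S_n$ alone. For pure inputs one has $\Phi^n(|\psi\rangle\langle\psi|)\perp \Phi^n(|\phi\rangle\langle\phi|)$ if and only if $\langle\psi|X|\phi\rangle = 0$ for every $X\in S_n$, and since an optimal classical encoding may be taken with pure states (passing to a vector in the support of each $\rho_m$ only shrinks the output support), $C^{(1)}_0(\Phi^n)$ is the logarithm of the largest family of pairwise $S_n$-orthogonal lines. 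Likewise, by the standard error-correction (Knill--Laflamme) conditions a subspace $\cC$ with projector $P$ is correctable for $\Phi^n$ exactly when $PXP\in\mathbb{C}P$ for all $X\in S_n$, so $Q^{(1)}_0(\Phi^n)$ too is determined by $S_n$.

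The second step is to establish a recursion and an induced nesting for the $S_n$. Writing $\Phi^{n+1}=\Phi^n\circ\Phi$ and expanding the Kraus products $KA_\mu$ gives
\begin{equation}
S_{n+1} = \operatorname{span}\{A_\mu^\dagger Y A_\nu : Y\in S_n,\ \mu,\nu\} =: \Psi(S_n),
\end{equation}
where $\Psi$ is a linear, inclusion-preserving map on subspaces of $\B{\Hil}$. Trace preservation, $\sum_\mu A_\mu^\dagger A_\mu = \iden$, shows $\iden\in S_1$, hence $S_0=\mathbb{C}\,\iden \subseteq S_1$; applying the monotone map $\Psi$ repeatedly then produces the ascending chain $S_1\subseteq S_2\subseteq S_3\subseteq\cdots$.

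The final step is a dimension count. All $S_n$ sit inside $\B{\Hil}\cong\M{d}$, a space of dimension $d^2$, so the ascending chain must stabilize; moreover, if $S_N=S_{N+1}$ then $S_{N+2}=\Psi(S_{N+1})=\Psi(S_N)=S_{N+1}$, so once stationary the chain stays stationary forever. Taking $N$ to be the first stationary index, the strict inclusions $S_1\subsetneq\cdots\subsetneq S_N$ force $\dim S_N\geq N$, whence $N\leq d^2$. Since $S_N=S_{N+n}$ for every $n$ and both capacities depend only on $S_n$, we conclude $C^{(1)}_0(\Phi^{N})=C^{(1)}_0(\Phi^{N+n})$ and $Q^{(1)}_0(\Phi^{N})=Q^{(1)}_0(\Phi^{N+n})$ for all $n$, as claimed.

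I expect the main obstacle to lie in the first step: verifying rigorously that $S_n$ is a complete invariant for both capacities at once---in particular that passing to mixed-state or higher-dimensional encodings cannot beat the pure-state and projector optima encoded by the $S_n$-orthogonality and Knill--Laflamme conditions. The nesting in the second step is the crucial technical input, and it is precisely where trace preservation enters; without $\iden\in S_1$ the chain need not be monotone and the clean $d^2$ bound would break down.
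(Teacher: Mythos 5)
Your proof is correct and takes essentially the same route as the paper: both reduce the two capacities to the operator system (non-commutative confusability graph) $S_{\Phi^n}$ via the Duan--Severini--Winter and Knill--Laflamme characterizations, establish the recursion $S_{\Phi^{n+1}}=\operatorname{span}\{K_i^\dagger X K_j : X\in S_{\Phi^n}\}$, and bound the length of the resulting ascending chain inside $\M{d}$ by a dimension count. Your explicit justification of the nesting (monotonicity of $\Psi$ together with $\iden\in S_{\Phi}$ from trace preservation) is a detail the paper leaves implicit, but it is the same argument.
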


Finally, if the dynamics $\{\Phi^n \}_{n\in \mathbb N}$ of the system is not eventually scrambling, we show that the optimal way to encode information in order to protect it from noise is to do it inside the peripheral space $\chi (\Phi)$ of the channel $\Phi$, which is defined as the span of all its peripheral eigenoperators:
\begin{equation}
    \chi(\Phi):= \operatorname{span} \{X\in \B{\Hil} : \Phi (X) = \lambda X, \abs{\lambda}=1 \}.
\end{equation}

The structure of this space is well understood: for any channel $\Phi:\B{\Hil}\to \B{\Hil}$, there exists an orthogonal decomposition $\Hil = \Hil_0 \oplus \bigoplus_{k=1}^K \Hil_{k,1}\otimes \Hil_{k,2}$ and positive definite states $\rho_{k}\in \B{\Hil_{k,2}}$ such that \cite[Theorem 6.16]{Wolf2012Qtour}:
    \begin{equation}\label{eq:phasespace*}
        \chi(\Phi) = 0 \oplus \bigoplus_{k=1}^K (\B{\Hil_{k,1}}\otimes \rho_k). 
    \end{equation}
Moreover, there exist unitaries $U_k\in \B{\Hil_{k,1}}$ and a permutation $\pi$ which permutes within subsets of $\{1,2,\ldots ,K \}$ for which the corresponding $\Hil_{k,1}$'s have the same dimension, such that for any
    \begin{equation}
        X = 0 \oplus \bigoplus_{k=1}^K x_k \otimes \rho_k, \quad
        \Phi (X) = 0 \oplus \bigoplus_{k=1}^K U^{\dagger}_k x_{\pi (k)} U_k \otimes \rho_k .
    \end{equation}

From the above peripheral decomposition, it is not too hard to deduce that any information encoded inside the $\B{\Hil_{k,1}}$ blocks is shielded from noise for an arbitrarily long time. Furthermore, in the asymptotic limit, it turns out that this is the best one can do. 

\begin{theorem}\label{thm:main4}
    For any dQMS $\{\Phi^n \}_{n\in \mathbb N}$, we have
    \begin{align}
        \lim_{n\to \infty} C^{(1)}_0 (\Phi^n) &= 
        \log \sum_{k=1}^K \dim \Hil_{k,1}. \\
        \lim_{n\to \infty} Q^{(1)}_0 (\Phi^n) &= 
        \log \max_{k} \dim \Hil_{k,1}.
    \end{align}
\end{theorem}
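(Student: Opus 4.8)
The plan is to combine the stabilization provided by Theorem~\ref{thm:main3} with a ``peripheral concentration'' principle, and then to prove matching achievability (lower) and converse (upper) bounds for each capacity. Since Theorem~\ref{thm:main3} guarantees that $n\mapsto C^{(1)}_0(\Phi^n)$ and $n\mapsto Q^{(1)}_0(\Phi^n)$ are eventually constant, both limits exist and equal their stabilized values, so it suffices to identify those values. The key analytic input is that every input is driven into the peripheral space. Splitting $\Phi$ according to its spectrum into its peripheral part (eigenvalues on the unit circle, which is semisimple for a channel and whose eigenoperators span $\chi(\Phi)$, acted on by the permutation--unitary rule described after \eqref{eq:phasespace*}) and its strictly contractive part (spectrum in the open unit disc, whose iterates tend to $0$ in norm), one obtains $\operatorname{dist}(\Phi^n(\rho),\chi(\Phi))\to 0$ for every state $\rho$. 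Hence every limit point of $(\Phi^n(\rho))_n$ is a \emph{peripheral} state $0\oplus\bigoplus_k \sigma_k\otimes\rho_k$ with $\sigma_k\ge 0$.

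For the classical capacity the lower bound is direct: the $\sum_k \dim\Hil_{k,1}$ states $\ket{e}\bra{e}\otimes\rho_k$ (with $\ket e$ running over an orthonormal basis of each $\Hil_{k,1}$, embedded in the $k$-th block) are pairwise orthogonal, and the permutation--unitary action on $\chi(\Phi)$ sends orthogonal peripheral states to orthogonal peripheral states, so they stay perfectly distinguishable under every $\Phi^n$; thus $C^{(1)}_0(\Phi^n)\ge \log\sum_k\dim\Hil_{k,1}$ for all $n$. For the converse I would use that orthogonality, being the vanishing of the nonnegative and jointly continuous overlap $\operatorname{tr}(\rho\sigma)$, is a \emph{closed} condition. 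Starting from any optimal family $\{\rho_m\}$ with pairwise orthogonal $\Phi^n$-images, I pass to a convergent subsequence of the outputs and invoke peripheral concentration to obtain pairwise orthogonal peripheral states $\mu_m=0\oplus\bigoplus_k\sigma^{(m)}_k\otimes\rho_k$. Because each $\rho_k$ is positive definite, $\mu_m\perp\mu_{m'}$ forces $\operatorname{supp}\sigma^{(m)}_k\perp\operatorname{supp}\sigma^{(m')}_k$ inside $\Hil_{k,1}$ for every $k$; since pairwise orthogonal subspaces of $\Hil_{k,1}$ have dimensions summing to at most $\dim\Hil_{k,1}$, while each valid state occupies at least one block, the number of messages is at most $\sum_k\dim\Hil_{k,1}$, matching the lower bound.

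For the quantum capacity the lower bound exploits that each $\Hil_{k,1}$ is a \emph{noiseless subsystem}. On the recurrent subspace $\bigoplus_k\Hil_{k,1}\otimes\Hil_{k,2}$ (which is $\Phi$-invariant) the Kraus operators factor across each block as $U_k\otimes A_i^{(k)}$, so the induced dynamics on the first tensor factors is exactly unitary (a permutation of equal-dimensional blocks composed with conjugation by the $U_k$'s), independently of the gauge factors $\Hil_{k,2}$. Fixing the largest block $k^\star$ and any unit vector $\ket\psi\in\Hil_{k^\star,2}$, I encode on the subspace $\cC=\Hil_{k^\star,1}\otimes\ket\psi$; a short computation using $\sum_i A_i^{(k)\dagger}A_i^{(k)}=\iden$ shows that the reduced state on the first factor of the image block is exactly a known unitary conjugate $V_n^\dagger\sigma V_n$ of the input $\sigma$, so recovery by projecting onto the relevant block, discarding the gauge, undoing $V_n$, and re-attaching $\ket\psi$ is exact, giving $Q^{(1)}_0(\Phi^n)\ge\log\max_k\dim\Hil_{k,1}$ for all $n$. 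For the converse I would show that asymptotically both the inter-block coherences and the gauge coherences decohere, so that coherent quantum information survives only inside a single block's first factor: comparing with a limit channel $\Phi_\infty$ taken along a subsequence on which the permutation and peripheral phases converge, one verifies that $\Phi_\infty$ outputs block-diagonal states with frozen gauge $\rho_k$, whence its largest correctable code has dimension exactly $\max_k\dim\Hil_{k,1}$, and transferring this back to $\Phi^n$ for large $n$ via stabilization yields $Q^{(1)}_0(\Phi^n)\le\log\max_k\dim\Hil_{k,1}$.

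The step I expect to be the main obstacle is precisely this quantum converse. Zero-error capacities are combinatorial and not continuous in the channel, so deducing an \emph{exact} confinement-to-one-block statement at large finite $n$ from the merely \emph{asymptotic} decoherence between blocks is delicate. The cleanest route is likely through the Knill--Laflamme conditions for $\Phi^n$: as $n\to\infty$ the admissible correctability scalars should force the code projector to commute with the block decomposition and to act trivially on the gauge factors, thereby collapsing any correctable code into a single noiseless subsystem $\Hil_{k,1}$. By contrast, the classical converse is comparatively soft, precisely because orthogonality is a closed condition that is automatically inherited in the limit.
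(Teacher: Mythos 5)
Your overall route --- stabilization of the capacities, passage to a peripheral limit channel along a subsequence, and a counting argument in the block decomposition of $\chi(\Phi)$ --- is essentially the paper's own (Theorems~\ref{theorem:C0phin} and~\ref{theorem:Q0phin}). Your classical argument and your quantum achievability are sound, with two caveats. First, your classical converse needs the \emph{same} code to have pairwise orthogonal images under $\Phi^n$ for infinitely many $n$; this does not follow from the capacity equalities of Theorem~\ref{thm:main3} alone, but from the operator-system stabilization $S_{\Phi^N}=S_{\Phi^{N+q}}$ (Theorem~\ref{stabilize}) underlying it --- alternatively, a compactness argument over $n$-dependent optimal codes also works, since orthogonality is closed and every limit point of $\{\Phi^n\}$ has range inside $\chi(\Phi)$. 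Second, the factorized Kraus form $U_k\otimes A_i^{(k)}$ on the recurrent subspace, which your quantum lower bound rests on, is itself a nontrivial structure theorem; the paper sidesteps it by using the reversibility of $\Phi$ on $\chi(\Phi)$ \cite[Theorem 6.16]{Wolf2012Qtour} together with \cite[Theorem 3.7]{kribs2006error} to exhibit correctable subspaces of dimension $\dim\Hil_{k,1}$.

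The genuine gap is the quantum converse, exactly where you flag it, and it has two missing pieces. The first --- that a subspace correctable for $\Phi^{N+q}$ for all $q$ is correctable for the limit channel $\mathcal{P}_{\chi}$ --- is the easier half: it follows either by compactness (the recovery channels $\mathcal{R}_q$ live in a compact set, and any subsequential limit $\mathcal{R}$ satisfies $\mathcal{R}\circ\mathcal{P}_{\chi}=\mathrm{id}$ on the code, by joint continuity of composition), or, as the paper does it, by recasting correctability as preservation of relative entropy \cite{Petz1988sufficient} and passing to the limit \cite{Shirokov2022}. The second piece is critical and is only asserted in your proposal: that the largest correctable code of $\mathcal{P}_{\chi}$ has dimension $\max_k\dim\Hil_{k,1}$. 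Your proposed mechanism (``the Knill--Laflamme scalars should force the code projector to commute with the block decomposition and act trivially on the gauge factors'') is precisely the statement that needs proof, and it is not automatic: a priori a correctable subspace could straddle several blocks or sit skewly with respect to the tensor factorizations $\Hil_{k,1}\otimes\Hil_{k,2}$. The paper closes this with an algebraic argument: correctability of $\mathcal{C}$ for $\mathcal{P}_{\chi}$ forces $\mathcal{P}_{\chi}$ to act as a $*$-homomorphism on the full matrix algebra $\mathcal{B}(\mathcal{C})$ up to smearing by a fixed operator \cite[Theorem 3]{Choi2009error}, and any nonzero $*$-homomorphic image of $\mathcal{B}(\mathcal{C})$ inside $\chi(\Phi)\cong\bigoplus_{k}\mathcal{B}(\Hil_{k,1})$ must occupy some block with $\dim\Hil_{k,1}\geq\dim\mathcal{C}$, by the standard representation theory of matrix algebras \cite{davidson1996calgebra}. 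Without this step (or an equivalent one), your upper bound $Q^{(1)}_0(\Phi^n)\leq\log\max_k\dim\Hil_{k,1}$ does not go through.
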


Note that because of Theorem~\ref{thm:main3}, the limits above are actually attained at a finite time $n\leq d^2$. Two special cases of this result are worth highlighting. If $\Phi$ is a classical channel given by a stochastic matrix $M$:
\begin{equation}
    \Phi(X) = \sum_{i,j} M_{ij} X_{jj} \ketbra{i},
\end{equation}
the $\Hil_{k,1}$ blocks in Eq.~\eqref{eq:phasespace*} become one-dimensional (since otherwise the channel would have non-zero quantum capacity, which is impossible). Hence, $\sum_k \dim \Hil_{k,1} = \dim \chi (\Phi) =$ number of peripheral eigenvalues of $M$ (counted with multiplicities). Similarly, if $\Phi$ is a quantum channel with a unique fixed state, the $\Hil_{k,1}$ blocks become one-dimensional \cite{Wolf2012Qtour}, and we get

\begin{equation*}
        \lim_{n\to \infty} C^{(1)}_0 (\Phi^n) = \log \dim \chi (\Phi).
\end{equation*}

\begin{remark}
    The long-time capacities of a continuous-time Quantum Markov Semigroup $\{\Psi_t = e^{t\mathcal{L}} \}_{t\geq 0}$ generated by a Lindbladian $\mathcal{L}$ \cite{Gorini1976qms, Lindblad1976qms} can be calculated by choosing $\Phi=\Psi_1=e^{\mathcal{L}}$, for instance, and applying the formulas from Theorem~\ref{thm:main4} for the dQMS $\{\Phi^n \}_{n\in \mathbb{N}}$. It is easy to see that the choice $t=1$ is irrelevant here, since the peripheral space $\chi(\Psi_t)$ is independent of $t$ \cite{fawzi2024error}.
\end{remark}


\subsection{Proof ideas}
Two key ingredients are used in the proofs of our results. The first is a reformulation of the zero-error capacity of any channel $\Phi$ in terms of its operator system. If $\{K_i \}_i$ is a set of Kraus operators of $\Phi:\B{\Hil}\to \B{\Hil}$, the operator system $S_{\Phi}:= \text{span}_{i,j} \{K_i^{\dagger}K_j \}$ is a $\dagger-$closed subspace of $\B{\Hil}$ containing the identity \cite{paulsen-book}. The error-correction condition in Eq.~\eqref{eq:Cencoding} can be restated as 
\begin{align}
    \forall m\neq m' : \quad \ketbra{\psi_m}{\psi_{m'}} \perp S_{\Phi} \quad \text{\cite{Duan2013noncomm}}.
\end{align}
A similar reformulation can be done for the quantum case in Eq.~\eqref{eq:Qencoding} by exploiting the Knill-Laflamme error correction conditions \cite{knil-laf}. Hence, the zero-error capacities of any channel $\Phi$ are purely a function of its operator system. Now, for any dQMS $\{ \Phi^n\}_{n\in \mathbb{N}}$, we show that the corresponding operator systems form an increasing chain of subspaces in $\B{\Hil}$ which stabilizes at time $N\leq (\dim \Hil )^2$:
\begin{equation}
    S_{\Phi} \subset S_{\Phi^2} \subset \ldots \subset S_{\Phi^N} = S_{\Phi^{N+1}} = \ldots = S_{\Phi^{N+n}} = \ldots .
\end{equation}
This proves that the zero-error capacities must also stabilize after time $N$ (Theorem~\ref{thm:main3}). The second key ingredient we use is the fact that for any dQMS $\{ \Phi^n\}_{n\in \mathbb{N}}$, there exists an increasing subsequence $(n_i)_{i\in \mathbb N}$ such that $\Phi^{n_i}\to \mathcal{P}_{\chi}$ as $i\to \infty$ \cite[Proposition 6.3]{Wolf2012Qtour}, where $P_{\chi}$ is the channel that projects onto the peripheral space $\chi (\Phi)$. This shows that the capacities of $\Phi^n$ must stabilize to the corresponding capacities of $P_{\chi}$, which can be explicitly computed in terms of the block structure of $\chi (\Phi)$ (see Eq.~\eqref{eq:phasespace*}). The characterization of eventually scrambling dQMS follows easily from this. Clearly, a dQMS $\{\Phi^n \}_{n\in \mathbb N}$ is eventually c-scrambling $\iff \sum_{k=1}^K \dim \Hil_{k,1}=1 \iff \dim \chi (\Phi)=1\iff \Phi$ admits a unique fixed state and no other peripheral eigenoperators, which is equivalent to the dQMS being mixing in the sense of Theorem~\ref{thm:main1}. The quantum case follows similarly by exploiting the results derived in \cite{Lami2016eb} on asymptotically entanglement-breaking channels. We refer the reader to Appendix~\ref{sec:main} for complete proofs of all the main results.

\subsection{Auxiliary Results}\label{sec:main-res}

We derive several auxiliary results in the appendices, which might be of independent interest. We list a few of them below. 

\begin{itemize}
    \item We prove that if a dQMS $\{\Phi^n \}_{n\in \mathbb N}$ is eventually c-scrambling, then the one-shot zero error \emph{entanglement assisted} classical capacity of $\Phi^n$ also vanishes for some $n\in \mathbb N$, see Theorem~\ref{theorem:scr-mix}. Moreover, the entanglement assisted classical scrambling time $c_E(\Phi)$, defined analogously to $c(\Phi)$, also satisfies $c_E(\Phi)\leq d^2$, see Theorem~\ref{stabilize} and Corollary~\ref{corollary:index}.
    \item We provide an equivalent characterization of mixing channels in terms of minimal invariant subspaces\footnote{A minimal invariant subspace is a subspace of the
    Hilbert space that is left invariant by the action of the channel and does not contain
    any non-trivial, smaller subspace with the same property}, see Theorem~\ref{thm:inv-sub}. This result is a generalization of~\cite[Proposition 3]{wielandt2010} where such a characterization was obtained for the smaller class of \emph{primitive}\footnote{A mixing channel is called primitive if its unique fixed state has full rank.} channels.
    \item We exhibit a close link between the Wielandt index \cite{wielandt2010, rahaman2020, Wiel-revisit, jia-capel} of a channel and its scrambling time. It turns out that if $\Phi$ is a primitive channel, there exists $n\in \mathbb N$ such that $\Phi^n$ is \emph{strictly positive} (i.e., it sends any input state to a full rank output state), and the minimum such $n$ is called the Wielandt index of $\Phi$ (denoted $w(\Phi)$). For any primitive channel, it is easy to check that $c(\Phi)\leq w(\Phi)$. In case $\Phi$ is also unital, we obtain a reverse inequality: $w(\Phi)\leq (d-1)c(\Phi)$, see Corollary~\ref{strict-pos}.
    \item We show that for the so-called diagonal unitary covariant channels~\cite{Singh2021diagonalunitary}, the properties of strict positivity, scrambling, primitivity, and mixing are equivalent to the corresponding properties of a classical stochastic matrix, see Theorems~\ref{thm:strict-pos-DUC}, \ref{thm:scrambling-DUC}, \ref{thm:DUC-primitive}, and \ref{thm:DUC-mixing}. 
    
\end{itemize}

\section{Discussion and Outlook}

In this work, we analyze how information can be transmitted across sequential concatenations of a quantum channel $\Phi$. Physically speaking, we investigate how information stored in an open quantum system propagates over time as the system evolves according to a discrete-time Quantum Markov Semigroup $\{\Phi^n \}_{n\in \mathbb N}$. We show that any information stored inside the peripheral space $\chi(\Phi)$ is protected from noise for all times $n\in \mathbb{N}$. Furthermore, we prove that this is the optimal encoding strategy in the limit of $n\to \infty$. This allows us to derive explicit formulas for the information transmission capacities of any dQMS in the limit of $n\to \infty$ in terms of the block structure of $\chi (\Phi)$. Here, it would be interesting to analyze whether the decomposition of $\chi(\Phi)$ in Eq.~\eqref{eq:phasespace*} is efficiently computable, which in turn would make the capacity formulas of Theorem~\ref{thm:main4} efficiently computable. We also show that a system is asymptotically useless for storing classical (resp. quantum) information if and only if the dQMS governing its dynamics is mixing (resp. asymptotically entanglement breaking). Interestingly, we exhibit a universal time scale $n\leq d^2$, after which the information transmission capacity of any dQMS acting on a $d-$dimensional system stabilizes. We prove that the quadratic dimension dependence is tight for classical capacity and we expect the same to be true for quantum capacity. 

The sequential view of information transmission that we consider opens a host of exciting research directions. While we have considered a simple Markovian model for the dynamics of the open system, it would be interesting to perform the same analysis for other kinds of dynamics, such as repeated interaction systems \cite{Attal2006repeated, Grimmer2016repeated, Ciccarello2022repeated} and other non-Markovian dynamics. Apart from the standard classical and quantum capacities, other kinds of transmission rates can also be considered, such as those where assistance from external resources like correlations and entanglement are supplied to aid in communication \cite{Cubitt2010zero, Cubitt2011zeroextra, Leung2012zeroextra, Duan2016zeroextra}. Finally, it would be interesting to drop the zero-error constraint and analyse approximate capacities, where information is required to be recovered only approximately with a certain error threshold.

\textbf{Note.} Many of the open questions posed in this discussion have been recently resolved, see \cite{fawzi2024error, singh2024markovian, singh2024markovian2}.

\section{Acknowledgements}
We thank Omar Fawzi for insightful comments on the first version of this manuscript and for conjecturing that the results in Theorem~\ref{thm:main4} should hold true.

At the time of this work, SS was supported by the Cambridge Trust International Scholarship. At the time of this work, MR was supported by the Marie Sk\l{}odowska-Curie Fellowship from the European Union’s Horizon Research and Innovation program, grant Agreement No. HORIZON-MSCA-2022-PF-01 (Project number: 101108117).

\smallskip 

\bibliographystyle{plainurl}
\bibliography{references}

\begin{appendices}

\section{Preliminaries}
In this paper, we always work with finite-dimensional Hilbert spaces and denote them by $\cH$. $\cL(\cH)$ denotes the algebra of linear operators acting on $\cH$. The set of quantum states (density matrices) on $\cH$ is denoted by $\cD(\cH) := \{ \rho \in \cL(\cH) \, : \, \rho \geq 0, \, \Tr(\rho) = 1\}$. The identity operator on $\cH$ is denoted by ${\iden} \in \cL(\cH)$. Note that if $\dim \cH = d$, $\cH \simeq \C{d}$ and $\cL(\cH) \simeq \M{d}$, where $\M{d}$ denotes the matrix algebra of all $d\times d$ complex matrices. We denote the Hilbert space associated to a quantum system $A$ by $\Hil_A$. Pure states of the system $A$ are denoted either by normalized kets $\ket{\psi} \in \Hil_A$ or by the corresponding rank one projections $\psi:=\ketbra{\psi} \in \cL(\Hil_A)$. 

A quantum channel $\Phi:\cL({\Hil_A})\to \cL(\Hil_B)$ is a linear, completely positive, and trace-preserving map. $\Phi$ is said to be unital if $\Phi(\iden_A) = {\iden}_B$. The adjoint of the channel $\Phi$, with respect to the Hilbert-Schmidt inner product, is the linear map $\Phi^*:\cL({\Hil_B})\to \cL(\Hil_A)$ defined through the relation $\Tr (\Phi^*(X) Y) = \Tr (X \Phi(Y))$, for any $X \in \cL(\Hil_A)$ and  $Y\in \cL(\Hil_B)$. It is completely positive and unital. In this paper, all logarithms are taken to base $2$. 

A discrete quantum Markov semigroup (dQMS) associated with a quantum channel $\Phi:\B{\Hil}\to \B{\Hil}$ is the sequence $\{\Phi^n\}_{n\in \mathbb N}$. Here, the `semigroup' terminology simply refers to the fact that the sequence $\{\Phi^n \}_{n\in \mathbb N}$ is closed with respect to compositions: $\Phi^n\circ \Phi^m = \Phi^{n+m}$. We can think of a dQMS as governing the time evolution of an open quantum system $A$, with $n\in \mathbb N$ acting as the discrete time parameter.  

\begin{rem}
    As mentioned in the Introduction, we are interested in studying zero-error communication through discrete quantum Markov semigroups. Hence, we will mostly focus on quantum channels $\Phi:\cL (\Hil_A)\to \cL (\Hil_B)$ whose input and output spaces are the same $\Hil_A = \Hil_B = \cH \simeq \C{d}$. Using the isomorphism $\cL (\C{d}) \simeq \M{d}$, we will interchangeably denote such channels by $\Phi:\cL (\cH)\to \cL (\cH)$ or $\Phi:\M{d}\to \M{d}$ and use $\St{d}$ to denote the set of quantum states $\cD (\cH)$.
\end{rem}

\subsection{Spectral and ergodic properties}\label{sec;spec}

Every quantum channel $\Phi: \cL({\cH})\to \cL(\cH)$ admits a quantum state $\rho\in \cD(\cH)$ as a fixed point: $\Phi(\rho)=\rho$ \cite[Theorem 4.24]{watrous2018theory}. In other words, $\lambda=1$ is always an eigenvalue of $\Phi$. The spectrum  (denoted $\operatorname{spec}\Phi$) of $\Phi$ is contained within the unit disk $\{ z\in \mathbb{C}: |z|\leq 1\}$ in the complex plane and is invariant under complex conjugation, i.e., $\lambda\in \operatorname{spec}\Phi \implies \overbar \lambda \in \operatorname{spec}\Phi$. The peripheral spectrum of $\Phi$ consists of all peripheral eigenvalues $\lambda \in \mathbb{T} \cap \operatorname{spec}\Phi$, where $\mathbb{T}:= \{z\in \mathbb{C} : |z|=1 \}$. It is known that the geometric and algebraic multiplicities of all peripheral eigenvalues of a quantum channel are equal \cite[Proposition 6.2]{Wolf2012Qtour}. A peripheral eigenvalue is called simple if it has unit multiplicity. A quantum channel and its adjoint both share the same spectrum, $\operatorname{spec}\Phi = \operatorname{spec}\Phi^*$.

We now introduce the notions of ergodic and mixing quantum channels. For a more detailed study of the ergodic theory of quantum channels, the readers should refer to \cite{Burgarth2013ergodic, Wolf2012Qtour, Singh2022ergodic}

\begin{theorem}\label{theorem:ergodic}
For a quantum channel $\Phi: \cL({\cH})\to \cL(\cH)$,
the following are equivalent.
\begin{itemize}
    \item $\lambda=1$ is a simple eigenvalue of $\Phi$.
    \item There exists a state $\rho\in \cD(\cH)$ such that for all $X\in \cL(\cH)$,
    \begin{equation*}
        \lim_{N\to \infty} \frac{1}{N} \sum_{n=0}^{N-1} \Phi^n(X) = \operatorname{Tr}(X)\rho.
    \end{equation*}
\end{itemize}
A channel $\Phi$ (or a dQMS $\{\Phi^n\}_{n\in \mathbb N}$) satisfying these equivalent conditions is said to be \emph{ergodic}. The unique fixed point $\rho\in \cD(\cH)$ of $\Phi$ is called the \emph{invariant state} of $\Phi$. If, in addition, the unique invariant state has full rank, then the channel and the associated dQMS are said to be \emph{irreducible}.
\end{theorem}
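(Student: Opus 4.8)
The plan is to prove the equivalence through the mean ergodic theorem, realizing the Cesàro averages $A_N := \frac{1}{N}\sum_{n=0}^{N-1}\Phi^n$ as converging to the spectral projection of $\Phi$ onto its $1$-eigenspace, and then identifying this projection explicitly. The analysis rests on the finite-dimensional Jordan/spectral decomposition of $\Phi$ together with the location of $\operatorname{spec}\Phi$ in the closed unit disk.

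First, for the direction $(1)\Rightarrow(2)$, I would decompose $\cL(\cH)$ into the generalized eigenspaces of $\Phi$. Every eigenvalue $\lambda$ with $\abs{\lambda}<1$ contributes terms decaying geometrically, so its Cesàro average vanishes. For a peripheral eigenvalue $\lambda\in\mathbb{T}$ with $\lambda\neq 1$, the key input is that all peripheral eigenvalues are semisimple (equal geometric and algebraic multiplicities, as cited above), whence $\Phi^n$ acts on the corresponding eigenspace by the scalar $\lambda^n$, and the geometric sum gives $\frac{1}{N}\sum_{n=0}^{N-1}\lambda^n=\frac{1}{N}\,\frac{1-\lambda^N}{1-\lambda}\to 0$. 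Since $\lambda=1$ is assumed simple, its eigenspace is one-dimensional with no Jordan block, so $A_N$ acts there as the identity. Therefore $A_N$ converges to the rank-one spectral projection $P$ onto $\ker(\Phi-\operatorname{id})$. I would then identify $P$: its range is spanned by the invariant state $\rho$ guaranteed to exist for every channel, so $P(X)=f(X)\rho$ for some functional $f$; since each $\Phi^n$ preserves trace, so does $P$, and applying $\operatorname{Tr}$ to $P(X)=f(X)\rho$ with $\operatorname{Tr}\rho=1$ forces $f=\operatorname{Tr}$. This yields $\lim_{N}A_N(X)=\operatorname{Tr}(X)\rho$, which is condition $(2)$, and simultaneously exhibits $\rho$ as the claimed invariant state.

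For the reverse direction $(2)\Rightarrow(1)$, suppose $A_N(X)\to\operatorname{Tr}(X)\rho=:P(X)$. I would first note $\Phi P=P$: since $\Phi A_N=A_N+\frac{1}{N}(\Phi^N-\operatorname{id})$ and $\Phi$ is trace-norm contractive (so $\Phi^N-\operatorname{id}$ is uniformly bounded), the correction vanishes in the limit, giving $\operatorname{ran}P\subseteq\ker(\Phi-\operatorname{id})$. Conversely, any fixed operator $X=\Phi(X)$ satisfies $A_N(X)=X$ for all $N$, so $X=P(X)=\operatorname{Tr}(X)\rho\in\operatorname{span}\{\rho\}$; hence the $1$-eigenspace is one-dimensional and $\lambda=1$ is geometrically simple. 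To upgrade geometric to algebraic simplicity, I would rule out a nontrivial Jordan block: a generalized eigenvector $Y$ with $\Phi(Y)=Y+\rho$ would give $\Phi^n(Y)=Y+n\rho$ and $A_N(Y)=Y+\tfrac{N-1}{2}\rho$, which diverges, contradicting the convergence of $A_N$. Thus $\lambda=1$ is simple.

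The main obstacle is the convergence of the Cesàro means, which hinges entirely on the semisimplicity of the peripheral spectrum: without it, a Jordan block at a peripheral eigenvalue would produce terms of order $n\lambda^n$ whose Cesàro averages need not vanish. I would handle this either by invoking the cited result on equal geometric and algebraic multiplicities of peripheral eigenvalues, or self-containedly by observing that trace-norm power-boundedness of any quantum channel already precludes such blocks.
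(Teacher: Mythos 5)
Your proof is correct. Note, however, that the paper does not actually prove this statement: Theorem~\ref{theorem:ergodic} is quoted as a known result from the ergodic theory of quantum channels, with the reader referred to the literature (Burgarth--Giovannetti, Wolf's lecture notes, etc.), so there is no in-paper argument to compare against. Your self-contained argument is the standard one found in those references: Cesàro means converge to the spectral projection onto the fixed-point space, with the two essential technical points handled correctly --- (i) semisimplicity of the peripheral spectrum (needed so that Cesàro averages over eigenvalues $\lambda\in\mathbb{T}\setminus\{1\}$ vanish; without it, a peripheral Jordan block of size $k\geq 2$ contributes terms of order $N^{k-1}$), which you justify either by the cited multiplicity result or, self-containedly, by trace-norm power-boundedness of channels; and (ii) identification of the limit projection as $X\mapsto \Tr(X)\rho$ via the existence of a fixed state together with trace preservation. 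The converse direction is also complete: you recover geometric simplicity from the fact that every fixed point equals its own Cesàro average, and you rule out a Jordan block at $\lambda=1$ by the divergence $A_N(Y)=Y+\tfrac{N-1}{2}\rho$, which is exactly the right obstruction. The one point worth making explicit if you write this up is the rescaling step in that last argument: a rank-two generalized eigenvector satisfies $(\Phi-\mathrm{id})(Y)=c\rho$ with $c\neq 0$, and you normalize $c=1$; as stated, $\Phi(Y)=Y+\rho$ presumes this normalization.
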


\begin{theorem}\label{theorem:mixing}
For a quantum channel $\Phi: \cL({\cH})\to \cL(\cH)$,
the following are equivalent.
\begin{enumerate}
    \item $\lambda=1$ is a simple eigenvalue of $\Phi$ and there are no other peripheral eigenvalues of $\Phi$.
    \item There exists a state $\rho\in
    \cD(\cH)$ such that for all $X\in \cL(\cH)$, 
    \begin{equation*}
    \lim_{n\to \infty} \Phi^n(X) = \operatorname{Tr}(X)\rho.    
    \end{equation*}
\end{enumerate}
A channel $\Phi$ (or a dQMS $\{\Phi^n\}_{n\in \mathbb N}$) satisfying these equivalent conditions is said to be \emph{mixing}. If, in addition, the unique invariant state of $\Phi$ has full rank, then the channel and the associated dQMS are said to be \emph{primitive}.
\end{theorem}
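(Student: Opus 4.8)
The plan is to establish the two implications separately, with $(2)\Rightarrow(1)$ as the straightforward direction and $(1)\Rightarrow(2)$ as the one carrying the real content.

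For $(2)\Rightarrow(1)$, I would argue directly on eigenoperators. Suppose $\Phi^n(X)\to\Tr(X)\rho$ for every $X\in\cL(\cH)$. If $Y\neq 0$ is any peripheral eigenoperator, say $\Phi(Y)=\lambda Y$ with $\abs{\lambda}=1$, then $\Phi^n(Y)=\lambda^n Y$, and convergence of this sequence forces the scalar sequence $(\lambda^n)$ to converge, which occurs only for $\lambda=1$. Hence $\lambda=1$ is the sole peripheral eigenvalue. Moreover, any fixed point $Y=\Phi(Y)$ satisfies $Y=\lim_n\Phi^n(Y)=\Tr(Y)\rho$, so the $1$-eigenspace is spanned by $\rho$ and is one-dimensional; since geometric and algebraic multiplicities coincide on the peripheral spectrum (as recalled above), $\lambda=1$ is simple.

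For $(1)\Rightarrow(2)$, I would use the generalized-eigenspace decomposition $\cL(\cH)=\operatorname{span}\{\rho\}\oplus V$, where $\rho$ is the invariant state and $V$ is the direct sum of the generalized eigenspaces of $\Phi$ associated with all eigenvalues $\lambda\neq 1$. This is a $\Phi$-invariant splitting precisely because algebraic simplicity of $\lambda=1$ forces its generalized eigenspace to be exactly $\operatorname{span}\{\rho\}$, with no Jordan block attached. By hypothesis every eigenvalue appearing in $V$ has modulus strictly less than $1$, so the spectral radius of $\Phi|_V$ is $<1$, and Gelfand's formula yields $\norm{\Phi^n|_V}\to 0$. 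Writing a general $X=c\rho+W$ with $W\in V$ then gives $\Phi^n(X)=c\rho+\Phi^n(W)\to c\rho$.

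It remains to identify the constant $c$ with $\Tr(X)$, which I would extract from trace preservation: for any $W\in V$ one has $\Tr(W)=\Tr(\Phi^n(W))$ for all $n$, and since $\Phi^n(W)\to 0$ this constant must vanish, so $V\subseteq\ker\Tr$. Consequently $\Tr(X)=c\,\Tr(\rho)+\Tr(W)=c$, which delivers $\Phi^n(X)\to\Tr(X)\rho$. The main obstacle lies entirely in the $(1)\Rightarrow(2)$ direction, and within it the decisive point is that \emph{algebraic} simplicity of $\lambda=1$ (rather than mere geometric simplicity) is what excludes a nontrivial Jordan block at the peripheral eigenvalue and thereby upgrades boundedness of $\Phi^n$ to genuine convergence; once this is in place, the decay of the $V$-component follows cleanly from the spectral gap.
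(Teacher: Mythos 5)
Your proposal is correct. Note that the paper itself does not prove this statement: Theorem~\ref{theorem:mixing} appears in the preliminaries as a known result, imported from the literature (the ergodic-theory references \cite{Burgarth2013ergodic, Wolf2012Qtour, Singh2022ergodic}), so there is no in-paper argument to compare against. Your proof is the standard spectral one, and it is sound: in $(2)\Rightarrow(1)$ the rigidity of $\lambda^n$ on the unit circle correctly kills all peripheral eigenvalues other than $1$, and the identification of the fixed-point space with $\operatorname{span}\{\rho\}$ gives geometric simplicity; in $(1)\Rightarrow(2)$ the splitting $\cL(\cH)=\operatorname{span}\{\rho\}\oplus V$ into the generalized eigenspace at $1$ and the rest, together with Gelfand's formula on $V$ and trace preservation to pin down the coefficient $c=\Tr(X)$, is exactly the right mechanism. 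Two dependencies are worth making explicit, both of which the paper's preliminaries supply: first, the existence of a \emph{state} fixed point (so that the one-dimensional $1$-eigenspace is indeed spanned by a density matrix $\rho$, which statement $(2)$ requires); second, the coincidence of algebraic and geometric multiplicities on the peripheral spectrum of a quantum channel. The latter is not a fact about general matrices but a consequence of power-boundedness of channels, and it is what licenses your passage from geometric to algebraic simplicity in $(2)\Rightarrow(1)$, and equally what makes $(1)\Rightarrow(2)$ go through if ``simple'' in the hypothesis is read geometrically rather than algebraically. With those two citations in place, your argument is complete.
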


The peripheral space of a channel $\Phi:\B{\Hil}\to \B{\Hil}$ is defined as the span of all its peripheral eigenoperators:
\begin{equation}
    \chi(\Phi):= \operatorname{span} \{X\in \B{\Hil} : \Phi (X) = \lambda X, \abs{\lambda}=1 \}.
\end{equation}

For any channel $\Phi:\B{\Hil}\to \B{\Hil}$, there exists a decomposition $\Hil = \Hil_0 \oplus \bigoplus_{k=1}^K \Hil_{k,1}\otimes \Hil_{k,2}$ and positive definite states $\rho_{k}\in \B{\Hil_{k,2}}$ such that \cite[Theorem 6.16]{Wolf2012Qtour}:
    \begin{equation}\label{eq:phasespace}
        \chi(\Phi) = 0 \oplus \bigoplus_{k=1}^K (\B{\Hil_{k,1}}\otimes \rho_k). 
    \end{equation}
Moreover, there exist unitaries $U_k\in \B{\Hil_{k,1}}$ and a permutation $\pi$ which permutes within subsets of $\{1,2,\ldots ,K \}$ for which the corresponding $\Hil_{k,1}$'s have the same dimension, such that for any
    \begin{equation}\label{eq:phaseaction}
    X = 0 \oplus \bigoplus_{k=1}^K x_k \otimes \rho_k, \quad    \Phi (X) = 0 \oplus \bigoplus_{k=1}^K U^{\dagger}_k x_{\pi (k)} U_k \otimes \rho_k .
    \end{equation}

A channel $\Phi:\B{\Hil_A}\to \B{\Hil_B}$ is called \emph{entanglement-breaking} if local action of $\Phi$ on any bipartite system breaks all entanglement in the system, i.e., for all states $\rho\in \State{\Hil_A \otimes \Hil_R}$, $(\Phi \otimes \id_R) (\rho)$ is separable, where $R$ is an arbitrary reference system.

A channel $\Phi:\B{\Hil}\to \B{\Hil}$ (or a dQMS $\{\Phi^n \}_{n\in \mathbb{N}}$) is called 
\begin{itemize}
    \item \emph{eventually} entanglement-breaking if there exists an $n\in \mathbb{N}$ such that $\Phi^n$ is entanglement-breaking. 
    \item \emph{asymptotically} entanglement-breaking if all the limit points of the set $\{\Phi^n \}_{n\in \mathbb{N}}$ are entanglement-breaking.
\end{itemize}

It is known that for any channel $\Phi:\B{\Hil}\to \B{\Hil}$, the limit points of the set $\{\Phi^n \}_{n\in \mathbb{N}}$ are either all entanglement-breaking or none of them are \cite{Lami2016eb}. Moreover, the following result was derived in \cite[Theorem 32]{Lami2016eb}.

\begin{theorem}
    Let $\Phi:\B{\Hil}\to \B{\Hil}$ be a quantum channel. The following are equivalent:
    \begin{itemize}
        \item $\Phi$ is asymptotically entanglement-breaking.
        \item All the $\B{\Hil_{k,1}}$ blocks in the peripheral decomposition in eq.~\eqref{eq:phasespace} are one-dimensional. 
        \item All peripheral points of $\Phi$ communte with one another, i.e. $\forall X,Y\in \chi(\Phi)$, $[X,Y]=XY - YX = 0$.
    \end{itemize}
\end{theorem}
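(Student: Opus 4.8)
\emph{The plan.} The cleanest route is to first settle the equivalence of the last two conditions by a direct algebraic computation inside $\chi(\Phi)$, and then to reduce the entanglement-breaking condition to a statement about a single, explicitly describable limit point. For the latter reduction I would invoke two facts already recorded above: that some subsequence $\Phi^{n_i}$ converges to the peripheral projection $P_\chi$ \cite[Proposition 6.3]{Wolf2012Qtour}, so that $P_\chi$ is itself a limit point of $\{\Phi^n\}_{n\in\mathbb N}$; and that the limit points of $\{\Phi^n\}_{n\in\mathbb N}$ are either all entanglement-breaking or none of them are \cite{Lami2016eb}. Together these mean that $\Phi$ is asymptotically entanglement-breaking if and only if the \emph{single} channel $P_\chi$ is entanglement-breaking, so the whole problem collapses to deciding whether $P_\chi$ breaks entanglement.

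\emph{Blocks one-dimensional $\iff$ peripheral points commute.} From the decomposition \eqref{eq:phasespace}, every element of $\chi(\Phi)$ has the form $0\oplus\bigoplus_k x_k\otimes\rho_k$ with $x_k\in\B{\Hil_{k,1}}$. If every $\B{\Hil_{k,1}}$ is one-dimensional then each $x_k$ is a scalar, and the product of two such operators is block-diagonal with entries $c_k c_k'\,\rho_k^2$, which is symmetric in the two factors; hence all peripheral points commute. Conversely, if some $\B{\Hil_{k,1}}$ has dimension at least two, I would pick non-commuting $x,x'\in\B{\Hil_{k,1}}$ and set $X=x\otimes\rho_k$, $Y=x'\otimes\rho_k$ (zero in all other blocks), both of which lie in $\chi(\Phi)$; since $\rho_k$ is positive definite, $[X,Y]=[x,x']\otimes\rho_k^2\neq 0$, so the peripheral points do not all commute. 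This proves the second and third conditions are equivalent.

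\emph{Reducing entanglement-breaking to the block dimensions.} It remains to show that $P_\chi$ is entanglement-breaking precisely when all $\B{\Hil_{k,1}}$ are one-dimensional. The key input is the explicit form of the peripheral projection: on the block $\Hil_{k,1}\otimes\Hil_{k,2}$ it acts as the identity on the $\Hil_{k,1}$ tensor factor while tracing out $\Hil_{k,2}$ and replacing it by $\rho_k$, and it annihilates $\Hil_0$ together with all off-block components. When every $\B{\Hil_{k,1}}$ is one-dimensional this collapses to a measure-and-prepare (Holevo) form $P_\chi(X)=\sum_k\Tr(F_k X)\,\rho_k$ for a suitable POVM $\{F_k\}$, and such channels are manifestly entanglement-breaking. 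When some $\B{\Hil_{k,1}}$ has dimension at least two, $P_\chi$ carries a factor $\id$ on $\B{\Hil_{k,1}}$ for that block, so feeding in a maximally entangled state between $\Hil_{k,1}$ and a reference copy leaves that entanglement intact in the output, witnessing that $P_\chi$ is not entanglement-breaking. Combining this with the reduction in the first paragraph closes the loop between the first and second conditions.

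\emph{Main obstacle.} The routine parts are the commutator computation and the maximally-entangled-state test. The step requiring the most care is pinning down the exact form of $P_\chi$ on each block from \eqref{eq:phasespace}--\eqref{eq:phaseaction}, in particular verifying that the partial trace over $\Hil_{k,2}$ really produces a genuine POVM in the one-dimensional case (so that the Holevo form is valid), and that the surviving identity factor on $\Hil_{k,1}$ is clean enough to defeat the entanglement test in the higher-dimensional case.
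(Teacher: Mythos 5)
The statement you are proving is not proved in the paper at all: it is imported verbatim as \cite[Theorem 32]{Lami2016eb}, with the dichotomy (``the limit points of $\{\Phi^n\}_{n\in\mathbb{N}}$ are either all entanglement-breaking or none of them are'') recorded separately alongside it. Your proposal is therefore a genuinely independent derivation, built from exactly the two external facts the paper quotes: that dichotomy, and \cite[Proposition 6.3]{Wolf2012Qtour} giving $\Phi^{n_i}\to\mathcal{P}_\chi$. Most of it is sound. The commutator computation establishing the equivalence of the second and third conditions is complete and correct. The reduction ``$\Phi$ asymptotically entanglement-breaking $\iff$ $\mathcal{P}_\chi$ entanglement-breaking'' is valid, since $\mathcal{P}_\chi$ is a limit point and the dichotomy upgrades ``one limit point is EB'' to ``all are''. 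The non-EB direction also goes through, and in fact needs less than you invoke: since $\mathcal{P}_\chi$ is idempotent with range $\chi(\Phi)$, it acts as the identity on $\chi(\Phi)$, so it fixes each $\ket{i}\bra{j}\otimes\rho_k$ in a block with $\dim\Hil_{k,1}\geq 2$, and hence $(\mathcal{P}_\chi\otimes\id_R)$ preserves a maximally entangled state between $\Hil_{k,1}$ and a reference copy; no finer description of $\mathcal{P}_\chi$ is required.

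The genuine gap is in the EB direction, precisely where you flag uncertainty. Your stated form of $\mathcal{P}_\chi$ --- that it ``annihilates $\Hil_0$ together with all off-block components'' --- is false whenever $\Hil_0\neq 0$: $\mathcal{P}_\chi$ is trace-preserving, so a state supported on the transient subspace $\Hil_0$ cannot be mapped to zero; its weight must be redistributed into the recurrent blocks in a way not determined by Eqs.~\eqref{eq:phasespace}--\eqref{eq:phaseaction}. With your literal description, the would-be POVM consists of the block projections $\Pi_k$ onto $\Hil_{k,1}\otimes\Hil_{k,2}$, which sum to $\iden-\Pi_0\neq\iden$, so the claimed Holevo form is not even trace-preserving and the argument fails as written. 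The fix is to bypass the explicit form entirely and argue through the adjoint. In the one-dimensional case the range of $\mathcal{P}_\chi$ is spanned by the orthogonally supported states $\hat{\rho}_k$ ($\rho_k$ embedded in block $k$), so write $\mathcal{P}_\chi(X)=\sum_k c_k(X)\,\hat{\rho}_k$; pairing with $\Pi_k$ gives $c_k(X)=\Tr[\Pi_k\,\mathcal{P}_\chi(X)]=\Tr[F_k X]$ with $F_k:=\mathcal{P}_\chi^*(\Pi_k)\geq 0$ (positivity of the adjoint of a CP map). Moreover $\mathcal{P}_\chi^*(\Pi_0)=0$: for $X\geq 0$ the output $\mathcal{P}_\chi(X)\geq 0$ lies in $\chi(\Phi)$ and so has vanishing $\Hil_0$-corner, and for positive semi-definite operators $\Pi_0 A\Pi_0=0$ forces $\Tr[\Pi_0 A]=0$; linearity then gives $\mathcal{P}_\chi^*(\Pi_0)=0$. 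Hence $\sum_k F_k=\mathcal{P}_\chi^*(\iden-\Pi_0)=\mathcal{P}_\chi^*(\iden)=\iden$, the $\{F_k\}$ form a genuine POVM (generally with support leaking into $\Hil_0$), and $\mathcal{P}_\chi$ is measure-and-prepare, hence entanglement-breaking. With this replacement your proof is correct and self-contained modulo the two quoted facts.
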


For an elaborate discussion of eventually entanglement-breaking and asymptotically entanglement-breaking quantum channels, the readers should refer to \cite{Lami2016eb, eric2020eb}.

\subsection{Fixed points and multiplicative domains}\label{subsec-fixed-pt-mult.} 

Given a channel $\Phi$ on $\M{d}$, the set of fixed points of $\Phi$ is the set 
\[\mathrm{Fix}_\Phi=\{A\in \M{d}| \ \Phi(A)=A\}.\]
Note that it is a vector subspace of $\M{d}$ which is closed under taking adjoints. If the channel $\Phi$ is unital, then the set $\mathrm{Fix}_\Phi$ is also closed under multiplication and hence is a C$^*$-subalgebra (\cite{kribs}). Recall that a (operator) norm-closed subset of $\cL (\Hil)$ that is closed under addition, multiplication, and the $*$-operation is called a C$^*$-algebra.  

The multiplicative domain of $\Phi$ is defined to be the following set
\[\mathcal{M}_\Phi=\{A\in \M{d}| \ \Phi(AX)=\Phi (A)\Phi (X), \ \Phi(XA)=\Phi(X)\Phi(A), \forall X\in \M{d}\}.\]

 For unital channels, it holds that 
\[\mathrm{Fix}_\Phi=\mathcal{A}',\]
where $\mathcal{A}$ is the C$^*$-algebra generated by the Kraus operators of $\Phi$ and $\mathcal{A}'$ is the algebra that commutes with $\mathcal{A}$. Also, it is known that for unital channels one has
\[\mathcal{M}_\Phi=\mathrm{Fix}_{\Phi^*\circ\Phi}.\]
The multiplicative domain shrinks under the iterations of a unital channel. Indeed, let $\mathcal{M}_{\Phi^n}$ denote 
the multiplicative domain of $\Phi^n$ for each $n\in \mathbb{N}$, then it holds that (\cite{rahaman2017})
  $$ \mathcal{M}_{\Phi}\supseteq \mathcal{M}_{\Phi^2} \supseteq \cdots \mathcal{M}_{\Phi^n} \supseteq\cdots.$$
 The above chain stabilizes at a set which we denote as $\mathcal{M}_{\Phi^{\infty}}:=\bigcap_{n\in 
\mathbb{N}} \mathcal{M}_{\Phi^n}$ and call the {\em{stabilized multiplicative domain}} of $\Phi$ and it is invariant under repeated applications of the channel.

\subsection{Contraction coefficient}
The contraction coefficient of a quantum channel $\Phi:\cL (\Hil_A)\to \cL (\Hil_B)$ with respect to the trace norm is defined as follows \cite{Hiai2015contraction}:
\begin{align}\label{eq:coeff}
    \eta^{\Tr}(\Phi) := \sup_{\substack{\rho,\sigma\in \State{\Hil_A} \\ \rho\neq \sigma}} \frac{\norm{\Phi(\rho) - \Phi(\sigma)}_1}{\norm{\rho - \sigma}_1}.
    \end{align}
\begin{lemma}\label{lemma:supTr} \cite[Theorem 2]{Ruskai1994contraction}
For a quantum channel $\Phi:\cL (\Hil_A)\to \cL (\Hil_B)$,
\begin{equation*}
    \eta^{\Tr}(\Phi) = \sup_{\substack{ \rho,\sigma\in \State{\Hil_A} \\ \rho \perp \sigma}} \frac{1}{2} \norm{\Phi(\rho) - \Phi(\sigma)}_1.
\end{equation*}
Moreover, the states in the supremum above can be taken to be pure.
\end{lemma}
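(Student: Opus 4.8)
The plan is to establish the claimed identity by proving two inequalities between $\eta^{\Tr}(\Phi)$ and the quantity $R := \sup_{\rho \perp \sigma} \tfrac12 \norm{\Phi(\rho)-\Phi(\sigma)}_1$, and then separately to reduce the supremum defining $R$ to pure states. The inequality $R \le \eta^{\Tr}(\Phi)$ is immediate: whenever $\rho \perp \sigma$ their supports are orthogonal, so the nonzero eigenvalues of the Hermitian operator $\rho-\sigma$ are exactly those of $\rho$ together with the negatives of those of $\sigma$, giving $\norm{\rho-\sigma}_1 = \Tr \rho + \Tr \sigma = 2$. Hence an orthogonal pair is in particular a distinct pair whose ratio in the definition of $\eta^{\Tr}$ equals $\tfrac12\norm{\Phi(\rho)-\Phi(\sigma)}_1$, and taking the supremum over such pairs yields $R \le \eta^{\Tr}(\Phi)$.

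For the reverse inequality I would show that every distinct pair can be replaced by an orthogonal pair realizing the same ratio. Given $\rho \ne \sigma$, apply the Jordan--Hahn decomposition to the traceless Hermitian operator $\rho-\sigma = P - Q$, with $P,Q \ge 0$ having mutually orthogonal supports. Since $\Tr(\rho-\sigma)=0$, we have $\Tr P = \Tr Q =: t$, and $t>0$ because $\rho \ne \sigma$. Setting $\rho' := P/t$ and $\sigma' := Q/t$ produces a genuine orthogonal pair of states, with $\norm{\rho-\sigma}_1 = \Tr P + \Tr Q = 2t$. Linearity of $\Phi$ gives $\Phi(\rho)-\Phi(\sigma) = \Phi(P)-\Phi(Q) = t\,\bigl(\Phi(\rho')-\Phi(\sigma')\bigr)$, so the ratio for $(\rho,\sigma)$ equals $\tfrac12\norm{\Phi(\rho')-\Phi(\sigma')}_1 \le R$. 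Taking the supremum over all distinct pairs gives $\eta^{\Tr}(\Phi) \le R$, which together with the first step proves the main identity.

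It remains to show that the supremum can be taken over pure states. Here the subtlety is that the feasible set of orthogonal pairs is not convex, so a direct extreme-point argument applied to $(\rho,\sigma)$ jointly does not work. Instead I would use a bilinear trick. Diagonalize $\rho = \sum_i p_i \psi_i$ and $\sigma = \sum_j q_j \phi_j$ in terms of rank-one projections $\psi_i,\phi_j$; since $\supp \rho \perp \supp \sigma$, every pair $(\psi_i,\phi_j)$ is itself an orthogonal pair of \emph{pure} states. Using $\sum_i p_i = \sum_j q_j = 1$, rewrite $\Phi(\rho)-\Phi(\sigma) = \sum_{i,j} p_i q_j\,\bigl(\Phi(\psi_i)-\Phi(\phi_j)\bigr)$, and apply the triangle inequality for the trace norm together with the fact that the weights $p_i q_j$ form a probability distribution. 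This bounds $\tfrac12\norm{\Phi(\rho)-\Phi(\sigma)}_1$ by $\max_{i,j} \tfrac12\norm{\Phi(\psi_i)-\Phi(\phi_j)}_1$, i.e.\ by the supremum restricted to pure orthogonal pairs, so the pure-state supremum already equals $R$. The only genuinely nontrivial step is this last reduction; the bilinear rewriting is precisely what circumvents the non-convexity of the constraint set.
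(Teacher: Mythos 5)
Your proof is correct and follows essentially the same route as the paper's own argument: the Jordan--Hahn decomposition of $\rho-\sigma$ (with $\Tr A^+ = \Tr A^-$ from tracelessness) to convert an arbitrary distinct pair into an orthogonal pair realizing the same ratio, and then the spectral decomposition of $\rho,\sigma$ with the bilinear rewriting $\Phi(\rho)-\Phi(\sigma)=\sum_{i,j}p_i q_j\bigl(\Phi(\psi_i)-\Phi(\phi_j)\bigr)$ plus the triangle inequality for the pure-state reduction. No gaps; nothing further needed.
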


\subsection{Zero error communication}\label{sec:zero}

Let Alice and Bob be linked via a quantum channel $\Phi: \B{\Hil_A}\to \B{\Hil_B}$.
Suppose Alice wants to communicate  $M\geq 2$ classical messages to Bob perfectly without error. This is possible if and only if she can encode the $M$ messages in states 
\begin{equation}
    \{\rho_m\}_{m=1}^M\subset \State{\Hil_A} \quad \text{such that}\quad \forall m\neq m': \quad \Phi(\rho_m) \perp \Phi(\rho_{m'}),
\end{equation}
where we say that two positive operators $X,Y \geq 0$ are orthogonal ($X\perp Y$) if their supports are orthogonal as subspaces, which is equivalent to saying that $\Tr (XY) = 0$. The interpretation here is that for any choice of encoding of the $M$ classical messages on the input side, the set of output states would have to be perfectly distinguishable in order for Bob to decode the intended classical message via a measurement without error, which is possible if and only if the output states are pairwise orthogonal. Note that if $\rho\in \State{\Hil_A}$ is a mixed state and $\ket{\psi}\in \supp (\rho)$, there exists an $\epsilon>0$ such that $\psi \leq \epsilon \rho$. Therefore, $ \Phi(\psi) \leq \epsilon \Phi(\rho)$ and $\supp (\Phi(\psi) )\subseteq \supp (\Phi(\rho))$. Thus, without loss of generality, all the encoding states in the above scheme can be taken to be pure. With this background, we can introduce the following definition.

\begin{definition} \label{def:zero-classical}
The one-shot zero-error classical capacity $C^{(1)}_0(\Phi)$ of a channel $\Phi: \B{\Hil_A}\to \B{\Hil_B}$ is defined as follows:
\begin{align}
    C_0^{(1)}(\Phi) :=\sup_{C}\log |C|,
\end{align} 
where the supremum is over all collections $C$ of pure quantum states $\{\psi_m\}_{m=1}^{|C|} \subset \cD({\Hil_A})$ such that
\begin{equation}
\forall \, m\neq m': \quad  \Phi(\psi_m) \perp \Phi(\psi_{m'}).
\end{equation}
\end{definition}

Suppose now that Alice and Bob share some entanglement beforehand, say in the form of a pure bipartite state $\psi \in \State{\Hil_{A_0} \otimes \Hil_{B_0}}$. Alice can now come up with a more general encoding scheme by pre-processing her share of $\psi$ with arbitrary quantum channels $\{\mathcal{E}_m : \B{\Hil_{A_0}}\to \B{\Hil_A}\}_{m=1}^M$. She then sends her share of the resulting states through $\Phi : \B{\Hil_A}\to \B{\Hil_B}$. As before, the condition for perfect distinguishability on Bob's end is equivalent to the following orthogonality relations:
\begin{equation}
    \forall m\neq m': \quad (\Phi \circ \mathcal{E}_m \otimes \mathrm{id}_{B_0})(\psi) \perp  (\Phi \circ \mathcal{E}_{m'} \otimes \mathrm{id}_{B_0})(\psi).
\end{equation}

\begin{definition}\label{def:zero-ent-classical}
The one-shot zero-error entanglement assissted classical capacity $C^{(1)}_{0E}(\Phi)$ of a quantum channel $\Phi: \B{\Hil_A}\to \B{\Hil_B}$ is defined as follows:
\begin{align}
    C_{0E}^{(1)}(\Phi) :=\sup_{\psi, C}\log |C|,
\end{align} 
where the supremum is over all pure bipartite states $\psi\in \State{\Hil_{A_0} \otimes \Hil_{B_0}}$ and collections $C$ of quantum channels $\{\mathcal{E}_m : \B{\Hil_{A_0}}\to \B{\Hil_A} \}_{m=1}^{|C|}$ such that
\begin{equation}
    \forall m\neq m': \quad (\Phi \circ \mathcal{E}_m \otimes \mathrm{id}_{B_0})(\psi) \perp  (\Phi \circ \mathcal{E}_{m'} \otimes \mathrm{id}_{B_0})(\psi).
\end{equation}
\end{definition}

If Alice wants to send quantum information to Bob through $\Phi:\B{\Hil_A}\to \B{\Hil_B}$ perfectly without error, she must find an encoding subspace $\cC\subseteq \Hil_A$ such that Bob can reverse the action of $\Phi$ on $\cC$, i.e., there exists a recovery channel $\mathcal{R}:\B{\Hil_B}\to \B{\Hil_A}$ such that for all states $\rho\in \State{\Hil_A}$ with $\supp (\rho) \subseteq \mathcal{C}$\footnote{Note that the support of a positive semi-definite operator is precisely the orthogonal complement of its kernel.},
\begin{equation}
    \mathcal{R}\circ \Phi (\rho) = \rho. 
\end{equation}
If $\{K_i\}_i$ are the Kraus operators of $\Phi$, the Knill-Laflamme error correction conditions \cite{knil-laf} show that a subspace $\cC\subseteq \Hil_A$ as above exists if and only if $P_{\cC} K^{*}_i K_j P_{\cC} = \lambda_{ij} P_{\cC}$ for all $i,j$, where $P_{\cC}$ denotes the orthogonal projection onto $\cC$ and $\lambda_{ij}\in \mathbb{C}$ are complex numbers.

\begin{definition}\label{def:zero-quantum}
    The one shot zero-error quantum capacity $Q_0^{(1)}(\Phi)$ of a channel $\Phi:\B{\Hil_A}\to \B{\Hil_B}$ is defined as follows:
\begin{align}
Q_0^{(1)}(\Phi) &:= \sup_{\cC} \log \dim(\cC),
\end{align}
where the supremum is over all subspaces $\cC\subseteq \Hil_A$ for which there exists a recovery quantum channel $\mathcal{R}:\B{\Hil_B}\to \B{\Hil_A}$ satisfying $\mathcal{R}\circ \Phi (\rho) = \rho$ for all states $\rho\in \State{\Hil_A}$ with $\supp (\rho ) \subseteq \mathcal{C}$. 
\end{definition}

It is possible to recast the above channel capacity definitions in terms of an operator system that one can associate with the channel. 
\begin{definition}\label{def: op-sys}
    Let $\Phi: \B{\Hil_A}\to \B{\Hil_B}$ have a Kraus representation $\Phi(X)=\sum_{i=1}^p K_i XK_i^*$. The operator system (also called the \emph{non-commutative (confusability) graph}) of $\Phi$ is defined as~\cite{Duan2013noncomm}
\begin{equation}
    S_{\Phi} := {\rm{span}} \{K^*_i K_j, \, 1\leq i,j \leq p\} \subseteq \B{\Hil_A}. 
\end{equation}
\end{definition}

One can check that the above definition is independent of the chosen Kraus representation of $\Phi$. Moreover, $\sum_{i=1}^p K_i^* K_i={\iden}_A \in S_{\Phi}$, since $\Phi$ is trace-preserving. Furthermore, $X\in S_{\Phi} \implies X^{*} \in S_{\Phi}$. Such $*-$closed subspaces $S\subseteq \B{\Hil}$ containing the identity are called \emph{operator systems} \cite{paulsen-book}. For an operator system $S\subseteq \B{\cH}$,
\begin{itemize}
    \item the maximum size $M$ of a set of mutually orthogonal vectors $\{\ket{\psi_m} \}_{m=1}^M \subseteq \cH$ such that 
\begin{equation}
    \forall m\neq m': \quad |\psi_m\rangle \langle\psi_{m'} | \perp S,
\end{equation}
is called the \emph{independence number} of $S$ (denoted $\alpha(S)$). 
\item the maximum number $M$ such that there exist Hilbert spaces $\cH_{A_0}, \cH_R$, a state $\rho\in \State{\Hil_{A_0}}$, and isometries $\{V_m: \Hil_{A_0} \to \Hil\otimes \Hil_R\}_{m=1}^M$ such that 
\begin{equation}
\forall m\neq m': \quad V_m \rho V_{m'} \perp S\otimes \B{\Hil_R},
\end{equation}
is called the \emph{entanglement-assisted independence number} of $S$ (denoted $\Tilde{\alpha}(S)$).  
\item the maximum number $M$ such that there exists a subspace $\cC\subseteq \Hil$ with $\dim \cC=M$ satisfying $P_\cC S P_\cC = \mathbb{C} P_\cC$,
(where $P_\cC$ denotes the orthogonal projection onto $\mathcal{C}$)
is called the \emph{quantum independence number} of $S$ (denoted $\alpha_q(S)$).  
\end{itemize}

\begin{theorem}\cite{Duan2013noncomm}\label{op-syst-char}
    For any channel $\Phi: \B{\Hil_A}\to \B{\Hil_B}$, the following relations hold:
\begin{align}
    C^{(1)}_0 (\Phi) &= \log \alpha (S_{\Phi}) \\ 
    C^{(1)}_{0E} (\Phi) &= \log \Tilde{\alpha} (S_{\Phi}) \\
    Q^{(1)}_0 (\Phi) &= \log \alpha_q (S_{\Phi})
\end{align}
Moreover, $0\leq Q_0^{(1)}(\Phi)\leq C_0^{(1)}(\Phi)\leq C_{0E}^{(1)}(\Phi)$, where the inequalities can be strict.
\end{theorem}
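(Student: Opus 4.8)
The plan is to translate each of the three operational capacities into a purely algebraic condition on the operator system $S_\Phi$, working in a fixed Kraus representation $\Phi(X)=\sum_{i=1}^p K_i X K_i^*$. The classical and quantum cases are short computations; the entanglement-assisted case needs a dilation argument and is the \emph{crux}. The single fact that makes everything fit together is that $\iden=\sum_i K_i^* K_i\in S_\Phi$, which converts orthogonality of outputs into orthogonality of inputs. For the classical capacity, I would expand, for pure states $\psi_m=\ketbra{\psi_m}$, the identity $\Tr[\Phi(\psi_m)\Phi(\psi_{m'})]=\sum_{i,j}\abs{\bra{\psi_m}K_i^* K_j\ket{\psi_{m'}}}^2$. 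As this is a sum of nonnegative terms, the outputs are orthogonal if and only if $\bra{\psi_m}K_i^* K_j\ket{\psi_{m'}}=0$ for all $i,j$, which — since $S_\Phi=\operatorname{span}\{K_i^* K_j\}$ and via the Hilbert--Schmidt pairing — is exactly $\ketbra{\psi_m}{\psi_{m'}}\perp S_\Phi$. Because $\iden\in S_\Phi$ this forces $\braket{\psi_m}{\psi_{m'}}=0$, so a perfectly distinguishable family is the same data as an independent set for $S_\Phi$; optimizing its size yields $C^{(1)}_0(\Phi)=\log\alpha(S_\Phi)$.

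For the quantum capacity I would invoke the Knill--Laflamme conditions \cite{knil-laf}: a subspace $\cC$ is correctable if and only if $P_\cC K_i^* K_j P_\cC=\lambda_{ij}P_\cC$ for all $i,j$. This says every element of $P_\cC S_\Phi P_\cC$ is a scalar multiple of $P_\cC$, and since $\iden\in S_\Phi$ forces $P_\cC\in P_\cC S_\Phi P_\cC$, it is equivalent to $P_\cC S_\Phi P_\cC=\mathbb{C}P_\cC$, the defining condition of $\alpha_q(S_\Phi)$; hence $Q^{(1)}_0(\Phi)=\log\alpha_q(S_\Phi)$. The entanglement-assisted case is the main obstacle: its operational definition involves a shared pure state $\psi\in\State{\Hil_{A_0}\otimes\Hil_{B_0}}$ and pre-processing channels $\mathcal{E}_m$, whereas $\Tilde{\alpha}(S_\Phi)$ is phrased through isometries $V_m$ and a state $\rho$. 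I would bridge the two by taking Stinespring dilations of the $\mathcal{E}_m$ into a common environment and absorbing both these environments and Bob's untouched share $\Hil_{B_0}$ into a single reference $\Hil_R$, so that each encoding becomes an isometry $V_m:\Hil_{A_0}\to\Hil_A\otimes\Hil_R$ acting on a fixed input $\rho$ (which may be taken to be the shared pure state itself). Applying the classical-case computation to the enlarged channel $\Phi\otimes\id_R$ then identifies pairwise output-orthogonality with $V_m\rho V_{m'}^*\perp S_\Phi\otimes\B{\Hil_R}$, which is precisely the condition defining $\Tilde{\alpha}(S_\Phi)$. The delicate part — which I expect to be the hardest step — is verifying that this translation loses no generality (that a mixed $\rho$ together with the purifying role of $\Hil_{B_0}$ introduce no gap, so that the two optimization problems genuinely coincide), and for the complete equivalence I would cite \cite{Duan2013noncomm}.

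It remains to establish the chain $0\leq Q^{(1)}_0(\Phi)\leq C^{(1)}_0(\Phi)\leq C^{(1)}_{0E}(\Phi)$, which I would argue at the level of the graph parameters. Nonnegativity holds because any one-dimensional $\cC$ satisfies $P_\cC S_\Phi P_\cC=\mathbb{C}P_\cC$, so $\alpha_q(S_\Phi)\geq1$. For $\alpha_q(S_\Phi)\leq\alpha(S_\Phi)$, an orthonormal basis of a correctable subspace gives $\ketbra{\psi_m}{\psi_{m'}}\perp S_\Phi$ (the off-diagonal entries of $\lambda_{ij}P_\cC$ vanish in that basis), i.e.\ an independent set of equal size. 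For $\alpha(S_\Phi)\leq\Tilde{\alpha}(S_\Phi)$, an independent set is the special entanglement-assisted configuration with trivial reference ($\Hil_R\simeq\mathbb{C}$, $\rho=1$, $V_m\ket{1}=\ket{\psi_m}$). Finally, strictness of these inequalities is witnessed by explicit examples, for which I would again refer to \cite{Duan2013noncomm}.
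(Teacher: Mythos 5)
The paper offers no proof of this statement at all --- it is imported verbatim from \cite{Duan2013noncomm} --- so you are reconstructing an argument the paper never gives. Your classical and quantum reductions are correct: the expansion $\Tr[\Phi(\psi_m)\Phi(\psi_{m'})]=\sum_{i,j}\abs{\bra{\psi_m}K_i^*K_j\ket{\psi_{m'}}}^2$ together with $\iden\in S_\Phi$ identifies zero-error classical codes with independent sets of $S_\Phi$, the Knill--Laflamme conditions together with $\iden\in S_\Phi$ identify correctable subspaces with $P_\cC S_\Phi P_\cC=\mathbb{C}P_\cC$, and your three inequalities (including the trivial-reference embedding giving $\alpha(S_\Phi)\leq\Tilde{\alpha}(S_\Phi)$) are sound.

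The genuine gap is in your entanglement-assisted translation. You absorb Bob's share $\Hil_{B_0}$ into the reference $\Hil_R$ and take $\rho$ to be the shared pure state $\psi$ itself. But the condition defining $\Tilde{\alpha}$, namely $V_m\rho V_{m'}^*\perp S_\Phi\otimes\B{\Hil_R}$, then demands orthogonality against \emph{every} $Z\in\B{\Hil_{B_0}}$, whereas the operational requirement --- orthogonality of Bob's joint output states --- expands (with $\{E_{m,a}\}_a$ the Kraus operators of $\mathcal{E}_m$) as
\begin{equation}
\Tr\bigl[(\Phi\circ\mathcal{E}_m\otimes\id_{B_0})(\psi)\,(\Phi\circ\mathcal{E}_{m'}\otimes\id_{B_0})(\psi)\bigr]
=\sum_{i,j,a,b}\bigl|\bra{\psi}\bigl(E_{m,a}^*K_i^*K_jE_{m',b}\otimes\iden_{B_0}\bigr)\ket{\psi}\bigr|^2,
\end{equation}
which only ever tests the \emph{identity} on $B_0$. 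Your condition is therefore strictly stronger, so a valid protocol need not map to a valid $\Tilde{\alpha}$-configuration, and the direction $C^{(1)}_{0E}(\Phi)\leq\log\Tilde{\alpha}(S_\Phi)$ breaks. Concretely, take $\Phi=\id$ on $\C{d}$, so $S_\Phi=\mathbb{C}\iden$: dense coding with a maximally entangled $\psi$ and Weyl unitaries $U_m$ transmits $d^2$ messages with zero error, yet your condition reads $\bra{\psi}(U_{m'}^*U_m\otimes Z)\ket{\psi}=\tfrac{1}{d}\Tr[U_{m'}^*U_mZ^\top]=0$ for all $Z$, forcing $U_{m'}^*U_m=0$, which is impossible --- not even two messages would survive your translation. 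The correct dictionary is the opposite one: $\Hil_R$ consists \emph{only} of the Stinespring environments of the $\mathcal{E}_m$, and $\rho=\Tr_{B_0}\ketbra{\psi}$ is the (generally mixed) reduced state on Alice's side, via the identity $\bra{\psi}(M\otimes\iden_{B_0})\ket{\psi}=\Tr[M\rho]$; conversely, any mixed-$\rho$ configuration is realized operationally by letting $B_0$ purify $\rho$. This purifying role of $B_0$ is exactly why $\Tilde{\alpha}$ is defined with a general state rather than a pure one. With this correction (or by leaning entirely on \cite{Duan2013noncomm}, which is all the paper itself does), your argument closes.
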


\begin{remark}
    The terminology in Definition~\ref{def: op-sys} is motivated by the notion of confusability graphs of classical channels \cite{Shannon1956zero}. A discrete classical channel $\cN: \cX \to \cY$, where $\cX$ and $\cY$ denote two finite alphabets, is defined by a transition probability matrix, $A$, with elements $\cN(y|x)$ that express the probability of observing the symbol $y$ given that the symbol $x$ was sent. In order to send different messages through the channel $\cN$ with zero error, they should be encoded in the symbols of $\cX$ in a manner such that the corresponding outputs of the channel have disjoint support. One can associate a {\em{confusability graph}} $G_\cN$ with the channel; it has vertex set $\cX$ and edges between any pair $x, x'\in \cX$ which can be confused, i.e.~for which there is a $y\in \cY$ such that $\cN(y|x)\cN(y|x') >0$. The one-shot zero error capacity of $\cN$ is the maximum number of bits of classical information that can be transmitted without error through a single use of $\cN$. This is given by $\log \alpha(G_\cN)$, where $\alpha(G_\cN)$ is the {\em{independence number}} of the confusability graph, and is equal to the maximum number of vertices in $G_\cN$ which do not have any edges between them.
\end{remark}

We now collect some results from the literature which describe equivalent conditions for the various zero-error one-shot capacities of a channel to be zero. Let us first introduce the following terminology. A channel $\Phi:\B{\Hil_A}\to \B{\Hil_B}$ is called 
\begin{itemize}
    \item \emph{c-scrambling} if $C_0^{(1)}(\Phi)=0$.
    \item \emph{q-scrambling} if $Q_0^{(1)}(\Phi)=0$.
\end{itemize}

\begin{theorem}\label{thm:scrambling}
    The following are equivalent for a quantum channel $\Phi: \cL (\Hil_A)\to \cL (\Hil_B)$:
    \begin{enumerate}
        \item $\Phi$ is c-scrambling.
        \item $\eta^{\Tr}(\Phi)<1$.
        \item $\Tr (\Phi (\psi) \Phi (\phi))>0$ for any pair of orthogonal pure states $\psi, \phi\in \State{\Hil_A}$.
        \item $\Tr (\Phi (A) \Phi (B))>0$ for all non-zero positive operators $A,B\in \cL (\Hil_A)$.
        \item There are no rank one elements in $S_{\Phi}^{\perp}$.
    \end{enumerate}
\end{theorem}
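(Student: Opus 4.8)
The plan is to treat condition (3) as a hub and connect everything to it, relying on one structural observation about the operator system $S_\Phi$. First I would record the elementary identity, valid for any Kraus representation $\Phi(X)=\sum_i K_i X K_i^*$ and any pure states $\psi=\ketbra{\psi}$, $\phi=\ketbra{\phi}$,
\begin{equation*}
    \Tr\big(\Phi(\psi)\Phi(\phi)\big)=\sum_{i,j}\abs{\bra{\psi}K_i^* K_j\ket{\phi}}^2,
\end{equation*}
which shows that $\Tr(\Phi(\psi)\Phi(\phi))=0$ if and only if $\bra{\psi}K_i^* K_j\ket{\phi}=0$ for all $i,j$, i.e.\ if and only if $\ketbra{\psi}{\phi}\in S_{\Phi}^{\perp}$ (orthogonality in the Hilbert--Schmidt inner product). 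Since $\Phi$ is trace-preserving we have $\iden_A=\sum_i K_i^* K_i\in S_{\Phi}$, so every element of $S_{\Phi}^{\perp}$ is traceless. Applying this to $\ketbra{\psi}{\phi}$ gives $\braket{\phi}{\psi}=0$; hence orthogonality of the outputs $\Phi(\psi)\perp\Phi(\phi)$ forces $\psi\perp\phi$, and conversely $\Tr(\Phi(\psi)\Phi(\phi))>0$ automatically whenever $\psi,\phi$ are non-orthogonal pure states. This single observation is what I expect to be the crux of the argument, and it drives all of the remaining implications.

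With this in hand, $(1)\Leftrightarrow(3)$ is immediate: by Definition~\ref{def:zero-classical}, $\Phi$ is c-scrambling exactly when no two distinct pure states have orthogonal images; the observation above shows that any two pure states with orthogonal images are necessarily orthogonal, so this is the same as requiring $\Tr(\Phi(\psi)\Phi(\phi))>0$ for every pair of orthogonal pure states, which is (3). (This is consistent with $C_0^{(1)}(\Phi)=\log\alpha(S_\Phi)$ from Theorem~\ref{op-syst-char}, specialized to $\alpha(S_\Phi)=1$.) For $(3)\Leftrightarrow(5)$ I would note that, by tracelessness of $S_\Phi^\perp$, any rank-one element of $S_\Phi^\perp$ has the form $\ketbra{u}{v}$ with $u\perp v$; after normalization this is exactly a forbidden pair $\ketbra{\psi}{\phi}$ with $\psi\perp\phi$ of the kind appearing in the first step, so $S_\Phi^\perp$ contains a rank-one element if and only if (3) fails.

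Next, $(3)\Leftrightarrow(4)$: the implication $(4)\Rightarrow(3)$ is trivial since pure states are non-zero positive operators. For the converse, combine (3) with the automatic positivity for non-orthogonal pairs from the first step to conclude $\Tr(\Phi(\psi)\Phi(\phi))>0$ for \emph{all} pure $\psi,\phi$; then spectrally decompose non-zero positive operators $A=\sum_k a_k\ketbra{\alpha_k}$, $B=\sum_l b_l\ketbra{\beta_l}$ with $a_k,b_l>0$, so that $\Tr(\Phi(A)\Phi(B))=\sum_{k,l}a_k b_l\,\Tr(\Phi(\alpha_k)\Phi(\beta_l))$ is a non-empty sum of strictly positive terms.

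Finally, for $(2)\Leftrightarrow(3)$ I would invoke Lemma~\ref{lemma:supTr} to write $\eta^{\Tr}(\Phi)=\sup_{\psi\perp\phi}\tfrac12\norm{\Phi(\psi)-\Phi(\phi)}_1$ over orthogonal pure states, use that the trace distance of two states is at most $1$ with equality precisely when their supports are orthogonal (equivalently $\Tr(\Phi(\psi)\Phi(\phi))=0$), and use compactness of the set of orthonormal pairs together with continuity of $\Phi$ to ensure the supremum is attained. Thus $\eta^{\Tr}(\Phi)<1$ holds iff no orthogonal pair of pure states has orthogonal images, which is again (3). The only points requiring care beyond the first-step observation are the attainment of this supremum and the standard trace-distance/orthogonality characterization; everything else reduces to the traceless-complement remark.
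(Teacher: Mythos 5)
Your proof is correct and complete. It is worth noting how it sits relative to the paper: the paper does not actually prove Theorem~\ref{thm:scrambling} in the text, but cites \cite[Proposition 4.2]{Hiai2015contraction} for the equivalence of (1)--(4) and \cite{Duan2009zero} for $(1)\iff(5)$; the only internal argument (present just as a commented-out sketch in the source) runs the cycle $(1)\Rightarrow(4)\Rightarrow(3)\Rightarrow(2)\Rightarrow(1)$ on mixed states and dispatches (5) by invoking the formula $C^{(1)}_0(\Phi)=\log\alpha(S_\Phi)$ of Theorem~\ref{op-syst-char}. Your route is genuinely different and more self-contained in two respects. First, the Kraus-level identity $\Tr\left(\Phi(\psi)\Phi(\phi)\right)=\sum_{i,j}\abs{\bra{\psi}K_i^*K_j\ket{\phi}}^2$, combined with $\iden\in S_\Phi$ (hence tracelessness of every element of $S_\Phi^\perp$), yields $(3)\iff(5)$ directly, with no appeal to the independence-number machinery, and it explains structurally why only \emph{orthogonal} input pairs can ever be confused --- exactly the point that lets you pass between ``no two distinct pure states have orthogonal images'' in (1) and the restriction to orthogonal pairs in (3), a passage the paper's sketch leaves implicit. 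Second, you flag the attainment issue in $(3)\Rightarrow(2)$: knowing each orthogonal pair gives trace distance strictly below $1$ does not by itself bound the supremum away from $1$, and compactness of the set of orthonormal pairs in finite dimension is what closes this gap; the paper's sketch asserts this step follows ``immediately'' from Lemma~\ref{lemma:supTr} without comment. What the paper's approach buys is brevity, since it reuses Theorem~\ref{op-syst-char} and the mixed-to-pure support reduction stated before Definition~\ref{def:zero-classical}; what yours buys is a stand-alone elementary proof in which all five conditions are tied to a single computable quantity, so it could serve as a full replacement for the citations.
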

The equivalence of $(1),(2),(3)$ and $(4)$ above was obtained in \cite[Proposition 4.2]{Hiai2015contraction} and the equivalence of $(1)$ and $(5)$ was obtained in \cite{Duan2009zero}.

\begin{proposition}\cite{Duan2013noncomm}\label{prop:C0E=0}
    For a quantum channel $\Phi: \cL (\Hil_A)\to \cL (\Hil_B)$, $C_{0E}^{(1)}(\Phi)=0$ if and only if $S_{\Phi}^{\perp}$ is the zero subspace.
\end{proposition}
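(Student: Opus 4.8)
The plan is to pass to the operator-system reformulation supplied by Theorem~\ref{op-syst-char}, namely $C^{(1)}_{0E}(\Phi)=\log\Tilde{\alpha}(S_{\Phi})$. Since $\Tilde{\alpha}(S_{\Phi})\geq 1$ always, the assertion $C^{(1)}_{0E}(\Phi)=0$ is equivalent to $\Tilde{\alpha}(S_{\Phi})=1$, so it suffices to prove that $\Tilde{\alpha}(S_{\Phi})\geq 2$ precisely when $S_{\Phi}^{\perp}\neq\{0\}$. Throughout I would exploit the identity $(S_{\Phi}\otimes\B{\Hil_R})^{\perp}=S_{\Phi}^{\perp}\otimes\B{\Hil_R}$ (valid because the second tensor factor is the full operator space), which recasts the defining condition $V_m\rho V_{m'}^{*}\perp S_{\Phi}\otimes\B{\Hil_R}$ as the membership $V_m\rho V_{m'}^{*}\in S_{\Phi}^{\perp}\otimes\B{\Hil_R}$.

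For the forward implication I argue by contraposition: if $S_{\Phi}^{\perp}=\{0\}$ then $S_{\Phi}=\B{\Hil_A}$ and $S_{\Phi}^{\perp}\otimes\B{\Hil_R}=\{0\}$, so any admissible two-message code must satisfy $V_1\rho V_2^{*}=0$. Multiplying on the left by $V_1^{*}$ and on the right by $V_2$ and using $V_m^{*}V_m=\id$ yields $\rho=0$, contradicting $\Tr\rho=1$; hence $\Tilde{\alpha}(S_{\Phi})=1$.

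For the converse I construct an explicit two-message witness from any nonzero $W\in S_{\Phi}^{\perp}$. Operator systems are $*$-closed, hence so is $S_{\Phi}^{\perp}$, and I may take $W=W^{*}$; moreover $\id\in S_{\Phi}$ forces $\Tr W=0$, so $W$ has rank $r\geq 2$ and its nonzero spectrum contains both signs. Writing the spectral decomposition $W=\sum_{i=1}^{r}\lambda_i\ketbra{w_i}$ over the nonzero eigenvalues, setting $\epsilon_i=\operatorname{sgn}\lambda_i$ and $Z=\sum_i|\lambda_i|$, I would take $\Hil_{A_0}=\C{r}$, a trivial reference $\Hil_R=\mathbb{C}$, and
\begin{equation}
V_1=\sum_{i=1}^{r}\ket{w_i}\bra{i},\qquad V_2=\sum_{i=1}^{r}\epsilon_i\ket{w_i}\bra{i},\qquad \rho=\frac{1}{Z}\sum_{i=1}^{r}|\lambda_i|\,\ketbra{i}.
\end{equation}
Both maps are isometries (their columns are orthonormal), and $\rho$ is a genuinely mixed state, which is exactly where the entanglement assistance resides (it plays the role of the reduced state of the shared entangled resource). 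A one-line computation gives $V_1\rho V_2^{*}=\frac{1}{Z}\sum_i|\lambda_i|\epsilon_i\ketbra{w_i}=\frac{1}{Z}W\in S_{\Phi}^{\perp}$, and the remaining pair is handled by adjunction since $V_2\rho V_1^{*}=(V_1\rho V_2^{*})^{*}=\frac{1}{Z}W\in S_{\Phi}^{\perp}$. Thus $\Tilde{\alpha}(S_{\Phi})\geq 2$.

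I expect the converse construction to be the only real obstacle. The subtlety is that the orthogonality condition demands the entire operator $V_1\rho V_2^{*}$ lie in $S_{\Phi}^{\perp}\otimes\B{\Hil_R}$, i.e.\ every block over the reference must individually fall in $S_{\Phi}^{\perp}$, not merely the total partial trace over $R$. A naive construction --- for instance purifying the positive and negative parts of $W$ and superposing the two purifications --- controls only the diagonal block and leaves the off-diagonal blocks outside $S_{\Phi}^{\perp}$. The fix is to engineer a product form: routing the magnitudes $|\lambda_i|$ into the mixed state $\rho$ and the signs $\epsilon_i$ into $V_2$ makes $V_1\rho V_2^{*}$ exactly proportional to $W$, after which membership in $S_{\Phi}^{\perp}$ is immediate and even a one-dimensional reference suffices.
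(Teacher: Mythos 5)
Your proof is correct, but there is nothing in the paper to compare it against: the paper imports Proposition~\ref{prop:C0E=0} from \cite{Duan2013noncomm} without proof, just as it imports Theorem~\ref{op-syst-char}. What you have done is reconstruct, essentially verbatim, the argument from that reference: reduce to showing $\Tilde{\alpha}(S_{\Phi})\geq 2 \iff S_{\Phi}^{\perp}\neq\{0\}$, kill the forward direction by $(S_{\Phi}\otimes\B{\Hil_R})^{\perp}=S_{\Phi}^{\perp}\otimes\B{\Hil_R}$ together with $V_1^{*}V_1=V_2^{*}V_2=\id$, and for the converse extract a Hermitian traceless witness $W\in S_{\Phi}^{\perp}$ (using that $S_{\Phi}^{\perp}$ is $*$-closed and $\id\in S_{\Phi}$) whose sign-indefinite spectrum lets you split magnitudes into the shared state $\rho$ and signs into the second isometry, so that $V_1\rho V_2^{*}=W/Z$ lands exactly in $S_{\Phi}^{\perp}$. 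All steps check out: the isometry and state conditions hold, $V_2\rho V_1^{*}$ is handled by $*$-closedness, and a one-dimensional reference indeed suffices. Two small points worth noting. First, you silently (and correctly) repaired the paper's typo in the definition of $\Tilde{\alpha}$, where $V_m\rho V_{m'}$ should read $V_m\rho V_{m'}^{*}$. Second, your closing remark about why naive purification-based constructions fail is apt but not needed for the proof; the construction you give is self-contained and complete.
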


\begin{proposition}\label{prop:C0Q0=0} \cite{Shirokov-Shulman2103}
    Let $\Phi: \cL (\Hil_A)\to \cL (\Hil_B)$ be a quantum channel. Then, $Q_0^{(1)}(\Phi)>0$ (i.e. $\Phi$ is not q-scrambling) if and only if there are unit vectors $\ket{\xi}, \ket{\eta}\in \Hil_A$ such that
\[\forall X\in S_\Phi: \quad \langle \xi | X |\eta\rangle=0 \quad \text{and} \quad  \langle \xi | X | \xi\rangle=\langle \eta | X |\eta\rangle.\] Moreover, the following implications hold:
\begin{itemize}
\item  $[S_\Phi]'$ is non-abelian $\implies Q_0^{(1)}(\Phi)>0$.
\item $[S_\Phi]'$ is non-trivial $\implies C_0^{(1)}(\Phi)>0$.
\end{itemize}
If $S_\Phi$ is an algebra, the reverse implications also hold:
\begin{itemize}
\item  $[S_\Phi]'$ is non-abelian $\iff Q_0^{(1)}(\Phi)>0$.
\item $[S_\Phi]'$ is non-trivial $\iff C_0^{(1)}(\Phi)>0$.
\end{itemize}
In the above statements we used the notation $[S_\Phi]'$ to denote the commutant of $[S_\Phi]$, i.e.,
\[[S_\Phi]'=\{X\in \cL (\Hil_A): X Y= Y X, \,\forall Y\in S_\Phi\}.\]
\end{proposition}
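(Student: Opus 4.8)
The plan is to reduce both capacities to the combinatorial quantities $\alpha(S_\Phi)$ and $\alpha_q(S_\Phi)$ via Theorem~\ref{op-syst-char}, and then to read off everything from the structure theory of finite-dimensional $*$-algebras. First I would prove the stated equivalence for $Q_0^{(1)}$. By Theorem~\ref{op-syst-char}, $Q_0^{(1)}(\Phi)>0 \iff \alpha_q(S_\Phi)\geq 2$, which by definition means there is a two-dimensional subspace $\mathcal{C}\subseteq\mathcal{H}_A$ with $P_\mathcal{C} X P_\mathcal{C}\in\mathbb{C}P_\mathcal{C}$ for every $X\in S_\Phi$. Fixing an orthonormal basis $\{\ket{\xi},\ket{\eta}\}$ of $\mathcal{C}$ and writing out the $2\times 2$ compression, the condition $P_\mathcal{C} X P_\mathcal{C}\in\mathbb{C}P_\mathcal{C}$ unpacks exactly into $\langle\xi|X|\eta\rangle=0$ and $\langle\xi|X|\xi\rangle=\langle\eta|X|\eta\rangle$; the remaining off-diagonal entry $\langle\eta|X|\xi\rangle=\overline{\langle\xi|X^*|\eta\rangle}$ vanishes automatically because $S_\Phi$ is $*$-closed. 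Orthogonality of $\ket{\xi},\ket{\eta}$ is not an added hypothesis: applying the first condition to $X=\iden\in S_\Phi$ forces $\langle\xi|\eta\rangle=0$. This gives the claimed iff.

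For the forward implications I would work with $\mathcal{A}:=[S_\Phi]''$, the $*$-algebra generated by $S_\Phi$; since $S_\Phi$ is self-adjoint and unital, $S_\Phi\subseteq\mathcal{A}$ and the commutant is $[S_\Phi]'$. The structure theorem provides a decomposition $\mathcal{H}_A=\bigoplus_j \mathcal{H}_j^{(1)}\otimes\mathcal{H}_j^{(2)}$ with $\mathcal{A}=\bigoplus_j \iden\otimes\mathcal{L}(\mathcal{H}_j^{(2)})$ and $[S_\Phi]'=\bigoplus_j \mathcal{L}(\mathcal{H}_j^{(1)})\otimes\iden$. If $[S_\Phi]'$ is non-abelian, some $\dim\mathcal{H}_j^{(1)}\geq 2$; choosing orthonormal $\ket{e_1},\ket{e_2}\in\mathcal{H}_j^{(1)}$ and any unit $\ket{\phi}\in\mathcal{H}_j^{(2)}$, I set $\ket{\xi}=\ket{e_1}\otimes\ket{\phi}$ and $\ket{\eta}=\ket{e_2}\otimes\ket{\phi}$ inside block $j$. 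Since every $X\in S_\Phi\subseteq\mathcal{A}$ acts on block $j$ as $\iden\otimes X_j$, a direct computation gives $\langle\xi|X|\eta\rangle=\langle e_1|e_2\rangle\langle\phi|X_j|\phi\rangle=0$ and $\langle\xi|X|\xi\rangle=\langle\eta|X|\eta\rangle=\langle\phi|X_j|\phi\rangle$, so $\alpha_q(S_\Phi)\geq 2$ and $Q_0^{(1)}(\Phi)>0$. If instead $[S_\Phi]'$ is merely non-trivial, I take a non-scalar element and pass to its self-adjoint or skew-adjoint part to obtain a self-adjoint non-scalar $H\in[S_\Phi]'$; picking eigenvectors $\ket{\psi_1},\ket{\psi_2}$ of $H$ with distinct eigenvalues $\mu_1\neq\mu_2$, the relation $XH=HX$ yields $(\mu_2-\mu_1)\langle\psi_1|X|\psi_2\rangle=0$, hence $\langle\psi_1|X|\psi_2\rangle=0$ for all $X\in S_\Phi$, giving $\alpha(S_\Phi)\geq 2$ and $C_0^{(1)}(\Phi)>0$.

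For the reverse implications, under the extra hypothesis that $S_\Phi$ is a $*$-algebra, the double commutant theorem gives $S_\Phi=[S_\Phi]''=\mathcal{A}$, so the decomposition above becomes $S_\Phi=\bigoplus_j \iden\otimes\mathcal{L}(\mathcal{H}_j^{(2)})$. I would argue the contrapositives. If $[S_\Phi]'$ is trivial then all $\dim\mathcal{H}_j^{(1)}=1$ and $S_\Phi=\mathcal{L}(\mathcal{H}_A)$; for any orthonormal $\ket{\psi_1},\ket{\psi_2}$ the choice $X=\ketbra{\psi_1}{\psi_2}\in S_\Phi$ gives $\langle\psi_1|X|\psi_2\rangle=1\neq 0$, so $\alpha(S_\Phi)=1$ and $C_0^{(1)}(\Phi)=0$. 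If $[S_\Phi]'$ is abelian then again all $\dim\mathcal{H}_j^{(1)}=1$, so $S_\Phi=\bigoplus_j\mathcal{L}(\mathcal{H}_j^{(2)})$ is block-diagonal with full blocks; assuming $\alpha_q(S_\Phi)\geq 2$ with witnesses $\ket{\xi}=\bigoplus_j\ket{\xi_j}$, $\ket{\eta}=\bigoplus_j\ket{\eta_j}$, the independence of the block matrix algebras lets me vary each $X_j$ freely, so the off-diagonal condition forces $\langle\xi_j|X_j|\eta_j\rangle=0$ for all $X_j$ (hence $\ket{\xi_j}=0$ or $\ket{\eta_j}=0$ in each block), while the diagonal-matching condition forces $\ketbra{\xi_j}{\xi_j}=\ketbra{\eta_j}{\eta_j}$ (hence $\norm{\xi_j}=\norm{\eta_j}$). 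Together these force $\ket{\xi_j}=\ket{\eta_j}=0$ for every $j$, contradicting that the witnesses are unit vectors; thus $\alpha_q(S_\Phi)=1$ and $Q_0^{(1)}(\Phi)=0$.

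I expect the main obstacle to be the reverse quantum implication, where one must combine the off-diagonal vanishing with the diagonal-matching condition block-by-block and exploit the full independence of the block algebras to conclude the witness vectors collapse to zero. It is equally essential to keep clear the distinction between the operator system $S_\Phi$ and the algebra $[S_\Phi]''$ it generates: the forward implications use only $S_\Phi\subseteq[S_\Phi]''$, whereas the reverse implications genuinely require $S_\Phi=[S_\Phi]''$, which is exactly the hypothesis that $S_\Phi$ be an algebra and explains why the converses are stated only in that case.
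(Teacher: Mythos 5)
Your proof is correct, but there is nothing in the paper to compare it against: the paper does not prove Proposition~\ref{prop:C0Q0=0} at all, it imports it wholesale from the cited reference \cite{Shirokov-Shulman2103}. So your argument should be judged as a self-contained reconstruction, and as such it holds up. The reduction of both capacities to $\alpha_q(S_\Phi)$ and $\alpha(S_\Phi)$ via Theorem~\ref{op-syst-char} is exactly the right entry point; the unpacking of $P_{\mathcal{C}}XP_{\mathcal{C}}\in\mathbb{C}P_{\mathcal{C}}$ into the two vector conditions is right, including the two points that are easy to miss (the second off-diagonal entry vanishes by $*$-closedness of $S_\Phi$, and orthogonality of $\ket{\xi},\ket{\eta}$ is forced by testing against $\iden\in S_\Phi$ rather than assumed). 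The forward implications via the structure theorem for $\mathcal{A}=[S_\Phi]''$ and its commutant, and the eigenvector argument producing $\langle\psi_1|X|\psi_2\rangle=0$ from a non-scalar self-adjoint element of $[S_\Phi]'$, are both sound. The reverse implications correctly isolate where the algebra hypothesis enters ($S_\Phi=[S_\Phi]''$ by the double commutant theorem), and the block-by-block collapse argument in the reverse quantum implication is complete: off-diagonal vanishing forces $\xi_j=0$ or $\eta_j=0$ per block, diagonal matching forces $\ketbra{\xi_j}=\ketbra{\eta_j}$, and together these annihilate both witnesses, contradicting unit norm. Two cosmetic remarks: when $[S_\Phi]'$ is trivial, the cleanest route to $S_\Phi=\cL(\Hil_A)$ is directly $S_\Phi=[S_\Phi]''=(\mathbb{C}\iden)'=\cL(\Hil_A)$ (triviality requires a single block, not merely all multiplicities equal to one, so your intermediate phrase \enquote{all $\dim\Hil_j^{(1)}=1$} is slightly weaker than what you use); and the step from $\alpha_q(S_\Phi)\geq 2$ to the existence of a two-dimensional correctable subspace deserves the one-line justification that compressions of compressions inherit the property $P_{\mathcal{C}'}XP_{\mathcal{C}'}\in\mathbb{C}P_{\mathcal{C}'}$ for $\mathcal{C}'\subseteq\mathcal{C}$. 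Neither affects correctness. Incidentally, your reverse classical implication is essentially the same double-commutant argument the paper does spell out later, in Lemma~\ref{lemma:C0C0E=0}, where it is combined with Proposition~\ref{prop:C0E=0}.
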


\section{Main results}\label{sec:main}

\subsection{Characterization of all eventually c-scrambling dQMS}

Consider an open quantum system $A$ with associated Hilbert space $\Hil_A\simeq \Hil \simeq \C{d}$ whose time evolution is governed by a dQMS $\{\Phi^n \}_{n\in \mathbb N}$, where $\Phi:\B{\Hil}\to \B{\Hil}$ is a quantum channel. We call  $\{\Phi^n \}_{n\in \mathbb N}$ \emph{eventually c-scrambling} if there exists an $n\in \mathbb N$ such that $\Phi^n$ is c-scrambling. These are precisely the kind of evolutions which eventually become useless for zero-error classical communication. Since any non-trivial $\Phi$ models some inherent noise in the system, one might naively reason that any (non-trivial) dQMS is eventually scrambling, i.e., if one waits for a long enough time $n\in \mathbb N$, $\Phi^n$ will become too noisy to communicate any classical message perfectly. However, the following theorem shows that this is the case only for the class of mixing evolutions. Moreover, we prove that if a dQMS is eventually scrambling, then it will eventually become useless for classical communication even if entanglement is present to aid the process.



\begin{theorem}\label{theorem:scr-mix}
    Let $\Phi: \B{\Hil}\to \B{\Hil}$ be a channel and $\{\Phi^n \}_{n\in \mathbb N}$ be the associated dQMS. Then, the following are equivalent.
    \begin{enumerate}
        \item $\exists k\in \mathbb{N}$ such that $C^{(1)}_{0E}(\Phi^k)=0$.
        \item $\exists k\in \mathbb{N}$ such that $C^{(1)}_{0}(\Phi^k)=0$, i.e., $\{\Phi^n \}_{n\in \mathbb N}$ is eventually c-scrambling.
        \item $\{\Phi^n \}_{n\in \mathbb N}$ is mixing.
    \end{enumerate}
\end{theorem}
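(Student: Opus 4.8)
The plan is to establish the three statements as a cycle $(1)\Rightarrow(2)\Rightarrow(3)\Rightarrow(1)$. The first implication is immediate: by the ordering $Q_0^{(1)}(\Psi)\leq C_0^{(1)}(\Psi)\leq C_{0E}^{(1)}(\Psi)$ in Theorem~\ref{op-syst-char} applied to $\Psi=\Phi^k$, the vanishing of $C_{0E}^{(1)}(\Phi^k)$ forces $C_0^{(1)}(\Phi^k)=0$. For $(2)\Rightarrow(3)$ I would pass through the trace-norm contraction coefficient. By Theorem~\ref{thm:scrambling}, $C_0^{(1)}(\Phi^k)=0$ is equivalent to the strict contraction $\eta^{\Tr}(\Phi^k)<1$. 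Since $\eta^{\Tr}$ is the Lipschitz constant of a channel for the trace distance on states, it is submultiplicative, $\eta^{\Tr}(\Psi_1\circ\Psi_2)\leq\eta^{\Tr}(\Psi_1)\,\eta^{\Tr}(\Psi_2)$, and never exceeds $1$; writing $n=k\lfloor n/k\rfloor+r$ then gives $\eta^{\Tr}(\Phi^n)\leq\eta^{\Tr}(\Phi^k)^{\lfloor n/k\rfloor}\to 0$. Every channel admits an invariant state $\rho$, so for any state $\tau$ we get $\norm{\Phi^n(\tau)-\rho}_1=\norm{\Phi^n(\tau)-\Phi^n(\rho)}_1\leq\eta^{\Tr}(\Phi^n)\,\norm{\tau-\rho}_1\to 0$, and by linearity $\Phi^n(X)\to\Tr(X)\rho$ for all $X$, which is exactly the mixing condition of Theorem~\ref{theorem:mixing}.

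The implication $(3)\Rightarrow(1)$ is the heart of the matter. By Proposition~\ref{prop:C0E=0} it suffices to produce a finite $k$ with $S_{\Phi^k}=\B{\Hil}$, i.e. $S_{\Phi^k}^{\perp}=\{0\}$. I would proceed in three steps. First, mixing yields full convergence $\Phi^n\to\Phi_\infty$ with $\Phi_\infty(X)=\Tr(X)\rho$ (Theorem~\ref{theorem:mixing}). Second, the limit has full operator system: taking Kraus operators $\sqrt{p_a}\ket{a}\bra{i}$ for $\Phi_\infty$ (with $\rho=\sum_a p_a\ket{a}\bra{a}$), the products $(\sqrt{p_a}\ket{a}\bra{i})^{*}(\sqrt{p_a}\ket{a}\bra{j})=p_a\ket{i}\bra{j}$ already span all of $\B{\Hil}$, so $S_{\Phi_\infty}=\B{\Hil}$ regardless of the rank of $\rho$. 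Third, I transfer this fullness back to finite time. The chain $\{S_{\Phi^n}\}_n$ is increasing and stabilizes at some $S_{\Phi^N}=:S_\infty$ (the fact underlying Theorem~\ref{thm:main3}), so $S_{\Phi^n}^{\perp}=S_\infty^{\perp}$ for all $n\geq N$. For $X\in S_\infty^{\perp}$ I would encode the orthogonality $X\perp S_{\Phi^n}$ as the vanishing of a fixed sesquilinear form built from $X$ on the range of the Choi matrix $J(\Phi^n)$; since $J(\Phi^n)\to J(\Phi_\infty)$ and, for positive semidefinite matrices $J_n\to J$, the range projections satisfy $\Pi_{\operatorname{range}J_n}v\to v$ for every $v\in\operatorname{range}J$, this vanishing passes to the limit and gives $X\perp S_{\Phi_\infty}$. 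Hence $S_\infty^{\perp}\subseteq S_{\Phi_\infty}^{\perp}=\{0\}$, so $S_{\Phi^N}=\B{\Hil}$ and $C_{0E}^{(1)}(\Phi^N)=0$, closing the cycle.

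The main obstacle is precisely this last step. Operator systems are genuinely discontinuous functions of the channel, since their dimension can jump under limits, so fullness of $S_{\Phi_\infty}$ does not transfer to finite powers for free. The resolution is to combine the \emph{monotone} stabilization of the increasing chain $\{S_{\Phi^n}\}$ with the lower semicontinuity of ranges of convergent positive semidefinite Choi matrices; this is what simultaneously legitimizes the passage to the limit and pins the conclusion down at a \emph{finite} time $N\leq d^2$, rather than merely in the asymptotic limit.
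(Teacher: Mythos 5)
Your proof is correct, and its overall cycle structure matches the paper's; the first two implications are essentially identical to the paper's argument (the paper also dismisses $(1)\Rightarrow(2)$ via $C^{(1)}_0\leq C^{(1)}_{0E}$ and proves $(2)\Rightarrow(3)$ through $\eta^{\Tr}(\Phi^k)=c<1$ plus contraction towards a fixed state; your division-algorithm step $\eta^{\Tr}(\Phi^n)\leq c^{\lfloor n/k\rfloor}$ is in fact slightly more careful than the paper's passage from the subsequence $\Phi^{nk}$ to all of $\Phi^n$). Where you genuinely diverge is the key implication $(3)\Rightarrow(1)$. The paper argues directly on entanglement-assisted codes: mixing gives $\Phi^n\to\Phi_\infty$ uniformly, hence $\Phi^n\otimes\mathrm{id}_{B_0}\to\Phi_\infty\otimes\mathrm{id}_{B_0}$, and since $(\Phi_\infty\otimes\mathrm{id}_{B_0})$ maps any two encoded states to the \emph{same} output, a triangle inequality makes all channel outputs $\epsilon$-close to a common state, so no two can be orthogonal; operator systems never enter. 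You instead compute $S_{\Phi_\infty}=\B{\Hil}$ from an explicit Kraus representation of $\Phi_\infty$ and transfer fullness back to finite time. Your transfer step, though only sketched, is sound: with $\ket{v_a}=(\iden\otimes K_a)\ket{\Omega}$ one has $\bra{v_a}(\overline{X}\otimes\iden)\ket{v_b}=\Tr(X^{*}K_a^{*}K_b)$, so $X\perp S_{\Phi^n}$ is exactly $\Pi_n(\overline{X}\otimes\iden)\Pi_n=0$ with $\Pi_n$ the projection onto $\operatorname{range}J(\Phi^n)$; and for $v=J(\Phi_\infty)w$ the identity $(\iden-\Pi_n)v=(\iden-\Pi_n)\bigl(J(\Phi_\infty)-J(\Phi^n)\bigr)w$ proves your semicontinuity claim, so the vanishing passes to the limit and $S_{\Phi^N}^{\perp}\subseteq S_{\Phi_\infty}^{\perp}=\{0\}$. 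The trade-off: the paper's route is more elementary and self-contained, while yours buys more — it establishes the stronger structural fact that a mixing channel's stabilized operator system is all of $\B{\Hil}$, and because your $N$ is the stabilization time of the increasing chain, it delivers the quantitative bound $c_E(\Phi)\leq d^2-\dim S_{\Phi}$ of Corollary~\ref{corollary:index} in the same stroke, which the paper only obtains by invoking Theorem~\ref{stabilize} separately.
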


\begin{proof}
$(1)\implies (2)$ This implication is trivial, since $C^{(1)}_0(\Phi)\leq C^{(1)}_{0E}(\Phi)$ for any channel $\Phi$. \\ $(2)\implies (3)$ Assume that $\exists k\in \mathbb{N}$ such that $\Phi^k$ is scrambling, i.e., $\eta^{\Tr}(\Phi^k)=c<1$. Let $\rho\in \State{\cH}$ be a fixed state of $\Phi$. Then, for any (non-zero) positive semi-definite operator $X\in \B{\cH}$, we have
    \begin{equation}
        \norm{\Phi^{nk}(X/\Tr X)-\rho}_1 = \norm{\Phi^{nk}(X/\Tr X)-\Phi^{nk}(\rho)}_1 \leq c^n \norm{X/\Tr X - \rho}_1 \to 0 \text{ as } n\to \infty.
    \end{equation}
    Hence, for all positive semi-definite $X\in \B{\cH}$, we get $\lim_{n\to \infty} \Phi^n (X) = \Tr (X) \rho$. Since any $X\in \B{\cH}$ can be written as a linear combination of positive semi-definite operators, it is clear that $\Phi$ is mixing. \\
    $(3)\implies (1)$ Assume that $\exists \rho\in\State{\cH}$ such that $\forall X\in \B{\cH}$, $\lim_{n\to \infty} \Phi^n(X) = \operatorname{Tr}(X)\rho$. Since pointwise and uniform convergence are equivalent in finite dimensions, for every $\epsilon>0$, $\exists N\in \mathbb{N}$ such that $\forall X\in \B{\cH}$, $\norm{\Phi^n (X) - \Tr (X)\rho}_1 \leq \epsilon$ for $n\geq N$. In other words, $\lim_{n\to \infty} \Phi^n = \Phi_{\infty}$, where $\Phi_{\infty}$ is the completely depolarizing channel defined as $\Phi_{\infty}(X) = \Tr (X)\rho$, and the convergence is with respect to the induced trace norm defined as
    \begin{equation}
        \norm{\Phi}_1 := \sup _{\norm{X}_1\leq 1} \norm{\Phi (X)}_1.
    \end{equation}
    (In fact, since any two norms on a finite-dimensional space are equivalent, we can also think of this convergence in terms any other norm, say the diamond norm for instance.) It is then easy to see that
    $\lim_{n\to \infty} \Phi^n \otimes \mathrm{id}_{B_0} = \Phi_{\infty}\otimes \mathrm{id}_{B_0}$ for any auxiliary system $B_0$. Put differently, for every $\epsilon>0$, $\exists N\in \mathbb{N}$ such that $\forall X\in \B{\cH\otimes \cH_{B_0}}$, $\norm{(\Phi^n \otimes \mathrm{id}_{B_0}) (X) - (\Phi_{\infty}\otimes \mathrm{id}_{B_0})(X)}_1 \leq \epsilon$ for $n\geq N$. Hence, for any pure state $\psi\in \State{\Hil_{A_0} \otimes \Hil_{B_0}}$, channels $\mathcal{E}_i : \B{\Hil_{A_0}}\to \B{\cH}$ for $i=1,2$, and $n\geq N$, we have that
    \begin{align}
        &\frac{1}{2}\norm{ (\Phi^n \circ \mathcal{E}_1 \otimes \mathrm{id}_{B_0}) (\psi) - (\Phi^n \circ \mathcal{E}_2 \otimes \mathrm{id}_{B_0}) (\psi)}_1 \\ 
        =\,\, &\frac{1}{2}\norm{ (\Phi^n \otimes \mathrm{id}_{B_0}) (\psi_1) - (\Phi^n \otimes \mathrm{id}_{B_0}) (\psi_2)}_1 \\ 
        \leq \,\, &\frac{1}{2}\norm{ (\Phi^n \otimes \mathrm{id}_{B_0}) (\psi_1) - (\Phi_{\infty}\otimes \mathrm{id}_{B_0}) (\psi_1)}_1 + \frac{1}{2}\norm{(\Phi^n \otimes \mathrm{id}_{B_0})(\psi_2) - (\Phi_{\infty}\otimes \mathrm{id}_{B_0}) (\psi_2)}_1 \leq \epsilon
    \end{align}
    where $\psi_i = (\mathcal{E}_i \otimes \mathrm{id}_{B_0}) (\psi)$ for $i=1,2$ and $(\Phi_{\infty}\otimes \mathrm{id}_{B_0}) (\psi_1) = (\Phi_{\infty}\otimes \mathrm{id}_{B_0}) (\psi_2)$ were added and subtracted to obtain the second inequality. By letting $\epsilon$ be sufficiently small, this implies that $C^{(1)}_{0E}(\Phi^k)=0$ for some $k\in \mathbb{N}$.
\end{proof}

\subsection{Zero-error classical encodings for non-mixing dQMS}

Let us now discuss the conclusion of Theorem~\ref{theorem:scr-mix} in more detail. The theorem shows that the semigroups $\{\Phi^n\}_{n\in \mathbb N}$ that are able to send classical messages perfectly for arbitrarily long times are precisely of the non-mixing type:
\begin{equation}
    \forall n\in \mathbb{N}: \quad C^{(1)}_0(\Phi^n) > 0 \iff \{\Phi^n\}_{n\in \mathbb N} \text{ is non-mixing.}
\end{equation}
It is then natural to ask what kind of encoding states $\rho_1,\rho_2\in \M{d}$ can be used to perfectly transmit a 1-bit classical message through $\Phi^n$ ($\forall n\in \mathbb N$) for a given non-mixing dQMS $\{\Phi^n \}_{n\in \mathbb N}$. Since $\{\Phi^n \}_{n\in \mathbb N}$ is non-mixing, the following two cases can arise:  \\


\noindent \textbf{Case I.} $\lambda=1$ is not a simple eigenvalue of $\Phi$, i.e., $\Phi$ is not ergodic. To tackle this case, let us first note a lemma.

\begin{lemma} \label{lemma:fixed}
 \cite[Section 3.1]{Burgarth2013ergodic} \cite[Proposition 6.8]{Wolf2012Qtour}   The fixed-point vector space $\mathrm{Fix}_{\Phi}:= \{A\in \M{d} : \Phi(A) = A \}$ of a quantum channel $\Phi:\M{d}\to \M{d}$ is spanned by quantum states. 
\end{lemma}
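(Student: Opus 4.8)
The plan is to produce, for every element of $\mathrm{Fix}_{\Phi}$, an explicit decomposition into a complex-linear combination of density matrices that are themselves fixed by $\Phi$. First I would reduce to the Hermitian case. Since $\Phi$ is completely positive it is $\dagger$-preserving, so $\Phi(A)=A$ forces $\Phi(A^*)=A^*$; hence $\mathrm{Fix}_{\Phi}$ is closed under taking adjoints and is therefore the complex span of its Hermitian elements, as any $A$ decomposes as $A = \tfrac{1}{2}(A+A^*) + i\,\tfrac{1}{2i}(A-A^*)$ with both summands Hermitian fixed points. So it suffices to write an arbitrary Hermitian fixed point as a real-linear combination of fixed states.

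The key tool is the ergodic (Ces\`aro) projection
\begin{equation*}
    P := \lim_{N\to\infty} \frac{1}{N}\sum_{n=0}^{N-1} \Phi^n.
\end{equation*}
I would argue this limit exists by a spectral consideration: $\operatorname{spec}\Phi$ lies in the unit disk, eigenvalues of modulus $<1$ are annihilated by Ces\`aro averaging, and the peripheral eigenvalues are semisimple (algebraic $=$ geometric multiplicity, \cite[Proposition 6.2]{Wolf2012Qtour}), so the averaged contributions of the peripheral phases $e^{i\theta}\neq 1$ vanish as well, leaving precisely the spectral projection $P$ onto the eigenvalue-$1$ eigenspace, which is exactly $\mathrm{Fix}_{\Phi}$. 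Two properties of $P$ are what I will use: (i) $P$ is completely positive and trace-preserving, being a limit of convex combinations of the $\CPTP$ maps $\Phi^n$ (the set of $\CPTP$ maps is closed); and (ii) $P$ is a projection onto $\mathrm{Fix}_{\Phi}$, i.e.\ $\Phi\circ P = P$, $\operatorname{ran}P = \mathrm{Fix}_{\Phi}$, and $P(A)=A$ for every fixed point $A$. Both statements in (ii) follow by shifting the summation index in the Ces\`aro average and noting that the boundary term $\tfrac{1}{N}(\Phi^N - \mathrm{id})$ tends to $0$.

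I would then finish as follows. Given a Hermitian $A\in \mathrm{Fix}_{\Phi}$, decompose it into positive and negative parts $A = A^+ - A^-$ with $A^\pm \geq 0$. Applying $P$ and using $P(A)=A$ gives $A = P(A^+) - P(A^-)$. Now $P(A^\pm)\geq 0$ by positivity of $P$, and $P(A^\pm)\in \mathrm{Fix}_{\Phi}$ since the range of $P$ is the fixed-point space. Normalizing each nonzero term by its trace exhibits $A$ as a real-linear combination of density matrices lying in $\mathrm{Fix}_{\Phi}$. Combined with the reduction to the Hermitian case, this shows that $\mathrm{Fix}_{\Phi}$ is spanned by quantum states.

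The main subtlety worth flagging, and the reason a naive argument fails, is that the positive and negative parts $A^\pm$ of a Hermitian fixed point are in general \emph{not} themselves fixed by $\Phi$; this is exactly what necessitates projecting them back into $\mathrm{Fix}_{\Phi}$ through the positive map $P$. The only genuinely technical input is the existence and positivity of the Ces\`aro projection, which is standard ergodic theory for quantum channels; everything else is routine.
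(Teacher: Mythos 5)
Your proof is correct, but it takes a genuinely different route from the paper, and your closing remark about why the ``naive'' argument fails is factually wrong. The paper's proof is precisely that naive argument: for a Hermitian fixed point $A = A^+ - A^-$ (Jordan decomposition with orthogonal supports), the positive and negative parts \emph{are} themselves fixed by $\Phi$. The paper shows this with a short trace argument: writing $\Pi^+$ for the projector onto $\supp A^+$, one has $A^+ = \Pi^+ A = \Pi^+\Phi(A^+) - \Pi^+\Phi(A^-)$, and the chain
\begin{equation}
\Tr(A^+) = \Tr(\Pi^+\Phi(A^+)) - \Tr(\Pi^+\Phi(A^-)) \leq \Tr(\Pi^+\Phi(A^+)) \leq \Tr(\Phi(A^+)) = \Tr(A^+)
\end{equation}
forces equality throughout, giving $\Pi^+\Phi(A^-) = 0$ and $\Phi(A^+) = \Pi^+\Phi(A^+)$, hence $\Phi(A^+) = A^+$ and $\Phi(A^-)=A^-$. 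So the positivity and trace preservation of $\Phi$ alone already rigidify the Jordan decomposition of a fixed point; no projection back into $\mathrm{Fix}_\Phi$ is needed.

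Your Ces\`aro-projection argument is a valid alternative: the ergodic average $P$ is indeed CPTP (closed convex set of CPTP maps) and is the spectral projection onto $\mathrm{Fix}_\Phi$ (using semisimplicity of the peripheral spectrum), so $A = P(A^+) - P(A^-)$ exhibits $A$ in the span of fixed states. What your route loses, besides elementarity, is a strictly stronger conclusion that the paper actually uses downstream: in the discussion of Case I (non-ergodic channels), the paper encodes a classical bit in the states $A^+/\Tr A^+$ and $A^-/\Tr A^-$, and the zero-error property relies on these two fixed states having \emph{orthogonal} supports. Your fixed states $P(A^+)$ and $P(A^-)$ need not be orthogonal, so your lemma, while sufficient for the spanning statement as literally quoted, would not directly support that later encoding argument.
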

\begin{proof}
    Let $A\in \mathrm{Fix}_{\Phi}$ and consider its canonical decomposition into Hermitian parts:
    \begin{equation}
        A = \frac{A+A^{\dagger}}{2} + i \frac{A-A^{\dagger}}{2i}.
    \end{equation}
    Since $\Phi$ is Hermiticity preserving, $(A+A^{\dagger})/2$ and $(A-A^{\dagger})/2i$ are also fixed by $\Phi$. Thus, it suffices to show that a Hermitian fixed point $A=A^{\dagger}\in \mathrm{Fix}_{\Phi}$ lies in the span of quantum states. Since $A$ is Hermitian, we can write its Jordan decomposition: $A= A^+ - A^-$, where $A^\pm \geq 0$ and $\Tr (A^+A^-)=0$. Let $\Pi^+$ be the projector onto support of $A^+$. Then, we have $A^+ = \Pi^+ A = \Pi^+ \Phi (A) = \Pi^+ \Phi (A^+) - \Pi^+ \Phi (A^-)$, so that
    \begin{equation}
        \Tr (A^+) = \Tr (\Pi^+ \Phi(A^+)) - \Tr (\Pi^+ \Phi(A^-)) \leq \Tr (\Pi^+ \Phi(A^+)) \leq \Tr (\Phi(A^+)) = \Tr (A^+).
    \end{equation}
    Hence, the inequalities above must be equalities, implying that $\Pi^+ \Phi(A^-)=0$ and $\Pi^+ \Phi(A^+) = \Phi(A^+)$. This shows that $A^+ = \Pi^+ A = \Pi^+ \Phi(A^+) = \Phi(A^+)$, which clearly also implies $A^- = \Phi(A^-)$. Hence, $A$ lies in the span of quantum states $A^+/\Tr A^+$ and $A^-/\Tr A^-$.
\end{proof}

Let us now consider a channel $\Phi:\M{d}\to \M{d}$ for which $\lambda=1$ is not a simple eigenvalue. This means that there are two 
distinct states $\rho, \sigma\in \St{d}$ that are fixed by $\Phi$. Let $A=\rho - \sigma$.Then, the proof of Lemma~\ref{lemma:fixed} shows that the orthogonal positive and negative parts $A^\pm$ of $A$ are also fixed by $\Phi$. Thus, we can transmit a 1-bit classical message through $\Phi^n$ for all $n\in \mathbb N$ by encoding it in the states $\gamma_1 = A^+/\Tr A^+$ and $\gamma_2 = A^-/\Tr A^-$. \\

\noindent \textbf{Case II.} $\lambda=1$ is a simple eigenvalue of $\Phi$ but there are other peripheral eigenvalues as well. In other words, $\Phi$ is an ergodic quantum channel with $|\mathbb{T}\cap \text{spec}\Phi |\geq 2$. \\ 

In order to tackle this case, we need to study the peripheral spectrum of ergodic quantum channels. The structure of the peripheral spectrum of such channels is well-understood.

\begin{lemma}\label{lemma:ergodic-spectrum}
    The following is true for an ergodic quantum channel $\Phi:\M{d}\to \M{d}$:
    \begin{itemize}
        \item The peripheral spectrum of $\Phi$ is a cyclic subgroup of $\mathbb T$, i.e., $\exists q\in \mathbb N$ such that $\mathbb{T} \cap \operatorname{spec}\Phi = \{ \omega^m : m=0,1,\ldots ,q-1 \}$, where $\omega = e^{2\pi i/q}$.
        \item All the peripheral eigenvalues of $\Phi$ are simple.
        \item There exists a unitary $U\in \M{d}$ such that $\Phi^*(U^m)=\omega^m U^m$ for $m=0,1,\ldots ,q-1$.
        \item $U$ admits a spectral decomposition $U=\sum_{m=0}^{q-1} \omega^m P_m$ such that $\Phi^*(P_{m+1})=P_m$.
    \end{itemize}
\end{lemma}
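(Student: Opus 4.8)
The plan is to transport everything to the adjoint channel $\Phi^*$, which is unital, completely positive, and has the same spectrum as $\Phi$, so the peripheral eigenvalues coincide. The first move is to reduce to the \emph{faithful} (irreducible) case. Let $\rho$ be the unique invariant state supplied by ergodicity, set $\Hil_0:=\supp\rho$ with projection $P_0$. Since any $X\ge 0$ supported in $\Hil_0$ satisfies $X\le c\rho$ for some $c>0$, we get $\Phi(X)\le c\Phi(\rho)=c\rho$, whence $\supp\Phi(X)\subseteq\Hil_0$; by linearity $\Phi$ leaves $\B{\Hil_0}$ invariant, and the restriction $\Phi_0:=\Phi|_{\B{\Hil_0}}$ is again a channel with the now faithful invariant state $\rho$, hence irreducible. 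Moreover, peripheral eigenoperators do not decay under iteration whereas the transient directions do, so by the peripheral decomposition in Eq.~\eqref{eq:phasespace} one has $\chi(\Phi)\subseteq\B{\Hil_0}$. Consequently the peripheral spectrum of $\Phi$ equals that of $\Phi_0$, and it suffices to establish all four claims for the irreducible restriction.

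On $\Hil_0$ the functional $X\mapsto\Tr(\rho X)$ is faithful, and this is what drives the argument. Given $\Phi_0^*(X)=\lambda X$ with $\abs{\lambda}=1$, the Kadison--Schwarz inequality gives $\Phi_0^*(X^*X)\ge \Phi_0^*(X)^*\Phi_0^*(X)=X^*X$; pairing the nonnegative difference with $\rho$ and using $\Tr(\rho\,\Phi_0^*(A))=\Tr(\Phi_0(\rho)A)=\Tr(\rho A)$ forces it to vanish, so $\Phi_0^*(X^*X)=X^*X$ and likewise $\Phi_0^*(XX^*)=XX^*$. By Choi's characterization $X$ lies in the multiplicative domain of $\Phi_0^*$, and since $X^*X$ is then a fixed point it is a scalar by ergodicity, so $X$ is a scalar multiple of a unitary. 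It follows that $\chi(\Phi_0^*)$ is a unital $*$-subalgebra of $\B{\Hil_0}$ on which $\Phi_0^*$ restricts to a $*$-automorphism $\alpha$, whose fixed-point subalgebra is $\mathbb{C}\iden$. A $*$-automorphism of a finite-dimensional $C^*$-algebra with trivial fixed-point subalgebra must act on an abelian algebra $\chi(\Phi_0^*)\cong\mathbb{C}^q$ by cyclically permuting its $q$ minimal projections $P_0,\dots,P_{q-1}$ in a single $q$-cycle. Reading this off yields every claim simultaneously: the peripheral spectrum is the eigenvalue set of a cyclic shift, i.e.\ the group $\{\omega^m\}$ with $\omega=e^{2\pi i/q}$ (claim~1); each eigenvalue is simple since each eigenspace of the shift is one-dimensional (claim~2); the generator $U:=\sum_{m}\omega^m P_m$ is unitary with $\Phi_0^*(U^m)=\omega^m U^m$ (claim~3); and the shift $\alpha(P_{m+1})=P_m$ is exactly $\Phi_0^*(P_{m+1})=P_m$ (claim~4). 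As a consistency check, once $\Phi_0^*(U)=\omega U$ is known, the relations $\Phi_0^*(U^m)=\omega^m U^m$ and $\Phi_0^*(P_{m+1})=P_m$ also drop out of the Fourier inversion $P_m=\tfrac1q\sum_k\omega^{-mk}U^k$.

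The step I expect to be the crux is precisely this reduction to the faithful case. The equality $\Phi^*(X^*X)=X^*X$, and with it the unitarity of the peripheral eigenoperators and the whole algebra structure, genuinely relies on faithfulness of the invariant state; for a merely ergodic channel $\rho$ need not have full rank, and then a peripheral eigenoperator of $\Phi^*$ can degenerate on the transient subspace. This is why I would prove the structural statements on the recurrent subspace $\Hil_0=\supp\rho$, where $\Phi_0$ is irreducible: claims~1 and~2 transport verbatim because $\chi(\Phi)\subseteq\B{\Hil_0}$, while the unitary $U$ and spectral projections $P_m$ of claims~3 and~4 are naturally objects on $\Hil_0$ (and this is all of $\Hil$ when $\Phi$ is irreducible). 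Verifying the block-invariance $\Phi(\B{\Hil_0})\subseteq\B{\Hil_0}$ and the decay of the transient block are the routine but essential ingredients that legitimize the reduction.
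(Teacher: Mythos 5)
Your proposal takes a genuinely different route from the paper's, and it is the more careful of the two. The paper's proof of this lemma is a reduction-plus-citation: it observes that $\Phi^*$ is unital, completely positive, has the same spectrum as $\Phi$, and fixes the positive definite operator $\iden$, and then appeals to the Evans--H\o{}egh-Krohn theory of irreducible positive maps \cite{Evans1978Spectral}, \cite[Theorem 6.6]{Wolf2012Qtour}. You instead prove the structure from scratch: restrict to $\Hil_0=\supp \rho$, where the restriction $\Phi_0$ is irreducible; use Kadison--Schwarz together with faithfulness of $\rho$ to place every peripheral eigenoperator of $\Phi_0^*$ in the multiplicative domain and conclude it is a multiple of a unitary; deduce that $\chi(\Phi_0^*)$ is a finite-dimensional C$^*$-algebra on which $\Phi_0^*$ acts as a $*$-automorphism with trivial fixed-point algebra; and classify such automorphisms (trivial fixed algebra forces a single orbit of central summands of block size one, hence $\mathbb{C}^q$ with a cyclic shift). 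All these steps are sound; the only spot needing polish is the justification of $\chi(\Phi)\subseteq \B{\Hil_0}$, which is cleanest via Eq.~\eqref{eq:phasespace}: for an ergodic channel all $\Hil_{k,1}$ are one-dimensional, the invariant state is $\tfrac1K\sum_k\rho_k$ with support $\bigoplus_k \Hil_{k,2}$, and every element of $\chi(\Phi)$ is supported there.

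More importantly, the caveat you flag as the crux is not a technicality: for an ergodic channel whose invariant state is not faithful, claims 3 and 4 are \emph{false as stated} on the full space, so the restriction you perform is unavoidable, and the paper's own proof is incorrect at exactly this point. Consider the classical channel $\Phi_A$ of Eq.~\eqref{eq:classicalchannel} with
\begin{equation}
    A \,=\, \begin{pmatrix} 0 & 1 & 1/2 \\ 1 & 0 & 1/2 \\ 0 & 0 & 0 \end{pmatrix},
\end{equation}
which is ergodic with invariant state $\tfrac12\left(\ketbra{1}+\ketbra{2}\right)$ and peripheral spectrum $\{1,-1\}$. Here $\Phi_A^*$ annihilates off-diagonal entries and acts as $A^{T}$ on diagonals, so its $(-1)$-eigenspace is spanned by $\operatorname{diag}(1,-1,0)$: there is no unitary $U\in\M{3}$ with $\Phi_A^*(U)=-U$, and no resolution of the identity $P_0+P_1=\iden$ with $\Phi_A^*(P_{m+1})=P_m$. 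The gap in the paper's argument is the assertion that a unital CP map with a unique (automatically full-rank) fixed point falls under the Evans--H\o{}egh-Krohn hypotheses: that theory requires \emph{irreducibility}, and irreducibility is self-dual ($\Phi^*$ is irreducible if and only if $\Phi$ is, i.e., if and only if $\rho$ is faithful), so mere ergodicity of $\Phi^*$ does not suffice. Your restricted formulation---$U$ and the $P_m$ live on $\Hil_0$ and are eigenoperators of the compression $\Phi_0^*(\cdot)=\Pi_0\Phi^*(\cdot)\Pi_0$, where $\Pi_0$ projects onto $\Hil_0$, rather than of $\Phi^*$ itself---is the correct statement, and it is all that is needed downstream: in the proof of Theorem~\ref{erg-per-spec} each $P_m$ satisfies $P_m\leq \Pi_0$, so $\Tr\left(P_{m'}\Phi(P_m)\right)=\Tr\left(\Phi_0^*(P_{m'})P_m\right)$ and the cyclic-orthogonality argument goes through verbatim. (Alternatively, one can bypass the adjoint entirely and encode into the cyclically permuted peripheral states $\rho_k$ of Eq.~\eqref{eq:phasespace}.) In short: your proof is correct, it establishes rather than cites the key structure theorem, and it correctly localizes a flaw in the lemma as stated and proved in the paper.
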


\begin{proof}
    We recall that a quantum channel and its adjoint have the same spectrum, and also that if $\Phi$ is ergodic, then $\Phi^*$ is also ergodic. Also recall that the adjoint of any quantum channel is unital and completely positive. This means that for an ergodic channel $\Phi: \M{d}\to \M{d}$, the adjoint map $\Phi^*:\M{d}\to \M{d}$ has a unique positive definite fixed point: $\Phi^*(\iden) = \iden$. The peripheral spectrum of such maps has been studied in detail in the literature, and the statement of the lemma follows directly from the results in \cite{Evans1978Spectral}, see also \cite[Theorem 6.6]{Wolf2012Qtour}.
\end{proof}

The next proposition provides a lower bound on the number of classical messages that can be sent without error through any iteration of an ergodic channel.

\begin{theorem}\label{erg-per-spec} 
Let $\Phi:\M{d}\to \M{d}$ be an ergodic quantum 
channel. Then,
\begin{equation}
    \forall n\in \mathbb{N}: \quad C^{(1)}_0(\Phi^n) \geq \log |\mathbb{T} \cap \operatorname{spec}\Phi|.
\end{equation}
\end{theorem}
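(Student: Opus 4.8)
The plan is to build an explicit code of size $q := |\mathbb{T}\cap\operatorname{spec}\Phi|$ that works simultaneously for every iterate $\Phi^n$, by exploiting the rigid cyclic structure of the peripheral spectrum supplied by Lemma~\ref{lemma:ergodic-spectrum}. From that lemma I extract the unitary $U = \sum_{m=0}^{q-1}\omega^m P_m$ together with the covariance relation $\Phi^*(P_{m+1}) = P_m$, where indices are read modulo $q$ and $\omega = e^{2\pi i/q}$. The idea is that the mutually orthogonal spectral projections $\{P_m\}_{m=0}^{q-1}$ (which satisfy $\sum_m P_m = \iden$) play the role of ``position labels'': encoding message $i$ into a state supported in the range of $P_i$ and reading out the output with the projective measurement $\{P_m\}$ will recover $i$ without error at every time $n$.

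First I would record that every $P_m$ is nonzero: since $\Phi^*(P_{m+1})=P_m$ and $\Phi^*(0)=0$, a single vanishing projection would propagate around the cycle to force $\sum_m P_m = 0$, contradicting $\sum_m P_m = \iden$. Next I iterate the covariance relation to obtain $(\Phi^*)^n(P_k) = P_{k-n}$ for all $n$. For each $i \in \{0,\dots,q-1\}$ I then pick any state $\rho_i \in \St{d}$ (a pure one suffices) with $\supp(\rho_i)\subseteq \operatorname{ran} P_i$, so that $\Tr(P_j\rho_i) = \delta_{ij}$ by orthogonality of the projections.

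The crux is the output computation, which is a one-line application of the adjoint identity: for every $k$,
\begin{equation}
    \Tr\!\big(P_k\,\Phi^n(\rho_i)\big) = \Tr\!\big((\Phi^*)^n(P_k)\,\rho_i\big) = \Tr\!\big(P_{k-n}\,\rho_i\big) = \delta_{k,\,i+n}.
\end{equation}
Because $\Phi^n(\rho_i)\geq 0$ and $\sum_k P_k = \iden$, the vanishing of $\Tr\!\big(P_k\Phi^n(\rho_i)\big)$ for all $k\neq i+n$ forces $\supp(\Phi^n(\rho_i))\subseteq \operatorname{ran} P_{i+n}$. Since $i\mapsto i+n \bmod q$ is a bijection of $\{0,\dots,q-1\}$, the outputs $\{\Phi^n(\rho_i)\}_i$ are supported in pairwise orthogonal ranges, hence $\Phi^n(\rho_i)\perp\Phi^n(\rho_j)$ for $i\neq j$. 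This exhibits $q$ messages transmissible through $\Phi^n$ with zero error for every $n$, giving $C_0^{(1)}(\Phi^n)\geq \log q$.

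The only genuinely delicate points are the modular index bookkeeping in the iterated relation $(\Phi^*)^n(P_k)=P_{k-n}$ and the justification that nonnegativity together with orthogonality of the $P_k$ upgrades the trace conditions to a support containment; both are routine once the nonvanishing of the projections is in hand. I expect no serious obstacle beyond carefully tracking the cyclic shift, since Lemma~\ref{lemma:ergodic-spectrum} already does the heavy lifting of producing the covariant projection system.
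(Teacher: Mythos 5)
Your proof is correct and follows essentially the same route as the paper's: both encode the $q$ messages into states supported on the spectral projections $P_m$ supplied by Lemma~\ref{lemma:ergodic-spectrum} and use the adjoint relation $\Phi^*(P_{m+1})=P_m$ to show that the channel cyclically permutes the output supports, forcing pairwise orthogonality of the outputs at every time $n$. Your write-up merely makes explicit two points the paper leaves implicit, namely the iterated relation $(\Phi^*)^n(P_k)=P_{k-n}$ and the upgrade from vanishing traces to the support containment $\supp\big(\Phi^n(\rho_i)\big)\subseteq \operatorname{ran} P_{i+n}$.
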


\begin{proof}
For an ergodic channel $\Phi:\M{d}\to \M{d}$, Lemma~\ref{lemma:ergodic-spectrum} provides orthogonal spectral projectors $P_k\in \M{d}$ satisfying $\Phi^{*}(P_{k+1})=P_k$, where $k\in \mathbb{Z}_q := \{0,1,\ldots ,q-1 \}$ and $q= |\mathbb{T} \cap \operatorname{spec}\Phi |$. Note that addition of indices here is to be understood mod $q$. We claim that $q$ classical messages can be transmitted perfectly through $\Phi^n$ for all $n\in \mathbb N$ with encoding states $\rho_m = P_m/\Tr P_m$ for $m\in \mathbb{Z}_q$. This is because
\begin{align}
   \forall m\in \mathbb{Z}_q : \quad &\Tr (\Phi^{*}(P_{m+1}) P_m) = \Tr (P_{m}) = \Tr (P_{m+1}\Phi (P_m)), \\ 
   &\Tr (\Phi^{*} (P_{m+1})P_{m'}) = \Tr (P_m P_{m'}) = 0 = \Tr (P_{m+1} \Phi (P_{m'})) \quad \text{for} \quad m\neq m'.   
\end{align}
In other words,
\begin{equation}
    \forall m\in \mathbb{Z}_q: \quad \Tr (\Phi(P_m)P_{m'}) = \begin{cases} \Tr (P_m) \quad \text{if } m'=m+1 \\
    0 \quad\quad\quad \,\,\,\, \text{if } m'\neq m+1 .
    \end{cases}
\end{equation}
This shows that sequential action of $\Phi$ on the encoding states $\rho_m$ just cyclically permutes the output supports: $\operatorname{supp} \Phi(\rho_m) \subset \operatorname{supp} P_{m+1}$ for $m\in \mathbb{Z}_q$. Clearly, the output states $\{ \Phi^n (\rho_m)\}_{m\in \mathbb{Z}_q}$ are thus mutually orthogonal for all $n\in \mathbb N$, which proves our claim.
\end{proof}

More generally, we can prove that for any dQMS $\{\Phi^n \}$, the peripheral space $\chi(\Phi)$ of the channel $\Phi$ serves as the right space to encode information in order for it to be recoverable for an arbitrarily long time. Recall that the peripheral space of a channel $\Phi:\B{\Hil}\to \B{\Hil}$ is defined as the span of all its peripheral eigenoperators and there exists a decomposition $\Hil = \Hil_0 \oplus \bigoplus_{k=1}^K \Hil_{k,1}\otimes \Hil_{k,2}$ and positive definite states $\rho_{k}\in \B{\Hil_{k,2}}$ such that \cite[Theorem 6.16]{Wolf2012Qtour}:
    \begin{equation}\label{eq:phasespace2}
        \chi(\Phi) = 0 \oplus \bigoplus_{k=1}^K (\B{\Hil_{k,1}}\otimes \rho_k). 
    \end{equation}
Moreover, there exist unitaries $U_k\in \B{\Hil_{k,1}}$ and a permutation $\pi$ which permutes within subsets of $\{1,2,\ldots ,K \}$ for which the corresponding $\Hil_{k,1}$'s have the same dimension, such that for any
    \begin{equation}\label{eq:phaseaction2}
     X = 0 \oplus \bigoplus_{k=1}^K x_k \otimes \rho_k, \quad   \Phi (X) = 0 \oplus \bigoplus_{k=1}^K U^{\dagger}_k x_{\pi (k)} U_k \otimes \rho_k .
    \end{equation}
Using this decomposition, we can prove the following result.

\begin{theorem}\label{theorem:C0phin}
    Let $\Phi:\B{\Hil}\to \B{\Hil}$ be a quantum channel and $\{\Phi^n\}_{n\in \mathbb{N}}$ be the associated dQMS. Then, $\exists N\leq (\dim \Hil) ^2$ such that
    \begin{equation}
        \forall q\in \mathbb{N}: \quad \lim_{n\to \infty} C^{(1)}_0 (\Phi^n) = C^{(1)}_0 (\Phi^N) = C^{(1)}_0 (\Phi^{N+q})=
        \log \sum_{k=1}^K \dim \Hil_{k,1}.
    \end{equation}
\end{theorem}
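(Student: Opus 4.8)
The plan is to separate the statement into two claims: that the sequence $n\mapsto C^{(1)}_0(\Phi^n)$ is eventually constant with stabilization by $N\le(\dim\Hil)^2$, and that its stabilized (= limiting) value equals $\log\sum_{k}\dim\Hil_{k,1}$. For the first claim I would pass to operator systems via Theorem~\ref{op-syst-char}, rewriting $C^{(1)}_0(\Phi^n)=\log\alpha(S_{\Phi^n})$. Writing $\{K_i\}$ for the Kraus operators of $\Phi$ and $\{M_a\}$ for those of $\Phi^n$, the two Kraus representations $\{K_iM_a\}$ and $\{M_aK_i\}$ of $\Phi^{n+1}$ give $S_{\Phi^{n+1}}=\operatorname{span}\{M_a^\dagger K_i^\dagger K_j M_{a'}\}=\mathcal{F}(S_{\Phi^n})$, where $\mathcal{F}(T):=\operatorname{span}\{K_i^\dagger Z K_j:Z\in T\}$ is a fixed monotone map. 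Inserting $\iden=\sum_iK_i^\dagger K_i\in S_\Phi$ into $M_a^\dagger M_{a'}$ shows $S_{\Phi^n}\subseteq S_{\Phi^{n+1}}$, and monotonicity of $\mathcal{F}$ shows that once two consecutive operator systems coincide the chain is constant forever. Since these subspaces of $\B{\Hil}$ increase within dimensions $1,\dots,(\dim\Hil)^2$, the chain stabilizes at some $N\le(\dim\Hil)^2$; as $\alpha$ is decreasing under inclusion of operator systems, $\alpha(S_{\Phi^n})$ is nonincreasing and its stabilized value $\alpha(S_{\Phi^N})=\alpha(S_{\Phi^{N+q}})$ is also $\lim_{n\to\infty}\alpha(S_{\Phi^n})$.

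It then remains to evaluate the stabilized number $M^\ast:=\alpha(S_{\Phi^N})$ by matching bounds. For the lower bound I would encode directly inside the peripheral blocks using Eq.~\eqref{eq:phaseaction2}: fixing an orthonormal basis $\{\ket{e^{(k)}_j}\}_{j=1}^{\dim\Hil_{k,1}}$ of each $\Hil_{k,1}$, take the $\sum_k\dim\Hil_{k,1}$ states $\sigma_{k,j}:=\ketbra{e^{(k)}_j}\otimes\rho_k\in\chi(\Phi)$. A short computation with Eq.~\eqref{eq:phaseaction2} shows that $\Phi$ maps this orthogonal family to another orthogonal family supported in the $\pi^{-1}$-permuted blocks (the fixed factors $\rho_k$ only contribute harmless scalars $\Tr\rho_k^2$), so that $\Tr[\Phi^n(\sigma_{k,j})\Phi^n(\sigma_{k',j'})]=0$ for $(k,j)\ne(k',j')$ and every $n$. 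Hence $C^{(1)}_0(\Phi^n)\ge\log\sum_k\dim\Hil_{k,1}$ for all $n$.

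The main obstacle is the matching upper bound, since the zero-error capacity is discontinuous under limits of channels and one cannot simply take $n\to\infty$ along the subsequence $\Phi^{n_i}\to\mathcal{P}_\chi$ supplied by \cite[Proposition 6.3]{Wolf2012Qtour}. I would instead argue by compactness. For each large $i$ pick an optimal code $\{\psi^{(i)}_m\}_{m=1}^{M^\ast}$ of pure states with pairwise orthogonal outputs under $\Phi^{n_i}$; compactness of the pure-state set lets me pass to a subsequence with $\psi^{(i)}_m\to\psi^\ast_m$ for each $m$. The uniform convergence $\Phi^{n_i}\to\mathcal{P}_\chi$ then gives $\Phi^{n_i}(\psi^{(i)}_m)\to\mathcal{P}_\chi(\psi^\ast_m)$, and passing the orthogonality relations $\Tr[\Phi^{n_i}(\psi^{(i)}_m)\Phi^{n_i}(\psi^{(i)}_{m'})]=0$ to the limit yields $\Tr[\mathcal{P}_\chi(\psi^\ast_m)\mathcal{P}_\chi(\psi^\ast_{m'})]=0$ for $m\ne m'$. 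Thus the $M^\ast$ states $\mathcal{P}_\chi(\psi^\ast_m)$ are mutually orthogonal and all lie in $\chi(\Phi)=0\oplus\bigoplus_k\B{\Hil_{k,1}}\otimes\rho_k$; since mutually orthogonal states in this block-diagonal space force mutually orthogonal subspaces of the $\Hil_{k,1}$ factors, there can be at most $\sum_k\dim\Hil_{k,1}$ of them. Hence $M^\ast\le\sum_k\dim\Hil_{k,1}$, and combined with the lower bound and the stabilization this yields $\lim_{n\to\infty}C^{(1)}_0(\Phi^n)=C^{(1)}_0(\Phi^N)=C^{(1)}_0(\Phi^{N+q})=\log\sum_k\dim\Hil_{k,1}$.
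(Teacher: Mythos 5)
Your proposal is correct, and its skeleton matches the paper's proof: stabilization of the increasing chain of operator systems $S_{\Phi}\subseteq S_{\Phi^2}\subseteq\cdots$ (which the paper isolates as Theorem~\ref{stabilize}), a lower bound by encoding one state per basis vector of each $\Hil_{k,1}$ block via Eq.~\eqref{eq:phaseaction2}, and an upper bound obtained by pushing zero-error codes through the subsequence $\Phi^{n_i}\to\mathcal{P}_\chi$ of \cite[Proposition 6.3]{Wolf2012Qtour} and counting orthogonal states inside the block structure of $\chi(\Phi)$. The one place you diverge is the upper bound: you treat the optimal codes for the different $\Phi^{n_i}$ as potentially varying, and therefore invoke compactness of the pure-state set to extract limits $\psi^{(i)}_m\to\psi^\ast_m$ before passing the orthogonality relations to the limit. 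This is valid (and would even bound the capacities along the subsequence without knowing stabilization first), but it is heavier than necessary given what you had already established: since $S_{\Phi^{N}}=S_{\Phi^{N+q}}$ for all $q$, any single zero-error code $\{\psi_m\}$ for $\Phi^N$ remains a zero-error code for every $\Phi^{N+q}$, in particular for every $\Phi^{n_i}$ with $n_i\geq N$. With the code states thus fixed, the relations $\Tr[\Phi^{n_i}(\psi_m)\Phi^{n_i}(\psi_{m'})]=0$ pass to the limit $\Tr[\mathcal{P}_\chi(\psi_m)\mathcal{P}_\chi(\psi_{m'})]=0$ by continuity of the Hilbert--Schmidt inner product alone, with no subsequence extraction on the states; this is exactly how the paper argues. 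So your remark that ``one cannot simply take $n\to\infty$'' is true for the capacity as a number, but the stabilization you proved is precisely the tool that lets one take the limit at the level of a fixed code.
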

\begin{proof}
Note that the first two equalities follow from Theorem~\ref{stabilize}, which shows that there exists $N\leq (\dim \Hil) ^2$ such that the operator systems of the semigroup stabilize after time $N:$
\begin{equation}
    S_{\Phi} \subset S_{\Phi^2} \subset \ldots \subset S_{\Phi^N} = S_{\Phi^{N+1}} = \ldots = S_{\Phi^{N+q}} = \ldots
\end{equation}

To show the final equality, first note that the stated action of a channel on its peripheral space (Eq.~\eqref{eq:phaseaction2}) clearly implies that the set of states $\{ \ketbra{i_k} \otimes \rho_k \}$ for $k=1,2,\ldots ,K$ and $i_k = 1,2,\ldots , \dim \Hil_{k,1}$ forms a zero-error classical code for $\Phi^n$ for all $n\in \mathbb N$ in the sense of Definition~\ref{def:zero-classical}. Here, for each $k$, the state $\ketbra{i_k}\otimes \rho_k$ is supported on the $\Hil_{k,1}\otimes \Hil_{k,2}$ block in Eq.~\eqref{eq:phasespace2}. Hence,
    \begin{equation}
        \forall n\in \mathbb{N}: \quad C^{(1)}_0 (\Phi^n) \geq \log \sum_{k=1}^K \dim \Hil_{k,1}.
    \end{equation}
To show the reverse inequality, suppose that $\{\psi_m \}_{m=1}^M$ is a zero-error classical code for $\Phi^N$. Since the operator systems of the semigroup stabilize after time $N$, $\{\psi_m \}_{m=1}^M$ is also a zero-error classical code for $\Phi^{N+q}$ for all $q\in \mathbb{N}$:
\begin{equation}
    \forall q\in \mathbb N, \,\, \forall m\neq m': \quad \langle \Phi^{N+q} (\psi_m), \Phi^{N+q} (\psi_{m'}) \rangle =0.
\end{equation}
From \cite[Proposition 6.3]{Wolf2012Qtour}, there exists an increasing subsequence $(n_i)_{i\in\mathbb{N}}$ such that $\lim_{i\to \infty} \Phi^{n_i} = \mathcal{P}_{\chi}$, where $\mathcal{P}_\chi$ is the channel that projects onto the peripheral space $\chi (\Phi)$. Thus, $\{\psi_m \}_{m=1}^M$ is also a zero-error classical code for $\mathcal{P}_{\chi}$: 
\begin{equation}
    \forall m\neq m': \quad \lim_{i\to \infty} \langle \Phi^{n_i} (\psi_m), \Phi^{n_i} (\psi_{m'}) \rangle = 0 =
    \langle \mathcal{P}_\chi (\psi_m), \mathcal{P}_\chi (\psi_{m'}) \rangle,
\end{equation}
where we have used the continuity of the Hilbert-Schmidt inner product on $\M{d}$. Thus, $\{ \mathcal{P}_\chi (\psi_m)\}_{m=1}^M$ is an orthogonal set of states in $\chi (\Phi)$. Clearly, the block structure of $\chi (\Phi)$ ensures that $M\leq \sum_{k=1}^K \dim \Hil_{k,1}$. Since $\{\psi_m \}_{m=1}^M$ was an arbitrary zero-error classical code for $\Phi^N$, we obtain
\begin{equation}
    C^{(1)}_0 (\Phi^N) \leq \log \sum_{k=1}^K \dim \Hil_{k,1}.
\end{equation}
\end{proof}

\begin{remark}
    We urge the readers to check that for an ergodic channel $\Phi$, $\dim \Hil_{k,1}=1$ for all $k$ in Eq.~\eqref{eq:phasespace} \cite{Evans1978Spectral,Wolf2012Qtour}, so that $\sum_k \dim \Hil_{k,1} = \dim \chi (\Phi) =|\mathbb{T} \cap \operatorname{spec}\Phi|$ and we get 
    \begin{equation}
        \lim_{n\to \infty} C^{(1)}_0 (\Phi) = \log |\mathbb{T} \cap \operatorname{spec}\Phi|  .  \end{equation}
\end{remark}

\subsection{Zero-error quantum communication through dQMS}

In this subsection, we consider the task of storing quantum information in a system whose time evolution is governed by a dQMS $\{\Phi^n \}_{n\in \mathbb{N}}$, where $\Phi: \B{\Hil}\to \B{\Hil}$ is a quantum channel. We take a slightly different route here than what was taken in the classical case in previous subsections. We first prove the analogue of Theorem~\ref{theorem:C0phin} for the quantum capacity, from which the characterization of dQMS that eventually become useless for perfect quantum communication will follow naturally. The peripheral space $\chi (\Phi)$ will again play a crucial role in our discussion. 

\begin{theorem}\label{theorem:Q0phin}
    Let $\Phi:\B{\Hil}\to \B{\Hil}$ be a quantum channel and $\{\Phi^n\}_{n\in \mathbb{N}}$ be the associated dQMS. Then, $\exists N\leq (\dim \Hil) ^2$ such that
    \begin{equation}
        \forall q\in \mathbb{N}: \quad \lim_{n\to \infty} Q^{(1)}_0 (\Phi^n) = Q^{(1)}_0 (\Phi^N) = Q^{(1)}_0 (\Phi^{N+q}) =
        \log \max_{k} \dim \Hil_{k,1}.
    \end{equation}
\end{theorem}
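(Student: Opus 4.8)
The plan is to follow the proof of Theorem~\ref{theorem:C0phin} essentially verbatim, replacing the independence number $\alpha$ by the quantum independence number $\alpha_q$ and the orthogonality-of-outputs condition by the Knill--Laflamme conditions \cite{knil-laf}. Since $Q^{(1)}_0(\Phi^n)=\log\alpha_q(S_{\Phi^n})$ depends only on the operator system $S_{\Phi^n}$ (Theorem~\ref{op-syst-char}), and Theorem~\ref{stabilize} provides an $N\leq(\dim\Hil)^2$ with $S_{\Phi^N}=S_{\Phi^{N+q}}$ for all $q$, the first two equalities are immediate and the limit $\lim_{n\to\infty}Q^{(1)}_0(\Phi^n)$ is already attained at $n=N$. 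It then remains to evaluate $\alpha_q(S_{\Phi^N})$, and the key idea is to identify $S_{\Phi^N}$ with the operator system $S_{\mathcal{P}_\chi}$ of the peripheral projection $\mathcal{P}_\chi$ and read off $\alpha_q$ from the block decomposition in Eq.~\eqref{eq:phasespace2}.

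First I would prove $S_{\Phi^N}=S_{\mathcal{P}_\chi}$. By \cite[Proposition 6.3]{Wolf2012Qtour} there is a subsequence with $\Phi^{n_i}\to\mathcal{P}_\chi$ and $n_i\geq N$; writing $n_i=N+m_i$ and passing to a further subsequence along which $\Phi^{m_i}\to\mathcal{Q}$ (compactness of the set of channels), commutativity gives $\mathcal{P}_\chi=\mathcal{Q}\circ\Phi^N$. If $\{K_a\}$ and $\{Q_b\}$ are Kraus operators of $\Phi^N$ and $\mathcal{Q}$, then $\{Q_bK_a\}$ are Kraus operators of $\mathcal{P}_\chi$, and since trace preservation gives $\iden=\sum_bQ_b^*Q_b\in S_{\mathcal{Q}}$ we obtain $K_a^*K_c=\sum_b(Q_bK_a)^*(Q_bK_c)\in S_{\mathcal{P}_\chi}$, i.e. $S_{\Phi^N}\subseteq S_{\mathcal{P}_\chi}$. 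For the reverse inclusion I would choose Kraus operators of $\Phi^{n_i}$ converging to those of $\mathcal{P}_\chi$ (via the spectral decomposition of the convergent Choi matrices $J(\Phi^{n_i})\to J(\mathcal{P}_\chi)$); their products lie in $S_{\Phi^{n_i}}=S_{\Phi^N}$, and since $S_{\Phi^N}$ is a closed subspace the limits remain in it, giving $S_{\mathcal{P}_\chi}\subseteq S_{\Phi^N}$. Hence $Q^{(1)}_0(\Phi^N)=Q^{(1)}_0(\mathcal{P}_\chi)=\log\alpha_q(S_{\mathcal{P}_\chi})$.

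It then remains to show $\alpha_q(S_{\mathcal{P}_\chi})=\max_k\dim\Hil_{k,1}$. For the lower bound, fix $k_0$ attaining the maximum and any unit vector $\ket{\phi}\in\Hil_{k_0,2}$, and take the code $\mathcal{C}=\Hil_{k_0,1}\otimes\ket{\phi}$. Choosing the subsequence so that the peripheral rotation in Eq.~\eqref{eq:phaseaction2} tends to the identity, $\mathcal{P}_\chi$ acts on block $k_0$ as $\omega\mapsto\operatorname{Tr}_{\Hil_{k_0,2}}(\omega)\otimes\rho_{k_0}$, so $\mathcal{P}_\chi(\ketbra{a}{a'}\otimes\ketbra{\phi})=\ketbra{a}{a'}\otimes\rho_{k_0}$ for all $\ket{a},\ket{a'}\in\Hil_{k_0,1}$; the recovery that projects onto block $k_0$ and traces out $\Hil_{k_0,2}$ then inverts $\mathcal{P}_\chi$ on $\mathcal{C}$, so $\alpha_q(S_{\mathcal{P}_\chi})\geq\dim\Hil_{k_0,1}$. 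For the upper bound, let $\mathcal{C}$ be any code for $\mathcal{P}_\chi$; since $\mathcal{P}_\chi$ projects onto the block-diagonal space $\bigoplus_k\B{\Hil_{k,1}}\otimes\rho_k$, it annihilates all coherences between distinct blocks and between $\Hil_0$ and the recurrent subspace and replaces each $\Hil_{k,2}$ by $\rho_k$, so the Knill--Laflamme conditions force $\mathcal{C}$ into a single factor $\Hil_{k,1}$ and hence $\dim\mathcal{C}\leq\max_k\dim\Hil_{k,1}$.

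The main obstacle is this last step: rigorously excluding codes that spread coherently over several blocks or exploit the syndrome factors $\Hil_{k,2}$, which is precisely the optimality of noiseless subsystems. I would make it precise by computing $S_{\mathcal{P}_\chi}$ from the block Kraus operators $\sqrt{p_{k,l}}\,\iden_{\Hil_{k,1}}\otimes\ketbra{e_{k,l}}{e_{k,l'}}$ (with $\rho_k=\sum_lp_{k,l}\ketbra{e_{k,l}}$), which already shows $\bigoplus_k\iden_{\Hil_{k,1}}\otimes\B{\Hil_{k,2}}\subseteq S_{\mathcal{P}_\chi}$, and deducing from $P_{\mathcal{C}}S_{\mathcal{P}_\chi}P_{\mathcal{C}}=\mathbb{C}P_{\mathcal{C}}$ that $P_{\mathcal{C}}$ commutes with the block projections and acts trivially on each $\Hil_{k,2}$. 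A secondary point needing care is the closedness argument in the reverse inclusion above, where the number of Kraus operators can drop in the limit and must be controlled through the spectral gap of $J(\mathcal{P}_\chi)$ at $0$.
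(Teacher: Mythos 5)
Your reduction of the problem to the single channel $\mathcal{P}_\chi$ via the operator-system identity $S_{\Phi^N}=S_{\mathcal{P}_\chi}$ is a genuinely different route from the paper's: the paper never identifies the two operator systems, but instead transfers correctability of a code $\mathcal{C}$ from $\Phi^N$ to $\mathcal{P}_\chi$ by showing that the relative entropy between code states is preserved along the subsequence $\Phi^{n_i}\to\mathcal{P}_\chi$ (invoking the Petz/sufficiency characterization of the Knill--Laflamme conditions and continuity of $D(\cdot\Vert\cdot)$). Your forward inclusion $S_{\Phi^N}\subseteq S_{\mathcal{P}_\chi}$ via the factorization $\mathcal{P}_\chi=\mathcal{Q}\circ\Phi^N$ is correct, the reverse inclusion is repairable exactly as you indicate (Riesz projections of $J(\Phi^{n_i})$ onto the spectrum above half the smallest nonzero eigenvalue of $J(\mathcal{P}_\chi)$ converge to the support projection, so every Kraus operator of $\mathcal{P}_\chi$ is a limit of linear combinations of Kraus operators of $\Phi^{n_i}$, and $S_{\Phi^N}$ is closed), and your lower bound via the noiseless subsystem $\Hil_{k_0,1}\otimes\ket{\phi}$ matches the paper's (which cites \cite[Theorem 3.7]{kribs2006error}).

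The genuine gap is the upper bound $\alpha_q(S_{\mathcal{P}_\chi})\leq\max_k\dim\Hil_{k,1}$, which you yourself flag as ``the main obstacle'' and then only sketch. Two concrete problems. First, your explicit Kraus list $\{\sqrt{p_{k,l}}\,\iden_{\Hil_{k,1}}\otimes\ketbra{e_{k,l}}{e_{k,l'}}\}$ describes $\mathcal{P}_\chi$ only when $\Hil_0=\{0\}$; in general the Kraus operators of $\mathcal{P}_\chi$ have additional components mapping $\Hil_0$ into the recurrent subspace, and the products $K_i^*K_j$ then acquire cross terms supported on $\Hil_0$, so even the claimed containment $\bigoplus_k\iden_{\Hil_{k,1}}\otimes\B{\Hil_{k,2}}\subseteq S_{\mathcal{P}_\chi}$ is not established (what is immediate is only a statement about the compressions $P_rK_i^*K_jP_r$ to the recurrent subspace). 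Second, and more importantly, the deduction that $P_{\mathcal{C}}S_{\mathcal{P}_\chi}P_{\mathcal{C}}=\mathbb{C}P_{\mathcal{C}}$ forces $P_{\mathcal{C}}$ to ``commute with the block projections and act trivially on each $\Hil_{k,2}$'' is precisely the optimality of noiseless subsystems, i.e.\ the theorem's decisive content: one must rule out codes coherently superposed across distinct blocks, codes leaking into $\Hil_0$, and codes entangled between $\Hil_{k,1}$ and $\Hil_{k,2}$, and none of this follows by a routine computation from the elements of $S_{\mathcal{P}_\chi}$ you list. The paper closes exactly this hole with nontrivial cited machinery: correctability of $\mathcal{C}$ for $\mathcal{P}_\chi$ implies $\mathcal{P}_\chi$ acts as a $*$-homomorphism on the algebra $\B{\mathcal{C}}$ up to smearing \cite[Theorem 3]{Choi2009error}, and the structure of $*$-homomorphisms between matrix algebras forces the image $\chi(\Phi)$ to contain a full copy of $\B{\mathcal{C}}$, which is possible only if $\dim\mathcal{C}\leq\dim\Hil_{k,1}$ for some $k$. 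Without this argument (or a self-contained substitute for it), your proof is incomplete at its crucial step.
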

\begin{proof}

As in the proof of Theorem~\ref{theorem:C0phin}, we initially note that the first two equalities follow from Theorem~\ref{stabilize}, which shows that there exists $N\leq (\dim \Hil) ^2$ such that the operator systems stabilize after time $N:$
\begin{equation}
    S_{\Phi} \subset S_{\Phi^2} \subset \ldots \subset S_{\Phi^N} = S_{\Phi^{N+1}} = \ldots = S_{\Phi^{N+q}} = \ldots
\end{equation}
    To show the final equality, first note that the action of a channel $\Phi$ on its peripheral space is reversible \cite[Theorem 6.16]{Wolf2012Qtour}, i.e., there exists a channel $\mathcal{R}:\B{\Hil}\to \B{\Hil}$ such that $\mathcal{R}\circ \Phi = \mathcal{P}_{\chi}$, where $\mathcal{P}_{\chi}$ is the channel that projects onto the peripheral space $\chi (\Phi)$. Thus, in the language of \cite{kribs2006error}, all the $\Hil_{k,1}$ sectors in the decomposition in Eq.~\eqref{eq:phasespace2} are correctable for $\Phi^n$ for all $n\in \mathbb{N}$. Corresponding subspaces $\mathcal{C}_k \subseteq \Hil$ with $\dim \mathcal{C}_k = \dim{\Hil_{k,1}}$ can then be constructed using \cite[Theorem 3.7]{kribs2006error} that are correctable for $\Phi^n$ for all $n\in \mathbb{N}$ in the sense of Defintion~\ref{def:zero-quantum}. This shows
    \begin{equation}
        \forall n\in \mathbb{N}: \quad Q^{(1)}_0 (\Phi^n) \geq 
        \log \max_{k} \dim \Hil_{k,1}.
    \end{equation}
    Conversely, suppose that a subspace $\mathcal{C}\subseteq \Hil$ is correctable for $\Phi^N$ in the sense of Definition~\ref{def:zero-quantum}. Since the operator systems of the semigroup stabilize after time $N$, the subspace $\mathcal{C}$ is also correctable for $\Phi^{N+q}$ for all $q\in \mathbb{N}$. A reformulation of the Knill Laflamme error-correction conditions in terms of the quantum relative entropy $D(\cdot \Vert \cdot)$ \cite{Petz1988sufficient, Jenov2006sufficent, Jenov2006sufficient2, Hiai2011sufficient} shows that for all states $\rho, \sigma\in \State{\Hil} \,\, \text{with}\,\, \supp (\rho) \subseteq \mathcal{C}$ and $\supp (\sigma) \subseteq \mathcal{C}$, the following is true:
    \begin{equation}
        \forall q\in\mathbb{N}: \quad D(\rho || \sigma) = D(\Phi^{N+q}(\rho)||\Phi^{N+q} (\sigma)).
        \end{equation}
    Since there exists an increasing subsequence $(n_i)_{i\in \mathbb{N}}$ such that $\Phi^{n_i}\to \mathcal{P}_{\chi}$ as $i\to \infty$, the following holds true for all states $\rho, \sigma\in \State{\Hil} \,\, \text{with}\,\, \supp (\rho) \subseteq \mathcal{C}$ and $\supp (\sigma) \subseteq \mathcal{C}$ \cite{Shirokov2022}:
    \begin{equation}
        \quad \lim_{i\to \infty} D(\Phi^{n_i}(\rho)||\Phi^{n_i} (\sigma)) = D(\rho || \sigma) = D(\mathcal{P}_{\chi}(\rho)||\mathcal{P}_{\chi} (\sigma)),
        \end{equation}
        which is equivalent to saying that the subspace $\mathcal{C}$ is correctable for $\mathcal{P}_{\chi}$ as well. This means that $\mathcal{P}_{\chi}$ acts like a $*-$homomorphism on the algebra $\B{\mathcal{C}}$ \cite[Theorem 3]{Choi2009error}, upto smearing by a fixed operator. The structure of $*-$homomorphisms between matrix algebras is well-understood \cite{davidson1996calgebra}. In particular, this means that the image $\chi(\Phi)$ of $P_{\chi}$ must be able to accommodate atleast one copy of the full matrix algebra $\B{\mathcal{C}}$, which is only possible if there exists a $k$ such that $\dim \mathcal{C}\leq \dim \Hil_{k,1}$. Since $\mathcal{C}$ is an arbitrary correctable subspace of $\Phi^N$, we get
        \begin{equation}
            Q^{(1)}_0 (\Phi^N) \leq \log \max_k \dim \Hil_{k,1}.
        \end{equation}
\end{proof}

Using the above theorem, we can easily characterize the class of \emph{eventually q-scrambling} dQMS $\{\Phi^n \}_{n\in \mathbb{N}}$, i.e., the ones for which there exists $n\in \mathbb{N}$ such that $Q^{(1)}_0 (\Phi^n)=0$.

\begin{theorem}\label{theorem:Q0phin-zero}
    Let $\Phi:\B{\Hil}\to \B{\Hil}$ be a quantum channel and $\{\Phi^n\}_{n\in \mathbb{N}}$ be the associated dQMS. Then, the following are equivalent:
    \begin{itemize}
        \item $\Phi$ is eventually \emph{q-}scrambling
        \item $\Phi$ is asymptotically entanglement-breaking.
        \item All the $\B{\Hil_{k,1}}$ blocks in the peripheral decomposition in eq.~\eqref{eq:phasespace2} are one-dimensional. 
        \item All peripheral points of $\Phi$ communte with one another: 
        $$\forall X,Y\in \chi(\Phi): \quad [X,Y]=XY - YX = 0.$$
    \end{itemize}
\end{theorem}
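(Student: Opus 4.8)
The plan is to append \emph{eventual q-scrambling} as a fourth condition to the equivalence already recalled in the Preliminaries from \cite[Theorem 32]{Lami2016eb}, which establishes that the last three bullets (asymptotic entanglement-breaking, one-dimensionality of every $\B{\Hil_{k,1}}$ block, and commutativity of $\chi(\Phi)$) are mutually equivalent. Hence it suffices to prove the single equivalence: $\Phi$ is eventually q-scrambling $\iff$ all the $\B{\Hil_{k,1}}$ blocks are one-dimensional. I would do this entirely through the capacity formula of Theorem~\ref{theorem:Q0phin}, which identifies the stabilized value of $Q^{(1)}_0(\Phi^n)$ as $\log\max_k \dim\Hil_{k,1}$.

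First I would record a monotonicity fact. Since the operator systems of the semigroup form an increasing chain $S_\Phi \subseteq S_{\Phi^2} \subseteq \cdots$ (Theorem~\ref{stabilize}) and $Q^{(1)}_0(\Phi^n) = \log\alpha_q(S_{\Phi^n})$, the sequence $n\mapsto Q^{(1)}_0(\Phi^n)$ is non-increasing, because enlarging an operator system cannot increase its quantum independence number. Concretely, if $S\subseteq S'$ are operator systems and a subspace $\cC$ satisfies $P_{\cC} S' P_{\cC} = \mathbb{C}P_{\cC}$, then $P_{\cC} S P_{\cC} = \mathbb{C}P_{\cC}$ as well (the inclusion $\subseteq$ is immediate and equality follows since $\iden\in S$), so $\alpha_q(S)\geq \alpha_q(S')$. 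Combined with Theorem~\ref{theorem:Q0phin}, this shows that the global minimum of $\{Q^{(1)}_0(\Phi^n)\}_{n}$ is attained at the stabilization time $N$ and equals $\log\max_k \dim\Hil_{k,1}$.

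Given this, the desired equivalence is immediate. If every block is one-dimensional, then $\log\max_k\dim\Hil_{k,1}=0$, so $Q^{(1)}_0(\Phi^N)=0$ and $\Phi$ is eventually q-scrambling. Conversely, if $Q^{(1)}_0(\Phi^n)=0$ for some $n$, then since $Q^{(1)}_0(\Phi^N)$ is the global minimum we have $0\leq Q^{(1)}_0(\Phi^N)\leq Q^{(1)}_0(\Phi^n)=0$, whence $\log\max_k\dim\Hil_{k,1}=0$ and every $\B{\Hil_{k,1}}$ block collapses to one dimension. Chaining with the equivalence of the remaining three conditions from \cite{Lami2016eb} closes the loop.

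I do not expect a genuine obstacle here, since the two heavy ingredients---the capacity formula of Theorem~\ref{theorem:Q0phin} and the structural equivalence of \cite{Lami2016eb}---are already in hand. The only point requiring care is verifying that $\alpha_q$ is antitone under enlargement of the operator system; once that one-line observation is in place, the rest is pure bookkeeping, gluing the stabilized capacity value to the block structure of $\chi(\Phi)$.
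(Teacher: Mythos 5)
Your proposal is correct and takes essentially the same route as the paper: the paper's own proof likewise deduces the equivalence of eventual q-scrambling with one-dimensionality of all the $\B{\Hil_{k,1}}$ blocks directly from Theorem~\ref{theorem:Q0phin}, and then cites \cite[Theorem 32]{Lami2016eb} for the remaining equivalences. The only difference is that you spell out the monotonicity of $n \mapsto Q^{(1)}_0(\Phi^n)$ via antitonicity of $\alpha_q$ under enlargement of operator systems, a step the paper leaves implicit as ``clear''; this is a valid (and welcome) elaboration rather than a departure.
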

\begin{proof}
    From the last theorem, it is clear that $\Phi$ is eventually q-scrambling if and only if $\dim \Hil_{k,1}=1$ for all $k=1,2,\ldots ,K$ in Eq.~\eqref{eq:phasespace2}. The rest of the equivalences follow from \cite[Theorem 32]{Lami2016eb}.
\end{proof}

\subsection{Bounds on scrambling times}

We have completely classified channels $\Phi:\M{d}\to \M{d}$ (or dQMS $\{\Phi^n \}_{n\in \mathbb N}$) that eventually lose their ability to send classical or quantum information perfectly. We now consider upper bounds on the minimum time 
after which such dQMS lose their information transmission capacity. We refer to this as the scrambling time (or the scrambling index) of the dQMS.

\begin{definition}\label{def;indices}
For a channel $\Phi$ (or a  dQMS $\{\Phi^n \}_{n\in \mathbb N}$), we define
\begin{itemize}
    \item (quantum scrambling time) $q(\Phi) := \min \{n\in \mathbb{N}: Q_0^{(1)}(\Phi^n)=0\}.$
    \item (classical scrambling time) $c(\Phi) :=\min \{n\in \mathbb{N}: C_0^{(1)}(\Phi^n)=0\}.$
    \item (entanglement-assisted classical scrambling time) $c_E(\Phi) := \min \{n\in \mathbb{N}: C_{0E}^{(1)}(\Phi^n)=0\}.$
\end{itemize}
Here, we adopt the convention that the minimum of an empty set is $+\infty$.
    \end{definition}
Since for any channel $\Phi$, $Q_0^{(1)}(\Phi)\leq C_0^{(1)}(\Phi)\leq C_{0E}^{(1)}(\Phi)$, it follows that $q(\Phi)\leq c(\Phi)\leq c_E(\Phi)$. Moreover, the inequalities here can all be strict. The separation between $q(\Phi)$ and $c(\Phi)$ 
can be illustrated by considering a non-mixing channel $\Phi$ that is entanglement-breaking, so that $q(\Phi)=1$ and $c(\Phi)=c_E (\Phi)=+\infty$. Examples of this kind can be easily constructed: any classical channel $\Phi_A:\M{d}\to \M{d}$ of the form
\begin{equation}\label{eq:classicalchannel}
    \forall X\in \M{d}: \quad \Phi_A(X) = \sum_{i,j} A_{ij}X_{jj} \ketbra{i},
\end{equation}
where $A$ is an entrywise non-negative column stochastic matrix, works. Intuitively, since this channel completely decoheres its input, no quantum information can be sent through it. However, since $\mathrm{spec}\, \Phi_A = \mathrm{spec}\, A \cup \{0\},$ we can easily choose $A$ so that $\Phi_A$ is non-mixing. To show the separation between $c(\Phi)$ and $c_E (\Phi)$, we note that there exist channels $\Phi$ such that \cite{Duan2009zero}
\begin{equation}
     C_0^{(1)}(\Phi)=0 \,\,\text{ and } \,\, C_{0E}^{(1)}(\Phi)>0.
\end{equation}

We now prove the central result of this subsection.

\begin{theorem}\label{stabilize} 
    For any channel $\Phi:\M{d}\to \M{d}$, $\exists N\leq d^2 - \dim S_{\Phi}$ such that
    \begin{align}
        \forall n\in \mathbb{N}: \quad 
         Q^{(1)}_0(\Phi^N) &= Q^{(1)}_0(\Phi^{N+n}), \\ C^{(1)}_0(\Phi^N) &= C^{(1)}_0(\Phi^{N+n}), \\
         C^{(1)}_{0E}(\Phi^N) &= C^{(1)}_{0E}(\Phi^{N+n}).
    \end{align}
\end{theorem}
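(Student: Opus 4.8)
The plan is to trace every capacity back to the operator system and then show that the operator systems stabilize quickly. By Theorem~\ref{op-syst-char}, the quantities $C^{(1)}_0(\Psi)=\log\alpha(S_\Psi)$, $C^{(1)}_{0E}(\Psi)=\log\Tilde{\alpha}(S_\Psi)$ and $Q^{(1)}_0(\Psi)=\log\alpha_q(S_\Psi)$ depend on a channel $\Psi$ only through its operator system $S_\Psi$. Hence it suffices to produce an $N\le d^2-\dim S_\Phi$ with $S_{\Phi^N}=S_{\Phi^{N+n}}$ for every $n\in\mathbb N$; all three capacity identities then follow simultaneously. The whole theorem therefore reduces to a statement about the chain of subspaces $(S_{\Phi^n})_{n\ge 1}$ sitting inside $\B{\Hil}\cong\M{d}$.

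First I would fix a Kraus representation $\Phi(X)=\sum_{i=1}^p K_iXK_i^*$ and exploit the two ways of composing it. Writing $\Phi^{n+1}=\Phi\circ\Phi^n$ shows that $\Phi^{n+1}$ has Kraus operators $\{K_iM_\alpha\}$, where $\{M_\alpha\}$ are Kraus operators of $\Phi^n$, so that $S_{\Phi^{n+1}}=\operatorname{span}\{M_\alpha^*K_i^*K_jM_\beta\}$ while $S_{\Phi^n}=\operatorname{span}\{M_\alpha^*M_\beta\}$. The monotonicity $S_{\Phi^n}\subseteq S_{\Phi^{n+1}}$ now hinges on the fact that trace preservation forces $\iden=\sum_iK_i^*K_i\in S_\Phi$: for all $\alpha,\beta$,
\begin{equation}
    M_\alpha^*M_\beta=M_\alpha^*\Big(\sum_iK_i^*K_i\Big)M_\beta=\sum_i(K_iM_\alpha)^*(K_iM_\beta)\in S_{\Phi^{n+1}}.
\end{equation}
I expect this identity-padding step to be the one genuinely delicate point of the argument; everything else is bookkeeping.

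Next I would show that stabilization, once reached, persists. Composing the other way, $\Phi^{n+1}=\Phi^n\circ\Phi$ has Kraus operators $\{M_\alpha K_i\}$, which gives the cleaner recursion
\begin{equation}
    S_{\Phi^{n+1}}=\operatorname{span}\{K_i^*AK_j : A\in S_{\Phi^n},\ 1\le i,j\le p\}=:\mathcal T(S_{\Phi^n}),
\end{equation}
where $\mathcal T$ is a fixed, order-preserving function on subspaces of $\B{\Hil}$ (defined through the chosen Kraus representation of $\Phi$). Because $\mathcal T$ is a bona fide function of its input, the equality $S_{\Phi^N}=S_{\Phi^{N+1}}$ immediately propagates: $S_{\Phi^{N+1}}=\mathcal T(S_{\Phi^N})=\mathcal T(S_{\Phi^{N+1}})=S_{\Phi^{N+2}}$, and inductively $S_{\Phi^N}=S_{\Phi^{N+n}}$ for all $n\in\mathbb N$.

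Finally I would close with a dimension count. By the monotonicity step the chain $S_\Phi\subseteq S_{\Phi^2}\subseteq\cdots$ is increasing inside the $d^2$-dimensional space $\B{\Hil}$ and starts at dimension $\dim S_\Phi$; each strict inclusion raises the dimension by at least one, so the chain can strictly increase at most $d^2-\dim S_\Phi$ times and is therefore constant from some index $N\le d^2-\dim S_\Phi$ onward. Combined with the persistence step this yields $S_{\Phi^N}=S_{\Phi^{N+n}}$ for all $n$, which by the opening reduction proves all three capacity equalities. The only parts that need real care are the identity-padding inclusion and checking that the recursion $S_{\Phi^{n+1}}=\mathcal T(S_{\Phi^n})$ is well defined for a fixed Kraus representation; the stabilization and counting are then routine.
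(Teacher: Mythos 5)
Your proposal is correct and follows essentially the same route as the paper's proof: reduce all three capacities to the operator system via Theorem~\ref{op-syst-char}, establish the recursion $S_{\Phi^{n+1}}=\operatorname{span}\{K_i^*XK_j : X\in S_{\Phi^n},\ 1\leq i,j\leq p\}$, deduce monotonicity and persistence of stabilization of the chain, and finish with a dimension count inside $\M{d}$. You in fact supply two details the paper labels as easily checked --- the identity-padding step $M_\alpha^*M_\beta=\sum_i(K_iM_\alpha)^*(K_iM_\beta)$ for monotonicity, and the observation that the recursion is a fixed function of $S_{\Phi^n}$ so that one equality propagates forever --- while also inheriting the paper's own slight counting imprecision, since a chain with $N-1$ strict inclusions starting at dimension $\dim S_\Phi$ literally yields $N\leq d^2-\dim S_\Phi+1$ rather than $N\leq d^2-\dim S_\Phi$ (this does not affect the main-text bound $N\leq d^2$).
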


\begin{proof}
Let $\{K_i \}_{i=1}^p \subseteq \M{d} $ be a set of Kraus operators for $\Phi$. Recall that $S_{\Phi}=\text{span}\{K_i^{\dagger}K_j\,;\, \,1\leq i,j \leq p\}$. It can be easily checked that for any $n\in \mathbb{N}$, we have
\begin{equation}
    S_{\Phi^{n+1}} = \text{span} \{ K_i^{\dagger} X K_j : X\in S_{\Phi^n}, \, 1\le i,j \leq p\}.
\end{equation}
We thus obtain an increasing chain of operator systems $S_{\Phi} \subseteq S_{\Phi^2} \subseteq \ldots$. Furthermore, if $S_{\Phi^n}=S_{\Phi^{n+1}}$ for some $n$, then $S_{\Phi^n}=S_{\Phi^{n+k}}$ for all $k\in \mathbb{N}$. In other words, the increasing chain of operator systems stabilizes at some point. Let $N$ denote the minimum $n\in \mathbb{N}$ such that $S_{\Phi^n}=S_{\Phi^{n+1}}$. We then obtain the following chain
\begin{equation}
    S_{\Phi} \subset S_{\Phi^2} \subset \ldots \subset S_{\Phi^N} = S_{\Phi^{N+1}} = \ldots = S_{\Phi^{N+k}} = \ldots
\end{equation}
Note that the inclusions above are all strict. This is because if $S_{\Phi^n}=S_{\Phi^{n+1}}$ for some $n<N$, the above stabilization argument would contradict the minimality of $N$. Moreover, since all the operator systems are inside $\M{d}$ which is of dimension $d^2$, the maximum length of the above chain is $d^2 -\dim S_{\Phi}$. Hence, $N\leq d^2 -\dim S_{\Phi}$. Recall that all the one-shot zero-error capacities can be characterized in terms of the operator system (Theorem \ref{op-syst-char}). Since the operator system stabilizes after $N$ iterations, all the one-shot zero-error capacities also stabilize after $N$ iterations. 
\end{proof}

The above result immediately yields upper bounds on the scrambling times of all eventually scrambling evolutions. 

\begin{corollary}\label{corollary:index}
    For any mixing channel $\Phi:\M{d}\to \M{d}$, the following bound holds: 
    \begin{equation}
        q(\Phi)\leq c(\Phi)\leq c_E(\Phi) \leq d^2 - \dim S_{\Phi}.
    \end{equation}
    Furthermore, for any asymptotically entanglement-breaking channel $\Phi:\M{d}\to \M{d}$, we have $q(\Phi)\leq d^2$.
\end{corollary}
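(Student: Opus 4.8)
The plan is to combine the stabilization result of Theorem~\ref{stabilize} with the characterizations of eventual scrambling already in hand, reducing the corollary to bookkeeping. I will use throughout that the scrambling times satisfy $q(\Phi)\leq c(\Phi)\leq c_E(\Phi)$, as recorded just after Definition~\ref{def;indices}; hence for the first assertion it is enough to bound the largest one, $c_E(\Phi)$.

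For the mixing case, first invoke Theorem~\ref{theorem:scr-mix}: mixing implies there is some $k\in\mathbb{N}$ with $C^{(1)}_{0E}(\Phi^k)=0$, so $c_E(\Phi)<\infty$. By Proposition~\ref{prop:C0E=0} this is equivalent to $S_{\Phi^k}^{\perp}=\{0\}$, i.e.\ $S_{\Phi^k}=\M{d}$. Next recall from the proof of Theorem~\ref{stabilize} that the operator systems form an increasing chain $S_{\Phi}\subseteq S_{\Phi^2}\subseteq\cdots$ inside $\M{d}$ that stabilizes at some $N\leq d^2-\dim S_\Phi$. Since the chain is increasing and capped by $\M{d}$, once it reaches the full algebra it cannot grow further; comparing at the index $\max(N,k)$ (which lies both past the stabilization point $N$ and past $k$) forces $S_{\Phi^N}=\M{d}$, hence $C^{(1)}_{0E}(\Phi^N)=0$. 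This gives $c_E(\Phi)\leq N\leq d^2-\dim S_\Phi$, and the ordering of the scrambling times completes the chain $q(\Phi)\leq c(\Phi)\leq c_E(\Phi)\leq d^2-\dim S_\Phi$.

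For the asymptotically entanglement-breaking case the channel need not be mixing---indeed $c_E(\Phi)$ may be infinite, as the non-mixing entanglement-breaking example preceding the corollary shows---so I argue about $q(\Phi)$ directly. By Theorem~\ref{theorem:Q0phin-zero}, asymptotic entanglement breaking forces every block $\B{\Hil_{k,1}}$ in the peripheral decomposition~\eqref{eq:phasespace2} to be one-dimensional. Substituting this into the explicit formula of Theorem~\ref{theorem:Q0phin} yields, at the stabilization time $N\leq(\dim\Hil)^2=d^2$,
\begin{equation}
    Q^{(1)}_0(\Phi^N)=\log\max_{k}\dim\Hil_{k,1}=\log 1=0,
\end{equation}
so that $q(\Phi)\leq N\leq d^2$.

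The substance of the argument lies entirely in the earlier results, and the one step I expect to need care is the passage from ``the capacity stabilizes at time $N$'' to ``its stabilized value is $0$'': Theorem~\ref{stabilize} guarantees stabilization but not vanishing. In the classical case this is supplied cleanly because a vanishing entanglement-assisted capacity corresponds to the maximal operator system $\M{d}$, which is absorbing along the increasing chain; in the quantum case it is supplied by the exact limiting value $\log\max_k\dim\Hil_{k,1}$ of Theorem~\ref{theorem:Q0phin}. In both cases one should verify that the conclusion does not depend on whether the first vanishing index $k$ precedes or follows the stabilization index $N$, which is immediate from the absorbing/monotone structure of the chain.
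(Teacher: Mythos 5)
Your proof is correct and follows essentially the same route as the paper: for the mixing case, Theorem~\ref{theorem:scr-mix} combined with the stabilization of the increasing operator-system chain from Theorem~\ref{stabilize} (with Proposition~\ref{prop:C0E=0} identifying vanishing entanglement-assisted capacity with $S_{\Phi^k}=\M{d}$, which is absorbing along the chain), followed by the ordering $q(\Phi)\leq c(\Phi)\leq c_E(\Phi)$; and for the asymptotically entanglement-breaking case, the characterization of eventual q-scrambling via Theorems~\ref{theorem:Q0phin} and~\ref{theorem:Q0phin-zero}. The only difference is that you make explicit the step the paper leaves implicit---why the first vanishing time cannot exceed the stabilization time $N$---which is a clarification of the same argument, not a deviation from it.
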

\begin{proof}
    We know that for a mixing channel $\Phi$, $\exists k\in \mathbb{N}$ such that $C^{(1)}_{0E}(\Phi^k)=0$. Hence, $C^{(1)}_{0E}(\Phi^{k+n})=0$ for any $n\in \mathbb{N}$. By the above proposition, we must have $k \leq d^2 - \dim S_\Phi$, which implies that 
    $c_E(\Phi) \leq d^2 - \dim S_{\Phi}$.
    Since $q(\Phi)\leq c(\Phi)\leq c_E(\Phi)$ for any channel $\Phi$, we have the required chain of inequalities. The result for asymptotically entanglement-breaking channels follow similarly.
\end{proof}

Let us take a moment to note that the dimension factor of $d^2$ in Proposition~\ref{stabilize} and Corollary~\ref{corollary:index} is optimal for the classical scrambling times. To show this, consider the $d\times d$ stochastic matrix  

\begin{equation}\label{eq:Ad}
    A_d := \left( \begin{array}{cccccc}
        0 & 1/2 & 0 & 0 & 0 & 0 \\
        0 & 0 & 1 & 0 & 0 & 0 \\
        0 & 0 & 0 & 1 & 0 & 0 \\
        \vdots & \vdots & \vdots & \vdots & \ddots & \vdots \\
        0 & 0 & 0 & 0 & 0 & 1 \\
        1 & 1/2 & 0 & 0 & 0 & 0 \\
    \end{array} \right).
\end{equation}
Firstly, observe that for a classical channel $\Phi_{A}$ of the form defined in Eq.~\eqref{eq:classicalchannel}, the two scrambling times $c(\Phi_A)$ and $c_E(\Phi_A)$ are equal.

\begin{lemma}\label{lemma:c=cE}
    For a classical channel $\Phi_A:\M{d}\to \M{d}$, $c(\Phi_A)=c_E(\Phi_A)$
\end{lemma}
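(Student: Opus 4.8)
The plan is to reduce the equality of the two scrambling times to a single statement about classical channels, namely that for any column-stochastic $A$ the ordinary and the entanglement-assisted one-shot zero-error classical capacities of $\Phi_A$ vanish \emph{together}. First I would note that $c(\Phi_A)\le c_E(\Phi_A)$ is automatic, since $C_0^{(1)}\le C_{0E}^{(1)}$ gives the inclusion $\{n : C_{0E}^{(1)}(\Phi_A^n)=0\}\subseteq\{n : C_0^{(1)}(\Phi_A^n)=0\}$. For the reverse inequality it is enough to establish that the two vanishing conditions are actually equivalent for every iterate. Here I would invoke the elementary composition identity $\Phi_A^n=\Phi_{A^n}$ (a direct check shows that composing these maps amounts to multiplying the stochastic matrices), so that each iterate is again a classical channel of the same form and it suffices to prove the claim for $\Phi_A$ itself.

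The main computation is to pin down the operator system $S_{\Phi_A}$. Reading off the Kraus operators $K_{ij}=\sqrt{A_{ij}}\,\ketbra{i}{j}$ of $\Phi_A$, one finds $K_{ij}^\dagger K_{i'j'}=\sqrt{A_{ij}A_{i'j'}}\,\delta_{ii'}\,\ketbra{j}{j'}$, so by Definition~\ref{def: op-sys} the matrix unit $\ketbra{j}{j'}$ lies in $S_{\Phi_A}$ precisely when some output index $i$ belongs to the supports of both columns $j$ and $j'$ of $A$. Hence $S_{\Phi_A}$ is a \emph{coordinate} subspace: it is the span of exactly those matrix units $\ketbra{j}{j'}$ with $j=j'$ or with $j$ adjacent to $j'$ in the confusability graph of $A$ (the overlap relation being symmetric, this span is $\dagger$-closed, as it must be). The structural point that I expect to carry the argument is that, because $S_{\Phi_A}$ is spanned by matrix units, its orthogonal complement $S_{\Phi_A}^\perp$ is spanned by the complementary matrix units; moreover every diagonal unit $\ketbra{j}{j}$ lies in $S_{\Phi_A}$ since each column of a stochastic matrix is nonzero, so $S_{\Phi_A}^\perp$ consists only of strictly off-diagonal units.

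With this in hand the equivalence falls out. If $S_{\Phi_A}^\perp\ne\{0\}$, it contains some $\ketbra{j}{j'}$ with $j\ne j'$; then $\ket{j},\ket{j'}$ are orthonormal and both $\ketbra{j}{j'}$ and $\ketbra{j'}{j}$ are orthogonal to $S_{\Phi_A}$, so $\alpha(S_{\Phi_A})\ge 2$ and $C_0^{(1)}(\Phi_A)>0$ by Theorem~\ref{op-syst-char}. Contrapositively, $C_0^{(1)}(\Phi_A)=0$ forces $S_{\Phi_A}^\perp=\{0\}$, which by Proposition~\ref{prop:C0E=0} is exactly $C_{0E}^{(1)}(\Phi_A)=0$; the opposite implication is the trivial inequality already noted. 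Applying this equivalence to each power $A^n$ yields $\{n : C_0^{(1)}(\Phi_A^n)=0\}=\{n : C_{0E}^{(1)}(\Phi_A^n)=0\}$, and taking minima gives $c(\Phi_A)=c_E(\Phi_A)$.

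The only delicate ingredient is the observation that $S_{\Phi_A}$ is genuinely a matrix-unit (coordinate) subspace; this is precisely what makes $\alpha(S_{\Phi_A})=1$ coincide with $S_{\Phi_A}^\perp=\{0\}$. That coincidence is false for general operator systems---indeed the strict separation $c(\Phi)<c_E(\Phi)$ can occur exactly for non-classical $\Phi$---so the classical structure of $\Phi_A$ is doing essential work here.
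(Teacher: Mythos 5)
Your proof is correct and follows essentially the same route as the paper: both arguments hinge on the graphical (matrix-unit) structure of $S_{\Phi_A}$, which forces $S_{\Phi_A}^{\perp}$ to be spanned by rank-one matrix units, so that $C_0^{(1)}(\Phi_A)=0$ implies $S_{\Phi_A}^{\perp}=\{0\}$ and hence $C_{0E}^{(1)}(\Phi_A)=0$ via Proposition~\ref{prop:C0E=0}. The only cosmetic differences are that you build the two-element independent set directly instead of citing the rank-one criterion of Theorem~\ref{thm:scrambling}, and you make explicit the reduction $\Phi_A^n=\Phi_{A^n}$ that the paper leaves implicit.
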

\begin{proof}
    It suffices to show that $C^{(1)}_0(\Phi_A)=0\implies C^{(1)}_{0E}(\Phi_A)=0$. Hence, assume that $C^{(1)}_0(\Phi_A)=0$. This means that the operator system $S_{\Phi_A}$ is such that there are no rank one matrices in $S_{\Phi_A}^{\perp}$. However, since the operator system is `graphical' \cite{Duan2013noncomm}, i.e.,
    \begin{equation}
        S_{\Phi_A} = \text{span} \{ \ketbra{i}{j} : i=j \text{ or } i \sim_Aj \},
    \end{equation}
    where the notation $i\sim_A j$ is used to denote that $i$ and $j$ are confusable under the transition probabilities defined by $A$, the absence of a rank one element in $S_{\Phi_A}^{\perp}$ implies that $S_{\Phi_A}^{\perp}=\{0\}$, which shows that $C^{(1)}_{0E}(\Phi_A)=0$. 
\end{proof}

The results in \cite{Akelbek2009index} then show that 
\begin{equation}\label{eq:A_d}
    c(\Phi_{A_d})=c_E(\Phi_{A_d}) = \ceil[\bigg]{\frac{d^2 -2d +2}{2}}.
\end{equation}

\begin{remark}
    The optimal upper bound on the scrambling times of classical stochastic matrices $A\in \M{d}$ is already known in the literature \cite{Akelbek2009index, Guterman2019index}. It is of the form noted above:
    \begin{equation} 
    c(A) \leq \ceil[\bigg]{\frac{d^2 -2d +2}{2}},
    \end{equation}
where $\ceil{\cdot}$ denotes the ceiling function and equality is attained for $A=A_d$. However, this result is derived in a purely combinatorial framework, with no reference made to any zero-error information transmission task. Moreover, the proof is long and uses a variety of intricate graph-theoretic techniques. In contrast, the proof of Proposition~\ref{stabilize} proceeds via a simple operator theoretic chain argument, yields an upper bound with the same optimal $d^2$ dimension factor, and works not only for classical channels but also for quantum channels.
\end{remark}


\section{Auxilliary results}
\subsection{Ergodicity and invariant subspaces}

In this section, we study equivalent descriptions of ergodic and mixing quantum channels in terms of their invariant subspaces.

\begin{definition} A subspace $S \subseteq {\mathbb{C}}^d$ is said to be invariant under $\Phi :\M{d} \to \M{d}$ if for all states $\rho$ with $\supp \, (\rho) \subseteq {S}$, $\supp\, \Phi(\rho) \subseteq S$. A subspace $S \subseteq {\mathbb{C}}^d$ is a minimal invariant subspace of $\Phi$ if for any subspace ${S}^\prime \subseteq {\mathbb{C}}^d$ which is invariant under $\Phi$, $S \subseteq {S}^\prime$.
\end{definition}

The following characterization of ergodicity was obtained in  \cite[Theorem 1]{Burgarth2013ergodic}.

\begin{theorem}\label{theorem:ergodic-inv}
A channel $\Phi :\M{d} \to \M{d}$ is ergodic if and only if $\Phi$ admits a non-zero minimal invariant subspace $\Ss$. Moreover, $\Ss$ is precisely the support of the unique invariant state of $\Phi$.
\end{theorem}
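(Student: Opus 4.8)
The plan is to prove the two implications separately, leaning on just two elementary facts: that the support of any fixed state is a (non-zero) invariant subspace, and that ergodic averages converge to the invariant state (Theorem~\ref{theorem:ergodic}). One preliminary remark: since $\{0\}$ is vacuously invariant, the word \emph{minimal} must be read as ``non-zero and contained in every non-zero invariant subspace'', and this is the convention I adopt. The first fact is immediate: if $\Phi(\rho)=\rho$ and $\sigma$ is a state with $\supp\sigma\subseteq\supp\rho$, then $\sigma\leq\lambda\rho$ for some $\lambda>0$ (as $\rho$ is positive definite on its support), so positivity of $\Phi$ gives $\Phi(\sigma)\leq\lambda\Phi(\rho)=\lambda\rho$ and hence $\supp\Phi(\sigma)\subseteq\supp\rho$. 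Thus $\Ss:=\supp\rho$ is invariant.

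For the forward direction, assume $\Phi$ is ergodic, with unique fixed state $\rho$ (Theorem~\ref{theorem:ergodic}), and set $\Ss=\supp\rho$, which is non-zero and invariant by the fact above. To establish minimality, let $S'$ be any non-zero invariant subspace and choose a state $\sigma$ with $\supp\sigma\subseteq S'$. Invariance forces $\supp\Phi^n(\sigma)\subseteq S'$ for every $n$, so each Cesàro average $\tfrac1N\sum_{n=0}^{N-1}\Phi^n(\sigma)$ is supported in $S'$. Since the operators supported in $S'$ form a closed set and these averages converge to $\operatorname{Tr}(\sigma)\rho=\rho$ by Theorem~\ref{theorem:ergodic}, the limit $\rho$ is supported in $S'$, i.e. $\Ss\subseteq S'$. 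Hence $\Ss$ is a minimal invariant subspace and equals $\supp\rho$.

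For the converse I argue the contrapositive: if $\Phi$ is not ergodic I exhibit two non-zero invariant subspaces intersecting only in $\{0\}$, which precludes any non-zero subspace contained in both. Non-ergodicity means $\dim\mathrm{Fix}_\Phi\geq2$, and since $\mathrm{Fix}_\Phi$ is spanned by states (Lemma~\ref{lemma:fixed}) there are two distinct fixed states $\rho_1\neq\rho_2$. Put $A=\rho_1-\rho_2$, a non-zero traceless Hermitian fixed point; its Jordan parts $A^{\pm}$ are non-zero, orthogonally supported, and fixed by $\Phi$ (exactly as shown in the proof of Lemma~\ref{lemma:fixed}). Normalising gives fixed states $\gamma_{\pm}=A^{\pm}/\operatorname{Tr}A^{\pm}$ with orthogonal supports $S_{\pm}$, which are non-zero invariant subspaces satisfying $S_{+}\cap S_{-}=\{0\}$. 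Any minimal invariant subspace would lie inside $S_{+}\cap S_{-}=\{0\}$, contradicting its being non-zero.

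Uniqueness of $\Ss$ is then automatic, since two minimal invariant subspaces each contain the other and hence coincide; combined with the forward construction this yields the ``moreover'' claim $\Ss=\supp\rho$. The one step that requires care—the main obstacle—is the minimality argument in the forward direction: it is \emph{not} enough to locate some fixed state inside an arbitrary invariant subspace $S'$, one must invoke the ergodic Cesàro limit to drive the \emph{unique} invariant state into $S'$. Everything else reduces to the positive-semidefinite domination bound and the Jordan-decomposition fact already established in Lemma~\ref{lemma:fixed}.
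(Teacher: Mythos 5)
Your proof is correct. There is, however, nothing in the paper to compare it against: the paper does not prove Theorem~\ref{theorem:ergodic-inv} at all, but imports it verbatim from \cite[Theorem 1]{Burgarth2013ergodic}. Your argument is a legitimate self-contained derivation built entirely from ingredients the paper does supply: the C\'esaro-limit characterization of ergodicity (Theorem~\ref{theorem:ergodic}), the domination bound $\sigma \leq \lambda \rho$ for states with $\supp\sigma \subseteq \supp\rho$, and the fact, extracted from the proof of Lemma~\ref{lemma:fixed}, that the Jordan parts $A^{\pm}$ of a Hermitian fixed point are themselves fixed --- the same device the paper itself uses in its ``Case I'' discussion to produce two orthogonally supported fixed states of a non-ergodic channel. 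Both directions are sound: in the forward direction you correctly recognize that one must push the \emph{unique} invariant state into an arbitrary non-zero invariant subspace $S'$ via the C\'esaro limit (closedness of $P_{S'}\M{d}P_{S'}$ makes the limiting step valid), and in the converse the two orthogonally supported fixed states give invariant subspaces with trivial intersection, which rules out a least non-zero invariant subspace. Two minor points deserve explicit mention. First, your step ``non-ergodicity means $\dim \mathrm{Fix}_\Phi \geq 2$'' silently passes from algebraic to geometric multiplicity: ergodicity is defined by $\lambda = 1$ being a \emph{simple} eigenvalue, so you should invoke the fact, stated in the paper's preliminaries (citing \cite[Proposition 6.2]{Wolf2012Qtour}), that algebraic and geometric multiplicities of peripheral eigenvalues of a channel coincide. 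Second, your reading of ``minimal'' as ``non-zero and contained in every non-zero invariant subspace'' is indeed the right repair of the paper's literal definition, under which $\{0\}$ is vacuously invariant and the theorem would be vacuous; flagging this, as you do, is appropriate.
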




In what follows, we provide a similar characterization for the class of mixing quantum channels.

\begin{theorem}\label{thm:inv-sub}
For a quantum channel $\Phi:\M{d}\to \M{d}$, the following are equivalent.
\begin{enumerate}
\item
$\Phi$ is mixing.
\item $\Phi$ admits a non-zero minimal invariant subspace $\Ss \subseteq {\mathbb{C}}^d$ and $\exists N \in {\mathbb{N}}$ such that $\forall \rho\in \St{d}$ and $n \geq N$, $\supp \, \ps\Phi^n (\rho) \ps = \Ss$, where $\ps$ denotes the orthogonal projector onto $\Ss$.
\item $\Phi$ admits a non-zero minimal invariant subspace $\Ss\subseteq {\mathbb{C}}^d$ and $\exists N \in {\mathbb{N}}$ such that for $n \geq N$, $\ps {\mathcal{K}}_n (\Phi) = \ps \M{d}$, where ${\mathcal{K}}_n (\Phi) := {\rm{span}}\, \{K_{i_1} \ldots K_{i_n}\}$. Here, $\{K_i\}_i$ denotes a set of Kraus operators of $\Phi$ and $\ps$ is the orthogonal projector onto $\Ss$.
\end{enumerate}
\end{theorem}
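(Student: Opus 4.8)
The plan is to prove the cyclic chain $(1)\Rightarrow(3)\Rightarrow(2)\Rightarrow(1)$. Throughout, whenever a minimal invariant subspace is assumed or produced, I would invoke Theorem~\ref{theorem:ergodic-inv}: a channel is ergodic exactly when it admits a non-zero minimal invariant subspace $\Ss$, in which case $\Ss=\supp\rhos$ for the unique invariant state $\rhos$, so $\dim\Ss=\operatorname{rank}\rhos$. Note also that $\ps\M{d}$ is precisely the space of operators whose range lies inside $\Ss$, of dimension $(\dim\Ss)\cdot d$.

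For $(1)\Rightarrow(3)$, mixing implies ergodic (Theorem~\ref{theorem:mixing}), so $\Ss=\supp\rhos$ exists. I would study the completely positive maps $\Lambda_n(\rho):=\ps\Phi^n(\rho)\ps$. Writing $\Phi^n(\rho)=\sum_L L\rho L^\dagger$ with $L$ ranging over length-$n$ Kraus products, $\Lambda_n$ has Kraus operators $\{\ps L\}$, so the range of its Choi matrix is the vectorization of $\operatorname{span}\{\ps L\}={\mathcal{K}}_n(\Phi)$ projected by $\ps$, and hence its Choi rank equals $\dim\ps{\mathcal{K}}_n(\Phi)$. Since $\Phi$ is mixing, $\Phi^n\to\Phi_\infty$ with $\Phi_\infty(\rho)=\Tr(\rho)\rhos$, so $\Lambda_n\to\Lambda_\infty$ where $\Lambda_\infty(\rho)=\Tr(\rho)\rhos$. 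The Choi matrix of $\Lambda_\infty$ is $\rhos\otimes\iden$, of rank $(\operatorname{rank}\rhos)\cdot d=\dim\ps\M{d}$, which is the maximal value possible (every $\ps L$ lies in $\ps\M{d}$, forcing $\dim\ps{\mathcal{K}}_n(\Phi)\le\dim\ps\M{d}$ for all $n$). Lower semicontinuity of rank under $\Lambda_n\to\Lambda_\infty$ then gives $\liminf_n\dim\ps{\mathcal{K}}_n(\Phi)\ge\dim\ps\M{d}$, so combined with the upper bound we get $\ps{\mathcal{K}}_n(\Phi)=\ps\M{d}$ for all large $n$, which is $(3)$.

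For $(3)\Rightarrow(2)$, fix $n\ge N$ so that $\{\ps L\}$ spans $\ps\M{d}$. For any state $\rho$ and any nonzero $\ket{\psi}\in\supp\rho$, the evaluation $T\mapsto T\ket{\psi}$ maps $\ps\M{d}$ onto $\Ss$, so $\operatorname{span}\{\ps L\ket{\psi}\}=\Ss$; using that the support of a sum of positive operators is the sum of their ranges, $\supp\ps\Phi^n(\rho)\ps=\sum_L \ps L\,\supp\rho=\Ss$, giving $(2)$ with the same $N$. For $(2)\Rightarrow(1)$, the hypothesis supplies a minimal invariant subspace $\Ss$, so $\Phi$ is ergodic. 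If $\Phi$ were not mixing, Lemma~\ref{lemma:ergodic-spectrum} gives a peripheral spectrum that is a cyclic group of order $q\ge2$, whose spectral projectors induce a cyclic decomposition $\Ss=\bigoplus_{m=0}^{q-1}\Ss_m$ permuted by $\Phi$: a state supported in $\Ss_m$ is sent to one supported in $\Ss_{m+1\bmod q}$. Taking a state $\rho$ supported in $\Ss_0$, invariance keeps $\ps\Phi^n(\rho)\ps=\Phi^n(\rho)$ supported in the proper subspace $\Ss_{n\bmod q}\subsetneq\Ss$ for every $n$, contradicting $\supp\ps\Phi^n(\rho)\ps=\Ss$. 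Hence $\Phi$ is mixing.

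The main obstacle I anticipate is the structural input in $(2)\Rightarrow(1)$: converting `not mixing' into an honest cyclic block decomposition of $\Ss=\supp\rhos$ (and not of all of $\C{d}$) on which a well-chosen state's iterates stay confined to a proper sub-block. This leans on the peripheral structure theory of ergodic channels behind Lemma~\ref{lemma:ergodic-spectrum}, and the cyclic refinement needs to be stated carefully. A clean alternative worth developing is to note that $(2)$, restricted to states supported on $\Ss$, says exactly that $\Phi|_{\Ss}$ is eventually strictly positive, hence primitive, and then to argue that every peripheral eigenoperator of an ergodic $\Phi$ is supported in $\Ss$ and restricts to one of $\Phi|_{\Ss}$, so primitivity of the restriction leaves $\{1\}$ as the only peripheral eigenvalue of $\Phi$. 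The remaining points requiring care are the uniformity of $N$ in $(2)$ (which follows from uniform convergence $\Phi^n\to\Phi_\infty$ in finite dimension together with Weyl's eigenvalue bound) and the lower-semicontinuity-of-rank step in $(1)\Rightarrow(3)$, which is standard but indispensable.
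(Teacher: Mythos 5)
Your proposal is essentially correct, but it is a genuinely different proof from the paper's in two of the three implications, so let me compare. For $(1)\Rightarrow(3)$, the paper argues by contradiction: it picks $X_n$ with $0\neq \ps X_n \perp \ps\mathcal{K}_n(\Phi)$, notes that $\Tr(\rhos\, \ps X_n X_n^*\ps)$ is bounded below by $\Tr(\ps X_nX_n^*\ps)/\norm{\rhos^{-1}}_\infty$ (faithfulness of $\rhos$ on $\Ss$), and then uses maximally-entangled-state identities to show this same quantity is controlled by $\norm{(\tPhin-\Phi_\infty)\otimes\id}_1\Tr(\ps X_nX_n^*\ps)\to 0$. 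Your Choi-rank argument packages exactly the same mechanism — the convergence $\ps\Phi^n(\cdot)\ps\to\Tr(\cdot)\rhos$, whose limit has maximal Choi rank $\dim\Ss\cdot d=\dim\ps\M{d}$ — into lower semicontinuity of rank, which is cleaner and avoids the explicit $\Omega$-gymnastics; the identification of the Choi rank of $\rho\mapsto\ps\Phi^n(\rho)\ps$ with $\dim\ps\mathcal{K}_n(\Phi)$ is correct. Your $(3)\Rightarrow(2)$ coincides with the paper's. For $(2)\Rightarrow(1)$ the paper takes a quite different route from both of yours: it uses Lemma~\ref{lemma:ergodic-spectrum} to produce a second state $\sigma$ fixed by $\Phi^q$, tunes $\epsilon$ so that $\ps(\rhos-\epsilon\sigma)\ps\geq 0$ acquires a kernel vector inside $\Ss$, and then contradicts the full-support condition using the $\Phi^q$-fixed positive and negative parts of $\rhos-\epsilon\sigma$. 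This construction never needs any cyclic decomposition of $\Ss$.

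That matters because your primary $(2)\Rightarrow(1)$ argument does have the gap you flag: the spectral projectors $P_m$ of Lemma~\ref{lemma:ergodic-spectrum} resolve the identity on all of $\C{d}$, not on $\Ss$, and it is not automatic that $\Ss=\bigoplus_m(\Ss\cap\operatorname{range}P_m)$ with each block non-zero and proper, nor that iterates of a suitable state stay confined to a proper block of $\Ss$; establishing this is essentially equivalent to the structural fact you are trying to avoid. However, your fallback closes the implication soundly: condition $(2)$ applied to states supported in $\Ss$ says precisely that the restricted channel $\Phi|_{\Ss}$ is eventually strictly positive, hence mixing with full-rank invariant state (via Theorem~\ref{theorem:scr-mix}, or the cited quantum Wielandt result behind Corollary~\ref{corollary:primitive}), i.e.\ primitive; and for an ergodic channel every peripheral eigenoperator lies in $\B{\Ss}$ — this follows from Eqs.~\eqref{eq:phasespace} and \eqref{eq:phaseaction}, since $0\oplus\bigoplus_k\iden_{k,1}\otimes\rho_k$ is a fixed point, so by uniqueness $\supp\rhos=\bigoplus_k\Hil_{k,1}\otimes\Hil_{k,2}$ contains the support of $\chi(\Phi)$ — whence any $\Phi(X)=\lambda X$ with $\abs{\lambda}=1$ restricts to a peripheral eigenoperator of the primitive $\Phi|_{\Ss}$, forcing $\lambda=1$ and, with ergodicity, mixing. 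If you write this up, promote the fallback to the main argument and spell out the $\chi(\Phi)\subseteq\B{\Ss}$ step; as submitted, the primary route is incomplete.
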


\begin{remark}
Note that if $\ps \mathcal{K}_n(\Phi) = \ps \M{d}$, then $\ps {\mathcal{K}}_m(\phi) = \ps \M{d}$ $\forall m \geq n$, since 
$$ \forall X \in \M{d}: \,\, \ps X = \sum_{i_i \ldots i_n} c_{i_1\ldots i_n} \ps K_{i_1}\ldots K_{i_n},$$ and for each term in the sum, $\ps K_{i_1}\ldots K_{i_{n-1}} \in \ps \M{d},$ and so can also be written as 
$$ \ps K_{i_1}\ldots K_{i_{n-1}} = \sum_{j_1\ldots j_n} d_{j_1 \ldots j_n} \ps K_{j_1}\ldots K_{j_n}.$$
\end{remark}

\begin{proof}
   $(1) \implies (3)$: Assume $\Phi$ is mixing. Then $\Phi$ is ergodic and hence admits a non-zero minimal invariant subspace $\Ss = \supp (\rhos)$, where $\rhos\in \St{d}$ is the unique invariant state of $\Phi$ (Theorem~\ref{theorem:ergodic-inv}). On the contrary, assume that $\forall n \in {\mathbb{N}}$, $\ps {\mathcal{K}}_n (\Phi) \subset \ps \M{d}$ with the containment being strict. Let us choose an operator $X_n \in \M{d}$ such that
   \begin{align}
       0 \neq \ps X_n \in \left( \ps {\mathcal{K}}_n (\Phi) \right)^\perp,
   \end{align}
   where the orthogonal complement is taken inside $\ps \M{d}$, so that $\forall\ps K^{(n)} \in \ps {\mathcal{K}}_n (\Phi) $, 
   \begin{align}
   \langle \ps X_n, \ps K^{(n)} \rangle = \Tr \left( X_n^* \ps \ps K^{(n)}\right) = 0.
   \label{ortho}
   \end{align}
   Note that $\ps X_n X_n^* \ps$ is a non-zero positive operator supported in $\Ss$. Hence,
   \begin{align}\label{eq:mixing}
        \forall n\in \mathbb{N}: \quad \Tr \left(\rhos \ps X_n X_n^* \ps\right) \geq \frac{1}{||\rhos^{-1}||_\infty}  \Tr \left(\ps X_n X_n^* \ps\right).
    \end{align}
However, we can also write
\begin{align}\label{eq:mixing1}
  \bigg| \Tr (\rhos \ps X_n X_n^* \ps) \bigg| &= \bigg|\sum_{i_1,\ldots, i_n} |\Tr(X_n^* \ps \ps K_{i_1}\ldots K_{i_n})|^2 - \Tr (\rhos \ps X_n X_n^* \ps)\bigg|.
\end{align}
Let us consider the two terms on the RHS of \eqref{eq:mixing1} separately. Using cyclicity of the trace 
and the following elementary relations for $A \in \M{d}$:
\begin{align}
\Tr A &= \Tr [(A \otimes I) \Omega] ,\label{eq:mixing2} \\
\Tr A\Tr A^* &= \Tr \left[\Omega (A \otimes I)\Omega (A^* \otimes I)\right], \label{eq:mixing3} \\
(A \otimes I) \ket{\Omega} &= (I \otimes A^T) \ket{\Omega} \label{eq:mixing4}, 
\end{align}
where $\ket{\Omega} = \sum_{i=1}^d \ket{ii}$ is the unnormalized maximally entangled state and $\Omega = \ketbra{\Omega}$, we can write
\begin{align}
  & \sum_{i_1,\ldots, i_n} |\Tr(X_n^* \ps \ps K_{i_1}\ldots K_{i_n})|^2 \\
& =
\sum_{i_1,\ldots, i_n} \Tr( \ps K_{i_1}\ldots K_{i_n}X_n^* \ps) \Tr( \ps X_n (K_{i_1}\ldots K_{i_n})^* \ps) \\
&= \sum_{i_1,\ldots, i_n}  \Tr \left[\Omega (\ps K_{i_1} \ldots K_{i_n} X_n^* \ps \otimes \iden) \Omega (\ps X_n (K_{i_1} \ldots K_{i_n})^* \ps \otimes \iden) \right],\\
&= \Tr \left[\Omega (\tPhin \otimes \id) (X_n^* \ps \otimes \iden) \Omega( \ps X_n \otimes \iden)\right]. \label{eq:mixing5}
\end{align}
Here, $\tPhin: \M{d} \to \M{d}$ is the CP map defined as follows:
\begin{align}
    \tPhin(X) = \ps \Phi_n(X) \ps = \ps\left( \sum_{i_1,\ldots, i_n} (K_{i_1}\ldots K_{i_n})X (K_{i_1}\ldots K_{i_n})^*\right) \ps.
\end{align}
To deal with the second term in \eqref{eq:mixing1} we further define a completely depolarizing map $\Phi_\infty: \M{d} \to \M{d}$ through the relation $\Phi_\infty(X) = \Tr (X) \rhos$, so that $\tPhin(X) \longrightarrow \Phi_\infty(X)$ as $n \to \infty$, since
\begin{align}
    \forall X \in \M{d}: \quad \tPhin(X) = \ps \Phi^n(X) \ps \longrightarrow \ps (\Tr X) \rhos \ps = (\Tr X) \rhos \quad {\hbox{as $n \to \infty$}}.  
    \end{align}
Then, once again using the relations \eqref{eq:mixing2},\eqref{eq:mixing3},\eqref{eq:mixing4}, it can be shown that
\begin{align}
        \Tr (\rhos \ps X_n X_n^* \ps) = \Tr \left(\Omega (\Phi_\infty \otimes \id) (X_n^* \ps \otimes I) \Omega( \ps X_n \otimes I) \right). \label{eq:mixing6}
    \end{align}

Thus, we can express Eq.~\eqref{eq:mixing1} as follows:
    \begin{align}
          \bigg| \Tr (\rhos \ps X_n X_n^* \ps) \bigg| & = \bigg| \Tr \left[ \Omega ( (\tPhin - \Phi_\infty) \otimes \id) (X_n^* \ps \otimes I) \Omega( \ps X_n \otimes I) \right]  \bigg| \\
          & \le \| \Omega\|_\infty \|(\tPhin - \Phi_\infty) \otimes \id \|_1 \Tr (\ps X_n X_n^* \ps).
    \end{align}
    Note that, $\|(\tPhin - \Phi_\infty) \otimes \id \|_1 \to 0$ as $n \to \infty$ since $\tPhin \mapsto \Phi_\infty$ in this limit, 
    which contradicts Eq.~\eqref{eq:mixing}. This concludes the proof of $(1) \implies (3)$.
\medskip

\noindent
    $(3) \implies (2)$: Assume that $\ps \mathcal{K}_n (\Phi) = \ps \M{d}$ for some $n \in {\mathbb{N}}$. Hence, $\forall \ket{\psi}$,
    $$ \ps\Phi^n\left( \ketbra{\psi}\right) \ps\quad = \sum_{i_1, \ldots, i_n} \ps K_{i_1} \ldots K_{i_n} \ketbra{\psi } (K_{i_1} \ldots K_{i_n} )^* \ps.$$
    This implies that
    \begin{align}
        \supp \, \ps\Phi^n \left( \ketbra{\psi}\right) \ps & = {\rm{span}} \{ \ps K_{i_1} \ldots K_{i_n} \ket{\psi}\}\nonumber\\
        &= \ps{\rm{span}} \{  K_{i_1} \ldots K_{i_n}\} \ket{\psi}\nonumber\\
        &= \ps {\mathcal{K}}_n(\Phi) \ket{\psi} = \ps \M{d} \ket{\psi} = \Ss.
    \end{align}
    \smallskip

    \noindent
    $(2) \implies (1)$: Assume that $\Phi$ admits a minimal invariant subspace $\Ss \neq \{0\}$ and $\exists N \in {\mathbb{N}}$ such that for any $\rho\in\St{d}$ and $n \geq N$, $\supp\, \ps \Phi^n (\rho) \ps = \Ss$. Since $\Phi$ admits a minimal invariant subspace, it must be ergodic. Assume on the contrary that $\Phi$ is not mixing. Then, there must exist a peripheral eigenvalue different from $1$.  Hence, 
    $$\mathbb{T} \cap {\rm{spec}} \,\Phi = \left\{ e^{\frac{2\pi i m}{q}} \, : \, m=0,1,\ldots, q-1\right\} $$ 
    for some $q \in {\mathbb{N}}$, $q \geq 2$ (see Lemma~\ref{lemma:ergodic-spectrum}). This implies that $\Phi^q$ has $q$ distinct fixed points. Let us consider two of them, say $\rhos, \sigma\geq 0$, where $\supp \rhos = \Ss$. If $\supp \ps \sigma \ps \neq \Ss$, we can choose $k$ large enough such that $kq\geq N$ and $\supp \ps \Phi^{kq}(\sigma) \ps = \supp \ps \sigma \ps \neq \Ss$, leading to a contradiction. So, we can assume that $\supp \ps \sigma \ps = \Ss$. Now, consider $\omega = \rhos - \epsilon \sigma$, which is again a fixed point of $\Phi^q$. Clearly,
    \begin{align}
        \ps \omega \ps \geq  0 &\iff \rhos \geq \epsilon \ps \sigma \ps \\
        &\iff \ps \geq \epsilon \rhos^{-1/2}\sigma \rhos^{-1/2} \\
        &\iff \epsilon \leq \frac{1}{\norm{\rhos^{-1/2}\sigma \rhos^{-1/2}}_{\infty}}.
    \end{align}
Let us choose $\epsilon=1/\norm{\rhos^{-1/2}\sigma \rhos^{-1/2}}_{\infty}$ and $0\neq \ket{v}\in \Ss$ such that $\rhos^{-1/2}\sigma \rhos^{-1/2} \ket{v} = \ket{v}/\epsilon.$ Then, 
    \begin{align}
      \epsilon \ps \sigma \rhos^{-1/2}\ket{v} =  \rhos^{1/2}\ket{v} = \rhos \rhos^{-1/2}\ket{v} \implies (\rhos - \epsilon\ps \sigma \ps) \rhos^{-1/2}\ket{v}=0.
    \end{align}
This means that $\ps \omega \ps = \rhos - \epsilon \ps \sigma \ps$ has a kernel in $\Ss$, so $\supp \ps \omega \ps \neq \Ss$. Since $\omega$ is fixed by $\Phi^q$, its positive and negative parts $\omega_\pm \geq 0$ are also fixed by $\Phi^q$. Finally, by choosing $k$ large enough such that $kq\geq N$, we get 
$\supp \ps \Phi^{kq}(\omega_\pm) \ps = \supp \ps \omega_\pm \ps \subseteq \supp \ps \omega \ps \neq \Ss$, which contradicts our original assumption. Hence, $\Phi$ must be mixing, and the proof is complete.
    \end{proof}

It is easy to see that if the quantum channel is primitive, the equivalences obtained in Theorem~\ref{thm:inv-sub} reduce to those given in \cite[Proposition 3]{wielandt2010}, which we restate below.

\begin{corollary}\label{corollary:primitive}
    For a quantum channel $\Phi:\M{d}\to \M{d}$, the following are equivalent.
\begin{enumerate}
\item
$\Phi$ is primitive.
\item $\exists N \in {\mathbb{N}}$ such that $\forall \rho\in \St{d}$ and $n \geq N$, $\Phi^n  (\rho)$ is of full rank. 
\item $\exists N \in {\mathbb{N}}$ such that for $n \geq N$, ${\mathcal{K}}_n (\Phi) =  \M{d}$, where ${\mathcal{K}}_n (\Phi) := {\rm{span}}\, \{K_{i_1} \ldots K_{i_n}\}$. Here, $\{K_i\}_i$ denotes a set of Kraus operators of $\Phi$.
\end{enumerate}
\end{corollary}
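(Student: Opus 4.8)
The plan is to obtain this corollary as the special case of Theorem~\ref{thm:inv-sub} in which the minimal invariant subspace is the entire space, equivalently $\ps = \iden$. Recall that, by definition, a channel is primitive precisely when it is mixing and its unique invariant state $\rhos$ has full rank. Since the minimal invariant subspace of a mixing (hence ergodic) channel is $\Ss = \supp \rhos$ by Theorems~\ref{theorem:ergodic-inv} and~\ref{thm:inv-sub}, primitivity is equivalent to $\Ss = \C{d}$, i.e.\ $\ps = \iden$. The whole proof then amounts to substituting $\ps = \iden$ into conditions (2) and (3) of Theorem~\ref{thm:inv-sub} and checking the converse bookkeeping.

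For the forward implication $(1)\implies(2),(3)$, I would argue as follows. If $\Phi$ is primitive it is in particular mixing, so Theorem~\ref{thm:inv-sub} applies with $\Ss = \supp \rhos = \C{d}$ and $\ps = \iden$. Condition (2) of that theorem, $\supp \ps \Phi^n(\rho)\ps = \Ss$, then reads $\supp \Phi^n(\rho) = \C{d}$, i.e.\ $\Phi^n(\rho)$ is of full rank for all $\rho\in\St{d}$ and $n\geq N$, which is condition (2) of the corollary. Likewise condition (3), $\ps\mathcal{K}_n(\Phi) = \ps\M{d}$, collapses to $\mathcal{K}_n(\Phi) = \M{d}$, which is condition (3) of the corollary.

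For the converse I would first promote conditions (2) or (3) of the corollary to the hypotheses of Theorem~\ref{thm:inv-sub} by exhibiting $\C{d}$ itself as the minimal invariant subspace. Assuming, say, (3), the identity $\supp \Phi^n(\ketbra{\psi}) = \mathcal{K}_n(\Phi)\ket{\psi} = \M{d}\ket{\psi} = \C{d}$ (valid for every unit vector $\ket{\psi}$ and $n\geq N$) shows that $\Phi^n(\rho)$ is of full rank for all states $\rho$; the same full-rank conclusion is the content of (2) directly. Now if $V\subsetneq\C{d}$ were any proper invariant subspace, then choosing a state $\rho$ with $\supp\rho\subseteq V$ would force $\supp\Phi^n(\rho)\subseteq V\subsetneq\C{d}$, contradicting full rank. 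Hence $\C{d}$ is contained in every invariant subspace, i.e.\ it is the non-zero minimal invariant subspace, with $\ps=\iden$. Conditions (2)/(3) of the corollary then become verbatim conditions (2)/(3) of Theorem~\ref{thm:inv-sub}, so $\Phi$ is mixing; and since $\supp\rhos = \Ss = \C{d}$, the invariant state has full rank, giving primitivity.

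The one thing to be careful about is precisely this converse bookkeeping: Theorem~\ref{thm:inv-sub} builds the \emph{existence} of a minimal invariant subspace into its own conditions (2) and (3), so one cannot merely quote it. The small but essential step is the argument above that full-rank outputs rule out proper invariant subspaces and thereby identify $\C{d}$ as the minimal invariant subspace. Everything else is a transparent specialization obtained by setting $\ps = \iden$, so I expect no further obstacles.
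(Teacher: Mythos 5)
Your proof is correct and takes essentially the same route as the paper, which obtains the corollary by specializing Theorem~\ref{thm:inv-sub} to the case $\Ss = \C{d}$, $\ps = \iden$ (the paper offers no written argument beyond remarking that the reduction is ``easy to see''). The converse bookkeeping you single out --- using the full-rank outputs to rule out proper non-zero invariant subspaces, so that $\C{d}$ is identified as the minimal invariant subspace before conditions (2)/(3) of Theorem~\ref{thm:inv-sub} can be invoked --- is exactly the detail the paper leaves implicit, and you handle it correctly.
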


\subsection{Connection with the quantum Wielandt index}

Consider a primitive channel $\Phi:\M{d}\to \M{d}$. Then, Corollary~\ref{corollary:primitive} informs us that $\exists N \in \mathbb{N}$ such that $\Phi^n$ is strictly positive for $n\geq N$. This leads naturally to the definition of the \emph{Wielandt index} \cite{wielandt2010, rahaman2020, Wiel-revisit, jia-capel} of $\Phi$:
\begin{equation}
    w(\Phi) := \min \{n\in \mathbb{N}: \Phi^n \,\, \text{is strictly positive} \}.
\end{equation} 

Note that if a channel $\Phi:\M{d}\to \M{d}$ is strictly positive, then $\Tr[\Phi(\rho)\Phi(\sigma)]>0$ for all $\rho,\sigma\in \St{d}$. Thus, $\Phi$ is also scrambling. Hence for a primitive channel $\Phi$, it holds that  $$q(\Phi)\leq c(\Phi)\leq w(\Phi).$$ 

In this section, we obtain a converse bound $c(\Phi)\leq w(\Phi)\leq (d-1)c(\Phi)$ for primitive channels $\Phi:\M{d}\to \M{d}$ that are also unital, thus establishing a close link between the scrambling times and Wielandt indices for such channels. We do this by relating the notions of scrambling and strictly positivity. We show that for unital channels that are scrambling, there is a linear universal bound (depending only on the system dimension) for the channel iterations to become strictly positive. In order to prove this result, we first need some new definitions.

Two projections $P,Q \in \M{d}$ are said to be (Murray-von Neumann) equivalent (denoted $P \sim Q$) if there is an operator $V \in \M{d}$ such that $P= VV^*$ and $Q= V^* V$. Hence, $P \sim Q$ if and only if $\Tr P = \Tr Q$. Further, we say that a projection $P \in \M{d}$ is {\em{non-trivial}} if $P \not\in \{0, \iden\}$. Let $P^\perp:= I - P$. The following definition is from \cite{rahaman2020} (see also \cite{idel}). 
\begin{definition}
    A quantum channel  $\Phi:\M{d}\to \M{d}$ is said to be {\em{fully irreducible}}\footnote{Such channels are also called {\em{fully indecomposable}}~\cite{idel}.} if there does not exist any pair of non-trivial, equivalent projections, $P \sim Q$, such that $\Phi(P) \leq \lambda Q$, for some $\lambda>0$.
\end{definition}

\begin{proposition}
    If a unital channel $\Phi: \M{d} \to \M{d}$ is scrambling, then it is fully irreducible.
\end{proposition}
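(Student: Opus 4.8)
The plan is to argue by contradiction, using the characterization of scrambling channels in Theorem~\ref{thm:scrambling}, and in particular condition (4): a channel $\Phi$ is scrambling if and only if $\Tr(\Phi(A)\Phi(B)) > 0$ for every pair of non-zero positive operators $A, B \in \M{d}$. So I would assume that $\Phi$ is scrambling but \emph{not} fully irreducible, and then produce two non-zero positive operators whose images under $\Phi$ have orthogonal supports, contradicting this condition.

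Failure of full irreducibility furnishes a pair of non-trivial equivalent projections $P \sim Q$ together with a scalar $\lambda > 0$ such that $\Phi(P) \le \lambda Q$. The candidates I would feed into Theorem~\ref{thm:scrambling}(4) are $A = P$ and $B = P^\perp = \iden - P$; since $P$ is non-trivial, both $P$ and $P^\perp$ are non-zero, and as projections they are positive. The first half of the argument is that $\supp \Phi(P) \subseteq \operatorname{ran} Q$: indeed, from $0 \le \Phi(P) \le \lambda Q$, any vector killed by $Q$ is sandwiched to zero by $\Phi(P)$, whence $\operatorname{ran} Q^\perp \subseteq \ker \Phi(P)$.

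The second, and essential, half is where unitality enters: I would show $\supp \Phi(P^\perp) \subseteq \operatorname{ran} Q^\perp$, where $Q^\perp = \iden - Q$. Since $\Phi$ is trace-preserving and $\supp \Phi(P) \subseteq \operatorname{ran} Q$ gives $\Phi(P) Q = \Phi(P)$, we have $\Tr(\Phi(P) Q) = \Tr \Phi(P) = \Tr P$. Murray--von Neumann equivalence $P \sim Q$ yields $\Tr P = \Tr Q$, and unitality gives $\Phi(P^\perp) = \Phi(\iden) - \Phi(P) = \iden - \Phi(P)$. Combining these,
\[
\Tr(\Phi(P^\perp) Q) = \Tr Q - \Tr(\Phi(P) Q) = \Tr Q - \Tr P = 0 .
\]
As $\Phi(P^\perp) \ge 0$ and $Q \ge 0$, this forces $\supp \Phi(P^\perp) \subseteq \operatorname{ran} Q^\perp$. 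Since $\supp \Phi(P)$ and $\supp \Phi(P^\perp)$ now lie in the mutually orthogonal subspaces $\operatorname{ran} Q$ and $\operatorname{ran} Q^\perp$, we get $\Tr(\Phi(P)\Phi(P^\perp)) = 0$ with $P, P^\perp$ non-zero and positive, contradicting Theorem~\ref{thm:scrambling}(4).

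The main obstacle is really just this second containment: one must recognize that unitality together with the trace equality $\Tr P = \Tr Q$ pins the support of $\Phi(P^\perp)$ inside the orthogonal complement of $\operatorname{ran} Q$; the rest is routine positivity bookkeeping. It is worth emphasizing that both hypotheses are genuinely used — the matching of traces (coming precisely from the equivalence $P \sim Q$) is what makes $\Tr(\Phi(P^\perp)Q)$ vanish \emph{exactly}, while unitality is what lets $\Phi(P)$ and $\Phi(P^\perp)$ partition the identity and thereby split cleanly across $\operatorname{ran} Q$ and $\operatorname{ran} Q^\perp$.
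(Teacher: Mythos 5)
Your proof is correct, but it takes a genuinely different route from the paper's. The paper argues the same contrapositive from the pair $P \sim Q$ with $\Phi(P) \le \lambda Q$, but it works at the level of Kraus operators: from $Q^\perp \Phi(P) Q^\perp = 0$ it deduces $Q^\perp K_i P = 0$, hence $\Phi(P) = Q\Phi(P)Q \le Q$ (using unitality via $\Phi(P) \le \Phi(\iden) = \iden$), and then trace preservation, $\Tr P = \Tr Q$, and faithfulness of the trace upgrade this to the exact identity $\Phi(P) = Q$; finally it exhibits the states $\rho = P/\Tr P$ and $\sigma = \iden/d$ and computes $\norm{\Phi(\rho)-\Phi(\sigma)}_1 = \norm{\rho - \sigma}_1$, contradicting the contraction-coefficient characterization of scrambling (condition (2) of Theorem~\ref{thm:scrambling}). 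You instead work with condition (4): you obtain $\supp \Phi(P) \subseteq \operatorname{ran} Q$ directly from positivity, with no Kraus operators, and then use unitality, trace preservation, and $\Tr P = \Tr Q$ only to force $\Tr(\Phi(P^\perp)Q) = 0$, i.e.\ $\supp \Phi(P^\perp) \subseteq \operatorname{ran} Q^\perp$, so that $\Phi(P)$ and $\Phi(P^\perp)$ have orthogonal supports. Your argument is shorter and more elementary: it never needs the exact equality $\Phi(P) = Q$, and it makes transparent that failure of full irreducibility hands you a one-bit zero-error code (the outputs of $P/\Tr P$ and $P^\perp/\Tr P^\perp$ are perfectly distinguishable). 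What the paper's computation buys in exchange is the structurally stronger conclusion $\Phi(P) = Q$ and an explicit witness that the trace-norm contraction coefficient equals $1$. Both proofs invoke the hypotheses at the same essential junctures — unitality to play $\Phi(P)$ off against the identity, and the trace equality from $P \sim Q$ to make the final estimate exact — so the difference is purely in the mechanism, not in the strength of assumptions used.
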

\begin{proof}
    We prove that if $\Phi$
    is not fully irreducible, then it cannot be scrambling. To do this, let us assume that there exists a pair of non-trivial, equivalent projections $P\sim Q$ such that $\Phi(P) \leq \lambda Q$ for some $\lambda >0$. This implies that $Q^\perp \Phi(P) Q^\perp = 0.$ Let $\{K_i\}_i$ denote a set of Kraus operators of $\Phi$. Then, since $P$ is a projection (i.e.~$P=P^*$ and $P^2 = P$), we get
    \begin{align}
        Q^\perp \left( \sum_i K_i P K_i^*\right) Q^\perp &= 0 \nonumber\\
        i.e. \quad  \sum_i (Q^\perp K_i P) (Q^\perp K_i P)^* &= 0
        \nonumber\\
         \implies \quad  Q^\perp K_i P &= 0 \quad \forall \, i\nonumber\\
           \implies \quad  K_i P &= Q K_i P\quad \forall \, i 
           \label{eqq1}
    \end{align}
    Hence,
    \begin{align}
        \Phi(P) = \sum_i K_i P K_i^* &= \sum_i K_i P (K_iP)^* = Q\left(\sum_i K_i P K_i^*\right) Q = Q \Phi(P) Q \leq Q,
    \end{align}
    where the third equality follows from~\eqref{eqq1}, and the last inequality follows from the fact that $P \leq \iden$ and since $\Phi$ is unital, $\Phi(P) \leq \Phi(\iden) = \iden$. Thus we have established the following:
    \begin{align}
        \Phi(P) \leq \lambda Q \, \implies \, \Phi(P) \leq Q.
        \label{eqq2}
    \end{align}
    From \eqref{eqq2} we have $\Tr \Phi(P) = \Tr P \leq \Tr Q$, since $\Phi$ is trace-preserving. However, by assumption, $\Tr P = \Tr Q$ (since $P \sim Q$). Hence, by faithfulness of the trace we get $\Phi(P) = Q$. However, this equality implies that $\Phi$ violates the property of scrambling, as is shown explicitly below. 

    Set $\rho = P/\Tr P$ and $\sigma = \iden /d = \frac{P + P^\perp}{d}$. Then,
    \begin{align}
        ||\rho- \sigma||_1 & = || \frac{P}{\Tr P} - \frac{P}{d} -  \frac{P^{\perp}}{d} ||_1 \nonumber\\
        & = \alpha ||P||_1 + \frac{1}{d} ||P^\perp||_1, \nonumber\\
        &= \alpha \Tr P + \frac{1}{d} \Tr P^\perp,
        \label{eq:lhs}
    \end{align}
    where $\alpha := | \frac{1}{\Tr P} - \frac{1}{d}|$. On the other hand, since $\Phi$ is unital and $\Phi(P) = Q$, we get
    \begin{align}
        ||\Phi(\rho) - \Phi(\sigma)||_1 &= ||\frac{\Phi(P)}{\Tr P} - \frac{\iden}{d}||_1 \nonumber\\
        &= ||\frac{Q}{\Tr P} - \frac{Q}{d} - \frac{Q^\perp}{d}||_1\nonumber\\
        & = \alpha \Tr Q + \frac{1}{d} \Tr Q^\perp\nonumber\\
        &= \alpha \Tr P + \frac{1}{d} \Tr P^\perp,\label{eq:rhs}
    \end{align}
    where we used the fact that $\Tr P^\perp = \Tr Q^\perp$ since $\Tr P= \Tr Q$.
 Hence, from \eqref{eq:lhs} and \eqref{eq:rhs} we have
 $$||\Phi(\rho) - \Phi(\sigma)||_1 = ||\rho - \sigma||_1,$$
 and hence the quantum channel $\Phi$ is not scrambling. This concludes the proof.
\end{proof}
One might wonder whether the converse of the above proposition is true. The following example shows that this is not the case.
\begin{example}
    Let $\Phi: \M{4}\rightarrow \M{4}$ be a unital quantum channel defined as follows:
    \[ \forall X\in \M{4}: \quad \Phi(X)=1/2 \begin{bmatrix}
    X_{11}+X_{22} & 0 & 0 & 0\\
    0 & X_{22}+X_{33} & 0 & 0\\
    0 & 0 & X_{33}+X_{44} & 0\\
    0 & 0 & 0 & X_{44}+X_{11}
    \end{bmatrix}.
    \]
    It is easy to see that $\Phi$ is fully irreducible, since it sends any projection to a positive semi-definite matrix of rank strictly larger than the rank of the input projection. However, $$\Phi(\ketbra{1})=\frac{1}{2}(\ketbra{1}+\ketbra{4}) \quad \text{and} \quad \Phi(\ketbra{3})= \frac{1}{2} (\ketbra{2}+ \ketbra{3}).$$  Hence, $\Tr(\Phi(\ketbra{1})\Phi(\ketbra{3}))=0$ and it follows that $\Phi$ is not scrambling (see Theorem \ref{thm:scrambling}).
\end{example}

The following corollary provides the main result of the subsection.

\begin{corollary}\label{strict-pos}
Let $\Phi: \M{d}\rightarrow \M{d}$ be a unital quantum channel and suppose that $\Phi$ is scrambling. Then $\Phi^{d-1}$ is strictly positive. Consequently, for any primitive unital channel $\Phi$, we get $c(\Phi)\leq w(\Phi)\leq (d-1)c(\Phi)$.
\end{corollary}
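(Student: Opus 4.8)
The plan is to convert full irreducibility into a strict rank-increase statement for the supports of iterated inputs, using the already-proven fact that a unital scrambling channel is fully irreducible. First I would reduce the claim to pure states: strict positivity of $\Phi^{d-1}$ is equivalent to $\supp \Phi^{d-1}(\psi) = \C{d}$ for every pure state $\psi = \ketbra{\psi}$, since the support of $\Phi^{d-1}(\rho)$ for a general state $\rho$ contains the support of $\Phi^{d-1}$ applied to each pure state in any decomposition of $\rho$. For a positive map, the support of $\Phi(A)$ depends only on $\supp A$ (because $cP \leq A \leq CP$ for $P=\supp A$ and suitable $c,C>0$ forces $c\Phi(P)\leq \Phi(A)\leq C\Phi(P)$), so I would introduce a map $\hat\Phi$ on projectors defined by letting $\hat\Phi(P)$ be the support projector of $\Phi(P)$, and record that $\hat\Phi^n(P) = \supp \Phi^n(P)$.

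The key step is the lemma: if $\Phi$ is unital and fully irreducible, then $\operatorname{rank}\hat\Phi(P) > \operatorname{rank}P$ for every non-trivial projector $P$. I would prove this by contradiction. Suppose $\operatorname{rank}\hat\Phi(P) \leq \operatorname{rank}P =: r < d$. Since $\Phi(P)$ is supported on $\hat\Phi(P)$, we have $\Phi(P) \leq \norm{\Phi(P)}_\infty \hat\Phi(P)$. Because $\operatorname{rank}\hat\Phi(P) \leq r < d$, I can enlarge $\hat\Phi(P)$ to a projector $Q$ with $\hat\Phi(P)\leq Q$ and $\operatorname{rank}Q = r$, so that $\Phi(P) \leq \lambda Q$ with $\lambda = \norm{\Phi(P)}_\infty$. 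Now $P$ and $Q$ are both non-trivial (rank $r$ with $0<r<d$) and equivalent ($P\sim Q$ since they have equal rank), contradicting full irreducibility. Hence the rank strictly increases at each application as long as it remains below $d$.

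Starting from a rank-one pure state $\psi$ and applying the lemma repeatedly, the rank of $\supp \Phi^n(\psi) = \hat\Phi^n(\ketbra{\psi})$ grows by at least one at every step until it saturates at $d$; therefore $\operatorname{rank}\Phi^{d-1}(\psi) = d$ for all $\psi$, which is exactly the statement that $\Phi^{d-1}$ is strictly positive. For the consequence, the bound $c(\Phi)\leq w(\Phi)$ holds for any primitive channel because strict positivity implies scrambling, as already noted. For the reverse inequality I would apply the first part to $\Phi^{c(\Phi)}$, which is again unital and, by definition of $c(\Phi)$, scrambling; hence $(\Phi^{c(\Phi)})^{d-1} = \Phi^{(d-1)c(\Phi)}$ is strictly positive, giving $w(\Phi)\leq (d-1)c(\Phi)$. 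The main obstacle is the rank-increase lemma: recognizing that full irreducibility is precisely the hypothesis ruling out a non-increasing support, and constructing the equivalent projector $Q$ needed to invoke it.
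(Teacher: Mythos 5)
Your proof is correct and follows the same skeleton as the paper's: unital $+$ scrambling $\Rightarrow$ fully irreducible (the preceding proposition), full irreducibility $\Rightarrow$ strict rank increase, and then at most $d-1$ iterations take any rank-one projection to a full-rank output. The one genuine difference lies in the middle step: the paper simply cites \cite[Theorem 3.7]{rahaman2020} for the fact that a fully irreducible unital channel is strictly rank increasing, whereas you re-derive it with a short contradiction argument --- if $\operatorname{rank}\Phi(P)\leq\operatorname{rank}P=r<d$, enlarge $\supp\Phi(P)$ to a rank-$r$ projector $Q$, note that $P\sim Q$ are non-trivial and $\Phi(P)\leq\norm{\Phi(P)}_\infty Q$, contradicting full irreducibility. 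This argument is valid (trace preservation guarantees $\Phi(P)\neq 0$, hence $\norm{\Phi(P)}_\infty>0$) and makes the corollary self-contained; it even reveals that unitality is not needed for the rank-increase step itself, only for obtaining full irreducibility via the preceding proposition and for the trivial observation that the rank stays at $d$ once it reaches $d$ (since $\supp\Phi(\iden)=\iden$). You also spell out the deduction $w(\Phi)\leq(d-1)c(\Phi)$ by applying the first part to the unital scrambling channel $\Phi^{c(\Phi)}$, a step the paper leaves implicit. Both routes prove the same bound; yours trades an external citation for a few extra lines of elementary argument.
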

\begin{proof}
It is known that a fully irreducible unital channel is strictly rank increasing, i.e., for any singular positive semi-definite $A\in \M{d}: \mathrm{Rank}(\Phi(A))> \mathrm{Rank}(A)$ (see \cite[Theorem 3.7]{rahaman2020}).

Now, since $\Phi$ is scrambling, the previous result shows that $\Phi$ is fully irreducible, and hence also strictly rank increasing. Thus, starting from any rank one projection $P$, it requires at most $d-1$ iterations of $\Phi$ to send $P$ to an invertible matrix. This concludes the proof. 
\end{proof}
\smallskip

\noindent
{\bf{Remark:}} The above corollary provides an upper bound on the number of iterations for a scrambling channel that are needed to guarantee that it becomes strictly positive. This upper bound depends solely on the dimension $d$ of the underlying space. It is natural to ask whether this bound can be improved. Note that from the definition of scrambling it follows that if $\Phi$ is scrambling, then $\Phi^*\circ\Phi$ is strictly positive. So for self-adjoint scrambling channels, it holds that $\Phi^2$ is strictly positive. This raises the following question: Could it be true that for any scrambling channel $\Phi$, $\Phi^2$ is strictly positive? However, this is not true, as shown below by an explicit example of a classical channel, defined by the column stochastic matrix, $A$:
    \begin{equation}
        A = \left(\begin{array}{ccccc}
         1/3 & 1/3 & 0 & 0 & 1/3  \\
         0 & 1/3 & 1/3 & 0 & 1/3 \\
         1/3 & 0 & 1/3 & 1/3 & 0  \\
         0 & 1/3 & 0 & 1/3 & 1/3 \\
         1/3 & 0 & 1/3 & 0 & 1/3
    \end{array}  \right).
    \end{equation}
    $A$ is clearly scrambling but $A^2$ is not strictly positive.

\subsection{Linear bounds on indices for channels with extra structure}

If the mixing channel $\Phi$ is unital and its operator system $S_{\Phi}$ is such that $S_{\Phi^n}$ is an algebra for all $n\in \mathbb{N}$, we can provide better bounds on its scrambling indices.


In order to state and prove our main result, we need the following lemma.

\begin{lemma}\label{lemma:chain suubalgebra}
    Let $\Phi: \M{d}\rightarrow \M{d}$ be a unital channel, and let $\mathcal{M}_{\Phi^k}$ denote 
    the multiplicative domain of $\Phi^k$ for each $k\in\mathbb N$, and let  $\mathcal{M}_{\Phi^{\infty}}=\bigcap_{k\geq 1} \mathcal{M}_{\Phi^k}$ be the stabilized multiplicative domain as introduced in Section~\ref{subsec-fixed-pt-mult.}. 
    \begin{enumerate}

\item The stabilized multiplicative domain can be described as follows
\[\mathcal{M}_{\Phi^{\infty}}=\mathrm{alg}\{A\in \M{d}: \Phi(A)=\lambda A; |\lambda|=1\}.\]
\item     It holds that for all $k\geq 1$, $\mathcal{M}_{\Phi^k}=[S_{\Phi^k}]'$. 

\item If $\Phi$ is primitive, then $\mathcal{M}_{\Phi^{\infty}}=\mathbb{C}\iden $, i.e., the trivial algebra with only scalars. Furthermore, the containments in the following chain of subalgebras are proper: 
 \[ \mathcal{M}_{\Phi}\supsetneq \mathcal{M}_{\Phi^2} \supsetneq \cdots \mathcal{M}_{\Phi^n}\supsetneq \cdots \supsetneq \mathbb{C}\iden.\]
\end{enumerate}
\end{lemma}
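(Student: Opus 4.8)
The plan is to prove the three parts in the order (2), (1), (3), since part~(2) supplies the commutant description that feeds the strictness argument of part~(3), and part~(1) is what pins down $\mathcal{M}_{\Phi^\infty}$ in part~(3). Throughout I would lean on the two facts recalled in Section~\ref{subsec-fixed-pt-mult.} for unital channels: $\mathcal{M}_\Phi = \mathrm{Fix}_{\Phi^*\circ\Phi}$, and $\mathrm{Fix}_\Psi = \mathcal{A}_\Psi'$ where $\mathcal{A}_\Psi$ is the algebra generated by the Kraus operators of a unital channel $\Psi$.

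For part~(2), I would apply these to $\Psi = (\Phi^k)^*\circ\Phi^k$, which is again unital because $\Phi^k$ is unital and trace-preserving. The key observation is that a Kraus family of $(\Phi^k)^*\circ\Phi^k$ is precisely $\{(K_{i_1}\cdots K_{i_k})^*(K_{j_1}\cdots K_{j_k})\}$, whose span is $S_{\Phi^k}$ by definition of the operator system. Since the commutant of a set equals the commutant of the $*$-algebra it generates, this yields $\mathcal{M}_{\Phi^k} = \mathrm{Fix}_{(\Phi^k)^*\circ\Phi^k} = [S_{\Phi^k}]'$, which is part~(2).

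For part~(1), the inclusion $\mathrm{alg}(\chi(\Phi))\subseteq\mathcal{M}_{\Phi^\infty}$ I would obtain from the Schwarz inequality for unital CP maps: if $\Phi(A)=\lambda A$ with $\abs{\lambda}=1$ then $\Phi^k(A)=\lambda^k A$, and comparing $\Tr\Phi^k(A^*A) = \Tr(A^*A)$ with $\Tr(\Phi^k(A)^*\Phi^k(A)) = \abs{\lambda^k}^2\Tr(A^*A)$ forces the positive Schwarz defect $\Phi^k(A^*A)-\Phi^k(A)^*\Phi^k(A)$ to have vanishing trace, hence to vanish (and likewise for $AA^*$). So every peripheral eigenoperator lies in each $\mathcal{M}_{\Phi^k}$, hence in $\mathcal{M}_{\Phi^\infty}=\bigcap_k\mathcal{M}_{\Phi^k}$; as this intersection of C$^*$-algebras is itself a C$^*$-algebra, it contains the generated algebra. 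For the reverse inclusion I would show $\Phi$ restricts to a $*$-automorphism of $\mathcal{M}_{\Phi^\infty}$: it is a $*$-homomorphism there (being inside $\mathcal{M}_\Phi$), maps $\mathcal{M}_{\Phi^\infty}$ into itself (invariance, already noted in Section~\ref{subsec-fixed-pt-mult.}), and is trace-preserving, so $\langle\Phi(X),\Phi(Y)\rangle = \Tr\Phi(X^*Y) = \Tr(X^*Y) = \langle X,Y\rangle$ on this algebra. Thus $\Phi$ is a Hilbert--Schmidt isometry there, hence bijective, and a trace-preserving $*$-automorphism is Hilbert--Schmidt unitary, so $\mathcal{M}_{\Phi^\infty}$ has a basis of eigenvectors of $\Phi$ with unimodular eigenvalues. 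These are peripheral eigenoperators, giving $\mathcal{M}_{\Phi^\infty}\subseteq\operatorname{span}\chi(\Phi)\subseteq\mathrm{alg}(\chi(\Phi))$.

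For part~(3), primitivity implies mixing, so by Theorem~\ref{theorem:mixing} the only peripheral eigenvalue is the simple eigenvalue $1$; since $\Phi$ is unital, $\iden$ is its unique fixed point up to scalars, whence $\chi(\Phi)=\mathbb{C}\iden$ and part~(1) gives $\mathcal{M}_{\Phi^\infty}=\mathrm{alg}(\mathbb{C}\iden)=\mathbb{C}\iden$. The properness of the chain is where I expect the genuine difficulty. Using part~(2), the equality $\mathcal{M}_{\Phi^n}=\mathcal{M}_{\Phi^{n+1}}$ reads $[S_{\Phi^n}]'=[S_{\Phi^{n+1}}]'$; taking commutants and invoking the finite-dimensional double-commutant theorem $[S_{\Phi^k}]''=S_{\Phi^k}$ forces $S_{\Phi^n}=S_{\Phi^{n+1}}$, and the recursion $S_{\Phi^{n+1}}=\operatorname{span}\{K_i^*XK_j : X\in S_{\Phi^n}\}$ from the proof of Theorem~\ref{stabilize} then freezes the whole operator-system chain, hence freezes $\mathcal{M}_{\Phi^m}$ at $\mathcal{M}_{\Phi^\infty}=\mathbb{C}\iden$ for all $m\geq n$. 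Contrapositively, as long as $\mathcal{M}_{\Phi^n}\neq\mathbb{C}\iden$ the containment $\mathcal{M}_{\Phi^n}\supsetneq\mathcal{M}_{\Phi^{n+1}}$ is strict. The main obstacle is exactly this step: passing from equality of commutants back to equality of the operator systems needs $[S_{\Phi^k}]''=S_{\Phi^k}$, i.e.\ that the $S_{\Phi^n}$ are $*$-algebras. I would therefore carry this subsection's standing algebra hypothesis explicitly into the strictness claim (or else argue strictness directly from the descending structure of the multiplicative-domain chain).
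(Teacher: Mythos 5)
Your parts (1) and (2) are correct. Part (2) is essentially the paper's own argument: the paper likewise invokes $\mathcal{M}_{\Phi^k}=\mathrm{Fix}_{(\Phi^k)^*\circ\Phi^k}=[S_{\Phi^k}]'$ for unital channels, and you merely make explicit the (correct) observation that the Kraus operators of $(\Phi^k)^*\circ\Phi^k$ span $S_{\Phi^k}$. For part (1) the paper simply cites \cite{rahaman2017}, whereas you give a genuinely self-contained proof: the forward inclusion via the vanishing trace of the Kadison--Schwarz defect (note that the step from ``the defect vanishes for $A^*A$ and $AA^*$'' to ``$A\in \mathcal{M}_{\Phi^k}$'' is Choi's characterization of the multiplicative domain, which deserves a citation), and the reverse inclusion by showing $\Phi$ restricts to a Hilbert--Schmidt unitary on $\mathcal{M}_{\Phi^\infty}$, so that this algebra is spanned by peripheral eigenoperators. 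Both directions are sound, and the first claim of part (3) matches the paper.

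The genuine gap is exactly where you flagged it: the strictness claim in part (3). Your route needs $[S_{\Phi^n}]''=S_{\Phi^n}$, i.e.\ that the operator systems are $*$-algebras, but the lemma carries no such hypothesis --- it is stated for an arbitrary primitive unital channel. Without it, double commutation only yields $\mathrm{alg}(S_{\Phi^n})=\mathrm{alg}(S_{\Phi^{n+1}})$, and the recursion $S_{\Phi^{n+1}}=\operatorname{span}\{K_i^* X K_j : X\in S_{\Phi^n}\}$ acts on the operator systems themselves, not on the algebras they generate, so you cannot conclude that the chain freezes. Importing the later subsection's standing algebra hypothesis into the lemma would prove a strictly weaker statement than the one asserted (even if, in fairness, the only place the paper uses strictness is Proposition~\ref{prop:bound for both}, where that hypothesis does hold).

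The paper closes this gap entirely at the level of multiplicative domains, with no reference to operator systems, which is the ``direct argument from the descending structure'' you gestured at but did not carry out. By \cite[Lemma 2.3]{rahaman2017}, $\mathcal{M}_{\Phi^{k}}=\{a\in \mathcal{M}_{\Phi} : \Phi(a)\in \mathcal{M}_{\Phi^{k-1}}\}$. Hence if $\mathcal{M}_{\Phi^k}=\mathcal{M}_{\Phi^{k-1}}$ for some $k\geq 2$, then any $x\in \mathcal{M}_{\Phi^k}$ satisfies $x\in \mathcal{M}_{\Phi}$ and $\Phi(x)\in \mathcal{M}_{\Phi^{k-1}}=\mathcal{M}_{\Phi^k}$, so $x\in \mathcal{M}_{\Phi^{k+1}}$; the chain therefore freezes from the first coincidence onward, forcing $\mathcal{M}_{\Phi^{k-1}}=\mathcal{M}_{\Phi^{\infty}}=\mathbb{C}\iden$. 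Equivalently, every containment is proper until the chain reaches $\mathbb{C}\iden$, which is the precise content of part (3). Replacing your double-commutant step by this composition-rule argument repairs the proof without any extra hypothesis.
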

\begin{proof}
    The proofs of the first statement above can be found in \cite{rahaman2017}. To prove the second assertion, recall from Section~\ref{subsec-fixed-pt-mult.} that for a unital channel $\Phi$,
\[\mathcal{M}_{\Phi}=\mathrm{Fix}_{(\Phi^*\circ\Phi)}=S_{\Phi}',\]
and this relation holds for every $\Phi^k$, $k\in\mathbb N$. 

Here we prove the last statement. If $\Phi$ is unital and primitive, then $\iden$ is its only peripheral eigenoperator, and hence $(1)$ shows that $\mathcal{M}_{\Phi^{\infty}}=\mathbb{C}\iden$. In fact, having $\mathcal{M}_{\Phi^{\infty}}=\mathbb{C}\iden$ can be an alternative characterization of primitivity for unital channels (see Corollary 3.5 in \cite{rahaman2017}). 

To prove the containment of the subalgebras is proper we analyze the behaviour of multiplicative domain under composition of two channels. From \cite[Lemma 2.3]{rahaman2017}, it holds that
\[\mathcal{M}_{\Phi^k}=\{a\in \mathcal{M}_{\Phi^{(k-1)}}| \Phi^{(k-1)}(a)\in \mathcal{M}_{\Phi}\}=\{a\in \mathcal{M}_{\Phi}| \Phi(a)\in \mathcal{M}_{\Phi^{(k-1)}}\}.\]
Hence, $\mathcal{M}_{\Phi^{(k+1)}}\subseteq \mathcal{M}_{\Phi^k}$, for all $k\in \mathbb{N}$ and if $x\in \mathcal{M}_{\Phi^k}$, then $\Phi(x)\in \mathcal{M}_{\Phi^{(k-1)}}$. 

Now suppose $\mathcal{M}_{\Phi^k}= \mathcal{M}_{\Phi^{(k-1)}}$, for some $k\geq 2$. From the above observation, we know that $x\in \mathcal{M}_{\Phi^k}\implies \Phi(x)\in \mathcal{M}_{\Phi^{(k-1)}}=\mathcal{M}_{\Phi^k}$.
Since $\mathcal{M}_{\Phi^k}\subseteq \mathcal{M}_{\Phi}$, it follows that $x\in \mathcal{M}_{\Phi^k} \implies x\in \mathcal{M}_{\Phi}$ and $\Phi(x)\in \mathcal{M}_{\Phi^k} \implies x \in \mathcal{M}_{\Phi^{(k+1)}}$.
So $\mathcal{M}_{\Phi^k}= \mathcal{M}_{\Phi^{(k+1)}}$. Since by primitivity we know $\mathcal{M}_{\Phi^n}=\mathbb{C}\iden$, for large $n$, it must be the case that $\mathcal{M}_{\Phi^{(k+1)}}\subset \mathcal{M}_{\Phi^k}$, unless the latter set is just the trivial algebra.
\end{proof}

Let us also note another lemma, which is obtained by combining Propositions~\ref{prop:C0E=0} and \ref{prop:C0Q0=0}
\begin{lemma}\label{lemma:C0C0E=0}
    Let $\Phi:\M{d}\to \M{d}$ be a channel such that $S_{\Phi}$ is an algebra. Then, 
    $$C^{(1)}_{0E}(\Phi)=0 \iff C^{(1)}_0(\Phi)=0\iff [S_\Phi]'=\mathbb{C}\iden .$$
\end{lemma}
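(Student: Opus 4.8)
The plan is to establish the two claimed equivalences as a single cycle of three implications, relying entirely on Propositions~\ref{prop:C0E=0} and \ref{prop:C0Q0=0} together with the double commutant theorem for finite-dimensional C$^*$-algebras. The cycle is
\[
C^{(1)}_{0E}(\Phi)=0 \;\implies\; C^{(1)}_0(\Phi)=0 \;\implies\; [S_\Phi]'=\mathbb{C}\iden \;\implies\; C^{(1)}_{0E}(\Phi)=0.
\]

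The first implication is immediate, since $C^{(1)}_0(\Phi)\leq C^{(1)}_{0E}(\Phi)$ holds for every channel. For the second implication I would invoke Proposition~\ref{prop:C0Q0=0}: because $S_\Phi$ is an algebra by hypothesis, the reverse implications stated there apply, and in particular $C^{(1)}_0(\Phi)>0 \iff [S_\Phi]'$ is non-trivial. Taking the contrapositive yields exactly $C^{(1)}_0(\Phi)=0 \iff [S_\Phi]'=\mathbb{C}\iden$, which already settles one of the two equivalences in the statement.

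The only step requiring genuine input is closing the cycle, i.e.\ showing $[S_\Phi]'=\mathbb{C}\iden \implies C^{(1)}_{0E}(\Phi)=0$. The key observation is that $S_\Phi$, being an operator system (hence $*$-closed and containing $\iden$) that is moreover an algebra, is a unital $*$-subalgebra of $\M{d}$, and thus a finite-dimensional C$^*$-algebra. The double commutant theorem then gives $[S_\Phi]''=S_\Phi$. Consequently, if $[S_\Phi]'=\mathbb{C}\iden$, then $S_\Phi=[S_\Phi]''=(\mathbb{C}\iden)'=\M{d}$, so that $S_\Phi^\perp=\{0\}$. By Proposition~\ref{prop:C0E=0}, this is precisely equivalent to $C^{(1)}_{0E}(\Phi)=0$, which closes the cycle and proves all the equivalences at once.

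The main (and essentially only) subtlety is the appeal to the double commutant theorem: one must ensure that $S_\Phi$ really is a von Neumann algebra so that $[S_\Phi]''=S_\Phi$. In finite dimensions this is automatic for any unital $*$-closed subalgebra of $\M{d}$, so the hypothesis that $S_\Phi$ be an algebra is exactly what is needed. Everything else is bookkeeping with the two cited propositions, and no new computation is required.
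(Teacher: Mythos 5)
Your proposal is correct and follows essentially the same route as the paper's own proof: the trivial inequality $C^{(1)}_0(\Phi)\leq C^{(1)}_{0E}(\Phi)$, the reverse implications of Proposition~\ref{prop:C0Q0=0} (valid because $S_\Phi$ is an algebra), and the double commutant theorem to get $S_\Phi=[S_\Phi]''=\M{d}$, hence $S_\Phi^\perp=\{0\}$ and $C^{(1)}_{0E}(\Phi)=0$ via Proposition~\ref{prop:C0E=0}. Organizing the argument as a single cycle of three implications rather than two separate equivalences is only a cosmetic difference.
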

\begin{proof}
    The second equivalence is contained in Proposition~\ref{prop:C0Q0=0}. The $(\implies)$ implication in the first equivalence is trivial to show. Thus, it suffices to prove that $[S_\Phi]'=\mathbb{C}\iden \implies C^{(1)}_{0E}(\Phi)=0$. So, assume that $[S_\Phi]'=\mathbb{C}\iden$ and note that since $S_{\Phi}$ is an algebra, the double commutant theorem shows that $S_\Phi=[S_\Phi]''=\M{d}$. The result then follows from Proposition~\ref{prop:C0E=0}.
\end{proof}

We can now state and prove our main result.

\begin{proposition}\label{prop:bound for both}
    Let $\Phi: \M{d}\rightarrow \M{d}$ be a primitive unital quantum channel such that the associated operator systems $S_{\Phi^n}$ are C$^*$-algebras for all $n\in \mathbb{N}$. Then, 
    \begin{equation}
        q(\Phi)\leq d-2 \quad \text{and} \quad c(\Phi)=c_E(\Phi)\leq 2(d-1).
    \end{equation}
\end{proposition}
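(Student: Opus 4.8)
The plan is to convert both scrambling times into purely algebraic stopping times for the increasing chain of operator systems $S_\Phi \subseteq S_{\Phi^2}\subseteq \cdots$ and then bound the length of that chain combinatorially. Since every $S_{\Phi^n}$ is a $C^*$-algebra, Lemma~\ref{lemma:chain suubalgebra}(2) gives $\mathcal{M}_{\Phi^n}=[S_{\Phi^n}]'$, and the double commutant theorem gives $[\mathcal{M}_{\Phi^n}]'=S_{\Phi^n}$. Proposition~\ref{prop:C0Q0=0} together with Lemma~\ref{lemma:C0C0E=0} then yields, for each $n$,
\[ C^{(1)}_{0E}(\Phi^n)=0 \iff C^{(1)}_0(\Phi^n)=0 \iff [S_{\Phi^n}]'=\mathbb{C}\iden \iff S_{\Phi^n}=\M{d}, \]
\[ Q^{(1)}_0(\Phi^n)=0 \iff [S_{\Phi^n}]' \text{ is abelian} \iff S_{\Phi^n} \text{ is multiplicity-free}, \]
where a $C^*$-subalgebra $\bigoplus_i \M{a_i}\otimes\iden_{b_i}\subseteq \M{d}$ is \emph{multiplicity-free} when all $b_i=1$ (its commutant $\bigoplus_i \iden_{a_i}\otimes\M{b_i}$ being abelian exactly then). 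In particular $c(\Phi)=c_E(\Phi)$, and $c(\Phi)$ (resp.\ $q(\Phi)$) is the first time the chain $S_{\Phi^n}$ reaches $\M{d}$ (resp.\ becomes multiplicity-free). Because $\Phi$ is primitive, Lemma~\ref{lemma:chain suubalgebra}(3) makes $\mathcal{M}_\Phi\supsetneq \mathcal{M}_{\Phi^2}\supsetneq \cdots \supsetneq \mathbb{C}\iden$ strictly decreasing, so the dual chain $S_\Phi\subsetneq S_{\Phi^2}\subsetneq \cdots \subsetneq \M{d}$ is strictly increasing and $S_\Phi\neq \mathbb{C}\iden$.

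For the classical bound I would show that any strictly increasing chain of unital $C^*$-subalgebras of $\M{d}$ has length at most $2(d-1)$, whence $c(\Phi)\le 2(d-1)$. The length estimate can be proved by induction on $d$ using the classification of maximal unital $C^*$-subalgebras of $\M{d}$: each is either block-diagonal $\M{a}\oplus \M{d-a}$ or of the form $\M{d/e}\otimes\iden_e$, whose subalgebra lattices have height $2(a-1)+2(d-a-1)+1=2d-3$ and at most $d-2$, respectively; adding the final step to $\M{d}$ keeps every chain within $2(d-1)$. Since $S_\Phi\supsetneq \mathbb{C}\iden$, the chain $\mathbb{C}\iden\subsetneq S_\Phi\subsetneq \cdots \subsetneq \M{d}$ has length $c(\Phi)\le 2(d-1)$, and $c=c_E$ then gives the stated classical bound.

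For the quantum bound I would track the excess-multiplicity functional $e(\mathcal{A}):=\sum_i (b_i-1)$ for $\mathcal{A}=\bigoplus_i \M{a_i}\otimes\iden_{b_i}$, so that $e(\mathcal{A})=0$ iff $\mathcal{A}$ is multiplicity-free and $e(\mathbb{C}\iden)=d-1$. The key combinatorial lemma to establish is that $e$ strictly decreases along proper inclusions while positive: if $\mathcal{A}\subsetneq \mathcal{B}$ and $\mathcal{A}$ is not multiplicity-free, then $e(\mathcal{B})\le e(\mathcal{A})-1$, which I would prove via the Galois-type correspondence between algebras intermediate to $\mathcal{A}\subseteq \M{d}$ and subalgebras of the commutant of $\mathcal{A}$. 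Since primitivity forces $e(S_\Phi)\le d-2$ (as $e=d-1$ occurs only for $\mathbb{C}\iden$), iterating produces a multiplicity-free member of the chain at index at most $d-1$, giving $q(\Phi)\le d-1$; the final improvement to $d-2$ comes from invoking primitivity once more to exclude the extremal initial configuration $e(S_\Phi)=d-2$, equivalently to show that $S_\Phi$ cannot contain a rank-one projection (such a configuration would leave a block of $\mathcal{M}_{\Phi^n}$ evolving by a $*$-isomorphism for all $n$, contradicting mixing).

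The main obstacle is exactly this last sharpening for $q(\Phi)$: the clean excess-multiplicity count delivers only $q(\Phi)\le d-1$, and extracting the extra unit requires the full strength of primitivity (with the standing assumption that every $S_{\Phi^n}$ is an algebra) to rule out the maximal-multiplicity first step $\mathcal{M}_\Phi\cong \mathbb{C}\oplus \M{d-1}$. Proving the strict monotonicity of $e$ rigorously, and verifying that the forbidden extremal block structures are genuinely incompatible with a primitive unital dynamics, is the technical heart of the argument; the classical bound $c(\Phi)=c_E(\Phi)\le 2(d-1)$, by contrast, follows directly once the lattice-height estimate is in hand.
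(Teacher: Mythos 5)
Your reduction of both scrambling times to the chain $S_\Phi\subsetneq S_{\Phi^2}\subsetneq\cdots\subsetneq\M{d}$, and your treatment of the classical bound, follow essentially the paper's route: the paper likewise identifies $c(\Phi)=c_E(\Phi)$ with the first time the commutant $[S_{\Phi^n}]'=\mathcal{M}_{\Phi^n}$ reaches $\mathbb{C}\iden$, and then invokes \cite[Lemma 5 and Theorem 3.6]{jaques-rahaman2017} for the fact that a strictly monotone chain of unital C$^*$-subalgebras of $\M{d}$ has length at most $2(d-1)$, which you propose to reprove by induction. Your induction sketch is workable, but your classification of maximal subalgebras is incorrect: $\M{d/e}\otimes\iden_e$ is never maximal for $e\geq 2$, since $\M{d/e}\otimes\iden_e\subsetneq\M{d/e}\otimes\mathbb{C}^e\subsetneq\M{d}$; in fact every proper unital $*$-subalgebra has a nontrivial commutant (double commutant theorem), hence commutes with a nontrivial projection and lies inside some $\M{a}\oplus\M{d-a}$, so the block algebras $\M{a}\oplus\M{d-a}$ are the only maximal ones. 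This slip does not damage the upper bound, since you cover that family too.

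The genuine gap is in the quantum bound. Your key lemma --- that $e(\mathcal{A})=\sum_i(b_i-1)$ strictly decreases along every proper unital inclusion $\mathcal{A}\subsetneq\mathcal{B}$ whenever $e(\mathcal{A})>0$ --- is false. Counterexample in $\M{4}$: take $\mathcal{A}=\{\mathrm{diag}(a,b)\oplus c\,\iden_2 : a,b,c\in\mathbb{C}\}\cong\mathbb{C}\oplus\mathbb{C}\oplus\mathbb{C}\iden_2$ and $\mathcal{B}=\{X\oplus c\,\iden_2 : X\in\M{2},\, c\in\mathbb{C}\}\cong\M{2}\oplus\mathbb{C}\iden_2$. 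Then $\mathcal{A}\subsetneq\mathcal{B}$ is a unital inclusion, $\mathcal{A}$ is not multiplicity-free, and yet $e(\mathcal{A})=e(\mathcal{B})=1$. In general, $e$ is only non-increasing along unital inclusions, and it stalls precisely when each block of $\mathcal{A}$ enters a single block of $\mathcal{B}$ with multiplicity one while the blocks of $\mathcal{B}$ that merge several blocks of $\mathcal{A}$ have multiplicity one in $\M{d}$ --- exactly the configuration above. So the iteration yielding $q(\Phi)\leq d-1$ never gets started; and the further sharpening to $d-2$ is both unproven in your sketch and based on a false equivalence ($e(S_\Phi)=d-2$ does not force $S_\Phi$ to contain a rank-one projection: $S_\Phi=\mathbb{C}\iden_2\oplus\mathbb{C}\iden_{d-2}$ has $e=d-2$ and contains none). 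The paper's proof avoids any such functional: having bounded the whole commutant chain by $2(d-1)$ steps, it argues that the abelian tail of that chain --- from the first abelian $\mathcal{M}_{\Phi^n}$ down to $\mathbb{C}\iden$, in the worst case starting from a maximal abelian subalgebra --- accounts for $d$ of those steps, so the commutant can stay non-abelian for at most $2(d-1)-d=d-2$ steps, i.e.\ $q(\Phi)\leq d-2$. To repair your argument you would either need to adopt that counting, or inject genuinely dynamical constraints on consecutive multiplicative domains of a primitive unital channel (mere lattice inclusion cannot exclude the counterexample configuration).
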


\begin{proof}
Note that for any channel $\Phi:\M{d}\to \M{d}$, we have an increasing chain of subspaces $S_\Phi \subseteq S_{\Phi^2}\subseteq \cdots \subseteq S_{\Phi^n}\subseteq \cdots$, which yields a decreasing chain of commutants
$[S_\Phi]' \supseteq [S_{\Phi^2}]'\supseteq \cdots \supseteq [S_{\Phi^n}]'\supseteq \ldots$. Note that the commutants $[S_{\Phi^n}]'$ are unital C$^*$-subalgebras of $\M{d}$ for all $n\in\mathbb{N}$ and by Lemma~\ref{lemma:chain suubalgebra}, we have
$[S_{\Phi^n}]'=\mathcal{M}_{\Phi^n}$. Since $\Phi$ is primitive, $\exists n\in\mathbb N$ such that $[S_{\Phi^n}]'=\mathbb{C}\iden$. So the above chain of C$^*$-algebras stabilzes at some $n$:
\begin{equation} 
[S_\Phi]' \supseteq [S_{\Phi^2}]'\supseteq \cdots \supseteq [S_{\Phi^n}]'=\mathbb{C}\iden. \label{eq:chain}
\end{equation}
Since $S_{\Phi^n}$ are algebras for all $n\in \mathbb N$, we can use the necessary and sufficient conditions given in Proposition \ref{prop:C0Q0=0} and Lemma~\ref{lemma:C0C0E=0} for the zero-error capacities to vanish. Firstly, note that Lemma~\ref{lemma:C0C0E=0} immediately tells us that $c(\Phi)=c_E(\Phi)$. Also note that if $[S_\Phi]'=\M{d}$, then by the double-commutant theorem, $S_\Phi=[S_\Phi]''=\mathbb{C}\iden$. It follows that the Choi rank of $\Phi$ is 1, which means that the channel is just a unitary conjugation. Such a channel can not be primitive, contradicting our hypothesis. Thus, we can assume that $[S_\Phi]'$ is a proper subalgebra of $\M{d}$. 

Now, if $[S_\Phi]'$ is trivial to begin with, then Lemma~\ref{lemma:C0C0E=0} shows that $C_0^{(1)}(\Phi)=0$. In this case, $c(\Phi)=1\leq 2(d-1)$. Hence, we can assume that $[S_\Phi]'$ is a proper non-trivial subalgebra of $\M{d}$. From Lemma \ref{lemma:chain suubalgebra}, we know that each containment in Eq.~\eqref{eq:chain} is proper. It is known that such a chain of decreasing unital C$^*$-subalgebras can have length at most $2(d-1)$ \cite[Lemma 5 and Theorem 3.6]{jaques-rahaman2017}. Thus $c(\Phi)\leq 2(d-1)$.

For the other bound, note that if $[S_\Phi]'$ is abelian to begin with, then $Q_0^{(1)}(\Phi)=0$ and $q(\Phi)=1\leq (d-2)$ (for $d\geq 2$), see Proposition~\ref{prop:C0Q0=0}. So, we can assume that $[S_\Phi]'$ is non-abelian. Now, in order to descend down the chain of commutants in Eq.~\eqref{eq:chain} all the way to $\mathbb{C}\iden$, in the worst case, there are $d$-many steps required from the full diagonal algebra to stabilize to the trivial algebra. Hence, it requires at most $2(d-1)-d=(d-2)$ steps to go from a non-abelian to an abelian algebra, proving that $q(\Phi)\leq d-2$.
\end{proof}

We now provide an example of a channel for which the operator system fulfils the requirements of the above proposition. 

\begin{example}
We construct a primitive unital channel $\Phi:\M{3}\to \M{3}$ for which $S_\Phi,S_{\Phi^2},$ and $S_{\Phi^3}$ are all algebras, $q(\Phi)=1$, and $c(\Phi)=c_E(\Phi)=3$. This is the example given after Theorem 3.9 in \cite{jaques-rahaman2017}, giving the maximum length of the chain in Eq.~\eqref{eq:chain} when the starting algebra is a {\em{maximal abelian subalgebra}} (MASA). 

The channel is defined by its Kraus operators $\{K_i\}_{i=1}^3 \subseteq \M{3}$, which are given below:

\[K_1=\frac{1}{\sqrt{2}}\begin{bmatrix}
    0 & 0 & 1\\
    0 & 0 & 1\\
    0 & 0 & 0
\end{bmatrix}, K_2= \frac{1}{\sqrt{2}}\begin{bmatrix}
    1 & 0 & 0\\
    -1 & 0 & 0\\
    0 & 0 & 0
\end{bmatrix}, K_3=\begin{bmatrix}
    0 & 0 & 0\\
    0 & 0 & 0\\
    0 & 1 & 0
\end{bmatrix}.\]
One can check that $S_\Phi=\Bigg\{\begin{bmatrix}
    a & 0 & 0\\
    0 & b & 0\\
    0 & 0 & c
\end{bmatrix}| \ a, b, c\in \mathbb{C}\Bigg\}.$ This is the full diagonal algebra. It is not hard to see that $S_{\Phi^2}= \Bigg\{\begin{bmatrix}
    a & 0 & b\\
    0 & c & 0\\
    d & 0 & e
\end{bmatrix}| \ a, b, c, d, e\in \mathbb{C}\Bigg\}\simeq \M{2}\oplus\M{1},$
where $\M{1}$ is just the scalar algebra. Hence $S_{\Phi^2}$ is also an algebra. And finally,  $S_{\Phi^3}=\M{3}.$ Thus, we obtain the chain $$S_{\Phi}\subset S_{\Phi^2}\subset S_{\Phi^3}=\M{3}.$$
Since $S_{\Phi^3}=\M{3}$, it follows that $[S_{\Phi^3}]'= \mathcal{M}_{\Phi^3} = \mathbb{C}\iden$ and hence $\Phi$ is primitive (see \cite[Corollary 3.5]{rahaman2017}). Furthermore, since $[S_{\Phi^2}]' \simeq \mathbb{C}\iden_2 \oplus \M{1}$ is not trivial,
it is clear from Lemma~\ref{lemma:C0C0E=0} that $c(\Phi)=c_E(\Phi)=3$. Finally, since $S_\Phi$ is the full diagonal algebra, $[S_\Phi]'=S_{\Phi}$ is also abelian and Proposition \ref{prop:C0Q0=0} shows that $q(\Phi)=1$. 
\end{example}
The example given above is a special case ($d=3$) for a more general construction given in \cite[Theorem 3.9]{jaques-rahaman2017}, where the decreasing chain of multiplicative domains have been created to provide the multiplicative index of the channel to be $d$. We think in the general case, the operator systems are also algebras and one can get  $c(\Phi)=d$, but we leave it as a future avenue to explore.

\subsection{Diagonal unitary covariant channels}\label{DUC-class}

In this section, we study the scrambling times and Wielandt indices of a special class of quantum channels that are covariant under the action of the diagonal unitary group. These channels were introduced and extensively studied in \cite{Singh2021diagonalunitary}. Here, we only recall some basic results. Note that we call $A\in \M{d}$ \emph{column stochastic} if it is entrywise non-negative and for all $i$, $\sum_j A_{ji}=1$. Furthermore, $\mathcal{DU}_d$ denotes the set of all diagonal unitary matrices in $\M{d}$.

\begin{theorem}\label{theorem:AB}
    For a channel $\Phi:\M{d}\to \M{d}$, the following are equivalent:
    \begin{itemize}
        \item $\forall X\in \M{d}, \forall U\in\mathcal{DU}_d : \,\, \Phi(UXU^*) = U \Phi(X) U^*$.
        \item $\exists A,B\in \M{d}$ with $A$ column stochastic and $B$ positive semi-definite such that 
        \begin{equation}
            \forall X\in \M{d}: \quad \Phi(X) = \sum_{i,j} A_{ij}X_{jj} \ketbra{i} + \sum_{i\neq j} B_{ij} X_{ij} \ketbra{i}{j} =: \Phi_{A,B}(X).
        \end{equation}
    \end{itemize}
    A channel $\Phi = \Phi_{A,B}$ as above is called \emph{conjugate diagonal unitary covariant} (CDUC).
\end{theorem}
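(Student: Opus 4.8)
The plan is to prove the two implications separately, with the bulk of the work lying in the forward direction (covariance $\Rightarrow$ the form $\Phi_{A,B}$). For the reverse direction, since $\Phi$ is assumed to be a channel, it suffices to verify covariance by a direct computation: writing $U = \operatorname{diag}(u_1,\dots,u_d) \in \mathcal{DU}_d$, one has $(UXU^*)_{ij} = u_i \bar u_j X_{ij}$ and in particular $(UXU^*)_{jj} = \abs{u_j}^2 X_{jj} = X_{jj}$, so both terms defining $\Phi_{A,B}(UXU^*)$ reproduce $U\Phi_{A,B}(X)U^*$ term by term: the diagonal term is left unchanged, and the off-diagonal term picks up exactly the phase $u_i\bar u_j$ that conjugation by $U$ supplies.

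For the forward direction, the key observation is that conjugation by $\mathcal{DU}_d$ acts diagonally on the matrix-unit basis $E_{ij} := \ketbra{i}{j}$ via $U E_{ij} U^* = u_i \bar u_j\, E_{ij}$, so each $E_{ij}$ is a joint eigenvector with ``character'' $U \mapsto u_i \bar u_j$. First I would expand $\Phi(E_{ij}) = \sum_{k,l} c^{ij}_{kl} E_{kl}$ and impose $\Phi(U E_{ij}U^*) = U\Phi(E_{ij})U^*$; comparing coefficients yields $c^{ij}_{kl}(u_i\bar u_j - u_k\bar u_l) = 0$ for every $U \in \mathcal{DU}_d$. Now $u_i \bar u_j = u_k \bar u_l$ holds identically in the $u$'s precisely when $(k,l)=(i,j)$, or when $i=j$ and $k=l$, which forces $c^{ij}_{kl}=0$ in all other cases. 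Consequently $\Phi(E_{ij}) = B_{ij} E_{ij}$ for $i \ne j$ (defining $B_{ij} := c^{ij}_{ij}$) and $\Phi(E_{jj}) = \sum_i A_{ij} E_{ii}$ (defining $A_{ij} := c^{jj}_{ii}$); extending by linearity gives exactly $\Phi = \Phi_{A,B}$.

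It then remains to translate the channel properties into the stated constraints on $A$ and $B$. Trace preservation applied to $E_{jj}$ gives $\sum_i A_{ij} = \Tr \Phi(E_{jj}) = \Tr E_{jj} = 1$, i.e.\ every column of $A$ sums to one. For the positivity conditions I would compute the Choi matrix $C_\Phi = \sum_{k,l}\Phi(E_{kl})\otimes E_{kl} = \sum_{i,k} A_{ik}\,\ketbra{ik}{ik} + \sum_{k\ne l} B_{kl}\,\ketbra{kk}{ll}$ and observe that it block-decomposes along $\C{d}\otimes\C{d} = \operatorname{span}\{\ket{kk}\} \oplus \operatorname{span}\{\ket{ab}: a\ne b\}$: on the second block $C_\Phi$ is diagonal with entries $A_{ik}$ ($i\ne k$), while on the first block it is the $d\times d$ matrix with diagonal entries $A_{kk}$ and off-diagonal entries $B_{kl}$. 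Adopting the harmless convention $B_{kk} := A_{kk}$ (legitimate, since the diagonal of $B$ never enters $\Phi_{A,B}$), this first block is exactly $B$, so complete positivity ($C_\Phi \ge 0$) is equivalent to $A_{ik}\ge 0$ for all $i\ne k$ together with $B \ge 0$ (the latter also forcing $A_{kk}=B_{kk}\ge 0$). Combined with the column-sum condition, this says precisely that $A$ is column stochastic and $B$ is positive semidefinite.

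I expect the main subtlety to be bookkeeping rather than conceptual: correctly reading off the two independent positivity conditions from the block structure of the Choi matrix, and flagging the $B_{kk}:=A_{kk}$ convention that reconciles ``$B$ positive semidefinite'' with the fact that only the off-diagonal entries of $B$ appear in $\Phi_{A,B}$. The character-matching step is the heart of the argument, but it is clean once the identity $U E_{ij} U^* = u_i\bar u_j E_{ij}$ is in hand; everything else follows directly from trace preservation and the Choi criterion.
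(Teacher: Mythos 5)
Your proof is correct. Note that the paper itself gives no proof of this theorem: it is recalled verbatim from the cited reference \cite{Singh2021diagonalunitary}, where these channels were introduced. Your argument — matching characters $u_i\bar{u}_j$ of the diagonal-unitary action on the matrix-unit basis to pin down the form $\Phi_{A,B}$, then reading off column-stochasticity of $A$ from trace preservation and positivity of the two Choi blocks (the off-diagonal block giving $A_{ik}\geq 0$, the block on $\operatorname{span}\{\ket{kk}\}$ giving $B\geq 0$ under the convention $B_{kk}:=A_{kk}$) — is precisely the standard route by which this characterization is established in that reference, and all steps, including the identification of when $u_i\bar{u}_j \equiv u_k\bar{u}_l$, are sound.
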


\begin{theorem}
    For a channel $\Phi:\M{d}\to \M{d}$, the following are equivalent:
    \begin{itemize}
        \item $\forall X\in \M{d}, \forall U\in\mathcal{DU}_d : \,\, \Phi(UXU^*) = U^* \Phi(X) U$.
        \item $\exists A,C\in \M{d}$ with $A$ column stochastic, $C=C^{*}$, and $A_{ij}A_{ji}\geq \abs{C_{ij}}^2 \,\, \forall i,j$, such that 
        \begin{equation}
            \forall X\in \M{d}: \quad \Phi(X) = \sum_{i,j} A_{ij}X_{jj} \ketbra{i} + \sum_{i\neq j} C_{ij} X_{ji} \ketbra{i}{j} =: \Phi_{A,C}(X).
        \end{equation}
    \end{itemize}
    A channel $\Phi = \Phi_{A,C}$ as above is called \emph{diagonal unitary covariant} (DUC).
\end{theorem}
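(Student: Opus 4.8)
The plan is to prove both implications, the substantive content being the ``covariance $\Rightarrow$ form'' direction. For the easy direction, I would assume $\Phi = \Phi_{A,C}$ and verify the covariance identity by direct substitution. Writing a diagonal unitary as $U = \operatorname{diag}(u_1,\ldots,u_d)$ with $\abs{u_m}=1$, one has $(UXU^*)_{jj} = X_{jj}$ and $(UXU^*)_{ji} = u_j\overline{u_i}\,X_{ji}$, while $U^*\ketbra{i}{j}U = \overline{u_i}u_j\,\ketbra{i}{j}$. Plugging these into the two sides, the diagonal part is phase-invariant and the off-diagonal phases $u_j\overline{u_i}$ match, giving $\Phi(UXU^*) = U^*\Phi(X)U$.

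For the converse I would exploit the weight-space (character) decomposition of $\M{d}$ under the conjugation action $X \mapsto UXU^*$ of $\mathcal{DU}_d$. Each matrix unit $\ketbra{k}{l}$ is an eigenvector carrying the character $u \mapsto u_k\overline{u_l}$, indexed by $e_k - e_l \in \mathbb{Z}^d$. On the output side the relevant action is $Y \mapsto U^*YU$, under which $\ketbra{i}{j}$ carries the character $u \mapsto \overline{u_i}u_j$, indexed by $e_j - e_i$. The covariance relation forces $\Phi$ to intertwine these two torus actions, so $\Phi(\ketbra{k}{l})$ can only have nonzero components along units $\ketbra{i}{j}$ whose output character agrees with the input character of $\ketbra{k}{l}$, that is $e_k - e_l = e_j - e_i$, equivalently $\{i,k\} = \{j,l\}$ as multisets. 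Concretely, the diagonal units ($k=l$, trivial character) map into the diagonal, defining $A_{ik} := [\Phi(\ketbra{k})]_{ii}$; and for $k\neq l$ the only admissible target is the transpose $\ketbra{l}{k}$, defining $C_{lk}$ via $\Phi(\ketbra{k}{l}) = C_{lk}\,\ketbra{l}{k}$. Extending by linearity and relabeling indices yields exactly $\Phi = \Phi_{A,C}$.

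It then remains to read off the constraints from trace preservation and complete positivity. Since the off-diagonal units are traceless, $\Tr\Phi(X) = \sum_j(\sum_i A_{ij})X_{jj}$, so trace preservation gives $\sum_i A_{ij}=1$ for all $j$, i.e.\ $A$ is column stochastic. For complete positivity I would compute the Choi matrix $J(\Phi) = \sum_{k,l}\ketbra{k}{l}\otimes\Phi(\ketbra{k}{l})$ and observe that the covariant structure makes it block diagonal: each vector $\ket{k,k}$ gives a singleton block $A_{kk}$, while for each pair $k\neq l$ the vectors $\ket{k,l}, \ket{l,k}$ span a $2\times 2$ block $\left(\begin{smallmatrix} A_{lk} & C_{lk}\\ C_{kl} & A_{kl}\end{smallmatrix}\right)$. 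Requiring $J(\Phi)\geq 0$ block by block yields $A$ entrywise nonnegative, Hermiticity $C_{kl}=\overline{C_{lk}}$ (so $C=C^*$), and the determinant condition $A_{kl}A_{lk}\geq\abs{C_{kl}}^2$. Conversely, these conditions make every block positive semidefinite, so $\Phi_{A,C}$ is completely positive, closing the equivalence.

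The main obstacle is the bookkeeping in the converse direction: one must argue carefully that matching characters of a \emph{torus} action (quantifying over all phases $u$ simultaneously) forces the multiset condition $\{i,k\}=\{j,l\}$ rather than a mere pointwise phase coincidence, and then correctly identify the $2\times 2$ block pattern of the Choi matrix. The crucial subtlety is that here each off-diagonal unit $\ketbra{k}{l}$ pairs with its \emph{transpose} $\ketbra{l}{k}$ — in contrast to the preceding CDUC theorem, where a unit pairs with itself and positivity of the off-diagonal part $B$ is read off directly. Once this block structure is in hand, the positivity constraints reduce to elementary $2\times 2$ positive-semidefiniteness conditions.
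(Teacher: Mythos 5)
Your proof is correct, but note that there is nothing in the paper to compare it against: the paper states this theorem without proof, recalling it from \cite{Singh2021diagonalunitary} (``Here, we only recall some basic results''), so your argument supplies a proof the paper deliberately omits. Your route --- (i) the trivial verification for $\Phi_{A,C}$, (ii) character matching for the two torus actions, forcing $\Phi(\ketbra{k}{l}) = C_{lk}\ketbra{l}{k}$ for $k \neq l$ and diagonal-to-diagonal action for $k=l$, and (iii) extracting column stochasticity from trace preservation and the constraints $C=C^*$, $A_{ij}A_{ji}\geq \abs{C_{ij}}^2$ from positivity of the Choi matrix --- is the standard one (it is the ``infinitesimal'' version of the twirling/averaging argument used in the cited reference, both being Fourier analysis on $\mathcal{DU}_d$). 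The computations check out: distinct characters $u \mapsto u_ku_i$ and $u\mapsto u_ju_l$ of the torus coincide iff $\{i,k\}=\{j,l\}$ as multisets, and the Choi matrix of the resulting map is indeed block diagonal with singleton blocks $A_{kk}$ on $\ket{k,k}$ and blocks $\left(\begin{smallmatrix} A_{lk} & C_{lk}\\ C_{kl} & A_{kl}\end{smallmatrix}\right)$ on $\operatorname{span}\{\ket{k,l},\ket{l,k}\}$, whose positive semidefiniteness is exactly Hermiticity of $C$, entrywise nonnegativity of $A$, and the determinant condition. You also correctly isolate the one structural difference from the CDUC case: there the off-diagonal unit $\ketbra{k}{l}$ maps to a multiple of itself and the Choi blocks assemble into the full matrix $B\geq 0$, whereas here each unit pairs with its transpose and positivity decouples into $2\times 2$ conditions.

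One pedantic gap worth closing: your construction never defines the diagonal entries $C_{ii}$ (they do not enter $\Phi_{A,C}$), yet the theorem's constraint $A_{ij}A_{ji}\geq\abs{C_{ij}}^2$ is quantified over all $i,j$, including $i=j$, and $C=C^*$ requires $C_{ii}\in\mathbb{R}$. Simply set $C_{ii}:=0$ (legal since $A_{ii}\geq 0$); this completes the existential statement in the second bullet.
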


For the class of (C)DUC channels, we will show that the properties of strict positivity, scrambling, mixing, and primitivity are all equivalent to the corresponding properties of the classical stochastic matrix $A$. Let us first introduce the definitions of these properties for a stochastic matrix.

\begin{definition}
    A column stochastic matrix $A\in \M{d}$ is said to be
    \begin{itemize}
        \item \emph{strictly positive} if $A_{ij}>0 \,\, \forall i,j$.
        \item \emph{scrambling} if $\forall i,j, \,\, \exists k$ such that $A_{ki}A_{kj}>0$.
        \item \emph{mixing} if $\lambda=1$ is a simple eigenvalue of $A$ and $A$ has no other peripheral eigenvalues.
        \item \emph{primitive} if it is mixing and its unique invariant vector has full support.
    \end{itemize}
\end{definition}

\begin{remark}
    For a stochastic matrix, the spectral properties of mixing/primitivity can be verified by analyzing the connectivity of the directed graph associated with the matrix \cite[Section 2.3]{Singh2022ergodic}. 
\end{remark}

Note that if one uses classical channels of the form $\Phi_A := \Phi_{A,\textrm{diag} A}$, the quantum definitions of strict positivity, scrambling, mixing, and primitivity that were introduced in the previous sections all reduce to the classical definitions introduced above. Let us also observe that in the classical case, Theorem~\ref{theorem:scr-mix} and Corollary~\ref{corollary:primitive} reduce to the following results.

\begin{theorem}\label{theorem:Amixing}
    For a column stochastic $A\in \M{d}$, the following are equivalent
    \begin{itemize}
        \item $A$ is mixing.
        \item $\exists k\in \mathbb{N}$ such that $A^k$ is scrambling.
    \end{itemize}
\end{theorem}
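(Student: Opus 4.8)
The plan is to deduce this classical statement directly from the quantum characterization in Theorem~\ref{theorem:scr-mix}, specialized to the classical channel $\Phi_A = \Phi_{A,\textrm{diag}\,A}$. The first ingredient I would establish is the semigroup identity $\Phi_A^k = \Phi_{A^k}$ for all $k \in \mathbb{N}$. This follows from a one-line computation: since $\Phi_A(X) = \sum_{i,j} A_{ij} X_{jj}\ketbra{i}$ is diagonal with $(\Phi_A(X))_{kk} = \sum_j A_{kj} X_{jj}$, composing twice gives $\Phi_A^2(X) = \sum_{i,j}(A^2)_{ij} X_{jj}\ketbra{i} = \Phi_{A^2}(X)$, and the general case follows by induction.

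Next I would record the two dictionary entries connecting the classical properties of a column stochastic matrix $B$ with the channel properties of $\Phi_B$. For mixing, the spectral relation $\operatorname{spec}\Phi_B = \operatorname{spec} B \cup \{0\}$ shows that the peripheral spectra of $\Phi_B$ and $B$ coincide (the extra $0$ is never peripheral), so by Theorem~\ref{theorem:mixing}, $\Phi_B$ is mixing if and only if $B$ is mixing. For scrambling, the key observation is that $\Phi_B$ always outputs a diagonal operator whose support is determined solely by the diagonal of the input; hence, for any input state $\rho$, $\supp \Phi_B(\rho) = \bigcup_{j:\rho_{jj}>0} \{k : B_{kj} > 0\}$. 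Consequently, an optimal zero-error classical code for $\Phi_B$ may be taken on the computational basis, and two basis states $\ketbra{i}$, $\ketbra{j}$ produce orthogonal outputs precisely when no $k$ satisfies $B_{ki}B_{kj}>0$. Thus $C_0^{(1)}(\Phi_B) = 0$ if and only if for every $i,j$ there exists $k$ with $B_{ki}B_{kj}>0$, i.e.\ if and only if $B$ is scrambling. These two reductions are exactly the content alluded to in the remark preceding the theorem.

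With these translations in hand, the proof concludes by applying Theorem~\ref{theorem:scr-mix} to $\Phi = \Phi_A$. That theorem gives that $\{\Phi_A^n\}_{n\in\mathbb{N}}$ is eventually c-scrambling if and only if it is mixing. Using $\Phi_A^k = \Phi_{A^k}$ together with the scrambling dictionary, the left-hand condition reads ``$\exists k$ such that $A^k$ is scrambling'', while the mixing dictionary turns the right-hand condition into ``$A$ is mixing''. This yields the desired equivalence. The only step requiring any care is the scrambling reduction --- specifically, the justification that optimal zero-error codes for a classical channel can be restricted to computational basis states --- but this is routine given the diagonal form of the outputs; everything else is a direct transcription of the quantum results.
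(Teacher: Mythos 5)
Your proof is correct and takes essentially the same route as the paper: the paper obtains Theorem~\ref{theorem:Amixing} precisely by observing that for classical channels $\Phi_A = \Phi_{A,\mathrm{diag}\,A}$ the quantum notions of mixing and c-scrambling reduce to their classical counterparts, so that Theorem~\ref{theorem:scr-mix} specializes to the claimed equivalence. The identity $\Phi_A^k = \Phi_{A^k}$ and the two dictionary entries you verify are exactly the (unwritten) details behind the paper's remark preceding the theorem, so your write-up is a faithful, slightly more explicit version of the intended argument.
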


\begin{theorem}\label{theorem:Aprimitive}
    For a column stochastic $A\in \M{d}$, the following are equivalent
    \begin{itemize}
        \item $A$ is primitive.
        \item $\exists k\in \mathbb{N}$ such that $A^k$ is strictly positive.
    \end{itemize}
\end{theorem}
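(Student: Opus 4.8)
The plan is to deduce this classical statement from the quantum Corollary~\ref{corollary:primitive} by passing through the classical channel $\Phi_A := \Phi_{A,\mathrm{diag}\,A}$ and verifying that the quantum notions of primitivity and strict positivity for $\Phi_A$ coincide with the classical notions for $A$. First I would record the elementary composition rule $\Phi_A^k = \Phi_{A^k}$: since $\Phi_A(X)$ is always diagonal with $(\Phi_A(X))_{ii} = \sum_j A_{ij} X_{jj}$, iterating the channel simply multiplies the stochastic matrices, giving $(\Phi_A^k(X))_{ii} = \sum_j (A^k)_{ij} X_{jj}$. With this in hand I would translate strict positivity. For any state $\rho\in\St{d}$, the output $\Phi_{A^k}(\rho) = \sum_i \big(\sum_j (A^k)_{ij}\rho_{jj}\big)\ketbra{i}$ is diagonal, so it has full rank exactly when every diagonal entry is positive. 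Testing on the basis states $\rho=\ketbra{j}$ forces $(A^k)_{ij}>0$ for all $i,j$, i.e.\ $A^k$ is classically strictly positive; conversely, if $A^k$ is strictly positive then $\sum_j (A^k)_{ij}\rho_{jj} \geq \big(\min_{i,j}(A^k)_{ij}\big)\Tr\rho > 0$ for every state, so $\Phi_{A^k}$ is strictly positive as a quantum channel. Hence $\Phi_A^k$ is (quantum) strictly positive iff $A^k$ is (classically) strictly positive.

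Next I would translate primitivity. Because the off-diagonal matrix units lie in $\ker\Phi_A$ and $\Phi_A$ restricts to $A$ on the diagonal, one has $\operatorname{spec}\Phi_A = \operatorname{spec} A \cup \{0\}$, and the eigenvalue-$1$ eigenspace of $\Phi_A$ is carried entirely by diagonal operators, hence is isomorphic to the eigenvalue-$1$ eigenspace of $A$. Thus the peripheral spectrum of $\Phi_A$ coincides with that of $A$, so $\Phi_A$ is mixing (Theorem~\ref{theorem:mixing}) iff $A$ is mixing. Moreover the unique invariant state of a mixing $\Phi_A$ is $\mathrm{diag}(v)$ for the Perron vector $v$ of $A$, and this has full rank iff $v$ has full support. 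Combining, $\Phi_A$ is (quantum) primitive iff $A$ is (classically) primitive.

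Finally I would chain these facts through Corollary~\ref{corollary:primitive}. If $A$ is primitive then $\Phi_A$ is primitive, so by the equivalence $(1)\!\iff\!(2)$ there is an $N$ with $\Phi_A^n = \Phi_{A^n}$ strictly positive for all $n\geq N$; translating back, $A^N$ is strictly positive, giving the forward implication. Conversely, if $A^k$ is strictly positive then $\Phi_A^k = \Phi_{A^k}$ is strictly positive, and I would bootstrap this to all larger powers: a zero row of $A$ would force a zero row in $A^k$, so $A$ has no zero row and therefore $A^n$ is strictly positive for every $n\geq k$; equivalently $\Phi_A^n$ is strictly positive for all $n\geq k$, which is precisely condition $(2)$ of Corollary~\ref{corollary:primitive}, whence $\Phi_A$, and therefore $A$, is primitive. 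The only mildly delicate step is this last bootstrapping from ``strictly positive for one power'' to ``strictly positive for all larger powers,'' which is needed to match the ``for all $n\geq N$'' phrasing of the quantum corollary; it is handled by the no-zero-row observation, and it is also exactly the place where a purely classical Perron--Frobenius argument would otherwise be invoked. I expect no serious obstacle beyond carefully checking the two translation claims, since the developed machinery makes the reduction essentially bookkeeping.
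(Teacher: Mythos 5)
Your proposal is correct and follows essentially the same route as the paper, which presents this theorem precisely as the specialization of Corollary~\ref{corollary:primitive} to classical channels $\Phi_A := \Phi_{A,\mathrm{diag}A}$, relying on the (unwritten) observation that the quantum notions of primitivity and strict positivity reduce to the classical ones. Your write-up merely fills in the details the paper leaves implicit --- the composition rule $\Phi_A^k=\Phi_{A^k}$, the two translation claims, and the no-zero-row bootstrapping needed to match the ``for all $n\geq N$'' phrasing of the corollary --- all of which check out.
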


We are now ready to prove some of our main results in this section.

\begin{theorem}\label{thm:strict-pos-DUC}
    The following equivalences hold for DUC and CDUC channels.
\begin{itemize}
    \item A \emph{CDUC} channel $\Phi_{A,B}:\M{d}\to \M{d}$ is strictly positive $\iff A$ is strictly positive.
    \item A \emph{DUC} channel $\Phi_{A,C}:\M{d}\to \M{d}$ is strictly positive $\iff A$ is strictly positive.
\end{itemize}
\end{theorem}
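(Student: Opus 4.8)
The plan is to prove each equivalence by its two implications; the forward implication ($\Phi$ strictly positive $\Rightarrow A$ strictly positive) is easy and identical for both families, whereas the reverse implication is the substantive one and is where the positivity constraints on $B$ (resp.\ $C$) enter. I would argue the forward implication by contraposition: if $A_{i_0j_0}=0$ for some indices, I feed in the diagonal input state $\ketbra{j_0}$. Since this input is diagonal, the off-diagonal sum in $\Phi_{A,B}$ or $\Phi_{A,C}$ contributes nothing, and the output is the diagonal matrix $\sum_i A_{ij_0}\ketbra{i}$, whose $(i_0,i_0)$ entry is $A_{i_0j_0}=0$. Hence the output is not of full rank and $\Phi$ is not strictly positive.

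For the reverse implication, assume $A$ is strictly positive. Since the output depends linearly on the input and a convex combination $\sum_m p_m\Phi(\ketbra{\psi_m})$ is positive definite whenever at least one summand is (with positive weight), it suffices to show $\Phi(\ketbra{\psi})$ is positive definite for every unit vector $\ket{\psi}$. Fix $\ket{\psi}$ and let $S=\{i:\psi_i\neq 0\}$ be its support. In both families every off-diagonal entry of $\Phi(\ketbra{\psi})$ carries a factor $\psi_i$ (or $\bar\psi_i$) in its row and a factor in its column, so it vanishes whenever either index lies outside $S$; the output is therefore block diagonal with respect to $\mathbb{C}^d=\mathbb{C}^{S}\oplus\mathbb{C}^{S^c}$. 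On $\mathbb{C}^{S^c}$ the block is diagonal with entries $\sum_{k\in S}A_{ik}|\psi_k|^2>0$ (using strict positivity of $A$ and $S\neq\emptyset$), hence positive definite, and it remains to treat the $S$-block $M_S$.

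The $S$-block is where the two families diverge and where I expect the main difficulty. For the CDUC channel I would write $M_S=D_\psi\,G_S\,D_\psi^{*}+\Delta_S$, where $D_\psi=\mathrm{diag}(\psi_i)_{i\in S}$, $G$ is the positive semi-definite matrix encoding complete positivity of $\Phi_{A,B}$ (with $G_{ii}=A_{ii}$ and $G_{ij}=B_{ij}$ for $i\neq j$), and $\Delta_S=\mathrm{diag}\big(\sum_{k\in S,\,k\neq i}A_{ik}|\psi_k|^2\big)_{i\in S}$. The congruence $D_\psi G_S D_\psi^{*}$ is positive semi-definite, while $\Delta_S$ is positive definite as soon as $|S|\geq 2$ (each of its diagonal entries then receives a strictly positive contribution), giving $M_S\succ 0$; the residual case $|S|=1$ is immediate because the output is then a diagonal matrix with strictly positive entries. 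The obstacle is that for the DUC channel the matrix $C$ is only Hermitian, not positive semi-definite, so the analogous Schur-product splitting collapses. To get around this I would instead decompose $\Phi_{A,C}$ through its Choi matrix into elementary completely positive maps $\Phi^{(ij)}$, one per pair $\{i,j\}$, whose $2\times2$ Choi blocks $\left(\begin{smallmatrix}A_{ij}&C_{ij}\\ \overline{C_{ij}}&A_{ji}\end{smallmatrix}\right)$ are positive semi-definite precisely by the channel constraint $A_{ij}A_{ji}\geq|C_{ij}|^2$. This exhibits $M_S=\sum_{i\in S}A_{ii}|\psi_i|^2\ketbra{i}+\sum_{\{i,j\}\subseteq S}\Phi^{(ij)}(\ketbra{\psi})$ as a positive-definite diagonal matrix (here $A_{ii}>0$ and $\psi_i\neq 0$ for $i\in S$) plus a sum of positive semi-definite blocks, so $M_S\succ 0$. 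In either family $\Phi(\ketbra{\psi})=M_S\oplus M_{S^c}$ is then positive definite, completing the argument.
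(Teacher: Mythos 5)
Your proof is correct, and while it ultimately rests on the same two positivity ingredients as the paper's argument --- for CDUC, positive semi-definiteness of the matrix $G$ with $G_{ii}=A_{ii}$, $G_{ij}=B_{ij}$ that encodes complete positivity, and for DUC, positive semi-definiteness of the $2\times 2$ blocks $\left(\begin{smallmatrix} A_{ij} & C_{ij}\\ \overline{C_{ij}} & A_{ji}\end{smallmatrix}\right)$ --- the architecture is genuinely different. The paper never decomposes the output operator: it fixes two indices $k\neq l$ in the support of $\ket{\psi}$, expands the scalar $\Tr\big[\Phi(\psi)\phi\big]=\langle \phi|\Phi(\psi)|\phi\rangle$ entrywise into a sum of manifestly non-negative terms, and then runs a case analysis on the test vector $\ket{\phi}$ (supported on $k$, on $l$, or elsewhere) to locate one strictly positive term. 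You work instead at the operator level: you block-diagonalize $\Phi(\ketbra{\psi})$ along the support $S$ of $\ket{\psi}$, dispose of the $S^c$ block as a strictly positive diagonal, and exhibit the $S$-block as (strictly positive diagonal) $+$ (positive semi-definite), where the PSD part is the congruence $D_\psi G_S D_\psi^{*}$ in the CDUC case and, in the DUC case, a sum of outputs of elementary CP maps $\Phi^{(ij)}$ obtained by splitting the Choi matrix into its $2\times 2$ blocks. What your route buys: it eliminates the auxiliary vector $\ket{\phi}$ and the attendant case analysis (only the dichotomy $|S|=1$ versus $|S|\geq 2$ remains), and the Choi-block splitting explains structurally why the channel constraint $A_{ij}A_{ji}\geq |C_{ij}|^2$ is exactly the positivity needed --- a fact the paper's computation exhibits only after regrouping terms. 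What it costs: you must invoke the precise CP characterizations of (C)DUC channels (in particular that the diagonal of the PSD matrix $G$ is pinned to that of $A$), which the paper uses only implicitly through the non-negativity of its scalar summands; your proof is therefore slightly less self-contained but cleaner in its logic. Both proofs share the same trivial forward direction (restrict to diagonal inputs) and the same reduction to pure-state inputs.
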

\begin{proof}
Clearly, if $\Phi_{A,B}$ or $\Phi_{A,C}$ is stricly positive, we can restrict to diagonal input states to conclude that $A$ is also stricly positive. Conversely, let $A$ be strictly positive. Let us first deal with the CDUC case. It suffices to show that $\Phi_{A,B}(\psi)$ is invertible for all pure states $\psi\in \St{d}$. If $\psi = \ketbra{\psi} = \ketbra{i}$ for some $i=0,1,\ldots, d-1$, invertibility of $\Phi_{A,B}(\psi)$ follows easily from strict positivity of $A$. Otherwise, there exist distinct $k\neq l$ such that $\psi_k \neq 0, \psi_l\neq 0$. Note that $\psi_i$ denotes the $i^{th}$ entry of the column vector $\ket{\psi}\in \C{d}$. In this case, for an arbitrary $\ket{\phi}\in \C{d}$ we can write
    \begin{align*}
    \operatorname{Tr}\big[\Phi_{A,B}(\psi)(\phi)\big]
    &= \sum_i\left(\sum_n A_{in}|\psi_n|^2 \right) \abs{\phi_i}^2 + \sum_{i\neq j} B_{ij} \psi_i\overbar{\psi_j\phi_i}\phi_j \\
    &= \sum_{i\neq n} A_{in} \abs{\psi_n}^2 \abs{\phi_i}^2 + \sum_{i} A_{ii}{\abs{\phi_i}^2} \abs{\psi_i}^2 + \sum_{i\neq j} B_{ij}\psi_i\overbar{\psi_j\phi_i}\phi_j \\
    &= \sum_{i\neq n} A_{in} \abs{\psi_n}^2 \abs{\phi_i}^2 + \left\langle \overbar{\psi}\odot \phi \big\vert \, B \, \big\vert \overbar{\psi}\odot \phi \right\rangle .
\end{align*}
Here, $\odot$ denotes the entrywise product of vectors and since $B$ is positive semi-definite, the second term above is always non-negative. Moreover, the first term is positive for all $\ket{\phi} \not\in \text{span}\{ \ket{k} \}$. For $\ket{\phi}=\ket{k}$, the first term is again positive since $\psi_l\neq 0$ and $l\neq k$. Thus, $\operatorname{Tr}\big[\Phi_{A,B}(\psi)(\phi)\big]>0$ for all $\ket{\psi},\ket{\phi}\in \C{d}$, which proves that $\Phi_{A,B}(\psi)$ is invertible for all pure $\psi\in \St{d}$.

In the DUC case, we proceed similarly. It suffices to show that $\Phi_{A,C}(\psi)$ is invertible for all pure states $\psi\in \St{d}$, when $A$ is strictly positive. If $\ket{\psi} = \ket{i}$ for some $i \in \{0,\ldots, d-1\}$, invertibility of $\Phi_{A,C}(\psi)$ follows easily from strict positivity of $A$. Otherwise, there exist distinct $k\neq l$ such that $\psi_k \neq 0, \psi_l\neq 0$. In this case, for an arbitrary $\ket{\phi}\in \C{d}$ we can write 
\begin{align*}
  \operatorname{Tr}\big[\Phi_{A,C}(\psi)\phi\big] &= \sum_i\left(\sum_n A_{in}|\psi_n|^2 \right) \abs{\phi_i}^2 + \sum_{i\neq j} C_{ij} \overbar{\psi_i}\psi_j \overbar{\phi_i}\phi_j \\
    &= \sum_i A_{ii} \abs{\psi_i}^2 \abs{\phi_i}^2 + \sum_{i\neq j} \left( A_{ij} \abs{\phi_i}^2 \abs{\psi_j}^2 + C_{ij} \overbar{\psi_i}\psi_j \overbar{\phi_i}\phi_j \right) \\ 
    &= \sum_i A_{ii} \abs{\psi_i}^2 \abs{\phi_i}^2 + \sum_{i< j} \begin{array}{c@{}@{}}
         \left(\begin{matrix}
                \psi_j \overbar{\phi_i} & \psi_i \overbar{\phi_j}
                \end{matrix}\right)  \\
                \hspace{0.01cm}
        \end{array}
        \left(\begin{matrix}
                    A_{ij} & C_{ij}  \\
                    C_{ji} & A_{ji}
                  \end{matrix}\right)
        \left(\begin{matrix}
                \overbar{\psi_j} \phi_i \\
                \overbar{\psi_i} \phi_j
                \end{matrix}\right)
\end{align*}
Note that all the terms inside the sum above are non-negative. Moreover, the first sum is positive for all $\ket{\phi}$ with either $\phi_k\neq 0$ or $\phi_l\neq 0$. If both $\phi_k=0$ and $\phi_l=0$ (i.e. $\ket{\phi}\perp \text{span}\{\ket{k},\ket{l} \}$, we can choose $p\neq k,l$ such that $\phi_p\neq 0$. Then, the $k,p$ block in the second sum above is positive, since
\begin{equation*}
  \begin{array}{c@{}@{}}
         \left(\begin{matrix}
                0 & \psi_k \overbar{\phi_p}
                \end{matrix}\right)  \\
                \hspace{0.01cm}
        \end{array}
        \left(\begin{matrix}
                    A_{kp} & C_{kp}  \\
                    C_{pk} & A_{pk}
                  \end{matrix}\right)
        \left(\begin{matrix}
                0 \\
                \overbar{\psi_k} \phi_p
                \end{matrix}\right) = A_{pk}\abs{\psi_k}^2 \abs{\phi_p}^2 > 0.
\end{equation*}
Thus, $\operatorname{Tr}\big[\Phi_{A,C}(\psi)(\phi)\big]>0$ for all $\ket{\psi},\ket{\phi}\in \C{d}$, which proves the desired result.
\end{proof}

\begin{theorem} \label{thm:scrambling-DUC} The following equivalences hold for DUC and CDUC channels.
\begin{itemize}
    \item A \emph{CDUC} channel $\Phi_{A,B}:\M{d}\to \M{d}$ is scrambling if and only if $A$ is scrambling.
    \item A \emph{DUC} channel $\Phi_{A,C}:\M{d}\to \M{d}$ is scrambling if and only if $A$ is scrambling.
\end{itemize}
\end{theorem}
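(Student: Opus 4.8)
The plan is to sidestep any direct manipulation of the off-diagonal coherences (which make a naive evaluation of $\Tr(\Phi(\psi)\Phi(\phi))$ unwieldy) and instead reduce the scrambling property to the strict positivity property that Theorem~\ref{thm:strict-pos-DUC} has already pinned down. The bridge is an elementary fact valid for \emph{any} channel $\Phi:\M{d}\to\M{d}$: writing $\Tr\big(\Phi(\psi)\Phi(\phi)\big)=\Tr\big(\psi\,(\Phi^*\circ\Phi)(\phi)\big)$ and using that a positive operator $M$ satisfies $\Tr(\psi M)>0$ for every state $\psi$ if and only if $M\succ 0$, characterisation~(4) of Theorem~\ref{thm:scrambling} gives
\begin{equation}
\Phi\ \text{is scrambling}\iff (\Phi^*\circ\Phi)(\phi)\succ 0\ \text{for every state }\phi\iff \Phi^*\circ\Phi\ \text{is strictly positive}.
\end{equation}
(One direction of this is already recorded in the Remark following Corollary~\ref{strict-pos}.) So it suffices to identify $\Phi^*\circ\Phi$ and read off its strict positivity from Theorem~\ref{thm:strict-pos-DUC}.

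The second step is to verify that $\Phi^*\circ\Phi$ is itself of (C)DUC type. First I would check that the adjoint of a DUC (resp.\ CDUC) channel is again DUC (resp.\ CDUC): starting from $\Tr(\Phi^*(UYU^*)X)=\Tr(Y\,\Phi(UXU^*))$ and inserting the covariance identity of Theorem~\ref{theorem:AB}, a one-line computation yields $\Phi^*(UYU^*)=U^*\Phi^*(Y)U$ in the DUC case and $\Phi^*(UYU^*)=U\Phi^*(Y)U^*$ in the CDUC case, for all $U\in\mathcal{DU}_d$. Composing the two covariances shows that $(\Phi^*\circ\Phi)(UXU^*)=U\,(\Phi^*\circ\Phi)(X)\,U^*$ in \emph{both} families at once, so $\Phi^*\circ\Phi$ is conjugate diagonal unitary covariant, i.e.\ $\Phi^*\circ\Phi=\Phi_{A'',B''}$ for some entrywise non-negative $A''$ and positive semidefinite $B''$ (the latter from complete positivity of $\Phi^*\circ\Phi$). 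The only datum I need is the stochastic matrix $A''$ governing the action on diagonal inputs. Since $\Phi$ restricted to diagonals acts as $A$ and $\Phi^*$ restricted to diagonals acts as the adjoint stochastic matrix $A^{T}$, the diagonal action of $\Phi^*\circ\Phi$ is $A^{T}A$; that is, $A''=A^{T}A$.

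Finally, Theorem~\ref{thm:strict-pos-DUC} says that the CDUC map $\Phi^*\circ\Phi=\Phi_{A^{T}A,\,B''}$ is strictly positive if and only if $A^{T}A$ is strictly positive. But $(A^{T}A)_{kj}=\sum_i A_{ik}A_{ij}$, so $A^{T}A$ has all entries strictly positive exactly when for every pair $k,j$ there is an index $i$ with $A_{ik}A_{ij}>0$, which is precisely the definition of $A$ being scrambling. Chaining the three equivalences gives $\Phi$ scrambling $\iff A^{T}A$ strictly positive $\iff A$ scrambling, uniformly for the DUC and CDUC cases. The one place that needs care — and which I expect to be the main (minor) obstacle — is that $\Phi^*\circ\Phi$, although completely positive and unital, need not be trace preserving, so $A^{T}A$ need not be column stochastic. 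I would dispatch this by observing that the proof of Theorem~\ref{thm:strict-pos-DUC} only ever invokes entrywise non-negativity of the stochastic part together with positive semidefiniteness of $B$ (that is, complete positivity), and never trace preservation, so it applies to $\Phi^*\circ\Phi$ verbatim.
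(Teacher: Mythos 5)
Your proposal is correct, but it follows a genuinely different route from the paper's. The paper proves Theorem~\ref{thm:scrambling-DUC} by direct computation: it expands $f(\psi,\phi)=\Tr\big[\Phi(\psi)\Phi(\phi)\big]$ entrywise and regroups it into manifestly non-negative terms, one of which is $\sum_{k\neq l}(A^{\top}A)_{kl}\,|\psi_k|^2|\phi_l|^2$, then tests against basis states $\ket{k},\ket{l}$ for the forward direction and against indices in the supports of two orthogonal pure states for the converse; the same estimates are then repeated, with modified non-negative blocks, for the DUC family. You instead reduce scrambling to strict positivity via $\Tr\big(\Phi(\psi)\Phi(\phi)\big)=\Tr\big(\psi\,(\Phi^*\circ\Phi)(\phi)\big)$ --- an equivalence whose forward half the paper itself records in the Remark following Corollary~\ref{strict-pos} --- then identify $\Phi^*\circ\Phi$ as a CDUC-covariant completely positive map whose diagonal action is $A^{\top}A$, invoke Theorem~\ref{thm:strict-pos-DUC}, and finish with the observation that $A^{\top}A$ is entrywise positive exactly when $A$ is scrambling. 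This buys modularity (the off-diagonal coherence estimates are done once, in the strict-positivity theorem, and never repeated), a uniform treatment of the DUC and CDUC cases (both compositions $\Phi^*\circ\Phi$ are CDUC-covariant), and a conceptually transparent explanation of why $A^{\top}A$ is the governing object --- mirroring the classical fact that a stochastic matrix is scrambling iff $A^{\top}A$ has no zero entries; note that the same quantity $(A^{\top}A)_{kl}$ surfaces in the paper's direct computation. The price is the extension you correctly flag: Theorems~\ref{theorem:AB} and~\ref{thm:strict-pos-DUC} are stated for channels, whereas $\Phi^*\circ\Phi$ is only completely positive, so you must check that neither the covariance-implies-structure argument nor the strict-positivity proof ever uses trace preservation (only entrywise non-negativity of the $A$-part and positive semidefiniteness of the $B$-part, both of which follow from complete positivity); your dispatch of this point is accurate. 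One inessential slip: $\Phi^*\circ\Phi$ is generally \emph{not} unital (that would require $\Phi^*(\Phi(\iden))=\iden$, which fails for non-unital $\Phi_{A,B}$), but since your argument never uses unitality, nothing breaks.
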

\begin{proof}
In both the CDUC and the DUC case, our aim would be to appropriately decompose the function $f(\psi,\phi) = \operatorname{Tr}\big[\Phi(\psi)\Phi(\phi)\big]$ (for pure states $\psi,\phi\in\St{d}$) into non-negative parts so as to obtain the desired result. Let us tackle the CDUC case first. Here, we have 
\begin{align*}
    f_{A,B}(\psi,\phi) &= \operatorname{Tr}\big[\Phi_{A,B}(\psi)\Phi_{A,B}(\phi)\big] \\
    &= \sum_{i,j=1}^d \Phi_{A,B}(\psi)_{ij} \overbar{\Phi_{A,B}(\phi)_{ij}} \\
    &= \sum_i\left(\sum_k A_{ik}|\psi_k|^2 \right) \left(\sum_l A_{il}|\phi_l|^2 \right) + \sum_{i\neq j} |B_{ij}|^2 \psi_i\overbar{\psi_j\phi_i}\phi_j \\
    &= \sum_i \sum_{k\neq l} A_{ik}A_{il}|\psi_k|^2 |\phi_l|^2 + \sum_{i,k} A_{ik}^2 |\psi_k|^2 |\phi_k|^2 + \sum_{i\neq j} |B_{ij}|^2 \psi_i\overbar{\psi_j\phi_i}\phi_j \\
    &= \sum_{k\neq l} (A^\top A)_{kl} |\psi_k|^2 |\phi_l|^2 + \sum_{i\neq k} A_{ik}^2 |\psi_k|^2 |\phi_k|^2 + \sum_i A_{ii}^2 |\psi_i|^2 |\phi_i|^2 +  \sum_{i\neq j} |B_{ij}|^2 \psi_i\overbar{\psi_j\phi_i}\phi_j \\
    &= \sum_{k\neq l} (A^\top A)_{kl} |\psi_k|^2 |\phi_l|^2 + \sum_{i\neq k} A_{ik}^2 |\psi_k|^2 |\phi_k|^2 + \left\langle \overbar{\psi}\odot \phi \big\vert \, B\odot \overbar{B} \, \big\vert \overbar{\psi}\odot \phi \right\rangle
\end{align*}
Notice that since $A$ is entrywise non-negative and $B$ is positive semi-definite, all three terms above are non-negative. Now, assume that $\Phi_{A,B}$ is scrambling, so that $f_{A,B}(\psi,\phi)>0$ for all pure states $\psi, \phi$. Then, we can choose $\ket{\psi}=\ket{k}$ and $\ket{\phi}=\ket{l}$ for $k\neq l$, so that 
\begin{equation*}
    f_{A,B}(\psi,\phi) = (A^\top A)_{kl} > 0 \implies A \text{ is scrambling}.
\end{equation*}
Conversely, if $A$ is scrambling, i.e., $(A^\top A)_{kl} > 0$ for all $k\neq l$, then for any two orthogonal pure states $\psi\perp \phi$, by identifying indices $k\neq l$ such that $\psi_k\neq 0$ and $\phi_l\neq 0$, we get 
\begin{equation*}
    f_{A,B}(\psi,\phi) \geq (A^\top A)_{kl} |\psi_k|^2 |\phi_l|^2 > 0 \implies \Phi_{A,B} \text{ is scrambling}.
\end{equation*}

Now, for a DUC channel $\Phi_{A,C}$, we can follow the same steps as above to obtain
\begin{align*}
    f_{A,C}(\psi,\phi) &= \operatorname{Tr}\big[\Phi_{A,C}(\psi)\Phi_{A,C}(\phi)\big] \\
    &= \sum_{i,j=1}^d \Phi_{A,C}(\psi)_{ij} \overbar{\Phi_{A,C}(\phi)_{ij}} \\
    &= \sum_i\left(\sum_k A_{ik}|\psi_k|^2 \right) \left(\sum_l A_{il}|\phi_l|^2 \right) + \sum_{i\neq j} |C_{ij}|^2 \overbar{\psi_i}\psi_j\phi_i\overbar{\phi_j} \\
    &= \sum_i \sum_{k\neq l} A_{ik}A_{il}|\psi_k|^2 |\phi_l|^2 + \sum_{i,k} A_{ik}^2 |\psi_k|^2 |\phi_k|^2 + \sum_{i\neq j} |C_{ij}|^2 \overbar{\psi_i}\psi_j\phi_i\overbar{\phi_j} \\
    &= \sum_{k\neq l} (A^\top A)_{kl} |\psi_k|^2 |\phi_l|^2 + \sum_i A_{ii}^2 |\psi_i|^2 |\phi_i|^2 + \sum_{i\neq j} A_{ij}^2 |\psi_j|^2 |\phi_j|^2 + \sum_{i\neq j} |C_{ij}|^2 \overbar{\psi_i}\psi_j\phi_i\overbar{\phi_j} \\
    &= \sum_{k\neq l} (A^\top A)_{kl} |\psi_k|^2 |\phi_l|^2 + \sum_i A_{ii}^2 |\psi_i|^2 |\phi_i|^2 + \sum_{i<j} \begin{array}{c@{}@{}}
         \left(\begin{matrix}
                \psi_j \overbar{\phi_j} & \psi_i \overbar{\phi_i}
                \end{matrix}\right)  \\
                \hspace{0.01cm}
        \end{array}
        \left(\begin{matrix}
                    A_{ij}^2 & |C_{ij}|^2  \\
                    |C_{ji}|^2 & A_{ji}^2
                  \end{matrix}\right)
        \left(\begin{matrix}
                \overbar{\psi_j} \phi_j \\
                \overbar{\psi_i} \phi_i
                \end{matrix}\right)
\end{align*}
As before, since $\Phi_{A,C}$ is a channel, the constraints on $A,C$ force all three sums above to be non-negative. The remaining argument is an exact replica of the one used in the CDUC case. 
\end{proof}

\begin{theorem} \label{thm:DUC-primitive}
The following equivalences hold for DUC and CDUC channels.
\begin{itemize}
    \item A \emph{CDUC} channel $\Phi_{A,B}:\M{d}\to \M{d}$ is primitive if and only if $A$ is primitive.
    \item A \emph{DUC} channel $\Phi_{A,C}:\M{d}\to \M{d}$ is primitive if and only if $A$ is primitive.
\end{itemize}
\end{theorem}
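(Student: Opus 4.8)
The plan is to reduce the quantum primitivity statement to the classical one by routing through the strict-positivity characterization of primitivity, exploiting the fact that both the CDUC and DUC classes are closed under composition and that the diagonal block of any iterate evolves simply by matrix multiplication of the stochastic parts.

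First I would record the composition rules. A direct computation on matrix entries shows that the diagonal of $\Phi_{A,B}(X)$ (resp.\ $\Phi_{A,C}(X)$) depends only on the diagonal of $X$ and is governed by $A$, namely $(\Phi_{A,B}(X))_{ii}=\sum_j A_{ij}X_{jj}$, while the off-diagonal part acts entrywise. Composing two CDUC maps yields $\Phi_{A,B}\circ\Phi_{A',B'}=\Phi_{AA',\,B\odot B'}$, where $\odot$ is the Hadamard product, so that $\Phi_{A,B}^{\,n}=\Phi_{A^n,\,B^{\odot n}}$; this is automatically a valid CDUC channel (being a composition of channels), with the Schur product theorem guaranteeing positive semidefiniteness of the Hadamard powers. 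For DUC maps one checks that $\Phi_{A,C}\circ\Phi_{A',C'}$ has off-diagonal entries $C_{ij}C'_{ji}X_{ij}$, i.e.\ the composite is \emph{CDUC}; hence odd powers of $\Phi_{A,C}$ remain DUC while even powers become CDUC. In all cases the crucial invariant is that the diagonal action of $\Phi^{\,n}$ is exactly $A^n$.

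Next I would invoke the two strict-positivity characterizations of primitivity. On the quantum side, Corollary~\ref{corollary:primitive} states that $\Phi$ is primitive if and only if $\Phi^{\,n}$ is strictly positive for all large $n$; since strict positivity persists under further iteration (if $\Phi^{\,k}$ is strictly positive then $\Phi^{\,k+1}(\rho)=\Phi^{\,k}(\Phi(\rho))$ is full rank for every state $\rho$), this is equivalent to the existence of a single $k$ with $\Phi^{\,k}$ strictly positive. On the classical side, Theorem~\ref{theorem:Aprimitive} gives the exact analogue for the stochastic matrix $A$.

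Finally I would chain the equivalences. Because $\Phi_{A,B}^{\,n}$ (resp.\ $\Phi_{A,C}^{\,n}$) is again a (C)DUC channel whose diagonal parameter is $A^n$, Theorem~\ref{thm:strict-pos-DUC} yields that $\Phi^{\,n}$ is strictly positive if and only if $A^n$ is strictly positive. Stringing these together gives, in the CDUC case, that $\Phi_{A,B}$ is primitive $\iff$ some $\Phi_{A,B}^{\,k}$ is strictly positive $\iff$ some $A^k$ is strictly positive $\iff$ $A$ is primitive, and identically in the DUC case. I expect the main obstacle to be purely organizational: establishing the composition rules cleanly, tracking the parity switch by which even powers of a DUC channel become CDUC so that Theorem~\ref{thm:strict-pos-DUC} still applies, and confirming in each case that the diagonal action of every iterate is genuinely $A^n$.
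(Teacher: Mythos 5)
Your proof is correct and follows essentially the same route as the paper: reduce primitivity to strict positivity via Corollary~\ref{corollary:primitive} and Theorem~\ref{theorem:Aprimitive}, then apply Theorem~\ref{thm:strict-pos-DUC} to the iterates, whose stochastic part is $A^n$ by the composition rule. The only cosmetic difference is that the paper's canonical composition rule carries a diagonal correction term, $\Phi_{A_1,B_1}\circ \Phi_{A_2,B_2} = \Phi_{A_1 A_2,\, B_1 \odot B_2 + \operatorname{diag}(A_1 A_2 - A_1 \odot A_2)}$, which is immaterial here because the channel's action ignores the diagonal of the $B$-parameter; your explicit handling of the DUC parity switch (even powers becoming CDUC) is a detail the paper leaves implicit by treating only the CDUC case.
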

\begin{proof}
    We only tackle the CDUC case here. Using Corollary~\ref{corollary:primitive} and Theorems~\ref{theorem:Aprimitive}, \ref{thm:strict-pos-DUC}, we obtain
    \begin{align}
        \Phi_{A,B} \text{ is primitive } &\iff \exists k\in \mathbb{N} \text{ such that } \Phi^k_{A,B} = \Phi_{A^k, f(B)} \text{ is strictly positive } \\ 
        &\iff \exists k\in \mathbb{N} \text{ such that } A^k \text{ is strictly positive} \iff A \text{ is primitive}.
    \end{align}
Note that above, $f(B)=B^{\odot k} + \operatorname{diag}(A^k - A^{\odot k})$. This follows from the composition rule:
\begin{equation}
    \Phi_{A_1,B_1}\circ \Phi_{A_2,B_2} = \Phi_{A_1 A_2, B_1 \odot B_2 + \operatorname{diag}(A_1 A_2 - A_1 \odot A_2)}.
\end{equation}
Note that $\odot$ here denotes the entrywise (or Hadamard) product of matrices.
\end{proof}

\begin{theorem} \label{thm:DUC-mixing}
The following equivalences hold for DUC and CDUC channels.
\begin{itemize}
    \item A \emph{CDUC} channel $\Phi_{A,B}:\M{d}\to \M{d}$ is mixing if and only if $A$ is mixing.
    \item A \emph{DUC} channel $\Phi_{A,C}:\M{d}\to \M{d}$ is mixing if and only if $A$ is mixing.
\end{itemize}
\end{theorem}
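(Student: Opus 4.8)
The plan is to reduce the mixing question to a scrambling question and then invoke the already-established dictionary between (C)DUC channels and their classical parts, exactly mirroring the proof of Theorem~\ref{thm:DUC-primitive}. The engine is the quantum characterization of mixing in Theorem~\ref{theorem:scr-mix}: a dQMS $\{\Phi^n\}_{n\in\mathbb N}$ is mixing if and only if there is some $k\in\mathbb N$ with $\Phi^k$ scrambling (i.e.\ $C^{(1)}_0(\Phi^k)=0$). Its classical counterpart is Theorem~\ref{theorem:Amixing}, which says that a column stochastic $A$ is mixing iff $A^k$ is scrambling for some $k$. Since both criteria are phrased through iterated scrambling, it suffices to match up the iterates of $\Phi_{A,B}$ (resp.\ $\Phi_{A,C}$) with the powers of $A$.

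For the CDUC case I would use the composition rule recalled in the proof of Theorem~\ref{thm:DUC-primitive}, namely
$\Phi_{A_1,B_1}\circ\Phi_{A_2,B_2}=\Phi_{A_1 A_2,\;B_1\odot B_2+\operatorname{diag}(A_1 A_2 - A_1\odot A_2)}$,
which shows in particular that every iterate $\Phi^k_{A,B}$ is again a CDUC channel whose classical part is $A^k$. Chaining the equivalences then gives
$\Phi_{A,B}\text{ mixing}\iff \exists k:\Phi^k_{A,B}\text{ scrambling}\iff \exists k:A^k\text{ scrambling}\iff A\text{ mixing}$,
where the first step is Theorem~\ref{theorem:scr-mix}, the second combines the composition rule with the scrambling equivalence of Theorem~\ref{thm:scrambling-DUC}, and the last is Theorem~\ref{theorem:Amixing}.

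For the DUC case the one point requiring care is that iterating a DUC channel does not keep it DUC: because conjugation by a diagonal unitary is intertwined with its inverse, composing two DUC maps yields a CDUC map (and DUC composed with CDUC returns DUC), so $\Phi^k_{A,C}$ is DUC for odd $k$ and CDUC for even $k$. In either parity, however, the diagonal-to-diagonal action composes as ordinary matrix multiplication, so the classical part of $\Phi^k_{A,C}$ is exactly $A^k$, and $\Phi^k_{A,C}$ remains a (C)DUC channel to which Theorem~\ref{thm:scrambling-DUC} applies verbatim. Hence $\Phi^k_{A,C}$ is scrambling iff $A^k$ is scrambling, and the same chain of equivalences closes the argument.

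The argument is essentially routine once these pieces are assembled; the only genuine subtlety, and the main thing to get right, is the parity bookkeeping in the DUC case, i.e.\ verifying that despite the alternation between DUC and CDUC forms under iteration, each iterate is still a covariant channel with classical part $A^k$, so that the scrambling equivalence of Theorem~\ref{thm:scrambling-DUC} stays available. No new estimates are needed beyond what Theorems~\ref{theorem:scr-mix}, \ref{theorem:Amixing}, and \ref{thm:scrambling-DUC} already provide.
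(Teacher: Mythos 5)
Your proof is correct and follows essentially the same route as the paper's: the identical chain of equivalences via Theorem~\ref{theorem:scr-mix}, Theorem~\ref{thm:scrambling-DUC}, and Theorem~\ref{theorem:Amixing}, using the composition rule to identify the classical part of $\Phi^k_{A,B}$ with $A^k$. You in fact go slightly further than the paper, which only treats the CDUC case explicitly; your parity observation (iterates of a DUC channel alternate between DUC and CDUC but always remain covariant with classical part $A^k$) correctly supplies the detail the paper leaves implicit.
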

\begin{proof}
    We only tackle the CDUC case here. Using Theorems~\ref{theorem:scr-mix}, \ref{theorem:Amixing} and \ref{thm:scrambling-DUC}, we obtain
    \begin{align}
        \Phi_{A,B} \text{ is mixing } &\iff \exists k\in \mathbb{N} \text{ such that } \Phi^k_{A,B}  = \Phi_{A^k, f(B)} \text{ is scrambling } \\ 
        &\iff \exists k\in \mathbb{N} \text{ such that } A^k \text{ is scrambling} \iff A \text{ is mixing}.
    \end{align}
\end{proof}

We now shift our focus to the scrambling times and Wielandt indices of (C)DUC channels. We will borrow results from the classical literature to provide optimal upper bounds on these indices for (C)DUC channels. Let us first define these indices for stochastic matrices. 

\begin{definition}
    For a mixing (resp. primitive) stochastic matrix $A\in \M{d}$, we define
    \begin{equation}        c(A):= \min \{n : A^n \,\, \text{is scrambling} \} \quad \text{resp.} \quad w(A):= \min \{n : A^n \,\, \text{is strictly positive} \}.
    \end{equation}
\end{definition}
We call $c(A)$ the \emph{time} of $A$ and $w(A)$ the \emph{Wielandt index}\footnote{The number $w(A)$ is also sometimes called the {\em{primitivity index}} of $A$ (see e.g.~\cite{wielandt2010}).} of $A$.
Optimal bounds are known for these indices, which are stated below. The following stochastic matrix (in appropriate dimension $d$) serves to prove the optimality of these bounds:
\begin{equation}\label{eq:Ad2}
    A_d := \left( \begin{array}{cccccc}
        0 & 1/2 & 0 & 0 & 0 & 0 \\
        0 & 0 & 1 & 0 & 0 & 0 \\
        0 & 0 & 0 & 1 & 0 & 0 \\
        \vdots & \vdots & \vdots & \vdots & \ddots & \vdots \\
        0 & 0 & 0 & 0 & 0 & 1 \\
        1 & 1/2 & 0 & 0 & 0 & 0 \\
    \end{array} \right)
\end{equation}

\begin{theorem} \cite{Wielandt1950index}\label{theorem:classical-wie-index}
    For any primitive stochastic matrix $A\in \M{d}$:
    \begin{equation} 
    w(A) \leq d^2 -2d +2.
    \end{equation}
    Moreover, $w(A_d)=d^2 - 2d +2$.
    \label{thm:6.7}
\end{theorem}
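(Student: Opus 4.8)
The plan is to translate the analytic statement about powers of the nonnegative matrix $A$ into a combinatorial statement about its support digraph, and then invoke the classical girth bound. Since $w(A)$ depends only on the zero/nonzero pattern of $A$, I would associate with $A$ the directed graph $G(A)$ on $d$ vertices carrying an arc $j\to i$ precisely when $A_{ij}>0$, and observe that the $(i,j)$ entry of $A^n$ is positive if and only if $G(A)$ contains a directed walk of length exactly $n$ from $j$ to $i$. Thus $w(A)$ equals the \emph{exponent} $\gamma(G(A))$, the least $n$ such that every ordered pair of vertices is joined by a walk of length exactly $n$. By Perron--Frobenius (consistent with Theorem~\ref{theorem:Aprimitive}), primitivity of $A$ is equivalent to $G(A)$ being strongly connected with the greatest common divisor of its cycle lengths equal to $1$ (aperiodicity).

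The first substantive step is to bound the girth $g$ (the length of a shortest directed cycle) of $G(A)$ by $d-1$. If instead $g=d$, a shortest cycle would be Hamiltonian, say $v_1\to v_2\to\cdots\to v_d\to v_1$; any additional arc $v_i\to v_j$ with $j\not\equiv i+1 \pmod d$ would close a cycle of length $1+\big((i-j)\bmod d\big)<d$, contradicting $g=d$. Hence $G(A)$ would consist of the single $d$-cycle alone, making it periodic of period $d$ and contradicting aperiodicity. Therefore $g\le d-1$.

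The technical heart, which I expect to be the main obstacle, is the bound $\gamma(G)\le d+g(d-2)$ for a primitive digraph of girth $g$. The idea is to fix a shortest cycle $C$ of length $g$; using strong connectivity, one reaches $C$ from any source and leaves $C$ toward any target within a controlled number of steps, while looping around $C$ adds multiples of $g$, and aperiodicity (via the short detours furnished by the other cycles) guarantees that every residue class modulo $g$ becomes realizable after at most $g(d-2)$ extra steps. The delicate part is the precise bookkeeping of these entry, exit, and detour lengths that yields $\gamma(G)\le d+g(d-2)$. Since this expression is increasing in $g$ and $g\le d-1$, I conclude $w(A)=\gamma(G(A))\le d+(d-1)(d-2)=d^2-2d+2$.

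For optimality I would analyze $G(A_d)$ from Eq.~\eqref{eq:Ad2} directly. Its arcs are $2\to 1$, $k\to k-1$ for $3\le k\le d$, $1\to d$, and $2\to d$, so it carries exactly two directed cycles: the Hamiltonian cycle $1\to d\to\cdots\to 2\to 1$ of length $d$, and the cycle $2\to d\to\cdots\to 2$ of length $d-1$, the latter avoiding vertex $1$. Consequently every closed walk at vertex $1$ must traverse the Hamiltonian cycle at least once, with optional excursions around the short cycle inserted at vertex $2$, so the set of closed-walk lengths at vertex $1$ is $d+\langle d-1,d\rangle$, where $\langle d-1,d\rangle$ is the numerical semigroup generated by the coprime integers $d-1$ and $d$. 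By the Frobenius (Chicken McNugget) theorem the largest gap of $\langle d-1,d\rangle$ is $(d-1)d-(d-1)-d=d^2-3d+1$, so the largest length admitting no closed walk at vertex $1$ is $d+(d^2-3d+1)=d^2-2d+1$. Hence $\big(A_d^{\,d^2-2d+1}\big)_{11}=0$, which forces $w(A_d)\ge d^2-2d+2$; combined with the upper bound this gives $w(A_d)=d^2-2d+2$, establishing tightness.
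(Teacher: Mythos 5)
First, a point of reference: the paper does not prove this theorem at all --- it is imported from Wielandt \cite{Wielandt1950index}, with $A_d$ quoted as the extremal example --- so your proposal has to be judged on its own terms rather than against an in-paper argument. Much of it is correct and complete: the identification $w(A)=\gamma(G(A))$ with the exponent of the support digraph, the girth bound $g\le d-1$ (a girth-$d$ digraph on $d$ vertices is exactly the Hamiltonian cycle, hence periodic, hence not primitive), and the entire sharpness analysis of $A_d$. In particular the lower bound is sound: $G(A_d)$ has exactly two cycles, of lengths $d$ and $d-1$, every closed walk at vertex $1$ has length $ad+b(d-1)$ with $a\ge 1$, $b\ge 0$, and the Frobenius number $(d-1)d-(d-1)-d=d^2-3d+1$ of $\langle d-1,d\rangle$ gives $\bigl(A_d^{\,d^2-2d+1}\bigr)_{11}=0$, hence $w(A_d)\ge d^2-2d+2$ (this last inference implicitly uses that strict positivity of $A^n$ persists for higher powers, which holds because $A_d$ has no zero columns; worth one sentence).

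The genuine gap is the upper bound. The inequality $\gamma(G)\le d+g(d-2)$ (Dulmage--Mendelsohn) is the technical core of Wielandt's theorem, and you state it without proof, explicitly deferring ``the precise bookkeeping.'' That bookkeeping is not routine, and the scheme you sketch --- enter the cycle $C$ within $d-g$ steps, pad by multiples of $g$ by looping, exit toward the target, and repair residues using aperiodicity --- does not obviously produce this bound: naive versions of it (e.g.\ applying the Holladay--Varga estimate to $A^g$, which is primitive with at least $g$ positive diagonal entries) yield only $\gamma(G)\le g(2d-g-1)$, which at $g=d-1$ equals $d^2-d$ and is weaker than $d^2-2d+2$. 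The missing idea is a nested-reachability argument: for $w$ on $C$, the sets $R_t(w)$ of vertices reachable from $w$ by walks of length exactly $t$ satisfy $R_t(w)\subseteq R_{t+g}(w)$ (prepend a loop around $C$), and by primitivity the chain $R_0(w)\subseteq R_g(w)\subseteq R_{2g}(w)\subseteq\cdots$ must strictly increase until it equals the full vertex set, so $R_{g(d-1)}(w)$ is everything and remains everything for all larger $t$; combining this with a path of length $p\le d-g$ from an arbitrary source into $C$ gives walks of length exactly $(d-g)+g(d-1)=d+g(d-2)$ between every ordered pair. Without this (or an equivalent) argument, your proposal establishes the sharpness claim but not the bound $w(A)\le d^2-2d+2$ itself.
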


\begin{theorem} \cite{Akelbek2009index, Guterman2019index} \label{theorem:classical-scr-index}
    For any mixing stochastic matrix $A\in \M{d}$:
    \begin{equation} 
    c(A) \leq \ceil[\bigg]{\frac{d^2 -2d +2}{2}}.
    \end{equation} 
    Moreover, $c(A_d)= \ceil[\bigg]{\frac{d^2 -2d +2}{2}}$.
      \label{thm:6.8}
\end{theorem}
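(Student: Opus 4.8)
The plan is to translate everything into the language of the directed graph $D(A)$ on vertices $\{1,\dots,d\}$ having an arc $i\to j$ precisely when $A_{ji}>0$ (the column-stochastic convention). A walk of length $n$ from $i$ to $k$ exists if and only if $(A^n)_{ki}>0$, so writing $R_n(i):=\{k:(A^n)_{ki}>0\}$ for the set of vertices reachable from $i$ in exactly $n$ steps, the matrix $A^n$ is scrambling precisely when $R_n(i)\cap R_n(j)\neq\varnothing$ for every pair $i,j$. Thus $c(A)$ is the least $n$ for which all these reachable sets pairwise intersect. Mixing of $A$ guarantees a unique aperiodic recurrent (``basic'') class onto which every vertex eventually flows, so these intersections do occur for large $n$ and $c(A)$ is finite; when $A$ is not irreducible I would first reduce to this recurrent class, since for large $n$ every $R_n(i)$ is contained in (and, in the extremal analysis, equal to) the common recurrent support.

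For the upper bound I would start from the crude estimate $c(A)\le w(A)$: once $A^{w(A)}$ is strictly positive it is certainly scrambling, and Wielandt's theorem (Theorem~\ref{theorem:classical-wie-index}) gives $w(A)\le d^2-2d+2$. The content of the claim is the factor-two refinement to $\ceil[\big]{(d^2-2d+2)/2}$, and here I would exploit that scrambling only demands a \emph{shared} endpoint rather than full reachability. Concretely, I would combine a reachable-set growth estimate with a meeting-in-the-middle argument: once $n$ is roughly half the Wielandt exponent the sets $R_n(i)$ are already large enough that any two must overlap, and the sharp count is obtained by a careful analysis of the shortest cycle (girth) through each vertex together with the two-cycle structure forced in the extremal regime. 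This cycle-length bookkeeping is the technical heart of the argument carried out in \cite{Akelbek2009index,Guterman2019index}.

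For tightness I would analyse $A_d$ directly. Its digraph is strongly connected and contains exactly two cycles, of lengths $d-1$ and $d$ (the long cycle $2\to1\to d\to d-1\to\cdots\to3\to2$ together with the chord $2\to d$), so it is primitive. A vertex $k$ lies in $R_n(i)$ iff $n$ equals the length of some fixed $i\to k$ path plus a nonnegative integer combination of $d-1$ and $d$; since $\gcd(d-1,d)=1$, the Frobenius number of $\{d-1,d\}$ governs exactly which lengths are attainable. I would then single out the ``slowest'' pair of vertices (the one straddling the chord) and show, by this numerical-semigroup computation, that their reachable sets first intersect at $n=\ceil[\big]{(d^2-2d+2)/2}$, matching the upper bound.

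The main obstacle is pinning down the exact constant rather than a constant-times-$d^2$ bound: the naive pigeonhole estimate ``$|R_n(i)|>d/2$'' is lossy, and extracting the precise $\ceil[\big]{(d^2-2d+2)/2}$ (including the correct parity encoded by the ceiling) requires the delicate girth-and-cycle analysis above, while the tightness direction hinges on an exact path-length count for $A_d$. A secondary subtlety is justifying the reduction from ``mixing'' to the primitive, irreducible, aperiodic case so that the purely combinatorial bounds of the cited references apply verbatim.
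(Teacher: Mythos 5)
The paper itself contains no proof of this theorem: it is imported verbatim from \cite{Akelbek2009index, Guterman2019index}, and the remark following it explicitly notes that the proof in those references is long and relies on intricate graph-theoretic techniques (this being precisely the paper's motivation for giving its own, operator-theoretic, $d^2$-type bound in Theorem~\ref{stabilize}). Your proposal reconstructs the correct framework of those references --- the digraph $D(A)$, the identification of $c(A)$ with the first time the reachable sets $R_n(i)$ pairwise intersect, the two cycles of lengths $d-1$ and $d$ in the Wielandt digraph of $A_d$, and the role of the Frobenius number of $\{d-1,d\}$ in the tightness computation --- so it is aligned with how the result is actually proved in the literature. However, judged as a proof it has a genuine gap: the two decisive steps are named, not established. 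For the upper bound, the passage from Wielandt's $d^2-2d+2$ to the factor-two refinement $\ceil[\big]{(d^2-2d+2)/2}$ is exactly the content of the theorem, and your sketch defers this ``cycle-length bookkeeping'' to the very references whose result is to be proved; likewise, for tightness you assert that the slowest pair of vertices of $A_d$ first meets at $n=\ceil[\big]{(d^2-2d+2)/2}$ without carrying out the path-length count. So nothing you write conflicts with the paper, but the proposal is a roadmap of \cite{Akelbek2009index, Guterman2019index} rather than a proof.

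There is also a concrete error in the reduction you label a ``secondary subtlety.'' For a mixing but non-irreducible $A$, it is \emph{not} true that for large $n$ every $R_n(i)$ is contained in the unique recurrent class: a transient state lying on a cycle (e.g.\ a transient state with a positive self-transition that also leaks into the recurrent class) belongs to its own reachable set $R_n(i)$ for every $n$. The correct statement is the reverse containment: for all sufficiently large $n$, every $R_n(i)$ \emph{contains} the recurrent class $C$ (follow a path into $C$, then use primitivity of $C$), and it is this that forces all reachable sets to pairwise intersect. Getting the quantitative bound in terms of the full dimension $d$, rather than $|C|$, in the presence of transient states is one of the points the general treatment in \cite{Guterman2019index} must handle, so this step cannot be waved through.
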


Using Theorems~\ref{thm:strict-pos-DUC}-\ref{thm:DUC-mixing}, \ref{theorem:classical-wie-index}, and \ref{theorem:classical-scr-index}, the corollaries given below follow immediately. 

\begin{corollary}\label{corollary:DUC-mixing}
    Let $A\in \M{d}$ be a mixing stochastic matrix. Then, for any CDUC channel $\Phi_{A,B}:\M{d}\to \M{d}$ and any DUC channel $\Phi_{A,C}:\M{d}\to \M{d}$,
    \begin{equation}
        c(\Phi_{A,B})=c(\Phi_{A,C})=c(A)\leq d^2 -2d +2.
    \end{equation}
    Equality is achieved above for $A=A_d$ (see Eq.~\eqref{eq:Ad2}).
\end{corollary}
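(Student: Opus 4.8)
The plan is to reduce both quantum scrambling times to the single classical scrambling time $c(A)$, after which the sharp bound and its optimality are supplied directly by the classical result in Theorem~\ref{theorem:classical-scr-index}.

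First I would iterate the composition law recorded in the proof of Theorem~\ref{thm:DUC-primitive}. In the CDUC case this gives, for every $n\in\mathbb{N}$,
\begin{equation*}
    \Phi_{A,B}^n = \Phi_{A^n,\,B_n}, \qquad B_n = B^{\odot n} + \operatorname{diag}(A^n - A^{\odot n}),
\end{equation*}
so that each iterate is again a CDUC channel whose ``classical part'' is exactly $A^n$. For the DUC case one checks that composing the covariance relation $\Phi(UXU^*)=U^*\Phi(X)U$ with itself produces the CDUC relation $\Phi(UXU^*)=U\Phi(X)U^*$; hence the iterates $\Phi_{A,C}^n$ alternate between the DUC and CDUC classes, but in either case the diagonal-to-diagonal action composes to $A^n$, i.e.\ the classical part is again $A^n$. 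The only thing to verify is that each such iterate is a legitimate (C)DUC channel, and this is automatic: a composition of channels is a channel, and by Theorem~\ref{theorem:AB} every (C)DUC channel is determined by an admissible pair whose stochastic matrix equals its diagonal action.

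Next I would apply Theorem~\ref{thm:scrambling-DUC}, which covers both the DUC and CDUC classes: the iterate $\Phi_{A,B}^n$ (resp.\ $\Phi_{A,C}^n$) is scrambling if and only if $A^n$ is scrambling. Since for a channel ``scrambling'' means $C^{(1)}_0=0$ (Theorem~\ref{thm:scrambling}) and $c(A)=\min\{n : A^n \text{ is scrambling}\}$, the index sets $\{n : \Phi_{A,B}^n \text{ scrambling}\}$, $\{n : \Phi_{A,C}^n \text{ scrambling}\}$ and $\{n : A^n \text{ scrambling}\}$ coincide, whence
\begin{equation*}
    c(\Phi_{A,B}) = c(\Phi_{A,C}) = c(A).
\end{equation*}
Because $A$ is mixing, Theorem~\ref{theorem:Amixing} guarantees that $A^n$ is scrambling for some $n$, so all three quantities are finite.

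Finally, the upper bound and its optimality transfer verbatim from the classical setting: Theorem~\ref{theorem:classical-scr-index} bounds $c(A)$ and attains this bound at the matrix $A=A_d$ of Eq.~\eqref{eq:Ad2}, and since $c(\Phi_{A,B})=c(\Phi_{A,C})=c(A)$ the same bound and the same saturating example carry over to the quantum channels for any admissible choice of $B$ or $C$. I do not anticipate a serious obstacle here; the entire argument is a bookkeeping reduction through the composition law, and the only step requiring a moment's care is confirming that iterating a (C)DUC channel keeps it within the (C)DUC family with classical part $A^n$, which the explicit composition rule makes transparent.
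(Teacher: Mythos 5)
Your proposal is correct and is essentially the paper's own argument: the paper derives this corollary ``immediately'' from the (C)DUC equivalence theorems together with the classical index bound, and the composition-rule reduction $\Phi_{A,B}^n=\Phi_{A^n,\,B_n}$ you spell out is exactly the mechanism used in the proofs of Theorems~\ref{thm:DUC-primitive} and~\ref{thm:DUC-mixing}. Your treatment of the DUC case (iterates alternating between the DUC and CDUC classes, with classical part $A^n$ in both cases) is a detail the paper leaves implicit, and it is handled correctly since Theorem~\ref{thm:scrambling-DUC} covers both classes.

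One observation worth recording: your chain of reasoning actually yields the sharp bound $c(\Phi_{A,B})=c(\Phi_{A,C})=c(A)\leq \ceil[\big]{(d^2-2d+2)/2}$, with equality at $A=A_d$, since that is what Theorem~\ref{theorem:classical-scr-index} provides. The bound $d^2-2d+2$ printed in the corollary (together with the claim that $A_d$ saturates it) appears to be a typo in which the bounds of this corollary and of Corollary~\ref{corollary:DUC-primitive} were interchanged; your version is the one consistent with Theorem~\ref{theorem:classical-scr-index}, with Eq.~\eqref{eq:A_d}, and with Theorem~\ref{thm:main2} in the introduction. So your proof establishes the corrected (and stronger) statement, of which the printed inequality is a weaker consequence.
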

\begin{corollary}\label{corollary:DUC-primitive}
     Let $A\in \M{d}$ be a primitive stochastic matrix. Then, for any CDUC channel $\Phi_{A,B}:\M{d}\to \M{d}$ and any DUC channel $\Phi_{A,C}:\M{d}\to \M{d}$,
    \begin{equation}
        w(\Phi_{A,B})=w(\Phi_{A,C})= w(A)\leq \ceil[\bigg]{\frac{d^2 -2d +2}{2}}.
    \end{equation}
   Equality is achieved above for $A=A_d$ (see Eq.~\eqref{eq:Ad2}). 
\end{corollary}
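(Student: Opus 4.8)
The plan is to reduce the Wielandt index of a (C)DUC channel to that of its underlying classical stochastic matrix $A$, and then to import the known optimal bound from the classical theory. The whole argument runs parallel to the proofs of Theorems~\ref{thm:DUC-primitive} and~\ref{thm:DUC-mixing}.

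First I would use the composition rule for (C)DUC channels recorded in the proof of Theorem~\ref{thm:DUC-primitive},
\begin{equation}
    \Phi_{A_1,B_1}\circ \Phi_{A_2,B_2} = \Phi_{A_1 A_2,\, B_1 \odot B_2 + \operatorname{diag}(A_1 A_2 - A_1 \odot A_2)},
\end{equation}
together with its DUC analogue, to observe that the $n$-fold composition $\Phi_{A,B}^n$ (resp.\ $\Phi_{A,C}^n$) is again a (C)DUC channel whose stochastic part is exactly $A^n$, independent of the off-diagonal data. Next I would invoke Theorem~\ref{thm:strict-pos-DUC}, which characterizes strict positivity of a (C)DUC channel through its stochastic part alone. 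Chaining these two facts shows that, for every $n\in\mathbb{N}$, the iterate $\Phi_{A,B}^n$ is strictly positive if and only if $A^n$ is strictly positive (and likewise for $\Phi_{A,C}$). Taking the minimal such $n$ in the definition $w(\Phi)=\min\{n : \Phi^n \text{ is strictly positive}\}$ then yields the first two equalities $w(\Phi_{A,B}) = w(\Phi_{A,C}) = w(A)$.

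Finally, the upper bound and its optimality are inherited verbatim from the classical Wielandt theorem (Theorem~\ref{theorem:classical-wie-index}), which bounds $w(A)$ by the stated quantity and attains equality precisely at $A = A_d$. To transfer the equality case to the quantum setting I would take $B = \operatorname{diag} A_d$ (resp.\ $C = \operatorname{diag} A_d$), for which $\Phi_{A_d,B}$ reduces to the classical channel $\Phi_{A_d}$, so that $w(\Phi_{A_d,B}) = w(A_d)$.

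I do not anticipate a genuine obstacle; the whole proof is a clean translation between a (C)DUC channel and its stochastic matrix. The only points deserving mild care are confirming that the DUC composition rule still produces stochastic part $A^n$, so that Theorem~\ref{thm:strict-pos-DUC} applies uniformly across the parity of $n$, and noting that the off-diagonal data generated under iteration always yields a legitimate channel — which is automatic, since each iterate $\Phi_{A,B}^n$ is a bona fide quantum channel by construction.
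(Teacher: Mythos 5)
Your reduction is sound and is exactly the paper's intended route: the composition rule gives $\Phi_{A,B}^n = \Phi_{A^n, f(B)}$ (and its DUC analogue), Theorem~\ref{thm:strict-pos-DUC} then says $\Phi_{A,B}^n$ is strictly positive if and only if $A^n$ is, and taking the minimal such $n$ yields $w(\Phi_{A,B}) = w(\Phi_{A,C}) = w(A)$. (The specialization $B = \operatorname{diag} A_d$ at the end is unnecessary: once the identity $w(\Phi_{A,B}) = w(A)$ holds for every admissible $B$, the equality case transfers automatically.)

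The gap is in the final step. You assert that Theorem~\ref{theorem:classical-wie-index} ``bounds $w(A)$ by the stated quantity''; it does not. The classical Wielandt theorem gives $w(A) \leq d^2 - 2d + 2$, with $w(A_d) = d^2 - 2d + 2$, whereas the quantity appearing in the corollary is $\ceil[\big]{(d^2 -2d +2)/2}$ --- that is the bound on the \emph{scrambling} time $c(A)$ from Theorem~\ref{theorem:classical-scr-index}, not on the Wielandt index. Since $d^2 - 2d + 2 > \ceil[\big]{(d^2 -2d +2)/2}$ for every $d \geq 2$, the inequality as printed is actually violated by $A = A_d$ itself: the statement is false as written, its bound having evidently been swapped with the one in Corollary~\ref{corollary:DUC-mixing}. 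The correct conclusion from your (otherwise valid) reduction is $w(\Phi_{A,B}) = w(\Phi_{A,C}) = w(A) \leq d^2 - 2d + 2$, with equality at $A = A_d$ from Eq.~\eqref{eq:Ad2}. A proof cannot ``inherit verbatim'' a bound that the cited classical theorem does not supply; a careful write-up should have flagged this mismatch rather than glossing over it.
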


\subsection{Trade-off relation}
In this short section, we connect the one-shot zero-error capacities of a channel with that of its complementary channel. Recall that for any channel $\Phi: \M{d}\rightarrow\M{d}$ with linearly independent Kraus representation
\[\Phi(X)=\sum_{i=1}^p K_i X K_i^*,\]
there is a complementary channel $\Phi^{C}:\M{d}\rightarrow \M{p}$ defined by

\[\Phi^{C}(X)=\sum_{i,j} \Tr(K_i^*K_j X) E_{i,j},\]
where $\{E_{i,j}\}$ are the matrix units of $\M{p}$. 

Note that if $\Phi^{C^*}$ denotes the adjoint of the channel $\Phi^{C}$, then 
\[\Tr(K_i^*K_j X)=\Tr(E_{j,i} \Phi^{C}(X))=\Tr(\Phi^{C^*}(E_{i,j}^*)X),\]
for all $X\in \M{d}$ and $1\leq i,j\leq p$. Thus, it follows that $\Phi^{C^*}(E_{i,j})=K_j^*K_i$ and hence we get the operator system of $\Phi$ as the image of $\Phi^{C^*}$, that is,
\[S_\Phi=\text{span} \{K_i^*K_j: 1\leq i, j\leq p\}=\text{range} (\Phi^{C^*}).\]
In the following proposition we provide a relation between the one-shot zero-error quantum capacity of $\Phi$ to the one-shot zero-error classical capacity of its complementary channel $\Phi^C$.
\begin{proposition}\label{trade-off}
    Let $\Phi: \M{d}\rightarrow\M{d}$ be a quantum channel and let $\Phi^{C}:\M{d}\rightarrow \M{p}$ be its complementary channel. Then, it holds that
      
    $$2^{C^{(1)}_0(\Phi^C)} + 2^{Q^{(1)}_0 (\Phi)} \leq d+1.$$


\end{proposition}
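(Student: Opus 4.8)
The plan is to convert both quantities into independence numbers of operator systems and then exploit the complementarity between error-correction for $\Phi$ and the information leaking to the environment through $\Phi^C$. By Theorem~\ref{op-syst-char} we have $2^{C^{(1)}_0(\Phi^C)}=\alpha(S_{\Phi^C})$ and $2^{Q^{(1)}_0(\Phi)}=\alpha_q(S_\Phi)$, so writing $k:=\alpha(S_{\Phi^C})$ and $m:=\alpha_q(S_\Phi)$ it suffices to prove $k+m\le d+1$. First I would fix an optimal quantum-independent subspace $\mathcal{C}\subseteq\C{d}$ for $S_\Phi$, so $\dim\mathcal{C}=m$ and $P_{\mathcal{C}}S_\Phi P_{\mathcal{C}}=\mathbb{C}P_{\mathcal{C}}$. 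With $\Phi(X)=\sum_i K_iXK_i^*$ this is exactly the Knill--Laflamme condition $P_{\mathcal{C}}K_i^*K_jP_{\mathcal{C}}=\lambda_{ij}P_{\mathcal{C}}$, and substituting it into $\Phi^C(X)=\sum_{i,j}\Tr(K_i^*K_jX)E_{ij}$ gives $\Phi^C(\rho)=\sum_{i,j}\lambda_{ij}E_{ij}=:\tau$ for every state $\rho$ with $\supp\rho\subseteq\mathcal{C}$. In other words, correctability of $\mathcal{C}$ for $\Phi$ forces $\Phi^C$ to be \emph{constant} on $\mathcal{C}$.

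Next I would fix an optimal classical zero-error code for $\Phi^C$: orthonormal vectors $\ket{\psi_1},\dots,\ket{\psi_k}\in\C{d}$ whose outputs $\tau_a:=\Phi^C(\ketbra{\psi_a}{\psi_a})$ are pairwise orthogonal states. Choosing a Kraus form $\Phi^C(X)=\sum_c L_cXL_c^*$ and setting $\ket{\chi^a_c}:=L_c\ket{\psi_a}$, we have $\tau_a=\sum_c\ketbra{\chi^a_c}{\chi^a_c}$, so the supports $E_a:=\supp\tau_a=\operatorname{span}_c\{\ket{\chi^a_c}\}$ are mutually orthogonal (which is just $\tau_a\perp\tau_{a'}$). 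The only computation I need is the block structure of $\Phi^C$ restricted to $W:=\operatorname{span}\{\ket{\psi_a}\}$: for $\ket{\zeta}=\sum_a\zeta_a\ket{\psi_a}$ and $\ket{\zeta'}=\sum_a\zeta'_a\ket{\psi_a}$, expanding $\Phi^C(\ketbra{\zeta}{\zeta'})=\sum_{a,a'}\zeta_a\overline{\zeta'_{a'}}\sum_c\ketbra{\chi^a_c}{\chi^{a'}_c}$ and compressing to $E_a$ kills every cross term (since $E_a\perp E_{a'}$), leaving $P_{E_a}\Phi^C(\ketbra{\zeta}{\zeta'})P_{E_a}=\zeta_a\overline{\zeta'_a}\tau_a$.

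The crux is the claim $\dim(W\cap\mathcal{C})\le1$. Suppose not, and take orthonormal $\ket{\xi}=\sum_a u_a\ket{\psi_a}$ and $\ket{\eta}=\sum_a v_a\ket{\psi_a}$ in $W\cap\mathcal{C}$. Constancy of $\Phi^C$ on $\mathcal{C}$ gives $\Phi^C(\ketbra{\xi}{\xi})=\Phi^C(\ketbra{\eta}{\eta})=\tau$, so the diagonal block identity forces $|u_a|^2\tau_a=P_{E_a}\tau P_{E_a}=|v_a|^2\tau_a$ for all $a$; as each $\tau_a\neq0$, the index set $A:=\{a:u_a\neq0\}=\{a:P_{E_a}\tau P_{E_a}\neq0\}=\{a:v_a\neq0\}$ depends on $\tau$ alone. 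On the other hand, evaluating the constancy on the unit vectors $(\ket{\xi}\pm\ket{\eta})/\sqrt{2}$ and $(\ket{\xi}\pm i\ket{\eta})/\sqrt{2}$, all lying in $\mathcal{C}$, yields by polarization $\Phi^C(\ketbra{\xi}{\eta})=0$; the off-diagonal block identity then gives $u_a\overline{v_a}\tau_a=0$, hence $u_a\overline{v_a}=0$ for all $a$, so $\{a:u_a\neq0\}$ and $\{a:v_a\neq0\}$ are disjoint. Thus $A$ is disjoint from itself, forcing $A=\emptyset$ and $\ket{\xi}=0$, a contradiction. Hence $\dim(W\cap\mathcal{C})\le1$, and since $W,\mathcal{C}\subseteq\C{d}$,
\[
k+m=\dim W+\dim\mathcal{C}=\dim(W+\mathcal{C})+\dim(W\cap\mathcal{C})\le d+1 .
\]

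I expect the third paragraph to be the main obstacle: the contradiction has to marry two complementary facts extracted from the same object $\tau$ — that the diagonal output $\Phi^C(\ketbra{\xi}{\xi})=\tau$ pins down exactly which environment blocks $E_a$ are populated, while the off-diagonal output $\Phi^C(\ketbra{\xi}{\eta})=0$ forces those populated blocks to be disjoint for $\xi$ and $\eta$. Setting up the orthogonal environment decomposition $\{E_a\}$ and the block-compression identities cleanly is where the care is needed; the capacity-to-independence-number dictionary, the Knill--Laflamme reformulation of correctability as constancy of $\Phi^C$, and the closing dimension count $\dim(W+\mathcal{C})\le d$ are all routine.
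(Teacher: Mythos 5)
Your proof is correct, and its first half coincides with the paper's argument: for an optimal correctable subspace $\mathcal{C}$, the Knill--Laflamme relations $P_{\mathcal{C}}K_i^{*}K_jP_{\mathcal{C}}=\lambda_{ij}P_{\mathcal{C}}$ force $\Phi^{C}(X)=\Tr(X)\,\tau$ for every operator $X$ with $P_{\mathcal{C}}XP_{\mathcal{C}}=X$ (your polarization step is an unnecessary detour: by linearity $\Phi^{C}(\ketbra{\xi}{\eta})=\sum_{i,j}\bra{\eta}P_{\mathcal{C}}K_i^{*}K_jP_{\mathcal{C}}\ket{\xi}E_{i,j}=\braket{\eta|\xi}\,\tau=0$ directly). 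Where you genuinely diverge is the closing count, and there your version is the stronger one. The paper deduces from constancy only that at most one code state lies inside $\mathcal{C}$ and from this alone asserts $M\leq(d-\dim\mathcal{C})+1$; as written that inference is incomplete, since orthonormal code vectors can all avoid $\mathcal{C}$ while overlapping it (when $\dim\mathcal{C}<d$ one can pick $d$ orthonormal vectors none of which lies in $\mathcal{C}$), so counting code states \emph{inside} $\mathcal{C}$ does not by itself bound $M$. What is actually needed is precisely your claim $\dim(W\cap\mathcal{C})\leq 1$ for $W$ the span of the code vectors, which you establish by playing the diagonal block identity $\abs{u_a}^{2}\tau_a=P_{E_a}\tau P_{E_a}=\abs{v_a}^{2}\tau_a$ against the off-diagonal one $u_a\overline{v_a}\,\tau_a=0$, and which closes the argument via $k+m=\dim(W+\mathcal{C})+\dim(W\cap\mathcal{C})\leq d+1$. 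So your route buys rigor exactly where the paper's write-up is loose, at the price of the environment-block bookkeeping in your second paragraph. Two details you should make explicit: first, the code vectors $\ket{\psi_a}$ are indeed pairwise orthogonal (this is what gives $\dim W=k$), which follows by expanding $\Tr\bigl(\Phi^{C}(\psi_a)\Phi^{C}(\psi_{a'})\bigr)=0$ in a Kraus form of $\Phi^{C}$ to get $\bra{\psi_{a'}}L_{c'}^{*}L_c\ket{\psi_a}=0$ for all $c,c'$ and then summing the $c=c'$ terms and using trace preservation; second, $\tau_a\neq 0$, which holds trivially because each $\tau_a$ is a density matrix.
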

\begin{proof} The basic idea of the proof is the complementary relation between the error correcting subspaces of a channel and the private subspaces of its complementary channel (see \cite{KKS, KSW}). 

    Let $\mathcal{C}\subseteq \C{d}$ be the largest subspace in which $\Phi$ can be recovered, i.e., $Q^{(1)}_0(\Phi) = \log \dim \mathcal{C}$. More precisely, $\mathcal{C}$ is the largest subspace where $\Phi$ admits a channel $\mathcal{R}$ such that $\mathcal{R}\circ\Phi(\rho)=\rho$ for all states $\rho\in \State{\C{d}}$ with $\supp (\rho)\subseteq \mathcal{C}$. From the Knill-Laflamme condition for error-correction \cite{knil-laf}, it holds that 
    \[ \forall i,j: \quad P_{\mathcal{C}} K_i^*K_jP_{\mathcal{C}}=\lambda_{i,j}P_{\mathcal{C}},\]
where $P_{\mathcal{C}}$ is the projection onto $\mathcal{C}$, $\{K_i\}_i$ are the Kraus operator of $\Phi$, and $\lambda_{i,j}\in \mathbb{C}$. Note that any $\rho$ supported on $\mathcal{C}$ satisfies $\rho=P_{\mathcal{C}} \rho P_{\mathcal{C}}$. Thus, for any such $\rho$, we obtain 

\[\Phi^C(\rho)=\sum_{i,j} \Tr(K_i^*K_j P_{\mathcal{C}} \rho P_{\mathcal{C}}) E_{i,j}=\sum_{i,j} \Tr(P_{\mathcal{C}} K_i^*K_j P_{\mathcal{C}} \rho) E_{i,j}=\sum_{i,j}\lambda_{i,j} \Tr( P_{\mathcal{C}} \rho)E_{i,j} = \Tr (\rho) X,\]
where $X=\sum_{i,j}\lambda_{i,j} E_{i,j}$ is a positive semi-definite matrix. Therefore, in any zero-error encoding $\{\ket{\psi_i} \}_{i=1}^M$ of classical messages $\{1,2,\ldots ,M \}$ to be sent through $\Phi^C$, there can only be at most one code state from $\mathcal{C}$, which means that $M \leq (d - \dim \mathcal{C}) + 1$. Optimizing over all zero-error classical encodings gives us the required bound:
\begin{equation}
    2^{C^{(1)}_0(\Phi^C)} + 2^{Q^{(1)}_0 (\Phi)} \leq d+1.
\end{equation}
\end{proof}

\end{appendices}

\end{document}